\documentclass[floatfix,aps,prx,reprint,longbibliography,nofootinbib]{revtex4-2}
\usepackage{palatino}
\usepackage{multirow} 
\usepackage{makecell} 
\usepackage{graphicx}
\usepackage{amssymb}
\usepackage{amsmath}
\usepackage{swphysics}

\usepackage[dvipsnames]{xcolor} 
\definecolor{tolred}{RGB}{221,110,121}
\definecolor{tolgreen}{RGB}{67,134,62}
\definecolor{tolblue}{RGB}{80,118,166}
\definecolor{tolyellow}{RGB}{201,187,88}
\definecolor{tolsky}{RGB}{128,201,234}
\definecolor{tolpurple}{RGB}{157,60,117}
\definecolor{tolgrey}{RGB}{187,187,187}
\definecolor{tolgray}{RGB}{187,187,187}

\usepackage{hyperref}
\hypersetup{
    colorlinks = true,
    linkcolor = black,%
    citecolor = magenta,
    urlcolor = tolpurple}%
\usepackage[capitalise]{cleveref} 

\Crefname{figure}{Figure}{Figures}
\crefname{figure}{Fig.}{Figs.}
\Crefname{equation}{Equation}{Equations}
\crefname{equation}{Eq.}{Eqs.}
\Crefname{section}{Section}{Sections}
\crefname{section}{Sec.}{Secs.}

\Crefname{appendix}{Appendix}{Appendices}
\crefname{appendix}{Appendix}{Appendices}

\crefformat{equation}{Eq.~(#2\textcolor{MidnightBlue}{#1}#3)}
\Crefformat{equation}{Equation~(#2\textcolor{MidnightBlue}{#1}#3)}

\crefrangeformat{equation}{Eqs.~(#3\textcolor{MidnightBlue}{#1}#4) to~(#5\textcolor{MidnightBlue}{#2}#6)}
\Crefrangeformat{equation}{Equations~(#3\textcolor{MidnightBlue}{#1}#4) to~(#5\textcolor{MidnightBlue}{#2}#6)}

\crefmultiformat{equation}{Eqs.~(#2\textcolor{MidnightBlue}{#1}#3)}%
{ and~(#2\textcolor{MidnightBlue}{#1}#3)}%
{, (#2\textcolor{MidnightBlue}{#1}#3)}%
{, and~(#2\textcolor{MidnightBlue}{#1}#3)}

\Crefmultiformat{equation}{Equations~(#2\textcolor{MidnightBlue}{#1}#3)}%
{ and~(#2\textcolor{MidnightBlue}{#1}#3)}%
{, (#2\textcolor{MidnightBlue}{#1}#3)}%
{, and~(#2\textcolor{MidnightBlue}{#1}#3)}

\makeatletter
\renewcommand{\eqref}[1]{\textup{(\hyperref[#1]{\textcolor{MidnightBlue}{\ref*{#1}}})}}
\makeatother

\makeatletter
\newcommand{\nopareneqref}[1]{\textup{\hyperref[#1]{\textcolor{MidnightBlue}{\ref*{#1}}}}}
\makeatother

\usepackage{mathrsfs} 
\usepackage{swstyles} 
\let\mathfrak\relax 
\newcommand{\mathfrak}[1]{%
  \text{\smash{
  \raisebox{-0.085em}{
  \includegraphics[height=0.85em, trim=0em 0.6em 0em 0.8em, clip]{frak/#1.pdf}%
}}}}

\usepackage{upgreek}

\usepackage{comment}
\usepackage{mathtools}
\usepackage{tikz}
\usetikzlibrary{shapes}
\usetikzlibrary{calc}
\usetikzlibrary{math}
\usetikzlibrary{decorations.markings}
\usetikzlibrary{decorations.markings, shapes.geometric}
\tikzset{
  ->-/.style={
    decoration={
      markings,
      mark=at position 0.5 with {
        \draw[solid, line width=1.0pt, line cap=round, line join=round] 
        (-4pt * #1, 4pt * #1) -- (4pt * #1, 0pt) -- (-4pt * #1, -4pt * #1);
        }
        },
        postaction={decorate}
        },
        ->-/.default={0.5} 
}
\tikzset{
-<-/.style={
decoration={
markings,
mark=at position 0.5 with {
\draw[solid, line width=1.0pt, line cap=round, line join=round]
(4pt * #1, 4pt * #1) -- (-4pt * #1, 0pt) -- (4pt * #1, -4pt * #1);
}
},
postaction={decorate}
},
-<-/.default={0.5} 
}
\newcommand\intikz[2][]{%
    \begin{tikzpicture}[baseline={([yshift=-.5ex] current bounding box.center)}, #1]
        {#2}
    \end{tikzpicture}%
}

\tikzset{
    mathshape/.style={
        draw,
        inner sep=0pt,
        line width=0.1ex 
    }
}

\DeclareRobustCommand{\trilattice}{%
  {\mathord{\text{\tikz[baseline=0.2ex]\node[mathshape, regular polygon, regular polygon sides=3, minimum size=2.3ex, anchor=south] {};}}}%
}

\DeclareRobustCommand{\sqlattice}{%
  {\mathord{\text{\tikz[baseline=0.2ex]\node[mathshape, rectangle, minimum size=1.75ex, anchor=south] {};}}}%
}

\DeclareRobustCommand{\hexlattice}{%
  {\mathord{\text{\tikz[baseline=0.2ex]\node[mathshape, regular polygon, regular polygon sides=6, minimum size=2.0ex, anchor=south] {};}}}%
}

\newsavebox{\myobjbox}

\usepackage{physics} 
\usepackage{float} 

\usepackage{scalerel} 

\usepackage{manfnt} 
\newcommand*\cube{{\mbox{\mancube}}}
\newcommand{\iu}{{\rm i}} 

\usepackage{amsthm}
\usepackage{chngcntr} 
\newtheorem{theorem}{Theorem}
\newtheorem{corollary}{Corollary}
\newtheorem{lemma}{Lemma}
\newtheorem*{theorem*}{Theorem}
\newtheorem{definition}{Definition}

\newtheorem{innercustomthm}{Theorem}
\newenvironment{customthm}[1]
  {\renewcommand\theinnercustomthm{#1}\innercustomthm}
  {\endinnercustomthm}

\newtheorem{innercustomlem}{Lemma}
\newenvironment{customlem}[1]
  {\renewcommand\theinnercustomlem{#1}\innercustomlem}
  {\endinnercustomlem}

\newtheorem{innercustomcor}{Corollary}
\newenvironment{customcor}[1]
  {\renewcommand\theinnercustomcor{#1}\innercustomcor}
  {\endinnercustomcor}

\usepackage[normalem]{ulem} 

\makeatletter
\long\def\@FootNoteText#1{\insert\footins{
  \reset@font\footnotesize
  \interlinepenalty\interfootnotelinepenalty
  \splittopskip\footnotesep
  \splitmaxdepth \dp\strutbox \floatingpenalty \@MM
  \hsize\columnwidth \@parboxrestore
  \def\@currentcounter{footnote}%
  \protected@edef\@currentlabel{%
  \csname p@footnote\endcsname\@thefnmark
  }%
  \color@begingroup
  \@makefntext{%
  \rule\z@\footnotesep\ignorespaces#1\@finalstrut\strutbox}%
  \par
  \color@endgroup}}%

\def\setwidefootnote{\let\@footnotetext=\@FootNoteText}
\makeatother

\makeatletter
\def\l@subsubsection#1#2{}
\makeatother

\makeatletter

\let\originaladdcontentsline\addcontentsline

\newcommand{\appendixaddcontentsline}[3]{%
  \def\firstargument{#1}%
  \def\tocargument{toc}%
  \ifx\firstargument\tocargument
    \originaladdcontentsline{atoc}{#2}{#3}%
  \else
    \originaladdcontentsline{#1}{#2}{#3}%
  \fi
}

\newcommand{\beginappendices}{%
  \clearpage
  \appendix
  \phantomsection
  \originaladdcontentsline{toc}{section}{Appendices}%
  \begingroup
    \let\addcontentsline\@gobblethree
    \section*{Contents of the Appendices}%
  \endgroup
  \@starttoc{atoc}%
  \let\addcontentsline\appendixaddcontentsline
}

\makeatother

\makeatletter
\AtBeginDocument{%
  \global\tocdim@section=2.5em
  \gdef\tocmax@section{2.5em}%
}
\makeatother

\begin{document}
    \title{Mixed-state topological order and error-correction thresholds\\in non-Abelian codes: rigorous results}
	\author{Sun Woo P. Kim}
	\thanks{\texttt{swk34} \texttt{[at]} \texttt{cantab} \texttt{[dot]} \texttt{ac} \texttt{[dot]} \texttt{uk}}
	\affiliation{Department of Physics, King's College London, Strand, London WC2R 2LS, United Kingdom}
	\author{Max McGinley}
    \thanks{\texttt{mm2025} \texttt{[at]} \texttt{cam} \texttt{[dot]} \texttt{ac} \texttt{[dot]} \texttt{uk}}
	\affiliation{T.C.M. Group, Cavendish Laboratory, JJ Thomson Avenue, Cambridge CB3 0HE, United Kingdom}
	
	\begin{abstract}
    We present a versatile and mathematically rigorous technique for bounding recovery thresholds in topological codes subject to noise. Our method captures the effect of applying an arbitrary (possibly non-Pauli) local noise channel to the code state of a broad class of two-dimensional codes, including surface codes, non-Abelian quantum doubles, and string-net codes. In each case, we prove that for noise strengths up to some explicit constant value, any initially encoded logical information can be recovered to high precision, and that the noise-corrupted state exhibits key hallmarks of mixed-state topological order: long-range entanglement and emergent higher-form symmetries. We also describe how these methods can be adapted to higher dimensions and correlated noise models.
	\end{abstract}
	
	\maketitle
	
	
	\section{Introduction}
        Topologically ordered states have long been noted for their potential utility in protecting quantum information against the unwanted effects of noise \cite{kitaev2003fault, dassarma2005, nayak2008, simon2023topological}. This notion can be made concrete through the framework of (topological) error-correcting codes \cite{Lidar2013}. Here, logical information, which one seeks to protect, is encoded into a larger many-body system in terms of `code states',  which lie in the degenerate ground state subspace of a local Hamiltonian within a particular topological phase. For instance, the code states of the celebrated toric and surface codes \cite{kitaev1997quantum, Kitaev1997quantumA, bravyi1998quantum}---arguably the most actively explored codes for realising scalable error-corrected quantum computation---are in an Abelian `$\mathbb{Z}_2$' topological phase, the algebraic structure of which is reflected in the form of its logical operators, syndromes, and transversal gate set. More generally, different topological orders, e.g.~those with non-Abelian anyons, give rise to codes with complementary properties, which may even be used in conjunction to achieve fault-tolerant universal quantum computation \cite{davydova2025, manjunath2026universal}.
        
        
        

        
        For the particular case of $\mathbb{Z}_2$ surface code states subject to local Pauli noise, the performance of the encoding scheme can be analysed very precisely.
        The question of whether the initially encoded information is recoverable can be addressed by mapping properties of the error-corrupted state to a classical statistical mechanics model, namely the random-bond Ising model \cite{dennis2002topological}. When the noise strength exceeds some critical threshold, this model undergoes a transition, signalling the loss of encoded information in the corrupted state. Physically, this abrupt breakdown of quantum memory is tied to a singular change in the intrinsic properties of the corrupted state \cite{fan2024diagnostics}. Such \textit{mixed-state phase transitions} \cite{sang2024, wang2025, sohal2025, ellison2025} are a topic of active exploration.


        Beyond this specific setting, most existing analytical results on recovery thresholds pertain to Pauli stabilizer codes, which only capture Abelian topological orders \cite{bombin2014structure}. In that case, there exists a general stat-mech formalism for determining the threshold of any Pauli stabilizer code with Pauli noise \cite{chubb2021statistical}. A related construction exists for purely coherent errors in the surface code   \cite{bravyi2018correcting,venn2023, behrends2024, cheng2025emergent, eckstein2025learning, bao2026phasesdecodabilitysurfacecode,yan2026nonlinear, bejan2026}, and bounds on thresholds have also been derived for more general noise models \cite{christandl2025}.  In the broader domain of non-Abelian codes, there has been progress in designing and numerically analysing the performance of decoders \cite{wootton2014,brell2014,schotte2022, davydova2025, jing2026, delafuente2026}, and in certain instances the existence of a threshold has been derived \cite{wootton2016, dauphinais2017fault, lyons2026}. However, with regard to the intrinsic breakdown of mixed-state topological order in non-Abelian codes, existing studies have been limited to heuristic mappings via the replica trick \cite{sala2025, sala2025stability, bao2026, vadali2026, wang2025fractional}. 
        Thus, rigorous methods for probing the topological order of noise-corrupted states beyond Abelian codes and Pauli errors are notably lacking.

        In this work, we provide a formal mathematical framework for analysing the effect of noise on code states for a very broad class of topological orders and for arbitrary local noise. Our formalism encompasses non-Abelian codes described by Kitaev's quantum double model \cite{kitaev2003fault} and the string-net model of Levin and Wen \cite{levin2005string}. In brief, we prove that for noise strengths up to some constant value, the encoded information can be recovered to high precision, and the corrupted state continues to exhibit the key features of mixed-state topological order---namely long-range entanglement, emergent higher-form symmetries, and the well-definedness of anyonic excitations.

        The key technical result (\cref{thm:fat})  that allows us to infer these properties (Corollaries \ref{cor:lre}, \ref{cor:anyon-distinguishability}, \ref{cor:recoverability}) pertains to logical operators in the corrupted state. Before noise is applied, logical information can be accessed via particular string operators, e.g.~Pauli-$X$ and -$Z$ strings in the $\mathbb{Z}_2$ surface code. We prove that if the effective strength of the noise channel (\cref{def:noise strength def}) is below some constant value, then there exist thickened logical string operators (see Fig.~\ref{fig:summary}), defined on strips of width $w$, that act in an analogous way on the corrupted state up to errors exponentially small in $w$. Our method, outlined in Section \ref{sec:methods}, is based on upper-bounding the amount of logical information accessible to the environment (i.e.~the degrees of freedom that dilate the noise channel). Via an information-disturbance argument \cite{kretschmann2008information, beny2009}, this allows us to show that certain classes of logical information can be accessed within different parts of the system, thereby giving us a spatially resolved characterisation of where in the system this information is encoded (see \cref{sec:aec-analysis}).

        This paper is structured as follows. In Section \ref{sec:setup}, we describe our setup and define the quantities of interest which capture the recoverability of logical information and various aspects of mixed-state topological order. We present our main results in Section \ref{sec:main results}, starting with the $\mathbb{Z}_2$ surface code under arbitrary separable local noise, generalising to non-Abelian codes later on. 
        Section \ref{sec:methods} outlines the general methodology of our approach, and in Section \ref{sec:main-proof} we provide a high-level proof for our main technical result, Theorem \ref{thm:fat}. Section \ref{sec:quantum-doubles} introduces the quantum double and string-net codes in more detail, and contains the formal statement of our most general results. We outline some further generalisations to higher dimensions with non-matchable errors and correlated errors in Section \ref{sec:further-generalisations}, before finally concluding and highlighting promising future directions in Section \ref{sec:discussion}.
	
	\begin{figure*}[t] 
		\centering
		\includegraphics[width=\linewidth]{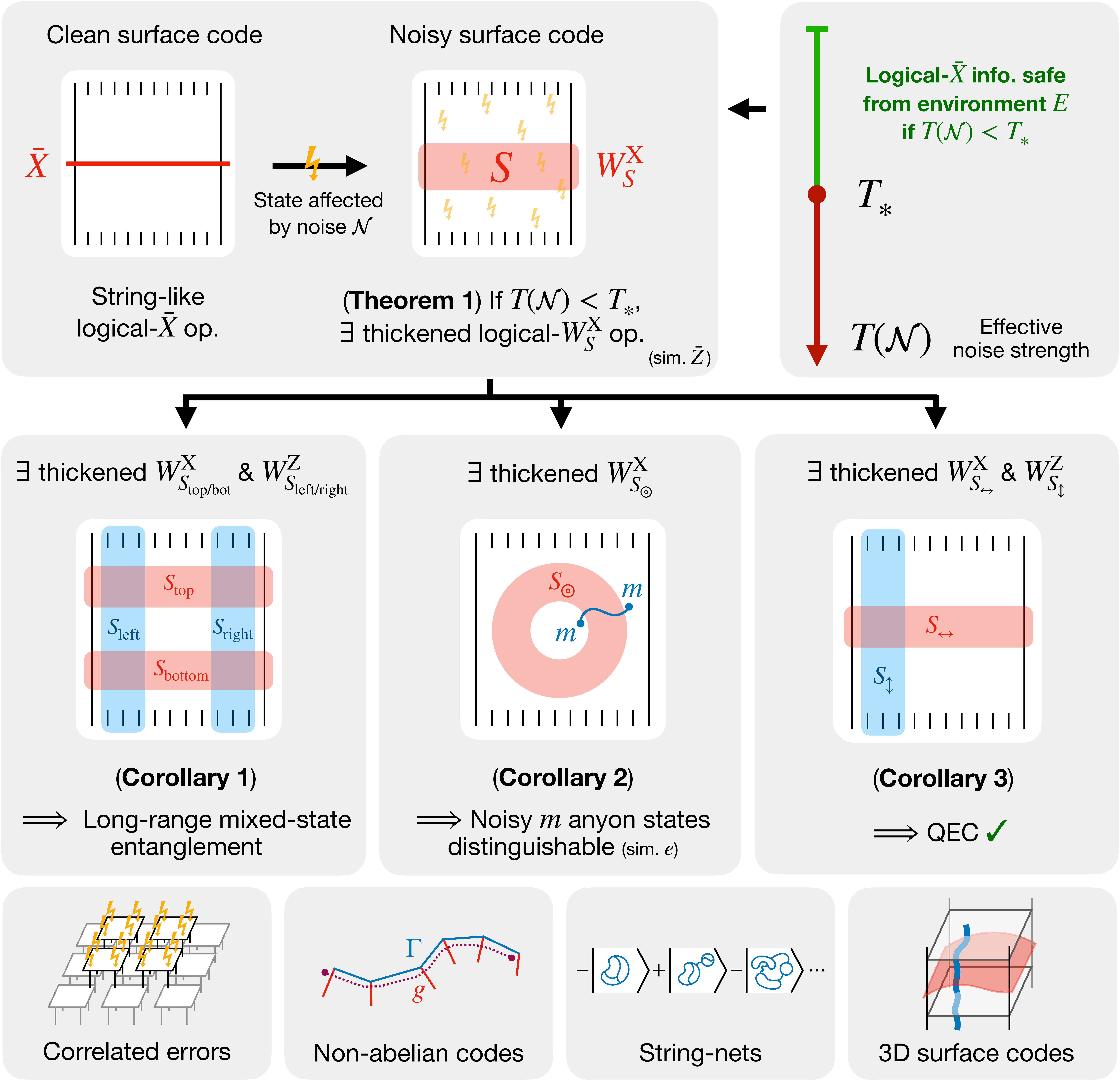}
		\caption{Summary of main results. Top row: Before noise is applied, the code supports `bare' logical operators, e.g.~$\bar{X}$ for the $\mathbb{Z}_2$ surface code. \Cref{thm:fat} shows that for noise strengths below some explicit value, the noise-corrupted state supports thickened logical operators defined on a strip $S$, which have approximately the same action on the encoded information [in the sense of Eq.~\eqref{eq:logical error}]. Middle row: As downstream consequences of \cref{thm:fat}, we can prove the hallmarks of mixed-state topological order: Long-range mixed-state entanglement (\cref{cor:lre}), distinguishability of anyon states (\cref{cor:anyon-distinguishability}), and recoverability of logical quantum information (\cref{cor:recoverability}). Bottom row: generalisations of \cref{thm:fat} to various settings, such as non-Abelian quantum double and string-net codes (\cref{sec:quantum-doubles}),  correlated errors (\cref{sec:correlated}), and 3D surface code (\cref{sec:3d}).}
		\label{fig:summary}
	\end{figure*}


    \setcounter{tocdepth}{2}
	\tableofcontents
    
	\section{Setup and aims}
	\label{sec:setup}
	
	In this work, we are concerned with the effect of generic, local noise on the code states of topological quantum codes for both Abelian and non-Abelian codes. The topological quantum order exhibited by these code states is expected to survive a finite strength of noise up to some critical threshold which signals the breakdown of quantum memory. Our aim is to develop a general and rigorous framework for quantifying this resilience of topological order against noise. We focus on three significant and operationally meaningful hallmarks of mixed state topological order:
	\begin{enumerate}
		\item \textit{Long-range entanglement.}~---~The noise-corrupted mixed state cannot be generated from a product state using low-depth circuits.
		\item \textit{Anyonic excitations and emergent higher-form symmetries.}~---~Inserting a pair of well-separated anyons yields an orthogonal state, which can be distinguished by string-like operators associated with an emergent (categorical) higher-form symmetry.
		\item \textit{Quantum memory.}~---~Logical information embedded in the code space can be approximately recovered after noise is applied.
	\end{enumerate}
	In the remainder of this section, we describe our setup in detail and elaborate on the above diagnostics, before presenting our main results in Sections \ref{sec:main results} and \ref{subsec:nonab results}.
	
	\subsection{Setup}

    We consider systems of $n$ qudits ($q$-level systems) arranged on the links $\ell$ of a regular $L_{\rm x} \times L_{\rm y}$ lattice, with open or periodic boundary conditions. We write $Q$ for the union of all physical qudits, and $\mathscr{H}_Q$ for its Hilbert space. The codes we consider will be defined in terms of stabilizers $A_v$, $B_p$ for each vertex $v$ and plaquette $p$ of the lattice. The code space $\mathscr{H}_{\mathrm{code}} \subset \mathscr{H}_Q$ is the subspace of states $\ket{\psi}_Q$ that satisfy $A_v\ket{\psi}_Q = B_p\ket{\psi}_Q = \ket{\psi}_Q\,\forall v, p$. Its dimension we denote $d_\mathrm{L}$, and its associated projector we denote $\mathbb{\Pi}$. A \textit{logical operator} is an operator on $Q$ that commutes with $\mathbb{\Pi}$, and thus maps code states to code states, but has a nontrivial effect within $\mathscr{H}_{\mathrm{code}}$.
	
	As a concrete example, in the $\mathbb{Z}_2$ surface code, the physical degrees of freedom are qubits ($q = 2$), and one takes open boundary conditions with `rough' and `smooth' edges, as shown in Fig.~\ref{fig:geometries}(a). The stabilisers are products of Pauli operators $\{X_\ell, Y_\ell, Z_\ell\}$ on different sites --- specifically $A_v = \prod_{l \in \partial^\top v} X_\ell$ and $B_p=\prod_{l \in \partial p} Z_\ell$. This structure makes the surface code an example of a \textit{Pauli stabiliser code}, which implies (among other properties) that all stabilisers commute with one another. 
	The projector onto the code space, which has dimension $d_\mathrm{L} = 2$ with the given boundary conditions, can be written $\mathbb{\Pi} = \mathbb{A}_V \mathbb{B}_P$, where $\mathbb{A}_V = \prod_{v} \mathbb{A}_v$, and $\mathbb{A}_v \coloneqq (I+A_v)/2$ is the projector onto the $+1$ eigenspace of $A_v$. Similarly, $\mathbb{B}_P = \prod_{p} \mathbb{B}_p$ with $\mathbb{B}_p \coloneqq (I+B_p)/2$. The logical operators are generated by $\bar{X} = \prod_{l \in \gamma^*} X_\ell$ and $\bar{Z} = \prod_{l \in \gamma} Z_\ell$, which act like the Pauli-$X$ and Pauli-$Z$ operators on the 2-dimensional code space. Here, $\gamma$ ($\gamma^*$) is any path on the primal (dual) lattice that connects one rough (smooth) edge to another, see Fig.~\ref{fig:geometries}.
	
	A $d_L$-dimensional state $\sigma_R$ can be encoded in terms of some physical state $\rho_Q$ supported in $\mathscr{H}_{\mathrm{code}}$. To this state, we apply a local noise channel $\mathcal{N}_{Q' \leftarrow Q}$, yielding a noise-corrupted mixed state $\tilde{\rho}_{Q'} = \mathcal{N}_{Q' \leftarrow Q}[\rho_Q]$. (We allow the output of the channel $Q'$ to differ from the input $Q$, e.g.~if some subset of qudits are traced out, or if new degrees of freedom are tensored in as it is the case for heralded errors.) For most of this work, we take independent noise on each site $\mathcal{N}_{Q' \leftarrow Q} = \bigotimes_{\ell \in Q} \mathcal{N}_\ell$, but we also consider correlated noise channels in Section \ref{sec:correlated}. We make no assumptions on the structure of the applied noise, allowing for operations that are not unital, or representable as Pauli noise channels.

    Our interest is in whether this encoding scheme can protect the logical information against the noise, and whether the corrupted state $\tilde{\rho}_{Q'}$ retains the topological order exhibited by the original code state $\rho_Q$. We now define specific diagnostics that quantify the extent to which this holds.
	
	\begin{figure*}[t] 
    \centering
    \includegraphics[width=1\linewidth]{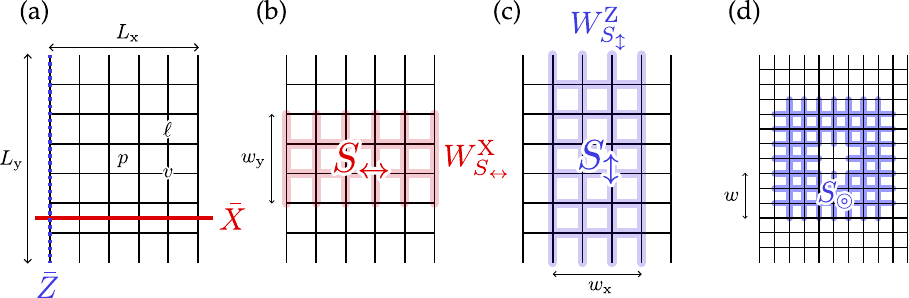}
	\caption{Geometry of the $\mathbb{Z}_2$ surface code. Qubits live on the edges $\{\ell\}$, while stabilisers live on vertices $\{v\}$ and plaquettes $\{p\}$. The maximum horizontal and vertical linear dimensions of the lattice are given by $L_\mathrm{x}$, $L_\mathrm{y}$, respectively, assumed to grow at the same rate, with $L = \mathrm{min}(L_\mathrm{x}, L_\mathrm{y})$. (a) On the clean code state, logical $X$ ($Z$) operators can be implemented by products of Pauli-$X$ ($Z$) operators on a dual (primal) path spanning the system in a horizontal (vertical) direction; see red solid (blue dashed) line. (b,c) Thickened regions $S_{\leftrightarrow}$ and $S_{\updownarrow}$ which support logical $X$ and $Z$ operators on the noise-corrupted state, respectively. (d) Annulus region of width $w$, which supports emergent 1-form symmetry operators.}
    \label{fig:geometries}
\end{figure*}

\subsection{Operational diagnostics of quantum memory and topological order}
\label{subsec:diagnostics}

Unlike symmetry-breaking order, topological properties of the corrupted mixed state $\tilde{\rho}_{Q'}$ cannot be probed via conventional expectation values, i.e.~linear functions of $\rho$. Instead, we must look at non-linear, information-theoretic diagnostics, which elucidate the different patterns of entanglement associated with these phases. The quantities we consider in this work, which are defined subsequently, all have direct operational interpretations that characterise how topologically ordered states can be prepared, manipulated, and used as a quantum memory. Some of these appear in Ref.~\onlinecite{fan2024diagnostics} in various guises. 

\subsubsection{Long-range entanglement}

A pure state (e.g.~a ground state of a local Hamiltonian) is considered to be topologically ordered if it cannot be prepared by low-depth unitary circuits, starting from a product state\footnote{Typically, in definitions of pure state topological order, `low-depth' typically means polylogarithmic in the linear system size $L$.}. Since states that can be prepared by low-depth circuits are said to be short-ranged entangled, or trivial, this obstruction to preparing pure topologically ordered states  is associated with \textit{long-range entanglement}.

In Ref.~\onlinecite{hastings2011topological}, Hastings proposes a natural generalisation of this notion to mixed states, which we present here in a slightly re-framed way; see also Ref.~\onlinecite{mcginley2025lower}. A state $\sigma$ is $(r,t)$-trivial if it can be written as a mixture (i.e.~convex combination) of states that are generated by ancilla-assisted range-$r$, depth-$t$ local unitary circuits, starting from a product state. More precisely, if we associate to each qudit $\ell \in Q$ an ancilla qudit $a(\ell)$ of arbitrary but finite dimension, then a $(r,t)$-trivial state can be written in the form
\begin{align}
	\sigma_\mathrm{SRE} = \sum_i p_i \tr_{\mathrm{Anc}}\big[U_i(\dyad{0}_Q \otimes \dyad{0}_\mathrm{Anc}) U_i^\dagger\big],
	\label{eq:sigma SRE}
\end{align}
where $p_i \geq 0$ are probabilities satisfying $\sum_i p_i = 1$, $\mathrm{Anc} \coloneq \bigcup_{\ell \in Q} a(l)$, and each $U_i$ is a depth-$t$ circuit of where each gate acts locally on the system and ancilla qubits with linear range $r$. The set $\mathrm{SRE}(r,t)$ consisting of all such states includes states prepared by mixtures of local channel circuits \cite{mcginley2025lower}. 

A state $\rho$ is said to be $(r, t, c)$ long-range entangled if every state $\sigma \in \mathrm{SRE}(r,t)$ is at least $c$-far from $\rho$, 
\begin{align}
	\mathsf{T}(\rho, \sigma) & > c & \forall \, \sigma \in \mathrm{SRE}(r, t).
\end{align}
Here, $\mathsf{T}(\rho, \sigma) \coloneqq \frac{1}{2}\|\rho - \sigma\|_1$ is the trace distance, where $\|X\|_1 \coloneqq \tr\big[\sqrt{X^\dagger X}\big]$ is the trace norm, equal to the sum of singular values of $X$. This is a natural measure of distance between two states due to its operational interpretation as the maximum difference between the outcome probability of some measurement when performed on $\rho$ and $\sigma$; however other sensible distance measures between states should behave similarly. Our expectation is that, below threshold, the noise corrupted state will remain long-range entangled.

Note that for mixed states, the above definition of long-range entanglement is distinct from long-range correlations. For instance, the state $\frac{1}{2}(\dyad{0^{\otimes n}} + \dyad{1^{\otimes n}})$ is all-to-all correlated, but is separable, and therefore $(1,1)$-trivial.

\subsubsection{Anyonic excitations and 1-form symmetries \label{subsec:anyonic setup}}

The syndromes (i.e. violations of code stabilisers) of a 2D topological code correspond to the locations of point-like excitations that exhibit anyonic statistics. For example, in the $\mathbb{Z}_2$ surface code, a state that violates a vertex projector $\mathbb{A}_v\ket{\psi} = 0$ hosts an electric ($e$) anyon, while plaquette violations $\mathbb{B}_p\ket{\psi} = 0$ correspond to magnetic ($m$) anyons. Starting from a code state $\rho_Q^{(0)}  \in \mathscr{H}_{\mathrm{code}}$, a pair of $e$ anyons at vertices $(v, v')$ can be formed as $\rho^{(v,v')}_Q = U_\gamma \rho_Q^{(0)} U_\gamma^\dagger$, where $U_{\gamma} \coloneq \prod_{l \in \gamma} Z_\ell$ is a string operator acting along a path $\gamma$ on the primal lattice, whose endpoints are $\partial \gamma = \{v, v'\}$. Analogously, pairs of magnetic anyons can be created and moved around using a complementary set of operators $U_{\gamma^*} \coloneqq \prod_{l \in \gamma^*} X_\ell$, which act along dual paths $\gamma^*$. 

Physically, the states $\rho_Q$ and $\rho^{(v,v')}_Q$ can be distinguished by creating a pair of $m$ anyons, moving one in a loop around the vertex $v$ (but not $v'$), and then fusing them back together (see the middle row of Fig.~\ref{fig:summary}). Since $e$ and $m$ anyons have a non-trivial mutual braiding phase $\pi$, one can in principle detect the relative phase of $e^{\iu \pi}$ between the states $\rho^{(v,v')}_Q$ and $\rho_Q$. This process is mathematically described by the operator $U_{\gamma^*}$ where $\gamma^*$ is a closed loop on the dual lattice encircling $v$. Such closed loop operators, which can detect the presence of anyons in their interior, generate an (anomalous) emergent 1-form symmetry \cite{gaiotto2015generalized, kong2020, mcgreevy2023generalized} which is respected by the clean code state $\rho^{(0)}_Q$. More generally, as described in Section \ref{sec:quantum-doubles}, each topological code will support a family of string-like operators that create anyon pairs, along with a set of 1-form symmetry operators that can detect anyons.

As long as the applied noise is weak enough to remain within the same mixed-state topological phase, we expect all these properties to continue to hold. Following Ref.~\onlinecite{fan2024diagnostics}, one of the consequences is that the noise-corrupted code state $\tilde{\rho}_Q =\mathcal{N}_Q[\rho_Q]$ and the noise-corrupted excited state $\tilde{\rho}^{(v,v')}_Q \coloneqq \mathcal{N}_Q[\rho^{(v,v')}_Q] = \mathcal{N}_Q[U_{\gamma} \rho^{(0)}_Q U_{\gamma}^\dagger]$ remain approximately distinguishable:
\begin{align}
	\mathsf{T}\left(\mathcal{N}_{Q' \leftarrow Q}[\rho_0], \; \mathcal{N}_{Q' \leftarrow Q}[U_{\gamma} \rho_0 U_{\gamma}^\dagger]\right) \geq 1 - \delta,
	\label{eq:anyon dist def}
\end{align}
for some small $\delta$. This implies that the two noise-corrupted states can be distinguished by some two-outcome measurement $\{\Pi, I-\Pi\}$ with probability $1 - \delta/2$. More specifically, we will look for string-like operators supported on annuli that encircle $v$ that can approximately distinguish these states. Such operators generate an emergent 1-form symmetry for the noisy state $\tilde{\rho}_Q$.

\subsubsection{Encoding and recoverability of logical information}
\label{subsec:encoding}

Since the code states form a $d_\mathrm{L}$-dimensional linear subspace, any quantum information in the form of a $d_\mathrm{L}$-dimensional state can be encoded into this space. Let $\mathscr{H}_R$ ($R$ a `reference' qudit) denote the Hilbert space of the information to be encoded, with an orthonormal basis $\{\ket{\alpha}_R\}_{\alpha = 0}^{d - 1}$. If $\{\ket{\bar \alpha}_Q\}_{j=0}^{d-1}$ is a set of orthonormal code states, then we can define an isometric \textit{encoding channel}
\begin{align}
    \begin{aligned}
	\mathcal{C}_{Q \leftarrow R}[\,\cdot\,] & \coloneqq V_{Q \leftarrow R}^{\mathcal{C}} \, \cdot \, (V_{Q \leftarrow R}^{\mathcal{C}})^\dagger, \\
    \; \text{ where } \quad  V_{Q \leftarrow R}^{\mathcal{C}} & \coloneqq \sum_\alpha \ket{\bar \alpha}_Q\bra{\alpha}_R.
    \end{aligned}
	\label{eq:encoding}
\end{align}
This channel associates  each `logical' state on $R$ to a `physical' state on $Q$, lying in the code space.

The ability to encode logical information in a protected subspace is a property of the topological phase as a whole. Thus, below threshold, where topological order remains stable, we expect that the logical information encoded in the initial state should remain (approximately) recoverable after noise is applied, i.e.~the code should serve as a quantum memory robust against $\mathcal{N}_{Q' \leftarrow Q}$. This requires the existence of a recovery channel $\mathcal{R}_{R \leftarrow Q'}$ such that the combined operation of encoding ($\mathcal{C}_{Q \leftarrow R}$), noise ($\mathcal{N}_{Q' \leftarrow Q}$), and recovery ($\mathcal{R}_{R \leftarrow Q'}$) leaves the encoded information approximately undisturbed.

We can quantify the accuracy of a given recovery operation by comparing the concatenated map $\mathcal{R}_{R \leftarrow Q'} \circ \mathcal{N}_{Q' \leftarrow Q} \circ \mathcal{C}_{Q \leftarrow R}$ with the identity channel $\mathrm{id}_R$, which does nothing to the logical state. In particular, we define
\begin{align}
	\epsilon_{\mathrm{rec}} \coloneqq \inf_{\mathcal{R}_{R \leftarrow Q'}} \left\|\mathcal{R}_{R \leftarrow Q'} \circ \mathcal{N}_{Q' \leftarrow Q} \circ \mathcal{C}_{Q \leftarrow R} - \mathrm{id}_R\right\|_\diamond.
	\label{eq:recov error}
\end{align}
Here, $\|\mathcal{T}_R\|_\diamond$ is the diamond norm for any superoperator $\mathcal{T}_R$, equal to $\|\mathcal{T}_R\|_\diamond \coloneqq\sup_{\sigma_{RR'}} \|(\mathcal{T}_R \otimes \mathrm{id}_{R'})[\sigma_{RR'}]\|_1$, where the supremum is over all density matrices on an enlarged Hilbert space $\mathscr{H}_R \otimes \mathscr{H}_{R'}$, with $R'$ having the same Hilbert space dimension as $R$. The interpretation of the diamond distance between two channels is that it measures their distinguishability over all possible experiments (including those that use entangled inputs), just as the trace distance measures the optimal distinguishability between two states over all measurements, see e.g.~Ref.~\onlinecite{watrous2018theory}. Thus, $\epsilon_\mathrm{rec}$ also quantifies the extent to which any disturbance to the logical information could in principle be detected.

One illustrative consequence of \eqref{eq:recov error} is that it guarantees for any state with support in the code space $\rho_{QP} = \mathcal{C}_{Q \leftarrow R}[\rho_{RP}]$ (where $\rho_{RP}$ a joint state with any other Hilbert space $P$) and its corresponding noisy state $\tilde{\rho}_{Q'P} = \mathcal{N}_{Q' \leftarrow Q} \rho_{QP}$, there exists a recovery map $\mathcal{R}_{Q \leftarrow Q'}' = \mathcal{C}_{Q \leftarrow R} \circ \mathcal{R}_{R \leftarrow Q'}$ that approximately restores the clean encoded state, without compromising the encoded information \cite{pkim2026optimal}
\begin{align}
	\mathsf{T}(\rho_{QP}, \mathcal{R}'_{Q \leftarrow Q'}[\tilde{\rho}_{Q'P}]) \leq \frac{\epsilon_\mathrm{rec}}{2}.
\end{align}

\subsection{Previous approaches via the method of replicas}

For the $\mathbb{Z}_2$ surface code subject to stochastic Pauli noise, all the key properties of the corrupted states can be computed using a mapping to the classical random-bond Ising model, originally devised by the authors of Ref.~\onlinecite{dennis2002topological}. This mapping can be used to determine the location of the threshold, and other key properties of the topological phase, giving a very complete picture of how the noise leads to the breakdown of topological order and quantum memory. This procedure has been generalised to any Pauli stabilizer code with Pauli noise \cite{chubb2021statistical}. An analogous construction has also been put forward for the surface code subject to purely coherent errors, followed by stabilizer measurements \cite{bravyi2018correcting,venn2023, behrends2024, cheng2025emergent, eckstein2025learning, bao2026phasesdecodabilitysurfacecode,yan2026nonlinear, bejan2026}.

Beyond these special cases that admit particular solutions in terms of stat-mech mappings, there is much less understood about more general topological codes. A recent line of work \cite{lee2025exact, bao2026, li2025replica, fan2024diagnostics, sala2025, sala2025stability, behrends2025, vadali2026, wang2025fractional} has sought to address this gap using a heuristic approach, where one studies analytically tractable proxy quantities that are related to the operational properties at hand in an indirect way. Specifically, the relationship to operational quantities typically only appears when a certain replica limit to be taken, which generally cannot be done rigorously in practice.

To help explain this approach, we introduce the \textit{coherent information}, which has long been used to the study approximate error correction \cite{schumacher2002approximate,pkim2026optimal}, and has recently gained attention in the context of topological code states \cite{fan2024diagnostics, colmenarez2024, eckstein2024, lee2025exact, li2025replica}. Here, one prepares a logical Bell state $\ket{\Phi_{RQ}} = d_\mathrm{L}^{-1/2}\sum_{\alpha=0}^{d_\mathrm{L}-1} \ket{\alpha}_R \otimes \ket{\bar \alpha}_Q$ [cf.~\cref{eq:encoding}], and applies noise to $Q$, resulting in $\tilde{\rho}_{RQ'} = (\mathrm{id}_R \otimes \mathcal{N}_{Q' \leftarrow Q})[\Phi_{RQ}]$. The coherent information is then
\begin{align}
	\mathsf{I}_\mathrm{C}(R~\rangle~Q') \coloneqq \mathsf{S}(\tilde{\rho}_{Q'}) - \mathsf{S}(\tilde{\rho}_{RQ'}),
\end{align}
where $\mathsf{S}(\rho) = -\tr[\rho \ln \rho]$ is the von Neumann entropy. As shown in Ref.~\onlinecite{schumacher2002approximate}, the coherent information $\mathsf{I}_\mathrm{C}(R~\rangle~Q')$ is close to its maximum value $\ln d_\mathrm{L}$ if and only if $\epsilon_{\mathrm{rec}}$ is correspondingly small. Thus, the approximate preservation of coherent information under noise is a necessary and sufficient condition for approximate recoverability.

Unfortunately, the matrix logarithm in the definition of the von Neumann expression cannot generally be evaluated analytically. Instead, the approach of Refs.~\cite{lee2025exact, bao2026, li2025replica, fan2024diagnostics, sala2025, sala2025stability, behrends2025, vadali2026, wang2025fractional} is to work with a family of R{\'e}nyi generalisations, $\mathsf{I}^{(n)}_\mathrm{C}(R~\rangle~Q') \coloneqq \mathsf{S}^{(n)}(\tilde{\rho}^{Q'}) - \mathsf{S}^{(n)}(\tilde{\rho}^{RQ'})$, where $\mathsf{S}^{(n)}(\rho) \coloneqq (1-n)^{-1}\ln[\rho^n]$ is the R{\'e}nyi entropy, which approaches the von Neumann entropy in the limit $n \rightarrow 1$. For integer $n \geq 2$, this generalised coherent information can sometimes be mapped to a classical statistical mechanics model or field theory, which can then be analysed to gain insight into the stability of the phase. However, these generalised quantities cannot be directly related to correctability, and there is generally no way to rigorously take the necessary replica limit $n \rightarrow 1$.

Our approach will not involve such replica quantities, and will instead employ information-theoretic techniques to put rigorous bounds on the various diagnostics described in the previous section. Even so, to help build intuition for our alternative strategy, it is worth pausing to explore the significance of the coherent information in more detail. By Stinespring's dilation theorem, the channel $\mathcal{N}_{Q' \leftarrow Q}$ can be expressed as an isometry $V^{\mathcal{N}}_{Q'E\leftarrow Q}$, where $E$ represents the `environment' degrees of freedom, followed by tracing over $E$,
\begin{align}
	\mathcal{N}_{Q' \leftarrow Q}[\,\cdot\,] = \tr_{E}\big[V^{\mathcal{N}}_{Q'E\leftarrow Q} \,\cdot\, (V^{\mathcal{N}}_{Q'E\leftarrow Q})^\dagger  \big].
	\label{eq:channel dilation}
\end{align}
In other words, the channel can be synthesised by entangling $Q$ with some extra degrees of freedom, which are then discarded. This allows one to define a \textit{complementary channel}, whose output is the environment $E$, as
\begin{align}
	\widehat{\mathcal{N}}_{E \leftarrow Q}[\,\cdot\,] \coloneqq \tr_{Q'} \big[V^{\mathcal{N}}_{Q'E\leftarrow Q} \,\cdot\, (V^{\mathcal{N}}_{Q'E\leftarrow Q})^\dagger  \big],
	\label{eq:complementary channel}
\end{align}
which mirrors \eqref{eq:channel dilation}. (We use wide hats for complementary channels throughout.) This allows us to define the joint state of the reference and environment, $\tilde{\rho}_{RE} = (\mathrm{id}_R \otimes \widehat{\mathcal{N}}_{E \leftarrow Q})[\Phi_{RQ}]$, in terms of which one finds
\begin{align}
	\mathsf{I}_\mathrm{C}(R~\rangle~Q') = \ln d_\mathrm{L} - \mathsf{I} \mkern2mu (R:E).
	\label{eq:coherent RE}
\end{align}
Here, for a bipartite state $\rho_{AB}$, the mutual information $\mathsf{I} \mkern2mu (A:B) \coloneqq \mathsf{S}(\rho_A) + \mathsf{S}(\rho_B) - \mathsf{S}(\rho_{AB}) \geq 0$ quantifies the amount of correlations between subsystems $A$ and $B$. Thus, (approximate) maximality of $\mathsf{I}_\mathrm{C}(R~\rangle~Q')$ corresponds to the (approximate) absence of correlations between the logical information $R$ and the environment $E$.

This bidirectional relationship between correlations with the system $Q'$ and environment $E$ exemplifies the so-called `information-disturbance tradeoff' \cite{kretschmann2008information}, which states that the absence of quantum information in one subsystem implies that it must be contained in its complement. This principle, refined somewhat in Ref.~\onlinecite{beny2009}, will underpin our approach: we will aim to upper bound the amount of correlations between $R$ and $E$, in order to demonstrate the recoverability of logical information within the system $Q'$. This can be done without resorting to entropic quantities, and without invoking replicas (see \cref{sec:methods} for a more detailed discussion).

\section{Main results \label{sec:main results}}
In this Section, we state our main formal results. \Cref{sec:summary-sc} details all our rigorous findings for the 2D $\mathbb{Z}_2$ surface code under generic single-site errors. The central technical result is \cref{thm:fat}, which we will use to bound the various diagnostics described in \Cref{subsec:diagnostics} (Corollaries \ref{cor:lre}, \ref{cor:anyon-distinguishability}, \ref{cor:recoverability}). As we outline in \Cref{subsec:nonab results brief}, these will all be generalised to non-Abelian codes, as Theorem \ref{thm:fat-nonab} and Corollaries \ref{cor:lre-nonab}, \ref{cor:anyon-distinguishability-nonab}, \ref{cor:recoverability-nonab}. Further generalisations to correlated errors and 3D codes are described later in Section \ref{sec:further-generalisations}. 

\subsection{Summary of results for the $\mathbb{Z}_2$ surface code} \label{sec:summary-sc}

Here we summarise our main technical results for the case of the $\mathbb{Z}_2$ surface code subject to arbitrary (possibly non-Pauli) single-qubit separable noise $\mathcal{N}_Q = \bigotimes_{\ell \in Q}\mathcal{N}_\ell$. Each of these have a natural generalisation to non-Abelian codes and correlated errors, which we will present later on, once we have introduced the necessary formalism. Throughout this section $\mathcal{C}_{Q \leftarrow R}$ will be assumed to be an isometric encoding of the $\mathbb{Z}_2$ surface code. 

The three diagnostics of topological order described in \cref{subsec:diagnostics} can each be characterised using minor adaptations of the following technical result, which posits the existence of `thickened' logical ribbon operators.

\begin{theorem}[Informal, 2D $\mathbb{Z}_2$ surface code]
	\label{thm:fat}
	For the square-lattice $\mathbb{Z}_2$ surface code with rough and smooth boundary conditions as shown in \cref{fig:geometries}(a), consider an arbitrary single-site separable noise channel $\mathcal{N}_Q = \bigotimes_{\ell \in Q}\mathcal{N}_\ell$. There is a function $T(\mathcal{N}_\ell)$, which measures the effective noise strength of $\mathcal{N}_\ell$, such that if
	\begin{align}
		T(\mathcal{N}_\ell) &< T_\star &\forall \ell \in Q,
		\label{eq:informal noise condition}
	\end{align}
	where $T_\star > 0$ is some constant value, then we can define `thickened ribbon operators' $W_S^\mathrm{X}$, supported on horizontal strips of qubits $S \subseteq Q$ of height $w_\mathrm{y}$ [\cref{fig:geometries}(b)], that approximate the logical $X$ operator
	\begin{align}
		W_S^\mathrm{X} \cdot \big( \mathcal{N}_Q\circ \mathcal{C}_{Q \leftarrow R}[\sigma_R]\big) \approx_\epsilon \mathcal{N}_Q\circ \mathcal{C}_{Q \leftarrow R}[X_R \sigma_R]
		\label{eq:logical error}
	\end{align}
	up to error $\epsilon = \mathrm{poly}(L_\mathrm{x}) e^{-\Omega(w_{\rm y})}$.
\end{theorem}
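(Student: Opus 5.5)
The plan is to use the information--disturbance route sketched above. First, I bound how much logical-$Z$ information (the observable complementary to $\bar X$) leaks to the environment together with the complement $S^c$ of the strip. Second, I use a Bény--Oreshkov type complementarity to show that the $X$-type logical action can then be implemented on $S$ alone.

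Concretely, dilate each $\mathcal{N}_\ell$ as in \cref{eq:channel dilation} with a local environment $E_\ell$. Let $E_S^c := (Q'\setminus S')\cup E$ denote everything except the noisy strip. The target intermediate statement is the following. For the two logical $Z$-eigenstates (or for any $\sigma_R$), the reduced states on $E_S^c$ of $\mathcal{N}\circ\mathcal{C}[\sigma_R]$, purified by $V^{\mathcal N}$, are nearly independent of the logical $\bar Z$-value:
\begin{align}
\big\|\tr_{S'}\big[V^{\mathcal N}\mathcal{C}[\ket{0}\!\bra{0}]V^{\mathcal N\dagger}\big]-\tr_{S'}\big[V^{\mathcal N}\mathcal{C}[\ket{1}\!\bra{1}]V^{\mathcal N\dagger}\big]\big\|_1\le \epsilon .
\end{align}
Given this, Uhlmann's theorem (the information--disturbance theorem of Ref.~\onlinecite{kretschmann2008information,beny2009}) supplies a unitary $W_S^{\mathrm X}$ on $S'$ mapping the purification of one $\bar Z$-eigenstate near to the other. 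I then need to fix its phases so that it acts as $X_R$ on coherent superpositions. For this I would run the argument with the Bell-state input $\Phi_{RQ}$ and demand that $R E_S^c$ is nearly uncorrelated only with respect to the $Z$-basis dephased reference. Restricting to the dephased reference is what allows $\bar Z$-type information to remain visible outside $S$.

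The core step is proving the displayed bound, which I would attack as a stat-mech/cluster expansion. Write each channel in Kraus form $\mathcal{N}_\ell=\sum_k K^\ell_k\cdot K^{\ell\dagger}_k$. Expand each Kraus operator and each $K^{\ell\dagger}_{k}K^\ell_{k'}$ in the Pauli basis, with the identity component separated off. Define the noise strength $T(\mathcal{N}_\ell)$ as the total weight of the non-identity Pauli components of these products, suitably normalised, so that $T=0$ for unitary-free trivial noise. The difference of the two $\bar Z$-sectors, traced over $S'$, is then a sum over Pauli configurations on $S^c$ and on $S$. Only configurations whose $X$-part on $S$ together with its support forms a dual string crossing the strip vertically can distinguish the $\bar Z$ sectors. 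This is because the code-space expectation $\tr[\mathbb{\Pi} P]$ vanishes unless $P$ commutes with all stabilisers up to a logical. Any such configuration must contain a connected cluster of at least $w_{\mathrm y}$ non-identity sites. A Peierls counting then bounds the sum: at most $L_{\mathrm x}\cdot c^{m}$ connected clusters of size $m$ start on a given row, each weighted by $T^{m}$. This gives $\epsilon\lesssim L_{\mathrm x}\sum_{m\ge w_{\mathrm y}}(cT)^m$, which is small when $T<T_\star:=1/c$, yielding $\mathrm{poly}(L_{\mathrm x})e^{-\Omega(w_{\mathrm y})}$.

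The main obstacle is this expansion step. The Pauli decomposition of a general non-Pauli channel produces off-diagonal terms $P\rho P'$ with $P\neq P'$. The relevant object also sits inside a trace norm rather than an expectation value, so the cluster weights are not obviously positive or factorised. I would first try bounding the trace norm via a fidelity/purified-distance estimate, reducing it to a sum of overlaps $\bra{\psi}P^\dagger P'\ket{\psi}$ that the stabiliser structure kills unless $P^\dagger P'$ is a logical-crossing operator. If that is too lossy, the fallback is to split each $\mathcal{N}_\ell=(1-p)\,\mathcal{N}^{\rm near\text{-}id}_\ell+p\,\mathcal{M}_\ell$. Then erased/``bad'' sites are treated as a percolation problem: the good unitary-near-identity part is absorbed into a slightly rotated code, and exponential decay of bad clusters crossing the strip follows from subcritical percolation.
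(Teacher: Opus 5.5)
Your architecture matches the paper's. You bound what the environment of the strip learns, convert that into an operator on $S$ by information--disturbance, and control leakage by counting connected clusters that cross the strip. However, the step you flag as the main obstacle is the heart of the proof, and it is left open. Expanding Kraus operators in the Pauli basis produces exponentially many signed, off-diagonal, non-factorised contributions inside a trace norm. So the weight $T^m$ per cluster in your Peierls sum is unjustified, and a term-by-term triangle inequality is vacuous.

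The paper overcomes this with three ingredients you lack.
\begin{enumerate}
\item It works at the superoperator level. Write $\Delta=\mathcal{C}-\mathcal{C}\circ\mathcal{P}^{X}_R$, with $\mathcal{P}^X_R$ the logical $X$-dephasing. Insert $\mathrm{id}_\ell=\mathcal{D}_\ell+(\mathrm{id}_\ell-\mathcal{D}_\ell)$ on the output of $\Delta$ at every $\ell\in S$. This gives terms $\Delta^{(\Sigma)}$ that are traceless on each site of $\Sigma$ and depolarised on $S\setminus\Sigma$.
\item It defines $e^{-1/T}=\|\widehat{\mathcal{N}}_{E_\ell\leftarrow\ell}\circ(\mathrm{id}_\ell-\mathcal{D}_\ell)\|_\diamond$. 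Submultiplicativity of the diamond norm then yields a factor $e^{-1/T}$ per traceless site, for arbitrary inputs including a reference. Non-Pauli, non-unital and off-diagonal structure never has to be tracked.
\item Local indistinguishability removes every $\Sigma$ lacking a vertex-connected top-to-bottom path. The surviving terms are then not bounded one by one. All $\Sigma$ having a given spanning connected component $W^+$ (equivalently $W^+\subseteq\Sigma$ and $\partial W^+\cap\Sigma=\varnothing$) are resummed into the single factorised map $[(\mathrm{id}-\mathcal{D})^{\otimes W^+}\otimes\mathcal{D}^{\otimes\partial W^+}]\circ\Delta$. Multiple spanning components are handled by inclusion--exclusion. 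Only then is the triangle inequality applied, giving $\delta\le 2(e^{\mathcal{Z}^+}-1)$ with $\mathcal{Z}^+=\sum_{W^+}e^{-|W^+|/T}$, which lattice-animal counting controls.
\end{enumerate}

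Your fallbacks do not close the gap. The fidelity/overlap route is an approximate Knill--Laflamme argument. It either costs dimension factors exponential in system size when converting to trace norm, or reproduces the same uncontrolled sum over Kraus pairs. The percolation fallback fails for genuinely non-unitary noise. For amplitude damping, every Kraus operator in every decomposition has the form $aK_0+bK_1$, and none is proportional to a unitary when $p>0$. So there is no decomposition $(1-q)\,\mathcal{U}+q\,\mathcal{M}$ with a unitary part to absorb into a rotated code, yet the theorem covers this channel.

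Two smaller points.
\begin{itemize}
\item The correct condition for $X$-recovery on $S$ is that the Bell-purified $\rho_{RE_S}$ be nearly invariant under $X_R$-conjugation. Equivalently, $\|\widehat{\mathcal{M}}_{E_S\leftarrow R}\circ(\mathrm{id}_R-\mathcal{P}^X_R)\|_\diamond$ must be small, which suppresses both the $Y_R$ and $Z_R$ components. Your displayed bound covers only $Z_R$, and without a reference. Also, the environment \emph{is} allowed to know $\bar X$ (it holds the qubits of $S^c$, which carry horizontal $\bar X$ strings); it is $\bar Z$ it must not know.
\item The obstructing cluster is a primal, vertex-connected path crossing the strip, namely the support of $Z$-type $\bar Z$ strings. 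It is not an $X$-part forming a dual string.
\end{itemize}
Once the condition is corrected, your Uhlmann step on the Bell-purified state does produce $W^{X}_S$ directly, up to one global phase. That is a fine substitute for the paper's Lemma~1.
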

\noindent Naturally, the same holds for vertical strips of width $w_\mathrm{x}$ as shown in \cref{fig:geometries}(b), with $X_R$ replaced by $Z_R$. We defer the definition of the noise strength function $T(\mathcal{N}_{\ell})$ and the formal statement of the theorem to Section \ref{sec:main-proof}, but we remark that \eqref{eq:informal noise condition} is satisfied for all $\mathcal{N}_\ell$ contained within some ball of the identity channel, $\|\mathcal{N}_\ell - \mathrm{id}_{\ell}\|_\diamond \leq c'$ for some constant $c' > 0$. The constant can be made explicit---see Sec.~\ref{sec:main-proof}.

\Cref{eq:logical error} implies that the application of the thickened string operator $W_S^{X}$ has approximately the same effect as the `true' logical operator $X_R$, up to corrections that are exponentially small in the width of the region $S$. Thus, below threshold, a constant target error $\epsilon$ can be achieved using an operator of width $w_\mathrm{x,y} =O\big(\ln(L_\mathrm{y, x}/\epsilon)\big)$. We now state the implications of the above for our diagnostics.

\textit{1. Long-range entanglement.}~---~Since \cref{thm:fat} applies equally for any vertical strip of qubits, it follows that there must exist correlations between ribbon operators $W_{S_\mathrm{left}}^\mathrm{Z}$ and $W_{S_\mathrm{right}}^\mathrm{Z}$ on distant strips $S_\mathrm{left}$, $S_\mathrm{right}$, and similarly for the $X$-type ribbon operators $S_\mathrm{top}$, $S_\mathrm{bottom}$, as left panel of middle row of \cref{fig:summary}. This can be used to show that all noise-corrupted code states are long-range entangled.
\begin{corollary}[2D $\mathbb{Z}_2$ surface code] \label{cor:lre}
	With the assumptions of \cref{thm:fat} and for $T(\mathcal{N}_\ell) < T_\star$, let $\tilde{\rho}_Q = \mathcal{N}_Q \circ \mathcal{C}_{Q \leftarrow R}[\sigma_R]$ be a noise-corrupted code state, with $\mathcal{N}_Q$ satisfying  \cref{eq:informal noise condition}.
	Then, there exists a constant $c_0 > 0$ such that, for sufficiently large $L = \mathrm{min}(L_\mathrm{x}, L_\mathrm{y})$, $\tilde{\rho}_Q$ is $(r,t,c_0)$-long-range entangled for all $rt < L/6$, i.e.~
	\begin{align}
		\mathsf{T}(\tilde{\rho}_Q, \sigma_Q) \geq c_0  \quad \forall \, \sigma_Q \in \mathrm{SRE}(r,t), \quad rt < L/6.
	\end{align}
\end{corollary}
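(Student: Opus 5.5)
We prove \cref{cor:lre} by exhibiting a correlation in $\tilde{\rho}_Q$ that no state in $\mathrm{SRE}(r,t)$ can reproduce, in the spirit of Hastings' argument for clean topological order.

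\textbf{Step 1: operators on disjoint strips.} Apply \cref{thm:fat} four times, with strip width $w = \Theta(\ln L)$ so that each error is $\epsilon = o(1)$. This gives:
\begin{itemize}
\item two $Z$-type thickened ribbons $W^{\mathrm Z}_{S_\mathrm{left}}$ and $W^{\mathrm Z}_{S_\mathrm{right}}$ on vertical strips near the left and right rough edges;
\item two $X$-type ribbons $W^{\mathrm X}_{S_\mathrm{top}}$ and $W^{\mathrm X}_{S_\mathrm{bottom}}$ on horizontal strips near the top and bottom.
\end{itemize}
Place the strips so that $\mathrm{dist}(S_\mathrm{left},S_\mathrm{right})$ and $\mathrm{dist}(S_\mathrm{top},S_\mathrm{bottom})$ both exceed $2\cdot 2rt+O(w)$. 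This is possible when $rt<L/6$ and $L$ is large, since the strips have width $O(\ln L)$.

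\textbf{Step 2: approximate equalities on $\tilde\rho$.} The thickened operators may not be unitary, so I would first arrange (from the formal version in Sec.~\ref{sec:main-proof}) that they have operator norm at most one; if necessary, replace them by suitable Hermitian, bounded versions. Applying \eqref{eq:logical error} twice, first with $W_{S_\mathrm{left}}$ and then with $W_{S_\mathrm{right}}$, gives
\[
\tr\bigl[W^{\mathrm Z}_{S_\mathrm{left}} W^{\mathrm Z}_{S_\mathrm{right}}\tilde\rho\bigr] \approx \tr[Z_R^2\sigma_R]=1 .
\]
Likewise the $X$ pair has correlator approximately $1$. Each single expectation $\tr[W\tilde\rho]$, however, equals $\tr[Z_R\sigma_R]$ or $\tr[X_R\sigma_R]$. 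By the uncertainty relation, $\langle Z_R\rangle^2+\langle X_R\rangle^2\le 1$, so at least one Pauli, say $P\in\{X,Z\}$, has $|\langle P_R\rangle|\le 1/\sqrt2$. For that pair the connected correlator
\[
\langle W_1W_2\rangle-\langle W_1\rangle\langle W_2\rangle
\]
is at least $1/2-O(\epsilon)$.

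\textbf{Step 3: bounding SRE states.} For $\sigma=\sum_ip_i\sigma_i\in\mathrm{SRE}(r,t)$, each $\sigma_i$ is prepared by a circuit whose lightcones have radius $\le rt$. Because the strips are separated by more than $2rt$, the lightcones of the two strips are disjoint. Hence $\sigma_i$ factorises on $S_1\cup S_2$, and for each component $\tr[W_1W_2\sigma_i]=\tr[W_1\sigma_i]\tr[W_2\sigma_i]$.

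The main obstacle is the mixture. A convex combination can carry classical correlations, e.g.\ the GHZ-like example in the text, so a single connected correlator cannot rule out $\sigma$.

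To handle this, use the \emph{complementary} Pauli pair as well. For each $i$, write $a_i=\tr[W^{\mathrm Z}_{S_\mathrm{left}}\sigma_i]$ and $b_i=\tr[W^{\mathrm X}_{S_\mathrm{top}}\sigma_i]$. Suppose $\sigma$ is $c_0$-close to $\tilde\rho$. Then both pair correlators $\sum_ip_ia_ia_i'$ and $\sum_ip_ib_ib_i'$ are close to $1$, and since $|a_i|,|b_i|\le 1$, most of the weight must have $|a_i|\approx|b_i|\approx 1$.

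Now let $\Sigma$ be a square patch containing the crossing of $S_\mathrm{left}$ and $S_\mathrm{top}$. Because the strips cross, the product $W^{\mathrm Z}_{S_\mathrm{left}}W^{\mathrm X}_{S_\mathrm{top}}$ restricted to the patch should behave like $Z_RX_R$, which anticommutes. More precisely, \eqref{eq:logical error} gives $\tr[(W^ZW^X+W^XW^Z)\tilde\rho]\approx0$. A product-like component with $|a_i|,|b_i|\approx1$ is nearly an eigenstate of both operators, which forces $\{W^Z,W^X\}$ to have expectation close to $\pm2$ on that component.

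Averaging over components, $\tr[\{W^Z,W^X\}\sigma]$ would be bounded away from $0$, whereas on $\tilde\rho$ it is approximately $0$. The trace-distance bound on these operators of norm at most $2$ then gives $\mathsf T(\tilde\rho,\sigma)\ge c_0$. The threshold $rt<L/6$ leaves room to fit the three separations (left/right, top/bottom, and the crossing patch) within $L$.

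The delicate part is the anticommutation step. It requires the thickened operators to be nearly unitary, or to admit a polar-decomposition surrogate, on the relevant support. I expect to extract this from the explicit construction in the proof of \cref{thm:fat}, via the information-disturbance bound.
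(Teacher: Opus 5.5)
Your Step 3 captures the paper's key mechanism correctly. Factorization gives $\tr[W^{\mathrm Z}_{S_\mathrm{left}}W^{\mathrm Z}_{S_\mathrm{right}}\sigma_i]=a_ia_i'$. Combined with $\sum_ip_ia_ia_i'\ge 1-O(\epsilon+\mathsf T)$, this forces $|a_i|\approx1$ on most of the weight, so typical components are near-eigenstates; this is the paper's variance bound.

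The gap is in the final averaging. On a component with $|a_i|,|b_i|\approx1$ you get $\tr[\{W^{\mathrm Z},W^{\mathrm X}\}\sigma_i]\approx 2a_ib_i$. But the sign of $a_ib_i$ can vary from component to component, so $\tr[\{W^{\mathrm Z},W^{\mathrm X}\}\sigma]$ need not be bounded away from zero. Concretely, take $\sigma_R=I_R/2$ and a hypothetical mixture with the four patterns $(a_i,b_i)\approx(\pm1,\pm1)$ equally weighted. It satisfies every constraint you use: $\langle W^{\mathrm Z}\rangle\approx\langle W^{\mathrm X}\rangle\approx0$, both pair correlators $\approx1$, and $\langle\{W^{\mathrm Z},W^{\mathrm X}\}\rangle\approx0$. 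Even $\mathrm{Im}\,\tr[W^{\mathrm Z}W^{\mathrm X}\sigma]\approx\langle Y_R\rangle=0$ holds. So no contradiction follows: this is the same mixture problem you identified, reappearing one level up.

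Conversely, the step you flag as delicate is not. Lemma 1 gives $W=\mathcal L[P_R]$ with $\mathcal L$ CP unital, hence Hermitian with $\|W\|_\infty\le1$. Then $\mathrm{Var}_i[W]\le1-a_i^2$ yields near-eigenstates without any unitarity or crossing-patch geometry. Your Step 2 is also not needed.

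The repair is to use an observable whose per-component value is sign-insensitive. Set $A\coloneqq\{W^{\mathrm Z},W^{\mathrm X}\}$, so $\|A^2\|_\infty\le4$.
\begin{itemize}
\item On $\sigma$: $\tr[A^2\sigma]\ge\sum_ip_i\tr[A\sigma_i]^2\ge4-O(\epsilon+\mathsf T)$, using the $Z$ pair for $a_i$ and the $X$ pair for $b_i$.
\item On $\tilde\rho$: four applications of the one-sided relation \eqref{eq:logical error} give $\tr[A^2\tilde\rho]=\tr[\{Z_R,X_R\}^2\sigma_R]+O(\epsilon)=O(\epsilon)$.
\item Comparing the two via $|\tr[A^2(\tilde\rho-\sigma)]|\le8\mathsf T$ then gives $\mathsf T\ge c_0$.
\end{itemize}

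The paper does the analogous thing with the positive sandwich $T_\sigma(\mu,\nu)=\tr[X_{1,\mu}Y_{2,\nu}(1-X_{2,\mu})Y_{1,\nu}X_{1,\mu}\sigma]$, built from operators with $0\preceq X,Y\preceq1$. After expanding in fluctuations, the mean-field part on each component is at most $\langle X_{1,\mu}\rangle_i\langle1-X_{2,\mu}\rangle_i$, which is small on average by factorization. The fluctuation terms are controlled by Cauchy--Schwarz and the variance bound. On $\tilde\rho$, by contrast, $T$ approximates $\tr[\Pi^{x,\mu}_R\rho_R]\sum_{\mu'\neq\mu}|S_{\mu\nu}|^2|S_{\mu'\nu}|^2\ge1/(d_R\mathscr D^4)$. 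This formulation has no sign issue, and it extends to non-Abelian codes, where Pauli anticommutation is unavailable.
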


Our formal statement and proof of the above result, presented in \cref{apdx:lre}, follows a similar argument to one outlined Ref.~\onlinecite{hastings2011topological}. The above implies that $\tilde{\rho}_Q$ cannot be in a trivial mixed-state phase.

\textit{2. Anyonic excitations.}~---~From any clean code state $\rho^{(0)}_Q = \mathcal{C}_{Q \leftarrow R}[\sigma_R]$, we can generate a state $\rho^{(v,v')}_Q = U_{\gamma}\rho^{(0)}_Q U_\gamma^\dagger$ that features a pair of electric anyons at vertices $v, v'$ (Sec.~\ref{subsec:anyonic setup}). Under the conditions of \cref{thm:fat}, the corresponding noise-corrupted states remain almost exactly distinguishable.
\begin{corollary}[2D $\mathbb{Z}_2$ surface code] \label{cor:anyon-distinguishability}
	With the assumptions of \cref{thm:fat}, consider an annulus $S$ of thickness $w$ [Fig.~\ref{fig:geometries}(d)], and two vertices $v,v' \notin S$ such that $S$ surrounds $v$ but not $v'$. Then, for $T(\mathcal{N}_\ell) < T_\star$, the noise corrupted states with and without anyons ($\tilde{\rho}_Q^{(0)} = \mathcal{N}_Q[\rho^{(0)}_Q]$ and $\tilde{\rho}_Q^{(v,v')} = \mathcal{N}_Q[\rho^{(v,v')}_Q ]$, respectively) can be approximately distinguished by a POVM $\{K_S, I-K_S\}$ supported on $S$, to accuracy
    $\delta_S = e^{-\Omega(w)}$. In particular, the states are globally $\delta$-distinguishable [Eq.~\eqref{eq:anyon dist def}], with $\delta = e^{-\Omega(\textup{dist}[v,v'])}$. 
	The same holds for magnetic anyons on plaquettes $(p,p')$.
\end{corollary}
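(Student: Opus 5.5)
The plan is to run the argument behind \cref{thm:fat} on an annulus instead of a horizontal strip, obtaining a thickened closed-loop operator $W^{\mathrm{X}}_S$ supported on $S$. The bare dual loop $U_{\gamma^*}=\prod_{\ell\in\gamma^*}X_\ell$, with $\gamma^*$ encircling $v$ but not $v'$, satisfies $U_{\gamma^*}\rho^{(0)}_Q=\rho^{(0)}_Q$ on any clean code state. It anticommutes with the string $U_\gamma$ (which crosses $\gamma^*$ an odd number of times), so $U_{\gamma^*}\rho^{(v,v')}_Q=-\rho^{(v,v')}_Q$. The closed loop is just a product of vertex stabilisers $A_v$ inside it. Hence, instead of approximating a logical operator, the annulus version of \cref{thm:fat} gives an operator $W_S$ on $S$ with
\begin{align}
W_S\,\tilde\rho^{(0)}_Q\approx_{\epsilon}\tilde\rho^{(0)}_Q,\qquad W_S\,\tilde\rho^{(v,v')}_Q\approx_{\epsilon}-\tilde\rho^{(v,v')}_Q,
\end{align}
where $\epsilon=\mathrm{poly}(|\gamma^*|)e^{-\Omega(w)}$. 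To see this, I would treat the pair of states as a two-dimensional ``logical'' system (a classical bit labelling anyon absent or present) and rerun the information--disturbance bound. The environment learns almost nothing about the eigenvalue of the loop, because under sub-threshold noise the complementary channel restricted to $S$ only sees error chains that fail to wrap around the annulus, except with probability $e^{-\Omega(w)}$. I take this adaptation as essentially given by the methods already used for \cref{thm:fat}, with the strip replaced by an annulus whose circumference contributes the polynomial prefactor.

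Given such a $W_S$, I would build the POVM from its Hermitian part. Set $H=(W_S+W_S^\dagger)/2$. After rescaling, $\|W_S\|\le 1$ can be assumed (the construction in \cref{thm:fat} yields an operator of bounded norm, e.g.~a recovery-conjugated Pauli loop). Then define $K_S=(I+H)/2$, which satisfies $0\le K_S\le I$ and is supported on $S$. Taking traces of the approximate eigen-relations gives
\begin{align}
\tr[K_S\tilde\rho^{(0)}]\ge 1-\epsilon,\qquad \tr[K_S\tilde\rho^{(v,v')}]\le\epsilon .
\end{align}
So $\{K_S,I-K_S\}$ distinguishes the two states with error $\delta_S=O(\epsilon)$. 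The polynomial prefactor is absorbed into $e^{-\Omega(w)}$, provided $w$ exceeds a constant times the log of the circumference; otherwise the bound is simply restated with that prefactor.

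For the global statement, I would choose the annulus to be as thick as possible: centred at $v$ with inner radius $O(1)$ and outer radius about $\mathrm{dist}[v,v']/2$, so that $w=\Omega(\mathrm{dist}[v,v'])$ and the circumference is $O(\mathrm{dist})$. Since the trace distance upper-bounds the bias of any two-outcome measurement, this gives $\mathsf{T}\ge 1-2\delta_S$, i.e.~$\delta=e^{-\Omega(\mathrm{dist}[v,v'])}$. Near the rough or smooth boundaries the annulus must be truncated. If $v$ lies close to the boundary, I would instead use a half-annulus ending on a smooth edge, where dual strings may terminate, so the same argument applies. Magnetic anyons follow by the $X\leftrightarrow Z$, primal$\leftrightarrow$dual exchange.

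The main obstacle is the first step: checking that the proof of \cref{thm:fat} really goes through for a closed annulus and for the operator's eigenvalue, rather than for a logical operator acting on $R$. The key task is to control the environment's knowledge of the loop's eigenvalue using a cluster/percolation-type expansion of sub-threshold errors around a non-simply-connected region. I would first try to phrase it exactly as \cref{thm:fat}, with a fictitious reference qubit controlling whether $U_\gamma$ is applied.
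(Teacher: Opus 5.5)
Your plan is essentially the paper's proof. Anyon presence is encoded in a reference qubit, the local-to-global indistinguishability argument behind Theorem~\ref{thm:fat} is rerun on the annulus, and the POVM is extracted via Lemma~\ref{lem:effective operator}. In that rerun the spanning sets are paths crossing $S$ from its inner to its outer boundary, whose absence leaves an encircling $X$-loop in $S\setminus\Sigma$. Your $K_S=(I+H)/2$ is a harmless variant of the paper's direct choice $0\preceq\tilde{K}_S\preceq I$.

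The one thing to keep explicit is that the reference qubit must coherently control $U_\gamma$ on a purified code state. For a literally classical bit, $\mathcal{C}\circ\mathcal{P}_R=\mathcal{C}$ (with $\mathcal{P}_R$ the $X$-dephasing super-projector), so $\delta(E_S)$ evaluated with $\widehat{\mathcal{N}}\circ\mathcal{C}$ vanishes identically, because that map is no longer complementary to $\mathcal{N}\circ\mathcal{C}$.
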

To prove the above, we adapt \cref{thm:fat} by modifying $\mathcal{C}_{Q \leftarrow R}$ to be a channel that prepares the code state $\rho_Q^{(0)}$ when the input is in state $\ket{0}_R$, and the state with anyons $\rho^{(v,v')}_Q$ for the input $\ket{1}_R$. The operators $K_S$, which are defined on thickened loops $S$, generate an emergent 1-form symmetry exhibited by the noise-corrupted code states $\tilde{\rho}_Q$, which detect the presence of anyons in the interior of $S$.

\textit{3. Recoverability of logical information.}~---~The existence of approximate logical operators for both $X$ and $Z$ separately naturally implies that the full logical information can be approximately recovered:
\begin{corollary}[2D $\mathbb{Z}_2$ surface code] \label{cor:recoverability}
	With the assumptions of \cref{thm:fat}, and for $T(\mathcal{N}_\ell) < T_\star$, 
		there exists a recovery channel such that, the logical information can be recovered up to error
		\begin{align}
			\epsilon_\mathrm{rec} \leq e^{-\Omega(L)},
		\end{align}
		where $\epsilon_\mathrm{rec}$ measures the distance of the recovered state to the original state, as defined in Eq.~\eqref{eq:recov error}.
\end{corollary}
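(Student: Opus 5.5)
We will derive \cref{cor:recoverability} from \cref{thm:fat} through the information--disturbance tradeoff \cite{kretschmann2008information, beny2009}. The idea is to show that the environment $E$ of the Stinespring dilation \eqref{eq:channel dilation} learns almost nothing about the logical state, and then invoke the complementary-channel characterisation of approximate recoverability.

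\emph{Step 1: thickened logicals for both Paulis.} Apply \cref{thm:fat} to a horizontal strip $S_\leftrightarrow$ of height $w_\mathrm{y}=cL$ and to a vertical strip $S_\updownarrow$ of width $w_\mathrm{x}=cL$. This gives thickened operators $W^\mathrm{X}$ and $W^\mathrm{Z}$ that reproduce $X_R$ and $Z_R$ on the corrupted state. The errors are $\epsilon=\mathrm{poly}(L)e^{-\Omega(L)}=e^{-\Omega(L)}$. Because \eqref{eq:logical error} holds for arbitrary $\sigma_R$, and we apply it to $R$ entangled with a purifying copy $R'$, we obtain the statement at the level of channels:
\begin{align}
\big\|\mathcal{W}^\mathrm{P}\circ\mathcal{N}_Q\circ\mathcal{C}_{Q\leftarrow R}-\mathcal{N}_Q\circ\mathcal{C}_{Q\leftarrow R}\circ\mathcal{P}_R\big\|_\diamond\le e^{-\Omega(L)}, \qquad P\in\{X,Z\},
\end{align}
where $\mathcal{P}_R[\cdot]=P_R\cdot P_R$ and $\mathcal{W}^\mathrm{P}$ is conjugation by $W^\mathrm{P}$. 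If $W^\mathrm{P}$ is not itself unitary, one should first use the formal version of \cref{thm:fat} to obtain unitary or channel approximants.

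\emph{Step 2: the environment learns nothing.} Let $\widehat{\mathcal{M}}_{E\leftarrow R}=\widehat{\mathcal{N}}_{E\leftarrow Q}\circ\mathcal{C}_{Q\leftarrow R}$ be the complementary channel of the full encode-plus-noise map. The operation $\mathcal{W}^\mathrm{P}$ acts only on $Q'$, so it leaves the environment marginal untouched. It follows that $\widehat{\mathcal{M}}\circ\mathcal{P}_R\approx\widehat{\mathcal{M}}$ in diamond norm, up to $e^{-\Omega(L)}$, for $P=X,Z$, and hence also for $Y$. This step needs some care: the approximation in Step 1 is on $Q'$, and we must transfer it to $E$. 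We would do this with a continuity argument in the style of Stinespring (Kretschmann--Schlingemann--Werner continuity). Alternatively, one can argue directly: applying $P_R$ at the input is equivalent to applying $W^\mathrm{P}$ on $Q'$, up to $e^{-\Omega(L)}$ in purified fidelity, and $W^\mathrm{P}$ acts trivially on $E$. Averaging over the Pauli twirl then gives
\begin{align}
\widehat{\mathcal{M}}\approx\widehat{\mathcal{M}}\circ\mathcal{D}_R=\tr_R[\cdot]\,\omega_E,
\end{align}
where $\mathcal{D}_R$ is the completely depolarising channel. So $\widehat{\mathcal{M}}$ lies within $e^{-\Omega(L)}$ of a constant, replacement channel.

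\emph{Step 3: recovery.} By the Bény--Oreshkov information--disturbance theorem \cite{beny2009}, the recovery error in \eqref{eq:recov error} is bounded by the distance of the complementary channel to the nearest constant channel. The bound is $\epsilon_\mathrm{rec}\le 2\sqrt{\,\cdot\,}$ of that distance, up to dimension-free constants. Taking the square root still leaves $e^{-\Omega(L)}$.

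The main obstacle is Step 2: converting the one-sided statement \eqref{eq:logical error}, which is posed on $Q'$ and uses operators that may be non-unitary, into a diamond-norm statement about $\widehat{\mathcal{M}}$. We would handle this by working at the level of purifications and using Uhlmann's theorem, so that the square-root losses enter only polynomially in the exponent.
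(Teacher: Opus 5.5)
\paragraph{Gap in Step~2.} In Step~1 you replace the one-sided relation \eqref{eq:logical error} by a conjugation statement $\mathcal{W}^{\mathrm{X}}\circ\mathcal{M}_{Q'\leftarrow R}\approx\mathcal{M}_{Q'\leftarrow R}\circ\mathcal{X}_R$, with $\mathcal{X}_R[\cdot]=X_R\cdot X_R$, and likewise for $Z$. That statement concerns only the output on $Q'$, and the inference you draw from it is false in general.

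Take the completely dephasing channel with dilation $V\ket{i}_R=\ket{i}_{Q'}\ket{i}_E$. It satisfies $X_{Q'}\mathcal{M}[\sigma]X_{Q'}=\mathcal{M}[X_R\sigma X_R]$ and $Z_{Q'}\mathcal{M}[\sigma]Z_{Q'}=\mathcal{M}[Z_R\sigma Z_R]$ exactly, even with a reference system attached. So your Step~1 holds with zero error for both Paulis. Yet $E$ holds a perfect copy of the $Z$ basis, $\widehat{\mathcal{M}}\circ\mathcal{X}_R\neq\widehat{\mathcal{M}}$, and nothing quantum is recoverable.

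The tools you propose do not close this gap. Apply Kretschmann--Schlingemann--Werner continuity or Uhlmann's theorem to the dilations $(W^{\mathrm{X}}\otimes I_E)V$ and $VX_R$ of two nearby channels. This only yields $\widehat{\mathcal{M}}\circ\mathcal{X}_R\approx\mathcal{U}_E\circ\widehat{\mathcal{M}}$ for some unknown unitary channel $\mathcal{U}_E$ on $E$ (in the example, conjugation by $X_E$), and the Pauli twirl cannot remove such unitaries. The joint statement $(W^{\mathrm{X}}\otimes I_E)V\approx VX_R$ that you invoke as ``purified fidelity'' is exactly the nontrivial content, and it does not follow from anything on $Q'$ alone. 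Passing to conjugation is precisely where you lose the information. The one-sided form $W\cdot\mathcal{M}[\sigma]\approx\mathcal{M}[X_R\sigma]$ does encode it: no $W$ satisfies it for the dephasing example.

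\paragraph{How to fix it.} You need an environment-side input, and in the paper it comes for free. The proof of Theorem~\ref{thm:fat} never goes through the thickened operators; it bounds $\delta_{\mathfrak{X}}(E_S)$ and $\delta_{\mathfrak{Z}}(E_S)$ directly (Lemmas~\ref{lem:delta bound simple}, \ref{lem:SOW mapping surface}). Taking $S$ to be the whole system gives $E_S=E$, so $\delta_{\mathfrak{X}}(E),\delta_{\mathfrak{Z}}(E)\le e^{-\Omega(L)}$.

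Since $\widehat{\mathcal{M}}\circ(\mathrm{id}_R-\mathcal{P}^{\mathfrak{X}}_R)=\tfrac12(\widehat{\mathcal{M}}-\widehat{\mathcal{M}}\circ\mathcal{X}_R)$, these bounds are exactly your relations $\widehat{\mathcal{M}}\circ\mathcal{X}_R\approx\widehat{\mathcal{M}}\approx\widehat{\mathcal{M}}\circ\mathcal{Z}_R$. From there your twirl and Step~3 are correct. They reproduce the paper's argument: bound $\|\widehat{\mathcal{M}}\circ(\mathrm{id}_R-\mathcal{D}_R)\|_\diamond$, then apply $\epsilon_{\mathrm{rec}}\le 2\sqrt{\delta}$. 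For $\mathbb{Z}_2$ this simplifies because $\mathcal{P}^{\mathfrak{X}}_R\circ\mathcal{P}^{\mathfrak{Z}}_R=\mathcal{D}_R$.

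If you want to use Theorem~\ref{thm:fat} strictly as a black box, there are two options:
\begin{itemize}
\item start from its recovery-channel statement \eqref{eq:epsilon rec thm-fat} and use the converse direction of B\'eny--Oreshkov (Condition~1 $\Rightarrow$ Condition~2 in Lemma~\ref{lem:effective operator});
\item or use the joint $Q'E$ bound \eqref{eq:TR approx} with a unitary $\tilde{O}_{Q'}$ and then take $\tr_{Q'}$.
\end{itemize}
Either supplies Step~2 at the cost of extra square roots, which still leaves $e^{-\Omega(L)}$.
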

We note that the approximate preservation of coherent information follows directly from \cref{cor:recoverability}, i.e. $\mathsf{I}_\mathrm{C}(R \, \rangle \, Q') \geq \ln(d_\mathrm{L}) - f(\epsilon_\mathrm{rec})$, where $f(x \rightarrow 0^+) = 0$  [end of Appendix \ref{apdx:full recoverability proof}, Eq.~\eqref{eq:coh inf bound apdx}].

Using Corollary \ref{cor:recoverability} (and the explicit values of $T_\star$, discussed later on) we can derive an explicit lower bound on the intrinsic recovery threshold for a chosen family of noise channels. We compile these bounds for various important types of noise in \cref{tab:beta-ps-and-thresholds}, and compare them to existing numerical estimates where they exist. For the $\mathbb{Z}_2$ case, we are able to optimize our proof technique to increase the value of $T_\star$ (see Section \ref{sec:sow-mapping}); in this case our bounds typically come within a factor of 3 of the numerical values.

Proofs of all these Corollaries, along with their generalisation to non-Abelian codes, are provided in Appendix \ref{apdx:corollaries}. 

\renewcommand{\arraystretch}{1.2}
\begin{table*}[t]
	\begin{center}
		\begin{ruledtabular}
			\setlength{\tabcolsep}{6pt}
			\begin{tabular}{ccccc}
				Code & Channel family & Channel definition $\mathcal{N}_\ell^{(p)}[\sigma]$ & \makecell{Lower bound on $p_\mathrm{th}$ \\ (This work, rigorous)} & Estimate  of $p_\mathrm{th}^\mathrm{est}$ \\
				\hline
				
				\raisebox{-0.9ex}{$\mathbb{Z}_2$ surf.} 
				& \multirow{2}{*}{\raisebox{-3.5ex}{Amplitude damping}}  
				& \multirow{2}{*}{\raisebox{-3.5ex}{\scalebox{1.1}{$\displaystyle \begin{aligned} \smqty[1 & 0 \\ 0 & \sqrt{1-p}]_\ell \, & \sigma  \,\smqty[1 & 0 \\ 0 & \sqrt{1-p}]_\ell \\ + \; \smqty[0 & \sqrt{p} \\ 0 & 0]_\ell \, & \sigma \, \smqty[0 & 0 \\ \sqrt{p} & 0]_\ell \end{aligned}$}}}  
				& \raisebox{-0.9ex}{$13.3\%$} & \raisebox{-0.9ex}{$39\%$ \cite{darmawan2017tensor}} \\ [1.5ex]
				\raisebox{-0.9ex}{3D $\mathbb{Z}_2$} 
				& & & \raisebox{-0.9ex}{$0.16 \%$} & \raisebox{-0.9ex}{?} \\[1.5ex]
				\hline
				
				$\mathbb{Z}_2$ surf. 
				& \multirow{3}{*}{Heralded leakage}  
				& \multirow{3}{*}{\scalebox{1.1}{$\displaystyle \begin{aligned} & \quad \;\, (1-p)\sigma \otimes \ketbra{0}_C \\ + \;  & p \tr_\ell[\sigma] \otimes \ketbra{e}_\ell \otimes \ketbra{1}_C\end{aligned}$}}  
				& $36.5\%$ & $50\%^\dag$ \cite{stace2009thresholds} \\
				$S_3$ 
				& & & $2.7\%$ & ? \\
				$D_4$ 
				& & & $2.6\%$ & ? \\
				\hline
				
				$\mathbb{Z}_2$ surf.
				& \multirow{3}{*}{Bit-flip ($q = 2$)}  
				& \multirow{3}{*}{\scalebox{1.1}{$(1-p) \sigma + p X_\ell \sigma X_\ell $}}  
				& $3.4\%$ & $10.3\%$ \cite{dennis2002topological} \\
				$\mathrm{3D}$ $\mathbb{Z}_2$ 
				& & & $0.061 \%$ & $11\%$ \cite{xu2025phenomenological} \\
				$\mathrm{DSem}$ & & & $0.0054\%$ & ? \\
				\hline
				
				$\mathbb{Z}_2$ surf.
				& \multirow{4}{*}{Partially depolarising}  
				& \multirow{4}{*}{\scalebox{1.1}{$\displaystyle(1-p)\sigma + p \tr_\ell[\sigma] \otimes \frac{I_\ell}{q}$}} 
				& $6.9\%$ & $19\%$ \cite{bombin2012strong} \\
				$S_3$ 
				& & & $0.069\%$ & ? \\
				$D_4$ 
				& & & $0.068\%$ & ? \\
                $\mathrm{Fibonacci}$ 
				& & & $0.0071\%$ & $6.27\%^*$ \cite{schotte2022} \\
			\end{tabular}
		\end{ruledtabular}
		\caption{Bounds on the threshold noise rate $p_{\text{th}}$ for different topological codes and families of single-site separable error channels $\mathcal{N}^{(p)}_{Q' \leftarrow Q} = \prod_\ell \mathcal{N}^{(p)}_\ell$ parametrised by $p$ (see Appendix~\ref{apdx:diamond-norms} for proofs). We consider the $\mathbb{Z}_2$ surface (surf.) code in the geometry of Fig.~\ref{fig:geometries}, Non-Abelian codes $S_3$, $D_4$ based on the quantum double model on the square lattice with the corresponding group, the double semion (\text{DSem}) and Fibonacci codes based on the Levin-Wen string net model on the hexagonal lattice, along with 3D $\mathbb{Z}_2$ surface code in the geometry of Fig.~\ref{fig:3d-geometry}. $q$ is the local Hilbert space dimensions for the physical qudits. In the rightmost column, we list estimates of the thresholds based on numerical simulations, where they exist. $*$: estimate based on a specific decoder. $\dag$: exact value based on percolation threshold. Note that we have only optimised our proof technique for better numerical values in the $\mathbb{Z}_2$ surface code (Sec.~\ref{sec:sow-mapping}).}
		\label{tab:beta-ps-and-thresholds}
	\end{center}
\end{table*}

\subsection{Results for non-Abelian codes \label{subsec:nonab results brief}}

Theorem \ref{thm:fat} and Corollaries \ref{cor:lre}, \ref{cor:anyon-distinguishability}, and \ref{cor:recoverability} for the $\mathbb{Z}_2$ surface code will be generalised to non-Abelian codes. In this work we consider two families of non-Abelian codes: Kitaev's quantum double model for a discrete group $G$ \cite{kitaev2003fault}, and the string-net model of Levin and Wen given a unitary fusion category $\mathscr{C}$ \cite{levin2005string}.

Since the formal statement of these generalised results requires some additional theoretical background yet to be introduced, we defer these until Section \ref{sec:quantum-doubles}. However, as an example of the applicability of our method, we present the following result here.

\begin{theorem}
	For any Kitaev quantum double model or Levin-Wen string net model on a $L_{\rm x} \times L_{\rm y}$ torus, there exists a constant $c_0$ such that, for any tensor product of noise channels $\mathcal{N}_Q = \bigotimes_{\ell \in Q} \mathcal{N}_\ell$ that are at most $c_0$-far from the identity,
	\begin{align}
		\|\mathcal{N}_\ell - \mathrm{id}_\ell\|_\diamond & < c_0, & \forall \ell \in Q,
	\end{align}
	there exists a recovery map $\mathcal{R}_Q$ such that
	\begin{align}
		\norm{\mathcal{R}_Q \circ \mathcal{N}_Q\big[\dyad{\psi}\big] - \dyad{\psi}}_1 \leq e^{-\Omega(L)},
	\end{align}
	for any $\ket{\psi}_Q$ in the ground state subspace, where again $L = \mathrm{min}(L_\mathrm{x}, L_\mathrm{y})$.
    \label{thm:nonab example}
\end{theorem}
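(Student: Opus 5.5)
Theorem~\ref{thm:nonab example} will be obtained as a special case of the general non-Abelian recoverability result (Corollary~\ref{cor:recoverability-nonab}), which in turn rests on the thickened-ribbon theorem (Theorem~\ref{thm:fat-nonab}). The argument has three parts: reduce the diamond-norm ball to the noise-strength condition, bound the environment's access to logical information, and convert that bound into a recovery map.

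\textbf{Step 1: from the diamond-norm ball to $T<T_\star$.} I would first check that $\|\mathcal{N}_\ell-\mathrm{id}_\ell\|_\diamond<c_0$ implies the effective-noise-strength condition $T(\mathcal{N}_\ell)<T_\star$ of Definition~\ref{def:noise strength def}. The natural route is to choose a Stinespring dilation $V^{\mathcal{N}_\ell}$ that is close to $\mathrm{id}_\ell\otimes\ket{0}_{E_\ell}$. Such a dilation exists by the continuity of Stinespring dilations (Kretschmann--Schlingemann--Werner), with error $O(\sqrt{c_0})$. The complementary channel $\widehat{\mathcal{N}}_\ell$ is then $O(\sqrt{c_0})$-close to a completely depolarising (state-preparation) channel. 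Whatever single-site quantity $T$ measures, for instance the weight of the non-trivial Kraus component seen by the environment, it should be bounded by a constant times $\sqrt{c_0}$. Choosing $c_0$ small enough then gives $T<T_\star$.

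\textbf{Step 2: bound the environment's logical information on the torus.} With the noise condition in hand, I would apply the general theorem to each generator of the logical algebra. On the torus, the ground space of the quantum double or string net is spanned by anyon-flux sectors through the two non-contractible cycles. The logical algebra is generated by closed ribbon, or string-net loop, operators $W^a_{\gamma}$ along horizontal and vertical cycles, labelled by anyon types $a$. Theorem~\ref{thm:fat-nonab}, applied to strips of width $w\sim L/3$ wrapping each cycle, yields thickened operators that reproduce the action of every $W^a_\gamma$ on $\mathcal{N}_Q\circ\mathcal{C}[\sigma]$ up to error $\mathrm{poly}(L)e^{-\Omega(L)}=e^{-\Omega(L)}$.

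The underlying mechanism, as in Section~\ref{sec:methods}, is that the complementary channel barely distinguishes logical states. Concretely, $\widehat{\mathcal{N}}\circ\mathcal{C}[\sigma_R]$ is close to a $\sigma_R$-independent state. The reason is that the environment only "sees" local, matchable error configurations, and a percolation/Peierls-type counting shows that configurations wrapping a cycle have total weight $e^{-\Omega(L)}$ once $T<T_\star$. Because the string-net and quantum-double logical algebra is generated by these loop operators (a full matrix algebra on $\mathscr{H}_{\rm code}$), approximate commutation with all generators forces the environment's marginal to be approximately independent of the logical input.

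\textbf{Step 3: information--disturbance to recovery.} Using the Bény--Oreshkov complementarity bound, a small distance of $\widehat{\mathcal{N}}\circ\mathcal{C}$ from a constant channel gives a recovery map $\mathcal{R}_{R\leftarrow Q'}$ with $\epsilon_{\mathrm{rec}}\le e^{-\Omega(L)}$, up to a square root that does not affect the exponent. Setting $\mathcal{R}_Q=\mathcal{C}\circ\mathcal{R}_{R\leftarrow Q'}$ and taking $\sigma_R$ to be the preimage of $\ket{\psi}$ then gives the stated trace-norm bound.

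\textbf{Main obstacle.} The hard part is Step 2 for general unitary fusion categories. The noise must be decomposed into anyon-creating processes, and the Peierls counting must control the quantum dimensions and the non-invertible fusion of the resulting strings; $F$-symbols and quantum dimensions enter the weights. I would first treat the quantum double with the site-local projectors $\mathbb{A}_v,\mathbb{B}_p$, expanding $\widehat{\mathcal{N}}$ over syndrome patterns. I would then attempt the string net by analogy, absorbing category-dependent constants, such as $\mathcal{D}$ and the local dimension $q$, into $c_0$.
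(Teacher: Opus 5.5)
Your skeleton matches the paper's. Step 1 is \cref{lem:info-leaked-to-dist-to-id}: the complement of the identity is a replacement channel, which annihilates traceless inputs, so $e^{-1/T(\mathcal{N}_\ell)}\le 4(1-d_\ell^{-2})\sqrt{c_0}$. Step 3 is B\'eny--Oreshkov with $\mathfrak{J}=\mathfrak{B}(\mathscr{H}_R)$, followed by setting $\mathcal{R}_Q=\mathcal{C}\circ\mathcal{R}$.

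The gap is in Step 2, at the one place where Corollary~\ref{cor:recoverability-nonab} does real work beyond Theorem~\ref{thm:fat-nonab}. That theorem only gives $\delta_{\mathfrak{K}_\leftrightarrow}(E),\delta_{\mathfrak{K}_\updownarrow}(E)\le\epsilon$ for two maximal commutative algebras. Your sentence ``approximate commutation with all generators forces the environment's marginal to be approximately independent of the logical input'' restates what must be proved; it is not an argument. You also assume, without proof, that $\mathfrak{K}_\leftrightarrow\cap\mathfrak{K}_\updownarrow=\mathbb{C}I$. Finally, you need an $L$-independent constant, since approximate invariance under each generator does not by itself control products of generators.

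The paper handles this as follows. It writes $\mathrm{id}_R-\mathcal{D}_R=(\mathrm{id}_R-\mathcal{A}_R)\circ\mathcal{Q}_R$ with $\mathcal{A}_R=\mathcal{P}^{\mathfrak{K}_\leftrightarrow}_R\circ\mathcal{P}^{\mathfrak{K}_\updownarrow}_R$. It then splits $\mathrm{id}_R-\mathcal{A}_R=(\mathrm{id}_R-\mathcal{P}^{\mathfrak{K}_\updownarrow}_R)+(\mathrm{id}_R-\mathcal{P}^{\mathfrak{K}_\leftrightarrow}_R)\circ\mathcal{P}^{\mathfrak{K}_\updownarrow}_R$, which gives $\delta_{\mathfrak{B}(\mathscr{H}_R)}(E)\le 2\epsilon\|\mathcal{Q}_R\|_\diamond$. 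To bound $\|\mathcal{Q}_R\|_\diamond\le 2(1+\mathscr{D}^2)$, it notes that $\mathcal{A}_R$ acts on diagonals as the doubly stochastic matrix $|S_{\mu\nu}|^2$. The vacuum column $|S_{\mu\varnothing}|^2=d_\mu^2/\mathscr{D}^2>0$ gives a Doeblin minorization, so $\mathcal{A}_R^n\to\mathcal{D}_R$ geometrically. This one argument proves both the trivial-intersection property and the $L$-independence of the constant; you need something equivalent.

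Second, the obstacle you single out is misdiagnosed, and the route you sketch for it is not available in this generality. The proof of Theorem~\ref{thm:fat-nonab} never decomposes the noise into anyon-creating processes or syndrome patterns. No $F$-symbols, quantum dimensions, fusion rules, or matchability enter the weights. For a generic non-Pauli channel on a non-Abelian code, $\widehat{\mathcal{N}}\circ\mathcal{C}[\sigma_R]$ is not a nonnegative mixture over error configurations, so there is nothing for a Peierls count of that type to act on.

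The missing idea is local indistinguishability combined with a site-wise expansion:
\begin{itemize}
\item Expand the output of $\mathcal{C}-\mathcal{C}\circ\mathcal{P}^{\mathfrak{K}_\leftrightarrow}_R$ using $\mathrm{id}_\ell=\mathcal{D}_\ell+(\mathrm{id}_\ell-\mathcal{D}_\ell)$ on each site.
\item Any term whose traceless support $\Sigma$ contains no spanning path of the edge graph vanishes exactly. A closed loop operator then fits in $S\setminus\Sigma$ (\cref{lem:logical path}), and its projectors can be cycled through the partial trace.
\item Each surviving cluster of size $m$ contributes at most $2e^{-m/T}$, directly from the definition of $T$.
\item Inclusion--exclusion over maximal connected components, plus the $(e(k-1))^m$ bound on connected subgraphs, makes the sum converge for $T<[1+\ln(k-1)]^{-1}$ (\cref{lem:bound-norm-connectivity}).
\end{itemize}
The only model-dependent input is the support of the loop operators, which fixes the edge-graph degree $k=8$, $26$, or $40$. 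If you intended to take Theorem~\ref{thm:fat-nonab} as given, this second point is moot and only the first gap remains.
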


The above is actually slightly weaker than the strongest form of our results, but we present it in this form here due to its conceptually transparency.

\section{Overview of methods \label{sec:methods}} 

Our proofs of these various results for different codes all follow the same overall line of reasoning, which involves two key elements. The first is the formalism of approximate operator algebra error correction, which provides sufficient conditions for certain types of logical information to be recoverable after noise is applied. The second is a technical result that describes how two states that cannot be distinguished on local patches become globally indistinguishable after certain channels are applied. We describe these theoretical constructions in the following sections.

\subsection{Approximate operator algebra error correction} \label{sec:aec-analysis}

In brief, the formalism of approximate operator algebra error correction (AOAEC) generalises the formalism of approximate error correction in a way that allows one to determine whether certain subclasses of logical information can be recovered after noise is applied. For instance, in the $\mathbb{Z}_2$ surface code, one can use this technique to separately determine whether $X$-type or $Z$-type logical information can be recovered. We begin by reviewing the necessary mathematical prerequisites.

\subsubsection{Review: operator subalgebras}

For a finite-dimensional Hilbert space $\mathscr{H}_R$, which we consider throughout this work, a subalgebra $\mathfrak{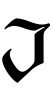}$ is a linear subspace of $\mathfrak{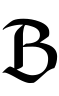}(\mathscr{H}_R)$ (the space of bounded linear operators over $\mathscr{H}_R$) that is also closed under multiplication. Furthermore, $\mathfrak{J}$ is a $\dagger$-algebra if it is also closed under Hermitian conjugation, and it is a \textit{von Neumann algebra} if it also contains the identity for the whole space $I_{\mathscr{H}_R}$ (see e.g.~Ref.~\cite{arveson1976invitation}). Since we are working in finite dimensions, this is equivalent to a $C^*$ algebra. For example, if $d_\mathrm{L} = 2$, then $\mathfrak{B}(\mathscr{H}_R)$ is the space of $2 \times 2$ matrices, spanned by Pauli matrices $\{I, X, Y, Z\}$. The subset of operators $\mathfrak{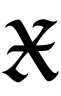} \coloneqq \{\alpha I + \beta X : \alpha, \beta \in \mathbb{C}\}$ forms a von Neumann algebra, which we refer to as the Pauli-$X$ algebra, as does the corresponding Pauli-$Z$ algebra $\mathfrak{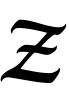}$. 

For every von Neumann algebra $\mathfrak{J} \subseteq \mathfrak{B}(\mathscr{H}_R)$, there is a unique channel $\mathcal{P}^{\mathfrak{J}}_R$ (a CPTP map) which projects operators onto $\mathfrak{J}$. This channel (sometimes referred to as a conditional expectation) satisfies \cite{takesaki2003theory}
\begin{subequations}
	\begin{equation}
		\mathcal{P}^{\mathfrak{J}}_R[O] \in \mathfrak{J} \quad \forall \, O \in \mathfrak{B}(\mathscr{H}_R),
	\end{equation}
	\vspace{-1.5em}
	\begin{equation}
		\mathcal{P}^{\mathfrak{J}}_R \circ \mathcal{P}^{\mathfrak{J}}_R = \mathcal{P}^{\mathfrak{J}}_R,
		\label{eq:super-projector idempotent}
	\end{equation}
	\vspace{-1.5em}
	\begin{equation}
		\mathcal{P}^{\mathfrak{J}}_R[J O J'] = J \mathcal{P}^{\mathfrak{J}}_R[O]J' \quad \forall \, O \in \mathfrak{B}(\mathscr{H}_R); \; J, J' \in \mathfrak{J}.
		\label{eq:super-projector commute}
	\end{equation}
	\label{eq:super-projector}
\end{subequations}
For the Pauli-$X$ subalgebra, the super-projector is $\mathcal{P}^{\mathfrak{X}}_R[\,\cdot\,] = \frac{1}{2}(\,\cdot\, + X\, \cdot \, X)$, which is the completely dephasing channel in the $X$ basis. For the full algebra $\mathfrak{J}= \mathfrak{B}(\mathscr{H}_R)$, we have $\mathcal{P}^{\mathfrak{J}}_R = \mathrm{id}_R$ the identity channel, and for the trivial subalgebra $\mathbb{C}I = \{\alpha I : \alpha \in \mathbb{C}\}$, the super-projector is the completely depolarizing channel $\mathcal{P}^{\mathbb{C}I}_R[\,\cdot\,] = \mathcal{D}_R[\,\cdot\,] = \tr[\,\cdot\,]  I_R/d_R$.

An important concept is the \textit{commutant} of a subalgebra, denoted $C(\mathfrak{J})$, which contains all elements of $\mathfrak{B}(\mathscr{H}_R)$ that commute with every element of $\mathfrak{J}$,
\begin{align}
	C(\mathfrak{J}) = \{O : [O, J] = 0\; \forall J \in \mathfrak{J}\}.
\end{align}
The Pauli-$X$ and -$Z$ algebras are their own commutants, while the trivial algebra $\mathbb{C}I$ and the full algebra $\mathfrak{B}(\mathscr{H}_R)$ are commutants of one another. More generally, the bicommutant theorem states that $C(C(\mathfrak{J})) = \mathfrak{J}$ for all von Neumann algebras $\mathfrak{J}$.

\subsubsection{Recoverability of a subalgebra}

Given an isometric encoding map $\mathcal{C}_{Q \leftarrow R}$ and a noise channel $\mathcal{N}_{Q' \leftarrow Q}$, the relationship between a logical state $\sigma_R$ and its noise-corrupted encoding is described by the concatenation of $\mathcal{C}_{Q \leftarrow R}$ and $\mathcal{N}_{Q' \leftarrow Q}$,
\begin{align}
	\mathcal{N}_{Q' \leftarrow Q} \circ \mathcal{C}_{Q \leftarrow R} \eqqcolon \mathcal{M}_{Q' \leftarrow R}.
	\label{eq:m channel def}
\end{align}
which can be conveniently represented with a single channel $\mathcal{M}_{Q' \leftarrow R}$. A von Neumann subalgebra $\mathfrak{J} \subseteq \mathfrak{B}(\mathscr{H}_R)$ is said to be $\epsilon$-recoverable for $\mathcal{M}_{Q' \leftarrow R}$ if there exists a recovery operation $\mathcal{R}^{\mathfrak{J}}_{R \leftarrow Q'}$ such that
\begin{align}
	\left\| \mathcal{R}^{\mathfrak{J}}_{R \leftarrow Q'} \circ \mathcal{M}_{Q' \leftarrow R} - \mathcal{P}_R^{\mathfrak{J}}\right\|_\diamond \leq \epsilon.
	\label{eq:recov error algebra}
\end{align}
We define $\epsilon^{\mathrm{rec}}_\mathfrak{J}$ as the smallest $\epsilon$ such that the above holds for some optimal $\mathcal{R}^{\mathfrak{J}}_{R \leftarrow Q'}$.

The left hand side of \eqref{eq:recov error algebra} generalises the quantity defined in Eq.~\eqref{eq:recov error}, which governs the correctability of the whole logical space. The difference here is that the map $\mathcal{R}^{\mathfrak{J}}_{R \leftarrow Q'}$ need only restore observables in $\mathfrak{J}$, rather than the full space of operators, since these are the operators within the support of $\mathcal{P}_R^{\mathfrak{J}}$.

Operationally, Equation \eqref{eq:recov error algebra} immediately implies that the expectation value of any Hermitian observable $O_R$ contained in $\mathfrak{J}$ can be approximately inferred by performing a suitably chosen measurement on the noise-corrupted degrees of freedom $Q'$. That is, for each such $O_R \in \mathfrak{J}$, there exists a corresponding operator $\tilde{O}_{Q'}$ supported on $Q'$ such that
\begin{align}
	\left|\tr\big[\tilde{O}_{Q'}  \mathcal{M}_{Q' \leftarrow R}[ \sigma_R]\big] - \tr[O_R \sigma_R]\right| \leq \epsilon. \label{eq:exp value approx}
\end{align}
for any logical state $\sigma_R$\footnote{To prove Eq.~\eqref{eq:exp value approx}, take an isometry $V_{RE' \leftarrow Q'}^{\mathcal{R}^{\mathfrak{J}}}$ that dilates the recovery channel $\mathcal{R}^{\mathfrak{J}}_{R \leftarrow Q'}$, and then choose $\tilde{O}_Q = (V^{\mathcal{R}^{\mathfrak{J}}}_{RE' \leftarrow Q})^\dagger (O_R \otimes I_{E'}) V^{\mathcal{R}^{\mathfrak{J}}}_{RE' \leftarrow Q}$. Then, the left hand side becomes $\abs{\tr[O_R\cdot(\mathcal{R}^{\mathfrak{J}}_{R \leftarrow Q'} \circ \mathcal{M}_{Q' \leftarrow R}[\sigma_R] - \sigma_R)]}$, which by \eqref{eq:recov error algebra} is at most $\epsilon$.}. 

Approximate recoverability also implies that logical operations within $\mathfrak{J}$ can be implemented on the noise-corrupted state, beyond readout of information. In \cref{app:recov-means-log-ops}, we formalise and prove the following result.

\begin{lemma}[Informal] \label{lem:effective operator} 
	If an algebra $\mathfrak{J}$ is $\epsilon$-recoverable for some small $\epsilon$, then for any superoperator $\mathcal{T}^\mathfrak{J}_R$ on the logical space that can be constructed using operators in $\mathfrak{J}$ (i.e.~$\mathcal{T}^\mathfrak{J}_R[O] = \sum_i J_i O J'_i $ for some $J_i, J_i' \in \mathfrak{J}$), there exists a corresponding superoperator $\tilde{\mathcal{T}}^\mathfrak{J}_{Q'}$ acting on the physical space whose action is approximately the same as $\mathcal{T}_R$ \textit{globally}, i.e.~on both system and environment,
	\begin{align}
		(\tilde{\mathcal{T}}^\mathfrak{J}_{Q'} \otimes \mathrm{id}_E) \circ \mathcal{V}^{\mathcal{M}}_{Q'E \leftarrow R} \approx \mathcal{V}^{\mathcal{M}}_{Q'E \leftarrow R} \circ \mathcal{T}^\mathfrak{J}_R.
		\label{eq:effective operator}
	\end{align}
	Here,
	\begin{align}
		\mathcal{V}^{\mathcal{M}}_{Q'E \leftarrow R} = \mathcal{V}^{\mathcal{N}}_{Q' E \leftarrow Q}\circ \mathcal{C}_{Q \leftarrow R} \label{eq:W isom def}
	\end{align}
	is a dilation of $\mathcal{M}_{Q' \leftarrow R}$ and $\mathcal{V}^{\mathcal{N}}_{Q' E \leftarrow Q}[\cdot] = V^{\mathcal{N}}_{Q' E \leftarrow Q} \, \cdot\, (V^{\mathcal{N}}_{Q' E \leftarrow Q})^\dag$ is a dilation of the noise channel $\mathcal{N}_{Q' \leftarrow Q}$, Eq.~\eqref{eq:channel dilation}. 
\end{lemma}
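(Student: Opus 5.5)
The plan is to build $\tilde{\mathcal{T}}^\mathfrak{J}_{Q'}$ explicitly from the recovery map and a Stinespring dilation of it, and then to control the error with the information--disturbance (complementary-channel) picture. Let $\mathcal{R}^{\mathfrak{J}}_{R\leftarrow Q'}$ attain $\epsilon$ in \eqref{eq:recov error algebra}. Let $V^{\mathcal{R}}_{RF\leftarrow Q'}$ be a dilation of it, with $F$ the recovery ancilla. For $\mathcal{T}^\mathfrak{J}_R[O]=\sum_i J_i O J_i'$ with $J_i,J_i'\in\mathfrak{J}$, I define the lifted operators
\begin{align}
\tilde J_{Q'} \coloneqq (V^{\mathcal{R}})^\dagger (J\otimes I_F) V^{\mathcal{R}},
\end{align}
as in the footnote proving \eqref{eq:exp value approx}, and set $\tilde{\mathcal{T}}^\mathfrak{J}_{Q'}[\cdot]=\sum_i \tilde J_i\,\cdot\,\tilde J_i'$. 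The goal is then to show $\tilde J_{Q'}V^{\mathcal{M}}\approx V^{\mathcal{M}}J$ in operator norm for each $J\in\mathfrak{J}$, where $V^{\mathcal{M}}=V^{\mathcal{N}}V^{\mathcal{C}}$. Multiplying such approximations together gives \eqref{eq:effective operator}, with an error of order $\epsilon$ times $\sum_i\|J_i\|\|J_i'\|$; this is the precise meaning of ``$\approx$'' in the formal version.

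\textbf{Step 1 (isometric picture).} Consider the composite isometry $W\coloneqq V^{\mathcal{R}}V^{\mathcal{M}}:R\to R\,F\,E$. Recoverability says $\tr_{FE}[W\cdot W^\dagger]\approx_\epsilon \mathcal{P}^{\mathfrak{J}}_R$ in diamond norm. Since $\mathcal{P}^{\mathfrak{J}}$ is a conditional expectation, its own dilation has a simple form. Decompose $\mathscr{H}_R=\bigoplus_k \mathscr{H}_{A_k}\otimes\mathscr{H}_{B_k}$ so that $\mathfrak{J}=\bigoplus_k\mathfrak{B}(A_k)\otimes I_{B_k}$. Then a dilation of $\mathcal{P}^{\mathfrak{J}}$ acts as the identity on $A_k$ and sends $B_k$, together with the block label, into the ancilla. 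Consequently every $J\in\mathfrak{J}$ commutes through this exact dilation: $(J\otimes I)U=UJ$.

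\textbf{Step 2 (continuity of dilations).} I apply the Kretschmann--Schlingemann--Werner continuity theorem for Stinespring dilations. It yields an isometry (partial isometry) $Y$ on the environment $FE$ such that $\|(I_R\otimes Y)W - U\|\le 2\sqrt{\epsilon}$, up to constants. Combining this with Step 1 gives
\[
(J\otimes I_{FE})W\approx W J
\]
with error $O(\sqrt{\epsilon})\|J\|$. The key point is that $J$ acts only on $R$, so it commutes with $I_R\otimes Y$; the $Y$ can therefore be conjugated away.

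\textbf{Step 3 (pull back to $Q'E$).} Multiply the relation from Step 2 on the left by $(V^{\mathcal{R}})^\dagger$, which acts as $(V^{\mathcal{R}})^\dagger\otimes I_E$. Using $(V^{\mathcal{R}})^\dagger V^{\mathcal{R}}=I$, this gives
\[
(\tilde J_{Q'}\otimes I_E)V^{\mathcal{M}}\approx V^{\mathcal{M}}J
\]
with the same error, which is exactly the needed intertwining. For products, write
\[
\tilde J\tilde J' V^{\mathcal{M}}\approx \tilde J V^{\mathcal{M}}J'\approx V^{\mathcal{M}}JJ',
\]
using $\|\tilde J\|\le\|J\|$. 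Sandwiching with $\rho$ and summing over $i$ then bounds the diamond-norm difference in \eqref{eq:effective operator} by $O(\sqrt{\epsilon})\sum_i\|J_i\|\|J_i'\|$.

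The main obstacle is Step 2. The diamond-norm closeness of channels has to be converted into operator-norm closeness of dilations, and this conversion typically costs a square root. One also has to check that the auxiliary isometry $Y$ acts only on $FE$ and not on $R$. If an $O(\epsilon)$ rather than $O(\sqrt{\epsilon})$ statement were required, I would instead try a direct argument. For unitary $J$, I would compare the fidelities of $J\rho J^\dagger$ under $\mathcal{R}\circ\mathcal{M}$ via the Heisenberg picture, using $\mathcal{P}^{\mathfrak{J}}[J^\dagger O J]=J^\dagger\mathcal{P}^{\mathfrak{J}}[O]J$ from \eqref{eq:super-projector commute}. Since $\mathfrak{J}$ is spanned by its unitaries, this would extend the result to all of $\mathfrak{J}$.
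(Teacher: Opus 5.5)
Your argument is correct once Step~2 is repaired as described below, but it takes a different route from the paper.

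\textbf{The paper's route.} It starts from the complementary-channel condition $\|\widehat{\mathcal M}-\widehat{\mathcal M}\circ\mathcal P^{C(\mathfrak J)}\|_\diamond\le\epsilon$, reducing recoverability to it via B\'eny--Oreshkov. It applies Kretschmann--Schlingemann--Werner (KSW) to the two \emph{complementary} channels. This gives a nearby dilation $\mathcal W'$ of a channel for which $\mathfrak J$ is \emph{exactly} correctable. Dilating that exact recovery by $\mathcal Y$ makes $(\mathcal Y\otimes\mathrm{id}_E)\circ\mathcal W'$ an exact dilation of $\mathcal P^{\mathfrak J}$ that commutes with $\mathcal T_R$. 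The paper then sets $\tilde{\mathcal T}=\tilde{\mathcal Y}\circ(\mathcal T_R\otimes\mathrm{id})\circ\mathcal Y$ with a CPTP left inverse $\tilde{\mathcal Y}$. The intertwining is exact on $\mathcal W'$, so the whole error is at most $2\|\mathcal W-\mathcal W'\|_\diamond\|\mathcal T_R\|_\diamond\le4\sqrt\epsilon\|\mathcal T_R\|_\diamond$.

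\textbf{Your route and what each buys.} You apply KSW directly to $\mathcal R\circ\mathcal M$ versus $\mathcal P^{\mathfrak J}$ and compress with the dilation of the given approximate recovery. For Kraus-rank-one $\mathcal T_R$, your lift $V^{\mathcal R\dagger}(J\otimes I_F)V^{\mathcal R}$ has exactly the form of the paper's $\mathcal L_{Q'\leftarrow R}[O_R]=Y^\dagger(O_R\otimes I)Y$, just with a different $Y$.
\begin{itemize}
\item Your route avoids B\'eny--Oreshkov and the exact-correction detour. Starting from $\epsilon$-recoverability, as in the informal statement, it gives $O(\sqrt\epsilon)$. The paper's Condition~1 (recovery error $\eta=\epsilon^2/2$) only yields $4(2\eta)^{1/4}$.
\item The advantage reverses in the application. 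There the paper bounds $\delta_{\mathfrak J}(E_S)$ directly, so your route would pass through $\epsilon^{\rm rec}\le2\sqrt\delta$ and give $\delta^{1/4}$ instead of $\sqrt\delta$. Only the constants in Theorem~1 would change.
\item The paper's construction additionally gives a CPTP $\tilde{\mathcal T}$ when $\mathcal T_R$ is CPTP. Yours is only trace-non-increasing, though completing it costs $O(\sqrt\epsilon)$.
\item The paper also gets unitary lifts of unitaries, and a decomposition-independent prefactor $\|\mathcal T_R\|_\diamond$ rather than your $\sum_i\|J_i\|\|J_i'\|$.
\item The Hermiticity and $0\preceq\tilde\Pi\preceq I$ used later in the corollaries hold for your compressions too.
\end{itemize}

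\textbf{The repair in Step~2.} Do not apply the environment map to $W$. If $Y$ sends the large space $FE$ onto the smaller minimal dilation space of $\mathcal P^{\mathfrak J}$, it can only be a partial isometry. Then a small $\|(I_R\otimes Y)X\|$ does not give a small $\|X\|$ without losing a further square root. Instead, enlarge $F$ (harmless, since padding leaves $\tilde J$ unchanged) so that $FE$ can host a dilation of $\mathcal P^{\mathfrak J}$. Then use KSW in the form the paper quotes: there is a dilation $U$ of $\mathcal P^{\mathfrak J}$ into the same space with $\|W-U\|\le\sqrt\epsilon$. Every such $U$ equals $(I_R\otimes L)U_0$ with $L$ an isometry, so it intertwines $\mathfrak J$ exactly. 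Consequently $(J\otimes I)W-WJ=(J\otimes I)(W-U)-(W-U)J$ has norm at most $2\sqrt\epsilon\|J\|$. For the right-hand factor of the sandwich, apply this to $J'^\dagger\in\mathfrak J$ and take adjoints, using $\widetilde{J^\dagger}=\tilde J^\dagger$.

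\textbf{On the error order.} Your bound is $O(\sqrt\epsilon)$, not $O(\epsilon)$ as your opening paragraph states. Your proposed $O(\epsilon)$ fallback cannot work for this global statement. Take qubit dephasing $\mathcal M=(1-p)(\cdot)+pZ(\cdot)Z$ with dilation $V=\sqrt{1-p}\,I\otimes|0\rangle_E+\sqrt p\,Z\otimes|1\rangle_E$, and $\mathcal T_R=X(\cdot)X$. Then $V|0\rangle=|0\rangle|\phi_+\rangle_E$ and $VX|0\rangle=|1\rangle|\phi_-\rangle_E$, where $|\phi_\pm\rangle=\sqrt{1-p}\,|0\rangle\pm\sqrt p\,|1\rangle$. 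Since $\tilde{\mathcal T}$ cannot act on $E$, the global error is of order $\sqrt p$, while the recovery error is at most $2p$. The square root is therefore intrinsic.
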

This relation will prove useful in our proof of \cref{thm:fat}, specifically for demonstrating the existence of the thickened string operator $W^Z_S$. 

For the example $\mathfrak{X} = \mathrm{span}\{I_R, X_R\}$, \cref{lem:effective operator} implies that any left- or right-multiplication of projectors $\Pi_{\pm} = (I_R \pm X_R)/2$, and unitary operations of the form $e^{\mathrm{i} \theta X_R} = \cos(\theta) I_R + \mathrm{i} \sin(\theta) X_R$ are approximately implementable.

\subsubsection{Sufficient condition for recoverability from complementary channel \label{subsec:recoverability conditions}}

We will make extensive use of the following result from Ref.~\onlinecite{beny2009}.
Define
\begin{align}
	\delta_{\mathfrak{J}}(E) \coloneqq \left\|  \widehat{\mathcal{M}}_{E \leftarrow R} \circ\Big(\mathrm{id}_R - \mathcal{P}^{C(\mathfrak{J})}_R\Big) \right\|_\diamond, 
	\label{eq:algebra comp error}
\end{align}
where $\mathcal{P}^{C(\mathfrak{J})}_R$ is the super-projector onto the commutant $C(\mathfrak{J})$, and
\begin{align}
	\widehat{\mathcal{M}}_{E \leftarrow R} \coloneqq \tr_{Q'} \circ \mathcal{V}^{\mathcal{M}}_{Q'E \leftarrow R} = \widehat{\mathcal{N}}_{E \leftarrow Q} \circ \mathcal{C}_{Q \leftarrow  R}
\end{align}
is the channel complementary to $\mathcal{M}_{Q' \leftarrow R}$ [Eqs.~(\nopareneqref{eq:m channel def}, \nopareneqref{eq:W isom def})]. If $\delta_{\mathfrak{J}}(E) \leq \epsilon^2/4$, then there exists a recovery operation $\mathcal{R}_{R \leftarrow Q'}$ that satisfies \eqref{eq:recov error algebra}, and hence $\mathfrak{J}$ is $\epsilon$-correctable. Equivalently,
\begin{align}
	\epsilon^{\mathrm{rec}}_\mathfrak{J} \leq 2 \sqrt{\delta_{\mathfrak{J}}(E)}.
	\label{eq:eps rec bound}
\end{align}

Informally, the condition that $\delta_{\mathfrak{J}}(E)$  be 
small implies that the environment learns little about observables outside the span of $C(\mathfrak{J})$,  namely those that do not commute with operators in $\mathfrak{J}$. For the full algebra $\mathfrak{J} = \mathfrak{B}(\mathscr{H}_R)$, the commutant is the trivial algebra, and so the superoperator $\mathrm{id}_R - \mathcal{P}^{C(\mathfrak{J})}_R$ preserves all traceless operators on $R$, in which case \eqref{eq:algebra comp error} will only be small if every traceless observable on $R$ is approximately inaccessible to the environment $E$. This mirrors the observation that the coherent information is maximal if $R$ and $E$ are approximately uncorrelated [Eq.~\eqref{eq:coherent RE}].

As we will see, evaluating $\delta_{\mathfrak{J}}(E)$ will turn out to be a much simpler task than explicitly searching for a recovery map $\mathcal{R}^{\mathfrak{J}}_{R \leftarrow Q'}$ that satisfies \eqref{eq:recov error algebra}.

\subsubsection{Application to topological codes}

In topological codes, different subalgebras of logical operators  are encoded within various different subregions. Reflecting this fact, we will use the formalism described above to obtain a \textit{spatially resolved characterisation} of how different components of the logical information are stored in the noise-corrupted physical state.

For example, in the case of the $\mathbb{Z}_2$ surface code, we will ask whether logical $X$-information can be recovered from a subregion $S \subset Q'$ of the noise-corrupted state $\tilde{\rho}_{Q'}$. This can be addressed within the above formalism as follows. Naturally, we choose $\mathfrak{J} = \mathfrak{X}$, the Pauli-$X$ algebra, and for a given noise channel $\mathcal{N}_{Q' \leftarrow Q}$, we construct the map $\mathcal{N}_{S \leftarrow Q} \coloneqq \tr_{Q' \setminus S} \circ \mathcal{N}_{Q' \leftarrow Q}$. The question is then whether the algebra $\mathfrak{X}$ is recoverable for the composite channel $\mathcal{N}_{S \leftarrow Q} \circ \mathcal{C}_{Q \leftarrow R}$, which describes the relationship between the logical information $R$ and the marginal of the noise-corrupted state on $S$.

To make progress we will need to find a channel complementary to $\mathcal{N}_{S \leftarrow Q}$. To do so, let $\mathcal{V}^{\mathcal{N}}_{Q'E \leftarrow Q}$ be a Stinespring dilation of the global noise channel $\mathcal{N}_{Q' \leftarrow Q}$, we can define $\widehat{\mathcal{N}}_{E_S \leftarrow Q} = \tr_{S} \circ \mathcal{V}^{\mathcal{N}}_{Q'E \leftarrow Q}$, which is complementary to $\mathcal{N}_{S \leftarrow Q}$, and whose output space is
\begin{align}
	E_S \coloneqq (Q' \setminus S) \cup E.
    \label{eq:environment S}
\end{align}
Namely, $E_S$ (the ``environment of $S$'') includes the degrees of freedom in $Q'$ not accessible to the observer $S$, along with all the environment degrees of freedom $E$.

For product noise $\mathcal{N}_{Q' \leftarrow Q} = \bigotimes_{\ell \in Q} \mathcal{N}_{Q'_\ell \leftarrow \ell}$, we can choose $S$ to correspond to a subset of qudits, $S \subset Q$. In this case, we can choose a simpler choice of complementary channels such that the form of the two mutually complementary channels mirrors one another,
\begin{equation}
	\begin{aligned}
		\mathcal{N}_{S \leftarrow Q} &= \tr_{Q \setminus S} \otimes \bigotimes_{\ell \in S}\mathcal{N}_{Q'_\ell \leftarrow \ell}, \\
		\widehat{\mathcal{N}}_{E_S \leftarrow Q} &= \mathrm{id}_{Q \setminus S} \otimes \bigotimes_{\ell \in S}\widehat{\mathcal{N}}_{E_\ell \leftarrow \ell}.
	\end{aligned}
	\label{eq:noise channels subsystem}
\end{equation}
That is, $E_S$ receives all the noise-uncorrupted qudits outside of $S$, along with the environment $E_\ell$ for each $\ell \in S$.

As before, we also define $\mathcal{M}_{S \leftarrow R} \coloneqq \mathcal{N}_{S \leftarrow Q} \circ \mathcal{C}_{Q \leftarrow R}$, and $\widehat{\mathcal{M}}_{E_S \leftarrow R} \coloneqq \widehat{\mathcal{N}}_{E_S \leftarrow Q} \circ \mathcal{C}_{Q \leftarrow R}$. These can be used to define local versions of the global quantities given in Eqs.~(\nopareneqref{eq:recov error algebra}, \nopareneqref{eq:algebra comp error}). The quantity
\begin{align}
    \delta_{\mathfrak{J}}(E_S) \coloneqq \norm{\mathcal{N}_{S \leftarrow Q} \circ \mathcal{C}_{Q \leftarrow R} - \mathcal{N}_{S \leftarrow Q} \circ \mathcal{C}_{Q \leftarrow R} \circ \mathcal{P}^{C(\mathfrak{J})}_R}_\diamond
    \label{eq:delta ES}
\end{align}
can be used to determine whether the subalgebra $\mathfrak{J}$ can be approximately recovered by acting on $S$ only, in the the sense that there exists a recovery channel $\mathcal{R}_{R \leftarrow S}$ satisfying
\begin{align}
    \norm{(\mathcal{R}_{R \leftarrow S}\otimes \mathrm{tr}_{S^c})\circ \mathcal{M}_{Q' \leftarrow R} - \mathcal{P}^{\mathfrak{J}}_R }_\diamond = \epsilon^{\mathrm{rec}}_{\mathfrak{J}}(S)
    \label{eq:recovery local}
\end{align}
where the local recovery error is bounded as \cite{beny2009}
\begin{align}
    \epsilon^{\mathrm{rec}}_{\mathfrak{J}}(S) \leq 2 \sqrt{\delta_{\mathfrak{J}}(E_S)}.
    \label{eq:eps rec bound local}
\end{align}

\subsection{Local and global indistinguishability of code states \label{subsec:local global indist}}

The quantity $\delta_{\mathfrak{J}}(E_S)$ [Eq.~\eqref{eq:delta ES}] measures the distinguishability of the channels $\widehat{\mathcal{N}}_{E_S \leftarrow Q}\circ \mathcal{C}_{Q \leftarrow R}$ and $\widehat{\mathcal{N}}_{E_S \leftarrow Q} \circ \mathcal{C}_{Q \leftarrow R} \circ \mathcal{P}^{C(\mathfrak{J})}_R$. In other words, the size of $\delta_{\mathfrak{J}}(E_S)$ determines whether the code state associated with some logical information $\sigma_R$ could be distinguished from the state associated with $\mathcal{P}^{C(\mathfrak{J})}_R[\sigma_R]$, after the channel $\widehat{\mathcal{N}}_{E_S \leftarrow Q}$ is applied.

To illustrate our general method for upper bounding $\delta_{\mathfrak{J}}(E_S)$, let us work with the example of the $\mathbb{Z}_2$ surface code, with $\mathfrak{J} = \mathfrak{X}$, the Pauli-$X$ algebra. In this case, $\mathcal{P}^{C(\mathfrak{X})}_R$ is the completely dephasing channel in the logical $X$-basis. Suppose that the noise channel $\mathcal{N}_Q$ traces out a subset of qubits $\Sigma \subset S$, and leaves the rest undisturbed. By Eq.~\eqref{eq:noise channels subsystem}, the complementary channel $\widehat{\mathcal{N}}_{E_S \leftarrow Q}$ is then equivalent to tracing out the region $\Sigma^\mathrm{c} \coloneqq S \setminus \Sigma$. If $\Sigma^\mathrm{c}$ supports a logical-$X$ operator, $\bar{X}_{\Sigma^\mathrm{c}}$, then we find that $\tr_{\Sigma^\mathrm{c}}\circ \mathcal{C}_{Q \leftarrow R}$ and $\tr_{\Sigma^\mathrm{c}} \circ \mathcal{C}_{Q \leftarrow R} \circ \mathcal{P}^{C(\mathfrak{X})}_R$ are \textit{exactly} indistinguishable, $\delta_\mathfrak{Z}(E_S) = 0$. This is because the logical dephasing channel $\mathcal{P}^{C(\mathfrak{X})}_R$ can be implemented on the physical level by applying the gate $\bar{Z}_{\Sigma^\mathrm{c}}$ with probability 1/2, and doing nothing otherwise. The environment $E_S$, which does not have access to $\Sigma^\mathrm{c}$, cannot detect whether or not this operation was applied.

The indistinguishability of these channels on the region $\Sigma \cup S^c$ for particular choices of $\Sigma$ we refer to as \textit{local indistinguishability} of code states. The more general non-Abelian codes we consider later on also exhibit natural generalisations of this property, for suitable choices of algebras $\mathfrak{J}$ and subregions $\Sigma \subset S$.

Our key technical innovation is a method to convert this local indistinguishability property into \textit{global indistinguishability} after the channel $\widehat{\mathcal{N}}_{E_S \leftarrow Q}$ is applied. In other words, the fact that $ \mathcal{C}_{Q \leftarrow R}$ and $ \mathcal{C}_{Q \leftarrow R} \circ \mathcal{P}^{C(\mathfrak{J})}_R$ are identical on certain local regions can be used to show that $\widehat{\mathcal{N}}_{E_S \leftarrow Q}\circ \mathcal{C}_{Q \leftarrow R}$ and $\widehat{\mathcal{N}}_{E_S \leftarrow Q} \circ \mathcal{C}_{Q \leftarrow R} \circ \mathcal{P}^{C(\mathfrak{J})}_R$ are approximately equal on the whole of $E_S$. This gives a bound on $\delta_{\mathfrak{J}}(E_S)$ that can be applied for any choice of local noise channel, irrespective of any Pauli stabilizer structure. Our argument, stated formally in Appendix \ref{apdx:bounds} as \cref{lem:bound-norm-connectivity}, captures the following intuition: Local indistinguishability implies that one would have to measure sufficiently non-local observables in order to distinguish the two channels; this becomes difficult because the complementary channel $\widehat{\mathcal{N}}_{E_S \leftarrow Q}$ dampens the norm of operators by a factor that is exponentially small in the size of their support.

This suppression of high-weight observables can be quantified with the following definition of an effective noise strength $T(\mathcal{N}_\ell)$ for a single-site noise channel $\mathcal{N}_\ell$, which appeared in our informal statement of \cref{thm:fat}.

\begin{definition} \label{def:noise strength def}
	For a single-site noise channel $\mathcal{N}_{\ell}$, the effective noise strength $T(\mathcal{N}_{\ell})$ is given by
	\begin{align} \label{eq:noise strength def}
		\norm{\widehat{\mathcal{N}}_{E_\ell \leftarrow \ell}\circ (\mathrm{id}_\ell - \mathcal{D}_\ell)}_\diamond =: e^{-1/T(\mathcal{N}_\ell)}.
	\end{align}
	Here, $\mathcal{D}_{\ell}[\,\cdot\,] = (I_\ell/d_\ell)\tr_\ell[\,\cdot\,]$ is the completely depolarizing channel, and $\widehat{\mathcal{N}}_{E_\ell \leftarrow \ell}$ is any channel complementary to $\mathcal{N}_{\ell}$. We will sometimes use the inverse noise strength $\beta(\mathcal{N}_\ell) = 1/T(\mathcal{N}_\ell)$. 
\end{definition}

Since $(\mathrm{id}_\ell - \mathcal{D}_\ell)[I_\ell] = 0$, and $(\mathrm{id}_\ell - \mathcal{D}_\ell)[O_\ell] = O_\ell$ for any traceless operator $O_\ell$, we can roughly interpret the norm of the superoperator $\widehat{\mathcal{N}}_{E_\ell \leftarrow \ell}\circ (\mathrm{id}_\ell - \mathcal{D}_\ell)$ as a measure of how much information the environment $E_{\ell}$ receives about traceless operators on $\ell$.  As a concrete example, channels that are close to the identity have a small effective noise strength (\cref{apdx:diamond-norms}),
\begin{align}
	\|\mathcal{N}_\ell - \mathrm{id}_\ell\|_\diamond \leq \epsilon \;\; \implies \;\; T(\mathcal{N}_\ell) \leq 4 \sqrt{\epsilon}.
	\label{eq:close to identity}
\end{align}
However, this bound is often not tight. For instance, if $\mathcal{N}_\ell$ is unitary far from identity, then $T(\mathcal{N}_\ell) = 0$ even though the left side of \eqref{eq:close to identity} is large.

This definition of the noise strength quantifies the way in which the complementary channel dampens high weight observables. Indeed, if $\mathcal{N}_Q = \bigotimes_{\ell \in Q} \mathcal{N}_\ell$ is a product noise channel with complement $\widehat{\mathcal{N}}_{E \leftarrow Q}$ and $O$ is a Pauli string that acts non-trivially on each $\ell \in \Sigma$, then we have $\|\widehat{\mathcal{N}}_{E \leftarrow Q}^{\,\dagger}[O]\|_\infty \leq e^{-|\Sigma|/T} \|O\|_\infty$, where $T \coloneqq \min_{\ell \in Q} T(\mathcal{N}_\ell)$ and $\mathcal{N}^\dag$ denotes the adjoint channel \cite{watrous2018theory}. This is exponentially small in the size of the support of $O$.\\

In Appendix \ref{apdx:q double codes}, we state and prove a general result that relates the local indistinguishability property described above to the size of $\delta_{\mathfrak{J}}(E_S)$, via our definition of the noise strength \eqref{eq:noise strength def}. This will be invoked to prove the majority of results (Section \ref{sec:quantum-doubles}). However, for the $\mathbb{Z}_2$ surface code, some additional simplifications arise that will ease our presentation, and allow us to obtain quantitatively better bounds. We therefore specialize to the $\mathbb{Z}_2$ case in the following section.

\section{Proof of Theorem \ref{thm:fat} for $\mathbb{Z}_2$ surface code: existence of thickened logical operators} \label{sec:main-proof}

In this Section, we provide a high-level proof of \cref{thm:fat}, which was originally stated informally in \cref{sec:summary-sc}. Our arguments for non-Abelian codes, which we treat in Section \ref{sec:quantum-doubles}, will follow a very similar structure.

We first provide the formal statement of the Theorem.

\begin{customthm}{1}[2D $\mathbb{Z}_2$ surface code]
	For the 2D $\mathbb{Z}_2$ surface code in the geometry of \cref{fig:geometries}(a, b), suppose that a single-site noise channel $\mathcal{N}_Q = \bigotimes_{\ell \in Q}\mathcal{N}_\ell$ satisfies
	\begin{align}
		T \coloneqq \max_{\ell}T(\mathcal{N}_\ell) < T^{\Lambda_*}_\mathrm{c}.
		\label{eq:temp bound}
	\end{align}
	Here, $T(\mathcal{N}_\ell)$ is defined in \cref{def:noise strength def} and the constant $T^{\Lambda_*}_{\mathrm{c}}$ depends only on the dual lattice $\Lambda_*$.
	
	Then, there exist constants $c_{1,2}(T) > 0$ such that, for any vertical strip of qubits $S \subseteq Q$ of width $w_{\mathrm{x}}$, there exists a recovery channel $\mathcal{R}_{R \leftarrow S}$ acting on $S$ only that approximately recovers the logical-$X$ information with error
	\begin{align}
		\epsilon^{\mathrm{rec}}_{\mathfrak{X}}(S) \leq \varepsilon \coloneqq  c_1(T) \sqrt{L_{\rm y}} e^{-c_2(T)w_{\rm x}},
		\label{eq:epsilon rec thm-fat}
	\end{align}
	where $\epsilon^{\mathrm{rec}}_\mathfrak{X}(S)$ is the error defined in \cref{eq:recovery local}. In particular, there exists a thickened ribbon operator $W^\mathrm{X}_S$ supported on $S$ such that
    \begin{align}
        \norm{W^\mathrm{X}_S \cdot\mathcal{N}_Q \circ \mathcal{C}_{Q \leftarrow R}[\sigma_R] - \mathcal{N}_Q \circ \mathcal{C}_{Q \leftarrow R}[X_R\sigma_R]}_1 \leq \varepsilon.
        \label{eq:effective op thm-fat}
    \end{align}
	The same holds for horizontal strips of height $w_\mathrm{y}$ as shown in \cref{fig:geometries}(c), with $\mathfrak{X}$ replaced by $\mathfrak{Z}$, and the right hand side of \eqref{eq:temp bound} by $T^{\Lambda}_{\mathrm{c}}$.
\end{customthm}

The proof of this result, which is presented in the following, involves upper-bounding $\delta_{\mathfrak{X}}(E_S)$, which by Eq.~\eqref{eq:eps rec bound local} and \cref{lem:effective operator} suffices to prove Eqs.~(\nopareneqref{eq:epsilon rec thm-fat}, \nopareneqref{eq:effective op thm-fat}). In Section \ref{sec:upper bound simple}, we will outline a simpler upper bound on $\delta_{\mathfrak{X}}(E_S)$---one which will generalise in a natural way to the non-Abelian codes considered later. Then, in Section \ref{sec:sow-mapping} we explain how this can be improved by exploiting certain structure specific to the $\mathbb{Z}_2$ surface code, resulting in a quantitatively better bound on the critical noise strength $T_{\rm c}^{\Lambda_\star}$. Specifically, we bound $\delta_{\mathfrak{X}}(E_S)$ in terms of a partition function of a classical statistical mechanics model known as \textit{self-osculating walks}, with the temperature given by the noise strength $T$. This model exhibits a phase transition at the critical value $T_{\rm c}^{\Lambda_*}$, and for $T < T_{\rm c}^{\Lambda_*}$ the partition function becomes exponentially small in $w_{\rm x}$.

We note that though \cref{thm:fat} is stated for the geometries of \cref{fig:geometries} on the square lattice $\sqlattice$, the methods used to prove it can be immediately generalised to other lattices $\Lambda$ like the hexagonal $\hexlattice$ and triangular lattices $\trilattice$ and even to arbitrary planar (dual)~graphs.

\subsection{Upper bound on $\delta_{\mathfrak{X}}$ \label{sec:upper bound simple}}

\begin{lemma}
    With the assumptions of Theorem 1 and $S$ a horizontal strip of qubits of height $w_y < L_y$, for any $T < [1 + \ln(k_\Lambda-1)]^{-1}$ there exist constants $c'_{1,2}(T) > 0$ such that
    \begin{align}
        \delta_{\mathfrak{X}}(E_S) \leq c_1'(T)L_{\rm y} e^{-c_2'(T)w_{\rm y}}.
    \end{align}
    Here, $k_\Lambda$ is the maximum number of edges $\ell' \neq \ell$ that share a vertex with a given edge $\ell$.
    \label{lem:delta bound simple}
\end{lemma}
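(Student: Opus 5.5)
We need to bound $\delta_{\mathfrak{X}}(E_S)$, the diamond-norm distance between $\widehat{\mathcal{M}}_{E_S\leftarrow R}$ and $\widehat{\mathcal{M}}_{E_S\leftarrow R}\circ\mathcal{P}^{C(\mathfrak{X})}_R$. Here $\mathcal{P}^{C(\mathfrak{X})}_R$ is the logical $X$-dephasing channel, i.e.\ conjugation by $\bar Z$ with probability $1/2$. (The lemma's strip is stated as horizontal, but the same argument applies with the roles of the lattices exchanged.)

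\textbf{Step 1: reduce to a single traceless correction.} Write $\bar Z_\gamma$ for a bare logical $Z$ on a path $\gamma$ crossing $S$. Then
\[
\mathrm{id}_R-\mathcal{P}^{C(\mathfrak{X})}_R=\tfrac12\big(\mathrm{id}-\bar Z\cdot\bar Z\big).
\]
Therefore $\delta_{\mathfrak{X}}(E_S)$ is at most $\tfrac12\sup\|\widehat{\mathcal{N}}_{E_S\leftarrow Q}[\rho-\bar Z_\gamma\rho\bar Z_\gamma]\|_1$, with the supremum over code states $\rho$ (purified with a reference). On $Q\setminus S$ the complement acts as the identity, so a difference supported there would be fully visible. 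For this reason $\gamma$ is chosen so that its support inside $S$ is exactly where the difference lives, and we may take $\gamma\subset S$, since a logical $Z$ path spans the strip.

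\textbf{Step 2: expand each site.} On every $\ell\in S$, split $\widehat{\mathcal{N}}_\ell=\widehat{\mathcal{N}}_\ell\circ\mathcal{D}_\ell+\widehat{\mathcal{N}}_\ell\circ(\mathrm{id}_\ell-\mathcal{D}_\ell)$. By \cref{def:noise strength def}, the second term has diamond norm at most $e^{-\beta}$, where $\beta=1/T$. Expanding the tensor product over $S$ gives a sum over subsets $A\subseteq S$: on $A$ we apply the ``traceless'' piece, and on $S\setminus A$ we apply $\widehat{\mathcal{N}}_\ell\circ\mathcal{D}_\ell$, which in effect traces out those qubits.

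The key claim is that a term contributes zero unless $A$ blocks every local deformation of $\bar Z$. Suppose $S\setminus A$ contains a primal path $\gamma'$ connecting the two rough boundaries. Then $\bar Z_{\gamma'}$ implements the dephasing and is supported on sites that are fully depolarized in that term, and by the local indistinguishability argument of \cref{subsec:local global indist} the two inputs give identical outputs. Hence only sets $A$ that contain a dual cut are nonzero. Such a cut is a connected chain of edges that is connected in the sense of edges sharing a vertex (the $k_\Lambda$ notion), and it spans the strip in the long direction, crossing from one boundary of the strip to the other.

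\textbf{Step 3: counting (the main obstacle).} For each surviving $A$ we bound the term's norm by $2e^{-\beta|A|}$, using submultiplicativity of the diamond norm on the tensor product. This needs care, because on the sites outside $A$ the map $\widehat{\mathcal{N}}_\ell\circ\mathcal{D}_\ell$ is a channel but the pieces on $A$ are not. Summing over all $A\supseteq$ a blocking cut naively produces $2^{|S|}$ factors. To avoid this, I would organize the expansion differently: insert the decomposition only along a minimal blocking set. Concretely, I would resum as a telescoping/inclusion–exclusion over sites, ordered so that each term is indexed by a self-avoiding chain $A$ in which consecutive edges share a vertex, with all remaining sites treated by full channels (norm 1).

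The number of such chains of length $m$ starting at a given edge is at most $k_\Lambda(k_\Lambda-1)^{m-1}$. There are about $L_{\mathrm y}$ possible starting edges along the boundary of the strip, and each blocking chain has length at least $w_{\mathrm y}$. This gives
\[
\delta\lesssim L_{\mathrm y}\sum_{m\ge w_{\mathrm y}}(k_\Lambda-1)^m e^{-\beta m}.
\]
As written this converges when $\beta>\ln(k_\Lambda-1)$. The extra $+1$ in the lemma's hypothesis, $T<[1+\ln(k_\Lambda-1)]^{-1}$, presumably comes from a factor $e^{m}$, i.e.\ $\binom{\cdot}{\cdot}$-type entropy, or from a factor $(1+e^{-\beta})$ per adjacent site arising in the resummation. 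Either way, the sum is $c_1'e^{-c_2'w_{\mathrm y}}$ with $c_2'=\beta-1-\ln(k_\Lambda-1)>0$. Making this resummation rigorous, so that the difference genuinely localizes onto a single connected blocking chain with a weight of $e^{-(\beta-1)}$ per edge, is the step I expect to require the most work. I would attempt it via a sequential site-by-site hybrid argument that reveals the cluster connected to one boundary.
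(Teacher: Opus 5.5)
Your Steps 1--2 have the right skeleton: expand every site of $S$ into $\mathcal{D}_\ell$ and $\mathrm{id}_\ell-\mathcal{D}_\ell$, and discard every term whose traceless set fails to cross the strip. The gap is Step 3. It carries the entire content of the lemma, and the route you sketch for it does not work.

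If you index terms by self-avoiding chains $A$ and leave all other sites as full channels, each piece $\big[\bigotimes_{\ell\in A}(\mathrm{id}_\ell-\mathcal{D}_\ell)\big]\circ\Delta_{Q\leftarrow R}$ is the sum over \emph{all} $\Sigma\supseteq A$. A surviving $\Sigma$ contains many spanning chains, so this overcounts. Inclusion--exclusion over overlapping chains then produces unions with no product structure that you can control.

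The missing idea is to index by \emph{maximal connected components}. One has $W^+\in\Gamma[\Sigma]$ if and only if $W^+\subseteq\Sigma$ and $\partial W^+\cap\Sigma=\varnothing$. So pinning $\partial W^+$ to $\mathcal{D}_\ell$ and resumming the remaining free sites gives $\big[\bigotimes_{W^+}(\mathrm{id}_\ell-\mathcal{D}_\ell)\otimes\bigotimes_{\partial W^+}\mathcal{D}_\ell\big]\circ\Delta_{Q\leftarrow R}$. This has norm at most $2e^{-\beta|W^+|}$ after $\widehat{\mathcal{N}}_{E_S\leftarrow Q}$.

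Inclusion--exclusion over families $\Gamma'$ of spanning components then removes the double counting. Components of a single $\Sigma$ are disjoint and non-adjacent, so each family contributes at most $2e^{-\beta\sum_{W^+\in\Gamma'}|W^+|}$. This yields $\delta_{\mathfrak{X}}(E_S)\le 2\sum_{n\ge1}(\mathscr{Z}^+)^n/n!\le 2(e^{\mathscr{Z}^+}-1)$, where $\mathscr{Z}^+=\sum_{W^+}e^{-\beta|W^+|}$ runs over connected edge sets (lattice animals), not paths.

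Bounding the number of animals of size $m\ge w_{\rm y}$ through a given edge by $(e(k_\Lambda-1))^m$ is exactly where the $+1$ in $1+\ln(k_\Lambda-1)$ comes from. The inequality $\min(2,2(e^x-1))\le\frac{2}{\ln 2}x$ then gives the stated form. Your chain count would instead predict threshold $\ln(k_\Lambda-1)$. Genuinely path-like objects only appear once you use the extra parity constraint of the $\mathcal{E}^X_\ell$ decomposition in Sec.~\ref{sec:sow-mapping}.

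Separately, Step 1 swaps the logicals. $\mathcal{P}^{C(\mathfrak{X})}_R=\mathcal{P}^{\mathfrak{X}}_R=\tfrac12(\,\cdot\,+X_R\,\cdot\,X_R)$ is conjugation by $X$, not $Z$. For each $\Sigma$ without a spanning path, it is implemented inside $\Sigma^{\rm c}$ by $\bar{X}_{\gamma^*}$ on a dual path running \emph{along} the strip.

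A primal rough-to-rough $\bar Z$ cannot fit inside a horizontal strip, so your claim $\gamma\subset S$ fails. Even with the lattices exchanged, the sets that block all primal paths are dual cuts whose consecutive edges share a plaquette rather than a vertex. So the vertex-sharing ($k_\Lambda$) adjacency you invoke does not match your own setup.

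With the correct assignment, $\Sigma$ must contain a vertex-connected primal path crossing the \emph{short} direction of $S$, hence of length at least $w_{\rm y}$. That is why $k_\Lambda$ is the relevant degree. It also corrects your description of the cut as spanning the strip in the long direction.
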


Combined with Equation \eqref{eq:eps rec bound local} and Lemma \ref{lem:effective operator}, the above result implies \cref{thm:fat}. In the following, we outline our proof of Lemma \ref{lem:delta bound simple}, leaving more technical details to Appendix \ref{apdx:bounds}. This method will generalise naturally to the non-Abelian codes considered later, which is why we highlight it here, but quantitatively better bounds on the threshold noise strength $T$ can be obtained using a modified approach that exploits structure specific to the $\mathbb{Z}_2$ surface code. This is described in Section \ref{sec:sow-mapping}.

\textit{Proof sketch. --- }First, recall that $C(\mathfrak{X}) = \mathfrak{X}$, and so $\mathcal{P}^{C(\mathfrak{X})}_R = \mathcal{P}^{\mathfrak{X}}_R$ is the completely dephasing channel in the $X$-basis, which acts as
\begin{align}
    \mathcal{P}^{\mathfrak{X}}_R[\sigma_R] = \frac{1}{2}(\sigma_R + X_R \sigma_R X_R).
    \label{eq:X proj channel}
\end{align}
Then define
\begin{align}
    \mathcal{\Delta}_{Q \leftarrow R} \coloneqq \mathcal{C}_{Q \leftarrow R} - \mathcal{C}_{Q \leftarrow R} \circ \mathcal{P}^{\mathfrak{X}}_R,
    \label{eq:Delta def}
\end{align}
such that $\delta_\mathfrak{X}(E_S) = \|\widehat{\mathcal{N}}_{E_S \leftarrow Q} \circ \mathcal{\Delta}_{Q \leftarrow R}\|_\diamond$. 
We start with a trivial resolution of the identity channel $\mathrm{id}_\ell \equiv \mathcal{D}_\ell + (\mathrm{id}_\ell - \mathcal{D}_\ell)$,  where $\mathcal{D}_\ell$ is the completely depolarizing channel on qubit $\ell$. Taking a product of this expression over all $\ell \in S$ and expanding, we obtain 
\begin{align}
    \mathcal{\Delta}_{Q \leftarrow R}  &= \sum_{\Sigma \in \mathscr{P}(S)} \mathcal{\Delta}_{Q \leftarrow R}^{\,(\Sigma)} \text{, where} \nonumber\\  \mathcal{\Delta}_{Q \leftarrow R}^{\,(\Sigma)} &\coloneqq \left(\left[\bigotimes_{\ell \in \Sigma}(\mathrm{id}_\ell - \mathcal{D}_\ell) \right]\otimes \left[\bigotimes_{\ell \in \Sigma^\mathrm{c}}  \mathcal{D}_\ell \right] \right) \circ \mathcal{\Delta}_{Q \leftarrow R}.
    \label{eq:delta decomp}
\end{align}
Here, $\mathscr{P}(S) = \{\Sigma : \Sigma \subseteq S\}$ is the power set of $S$, and $\Sigma^\mathrm{c} \coloneqq S \setminus \Sigma$. In words, $\mathcal{\Delta}_{Q \leftarrow R}^{\,(\Sigma)}$ is the component of $\mathcal{\Delta}_{Q \leftarrow R}$ that remains after tracing out $\Sigma^\mathrm{c}$, and projecting onto operators that are traceless on each $\ell \in \Sigma$. As a result,
\begin{align}
    \delta_\mathfrak{X}(E_S) = \norm{ \sum_{\Sigma \in \mathscr{P}(S)} \widehat{\mathcal{N}}_{E_S \leftarrow Q} \circ \mathcal{\Delta}_{Q \leftarrow R}^{\,(\Sigma)}}_\diamond.
    \label{eq:delta diamond}
\end{align}

Crucially, the sum over $\Sigma$ can be restricted to subsets that obey certain graph-theoretic constraints. The most important of these is a consequence of the following observation: 
Define a \textit{spanning path} $W$ to be a sequence of distinct edges $(\ell_1, \ell_2, \ldots, \ell_{|W|})$ such that sequential pairs $(\ell_i$, $\ell_{i+1})$ share a vertex, such that $\ell_1$ on the top boundary, $\ell_{|W|}$ on the bottom boundary. Then, let $\Omega^+ \subset \mathscr{P}(S)$ be the set of $\Sigma$ that contain a spanning path. We have
\begin{align}
    \Sigma \notin \Omega^+ \; \Longrightarrow\; \mathcal{\Delta}_{Q \leftarrow R}^{\,(\Sigma)} = 0.
    \label{eq:sigma constraint}
\end{align}

\begin{figure}
    \centering
    \includegraphics[width=\linewidth]{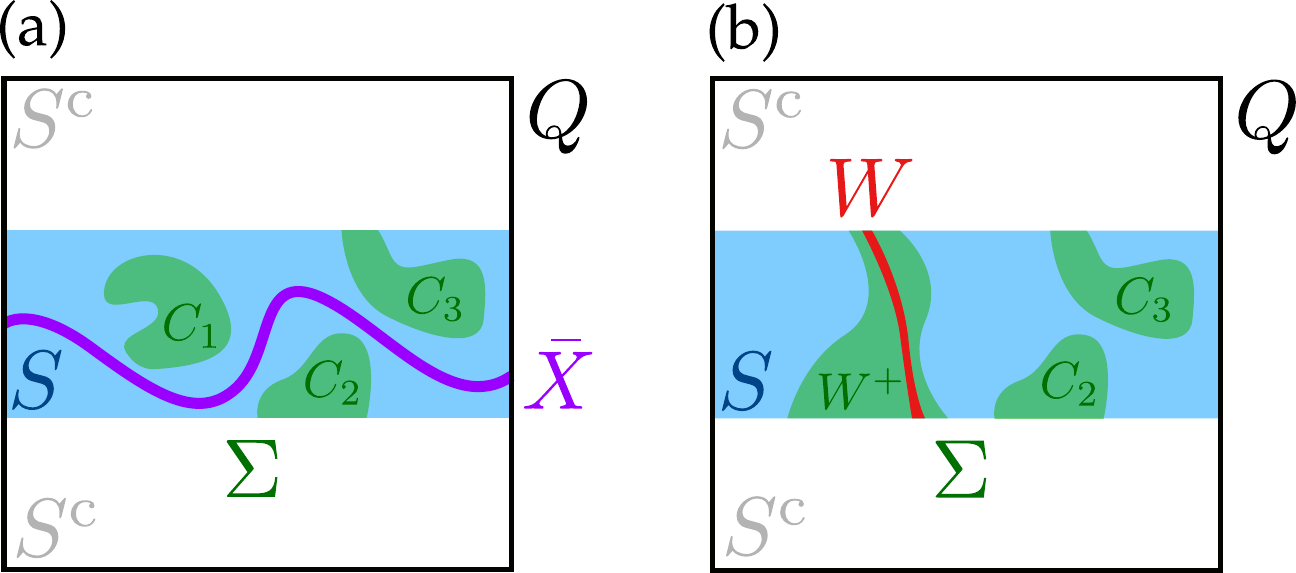}
    \caption{Illustration of conditions for local indistinguishability of code states in the $\mathbb{Z}_2$ surface code, Eq.~\eqref{eq:sigma constraint}. The region $\Sigma$ (green shaded area) is decomposed in terms of maximally connected components $\Gamma[\Sigma] = \{C_1, C_2, \ldots\}$. If none of these components contains a path connecting to the top to the bottom boundary of $S$ (red line), then $\Sigma^c \coloneqq S \setminus \Sigma$ (blue shaded region) contains a logical $\bar{X}$ operator (purple line). This implies that the logical super-projector $\mathcal{P}^{\mathfrak{X}}_R$ appearing in Eq.~\eqref{eq:Delta def} can be implemented by a channel supported in $\Sigma^c$, which suffices for $\mathcal{\Delta}^{(\Sigma)}_{Q \leftarrow R}$ [Eq.~\eqref{eq:delta decomp}] to vanish.}
    \label{fig:local-and-global-approximate-indistinguishability}
\end{figure}

To prove the statement above, we show that the absence of a spanning path in $\Sigma$ implies that $\Sigma^\mathrm{c}$ contains a dual path $\gamma^*$ that supports a logical-$X$ operator, $\bar{X}_{\gamma^*} = \prod_{\ell \in \gamma^*} X_\ell$ (see \cref{fig:local-and-global-approximate-indistinguishability} and Appendix \ref{apdx:bounds}). This can be used to define a channel $\mathcal{P}^{\mathfrak{X}}_{\Sigma^\mathrm{c}}$ acting as $\mathcal{P}^{\mathfrak{X}}_{\Sigma^\mathrm{c}}[\sigma_Q] = \frac{1}{2}(\sigma_Q + \bar{X}_{\gamma^*} \sigma_Q \bar{X}_{\gamma^*})$, which is supported on $\Sigma^\mathrm{c}$, and satisfies $\mathcal{P}^{\mathfrak{X}}_{\Sigma^\mathrm{c}} \circ \mathcal{C}_{Q \leftarrow R} = \mathcal{C}_{Q \leftarrow R} \circ \mathcal{P}^{\mathfrak{X}}_R$, by virtue of Eqs.~(\nopareneqref{eq:super-projector commute}, \nopareneqref{eq:X proj channel}). Thanks to $\mathcal{P}^{\mathfrak{X}}_{\Sigma^\mathrm{c}}$ being supported in $\Sigma^\mathrm{c}$, we have $\mathcal{D}_{\Sigma^\mathrm{c}} \circ \mathcal{P}^{\mathfrak{X}}_{\Sigma^\mathrm{c}} = \mathcal{D}_{\Sigma^\mathrm{c}}$, where $\mathcal{D}_{\Sigma^\mathrm{c}} \coloneqq \bigotimes_{\ell \in \Sigma^\mathrm{c}} \mathcal{D}_\ell$, which together with the definition \eqref{eq:Delta def} proves the desired result.

This already suggests a route towards upper bounding $\delta_{\mathfrak{X}}(E_S)$ by a factor exponentially small in $w_y$: Thanks to the definition of the effective noise strength \eqref{eq:noise strength def}, we have\footnote{In proving Equation \eqref{eq:delta sigma bound}, we use the inequality $\|\mathcal{T} \circ \mathcal{T}'\|_\diamond \leq \|\mathcal{T}\|_\diamond \| \mathcal{T}'\|_\diamond$ for any linear maps $\mathcal{T}, \mathcal{T}'$ compatible with concatenation \cite{aharonov1998quantum}, and the fact that $\|\mathcal{T}\|_\diamond = 1$ for any quantum channel $\mathcal{T}$, along with the triangle inequality.}
\begin{align}
    &\norm{\widehat{\mathcal{N}}_{E_S \leftarrow Q} \circ \mathcal{\Delta}_{Q \leftarrow R}^{\,(\Sigma)}}_\diamond = \Bigg\|\Bigg( \Bigg[\bigotimes_{\ell \in \Sigma}\widehat{\mathcal{N}}_{E_\ell \leftarrow \ell} \circ(\mathrm{id}_\ell - \mathcal{D}_{\ell})\Bigg] \nonumber\\ \otimes& \left[\bigotimes_{\ell \in \Sigma^\mathrm{c}}\widehat{\mathcal{N}}_{E_\ell \leftarrow \ell} \circ\mathcal{D}_{\ell} \right]\Bigg) \circ \mathcal{\Delta}_{Q \leftarrow R}\Bigg\|_\diamond \leq 2e^{-|\Sigma|/T}.
    \label{eq:delta sigma bound}
\end{align}
Since we need only include regions $\Sigma \in \Omega^+$, and such regions have size at least $w_y$, every term will be suppressed by a factor of at least $e^{-w_y/T}$. However, there are $2^{|S|}$ terms in the sum over $\Sigma$ in \eqref{eq:delta decomp}, and a na{\"i}ve application of the triangle inequality $\|\sum_{\Sigma \in P(S)} \widehat{\mathcal{N}}_{E_S \leftarrow Q} \circ \mathcal{\Delta}_{Q \leftarrow R}^{\,(\Sigma)}\|_\diamond \leq \sum_{\Sigma \in P(S)} \|\widehat{\mathcal{N}}_{E_S \leftarrow Q} \circ \mathcal{\Delta}_{Q \leftarrow R}^{\,(\Sigma)}\|_\diamond$ would result in a vacuous bound.

To overcome this abundance of terms, we devise a careful combinatorial decomposition of the sum over $\Sigma$, which allows us to use the triangle inequality in a less wasteful fashion. First, let $\Gamma[\Sigma] = \{C_1, \ldots, C_{|\Gamma[\Sigma]|}\}$ denote the set of maximally connected components of a given subset $\Sigma \subseteq S$ (Fig.~\ref{fig:local-and-global-approximate-indistinguishability}), i.e.~each $C_i$ is a connected subset of $\Sigma$, and is not contained in any larger connected subset of $\Sigma$. In defining connectivity, we consider two edges to be adjacent if and only if they share a vertex. Since paths on the primal lattice are connected with respect to this definition of adjacency, our restriction on $\Sigma$ implies that at least one component $C_i \in \Gamma[\Sigma]$ must contain a spanning path.

With this in mind, let $\mathscr{W}^+$ be the collection of all regions $W^+ \subseteq S$ that are connected, and that contain a spanning path.
Then, for a fixed choice of $W^+$, consider the partial sum
\begin{align}
    \mathcal{\Delta}^{\{W^+\}}_{Q \leftarrow R} \coloneqq \sum_{\substack{\Sigma \in P(S) \\ \Gamma[\Sigma] \ni W^+}} \mathcal{\Delta}^{\,(\Sigma)}_{Q \leftarrow R}
    \label{eq:delta singleton}
\end{align}
which includes all $\Sigma$ that have $W^+$ as one of their connected components. Note that the sum of all such objects
\begin{align}
    \sum_{W^+ \in \mathscr{W}^+} &\widehat{\mathcal{N}}_{E_S \leftarrow Q}\circ \mathcal{\Delta}^{\{W^+\}}_{Q \leftarrow R} \nonumber\\ &= \sum_{W^+ \in \mathscr{W}^+} \sum_{\substack{\Sigma \in P(S) \\ \Gamma[\Sigma] \ni W^+}} \widehat{\mathcal{N}}_{E_S \leftarrow Q}\circ \mathcal{\Delta}^{\,(\Sigma)}_{Q \leftarrow R}
    \label{eq:delta all Cplus}
\end{align}
includes every term $\Sigma \in \Omega^+$ at least once, but with some double-counting. Leaving this double-counting aside for now, we will show that  the norm $\|\widehat{\mathcal{N}}_{E_S \leftarrow Q}\circ \mathcal{\Delta}^{\{W^+\}}_{Q \leftarrow R}\|_\diamond$, which includes a large number of terms $\Sigma$, can be upper bounded in a much more efficient way. 

For a given $\Sigma \in \mathscr{P}(S)$ we observe that $\Gamma[\Sigma] \ni W^+$ if and only if both the following conditions hold: (1) $\ell \in \Sigma$ for every $\ell \in W^+$, and (2) $\ell \notin \Sigma$ for every $\ell \in \partial W^+$, where $\partial W^+$ is the set of edges adjacent to $W^+$, but not contained in $W^+$. The first condition is immediate since $C \subseteq \Sigma$ for any $C \in \Gamma[\Sigma]$. The second condition ensures that $W^+$ is a \textit{maximal} connected component of $\Sigma$, i.e.~there is no additional edge $\ell \in \Sigma$ that can be added to $W^+$ to give a larger connected region. Therefore, we can characterise the sum in \eqref{eq:delta singleton} as being over all choices of $\Sigma$ that contain $W^+$ and exclude $\partial W^+$---a total of $2^{|S| - |W^+|- |\partial W^+|}$ terms corresponding to the choice of including or excluding each $\ell \in S \setminus (W^+ \cup \partial W^+)$. We can thus re-sum the trivial identity $\mathrm{id}_\ell = \mathcal{D}_\ell + (\mathrm{id}_{\ell} - \mathcal{D}_\ell)$ for these unconstrained edges $\ell$ to obtain
\begin{align} 
    \mathcal{\Delta}^{\{W^+\}}_{Q \leftarrow R} =  \left(\left[\bigotimes_{\ell \in W^+}(\mathrm{id}_\ell - \mathcal{D}_\ell) \right]\otimes \left[\bigotimes_{\ell \in \partial W^+}  \mathcal{D}_\ell \right] \right) \circ \mathcal{\Delta}_{Q \leftarrow R}.
    \label{eq:delta Cplus rep}
\end{align}
By analogy to \eqref{eq:delta sigma bound}, the corresponding norm is at most
\begin{align}
    \norm{\widehat{\mathcal{N}}_{E_S \leftarrow Q}\circ \mathcal{\Delta}^{\{W^+\}}_{Q \leftarrow R}}_{\diamond} \leq 2e^{-|W^+|/T}.
    \label{eq:delta singleton bound}
\end{align}
Thus, by applying the triangle inequality to the sum over $W^+$, rather than the full sum over $\Sigma$, the norm of the expression \eqref{eq:delta all Cplus} can be bounded by $2\sum_{W^+ \in \mathscr{W}^+} e^{-|W^+|/T}$. 

The value of this sum depends on how fast the number of regions $W^+$ of a given size $m$ grows with $m$. As a crude upper bound, we can use the fact that the number of connected subsets of $S$ of size $m$ is at most $|S| e^{m[1 + \ln(k_\Lambda - 1)]}$ for $m \geq w_{\rm y}$ [\onlinecite{bollobas2006art}, pp.~130--133]. Importantly, this grows only exponentially with $m$ with a constant base, in contrast to the total number of subsets of $S$ of size $m$, which is ${|S| \choose m} \sim |S|^m$.
Thus, thanks to our assertion that $W^+$ be connected, if $T < [1 + \ln (k_\Lambda-1)]^{-1}$ then the sum over $W^+$ converges, and is upper bounded by $\kappa |S| e^{- \alpha w_{\rm y}}$ for $T$-dependent constants $\kappa, \alpha$.

The inequality we have derived falls short of the desired bound on \eqref{eq:delta diamond}, due to the aforementioned double counting:  if $\Sigma$ contains multiple connected components that include spanning walks, $C_1, C_2, \ldots \in \mathscr{W}^+$, then $\Sigma$ will be included in the sum \eqref{eq:delta all Cplus} an equal number of times, corresponding to the different choices of $W^+ = C_i$. In the Appendix, we explain how a version of the inclusion-exclusion principle can be used to remove these multiplicities, thereby bounding \eqref{eq:delta diamond} as desired.

The fully rigorous proof of \cref{lem:delta bound simple}, which we provide in Appendix \ref{apdx:bounds}, both formalises and generalises these arguments, in a way that will allow us to obtain similar bounds for non-Abelian codes. In brief, with the map $\mathcal{\Delta}_{Q \leftarrow R}$ [Eq.~\eqref{eq:Delta def}] re-defined for the appropriate code and algebra $\mathfrak{J}$, one can always employ the decomposition \eqref{eq:delta decomp}. Then, one must identify the appropriate graph-theoretic constraints of the kind given in Eq.~\eqref{eq:sigma constraint}, which capture the local indistinguishability of code states. All the remaining steps then proceed in much the same way as described above.

\subsection{Quantitative improvement via mapping to self-osculating walks \label{sec:sow-mapping}}

For the square lattice $\mathbb{Z}_2$ surface code, the bound on the threshold noise strength $T_{\rm c}$ implied by \cref{lem:delta bound simple} is $T_{\rm c} = [1 + \ln 5]^{-1} \approx 0.383$. In this section we will improve this threshold by exploiting some additional structure exhibited by the $\mathbb{Z}_2$ surface code.
This results in orders of magnitude improvement on the lower bound for the threshold; for example, for amplitude damping errors, the lower bound for the threshold is improved from $0.5\%$ to $13.3\%$.

The $\mathbb{Z}_2$ surface code is a Calderbank-Shor-Steane (CSS) stabilizer code, since the vertex stabilizers $A_v$ are products of Pauli-$X$s, and the plaquette stabilizers $B_p$ are products of Pauli-$Z$s. As is well-known, when independent Pauli-$X$ and Pauli-$Z$ noise is applied to a CSS code, the recoverability of logical-$X$ information can be determined by considering the syndromes of the $X$-type stabilizers only, and similarly for $Z$ \cite{dennis2002topological}. Here, where we consider non-Pauli noise, we can similarly amend our argument in a way that reflects this structure.

In place of the decomposition \eqref{eq:delta decomp}, which breaks up the output of $\mathcal{\Delta}_{Q \leftarrow R}$ into components that are traceless on each $\ell \in \Sigma$ and proportional to identity on $\ell \in \Sigma^\mathrm{c}$, we instead write
\begin{align}
    \mathcal{\Delta}_{Q \leftarrow R}  &= \sum_{\Sigma \in P(S)} \tilde{\mathcal{\Delta}}_{Q \leftarrow R}^{\,(\Sigma; X)} \text{, where} \nonumber\\  \tilde{\mathcal{\Delta}}_{Q \leftarrow R}^{\,(\Sigma; X)} &\coloneqq \left(\left[\bigotimes_{\ell \in \Sigma}(\mathrm{id}_\ell - \mathcal{E}_\ell^X) \right]\otimes \left[\bigotimes_{\ell \in \Sigma^\mathrm{c}}  \mathcal{E}_\ell^X \right] \right) \circ \mathcal{\Delta}_{Q \leftarrow R},
    \label{eq:delta decomp X}
\end{align}
where $ \mathcal{E}_\ell^X[\sigma_\ell] = \frac{1}{2}(\sigma_\ell + X_\ell \sigma_\ell X_\ell)$ is the complete dephasing channel in the $X$-basis for qubit $\ell$. This decomposition picks out components that are in the span of $\{I_\ell, X_{\ell}\}$ for $\ell \in \Sigma^\mathrm{c}$, and $\{Y_\ell, Z_{\ell}\}$ for $\ell \in \Sigma$.

With $\Omega^+ \subset \mathscr{P}(S)$ defined as before, the natural analog of Eq.~\eqref{eq:sigma constraint} then still continues to hold,
\begin{align}
    \Sigma \in \Omega^+\; \Longrightarrow\; \tilde{\mathcal{\Delta}}_{Q \leftarrow R}^{\,(\Sigma; X)} = 0.
    \label{eq:sigma constraint X}
\end{align}
Thus, if we define $\mathcal{\Delta}^{\{W^+\}, X}_{Q \leftarrow R}$ by analogy to Eq.~\eqref{eq:delta singleton}, then the arguments leading up to Eq.~\eqref{eq:delta Cplus rep} could be followed in an identical fashion. However, the resulting bound \eqref{eq:delta singleton bound} can be made tighter by introducing the following definition.
\begin{definition}
    \label{def:noise strength def xz}
	For a single-site noise channel $\mathcal{N}_{\ell}$, the effective $X$-type noise strength $T_X(\mathcal{N}_{\ell})$ is given by
	\begin{align} \label{eq:noise strength def}
		\norm{\widehat{\mathcal{N}}_{E_\ell \leftarrow \ell}\circ (\mathrm{id}_\ell - \mathcal{E}_\ell^X)}_\diamond =: e^{-1/T_X(\mathcal{N}_\ell)},
	\end{align}
	where $\mathcal{M}^X_{\ell}$ is the complete dephasing channel in the $X$ basis, and $\widehat{\mathcal{N}}_{E_\ell \leftarrow \ell}$ is any channel complementary to $\mathcal{N}_{\ell}$. The $Z$-type noise strength $T_Z(\mathcal{N}_{\ell})$ is defined similarly.
\end{definition}
These quantities provide a sharper measure of the strength of noise applied, since\footnote{Equation \eqref{eq:noise strength ineq} follows from the fact that the map $\mathcal{S}_\ell \coloneqq (\mathrm{id}_\ell - \mathcal{E}_\ell^X)$ is a complete trace norm contraction, i.e.~$\|\mathcal{S}_\ell \otimes \mathrm{id}_{\ell'}[O_{\ell \ell'}]\|_1 \leq \|O_{\ell \ell'}\|_1$ for any ancillary Hilbert space $\ell'$ and operator $O_{\ell \ell'}$, and thus satisfies $\|\mathcal{T}_\ell \circ \mathcal{S}_\ell\|_\diamond \leq \|\mathcal{T}_\ell\|_\diamond$ for any superoperator $\mathcal{T}_\ell$. Then, since the image of $\mathcal{S}_\ell$ is traceless, we have $T_X(\mathcal{N}_\ell) = \|\widehat{\mathcal{N}}_{E_\ell \leftarrow \ell}\circ \mathcal{S}_\ell\|_\diamond = \|\widehat{\mathcal{N}}_{E_\ell \leftarrow \ell}\circ(\mathrm{id}_\ell - \mathcal{D}_\ell)\circ \mathcal{S}_\ell\|_\diamond \leq \|\widehat{\mathcal{N}}_{E_\ell \leftarrow \ell}\circ(\mathrm{id}_\ell - \mathcal{D}_\ell)\|_\diamond = T(\mathcal{N}_\ell)$.}
\begin{align}
    T_{X, Z}(\mathcal{N}_\ell) \leq T(\mathcal{N}_\ell)
    \label{eq:noise strength ineq}
\end{align}
Thus, if we write $T_X \coloneqq \max_\ell T_X(\mathcal{N}_\ell)$, then we have
\begin{align}
    \norm{\widehat{\mathcal{N}}_{E_S \leftarrow Q} \circ \tilde{\mathcal{\Delta}}_{Q \leftarrow R}^{\{W^+\}, X}}_\diamond \leq 2 e^{-|W^+|/T_X},
\end{align}
which improves on Eq.~\eqref{eq:delta singleton bound}. More generally, we expect the quantities $T_{X,Z}(\mathcal{N}_\ell)$  to be useful quantifiers of noise strength for any CSS code.

As well as allowing us to reduce the effective noise rate from $T$ to $T_X$, the  the modified decomposition \eqref{eq:delta decomp X} also has the advantage that it obeys an extra selection rule, in addition to \eqref{eq:sigma constraint X}. Let $V_S$ be the set of vertices $v$ for which every adjacent edge $\ell \in \partial^\top v$ is contained in $S$. Then,
\begin{align}
    \exists v \in V_S : \mathrm{deg}_\Sigma(v) \text { is odd}\; \Longrightarrow\; \tilde{\mathcal{\Delta}}_{Q \leftarrow R}^{\,(\Sigma; X)} = 0.
    \label{eq:sigma constraint X even}
\end{align}
That is, for $\tilde{\mathcal{\Delta}}_{Q \leftarrow R}^{\,(\Sigma; X)}$ to be non-zero, $\Sigma$ must contain an even number of edges surrounding every vertex $v \in V_S$. This greatly constrains the regions we need to include in our combinatorial decomposition.  

As we show in Appendix \ref{apdx:z2 improvements}, the extra constraint \eqref{eq:sigma constraint X even} can be used to decompose $\Sigma$ into non-overlapping components  $C_1, C_2, \ldots$, each of which correspond to the edges visited by a \textit{self-osculating walk} (SOW). An SOW is a path on the primal lattice $\gamma = (v_1, v_2, \ldots)$ with certain rules that restrict how the path can cross itself. In particular, no edge is visited more than once, and each vertex is visited at most twice. For vertices that are visited twice, the path cannot cross through itself, but must `kiss' (or `osculate'), see \cref{fig:self-osculating-walk}.

\begin{figure}
    \centering
	\begin{tikzpicture}[scale=0.5, baseline=-0.5ex]
        \foreach \y in {0,...,6} {
		  \draw [color=gray!50,] (0, \y) -- (6, \y);
        }
        \foreach \x in {0,...,6} {
		  \draw [color=gray!50,] (\x, 0) -- (\x, 6);
        }
        \draw [rounded corners=1mm, thick] (3,6)--(3,5)--(2,5)--(2,4)--(1,4)--(1,3)--(2,3)--(2,4)--(3,4)--(3,5)--(4,5)--(4,4)--(4,3)--(3,3)--(3,2)--(2,2)--(2,1)--(3,1)--(4,1)--(4,0);
	\end{tikzpicture}
    \caption{An example of a vertically spanning self-osculating walk (SOW) on the square lattice. Unlike self-avoiding walks, the same vertex may be visited twice if the path can be drawn such that it `kisses' at a vertex \cite{pkim2025existence}.}
    \label{fig:self-osculating-walk}
\end{figure}

Using this approach, we prove the following improvement to \cref{lem:delta bound simple}.
\begin{lemma}
	\label{lem:SOW mapping surface}
	With the assumptions of Theorem \ref{thm:fat}, let $S \subset Q$ be a horizontal strip of qubits of height $w_{\rm y}$, with geometry shown in \cref{fig:geometries}(b). Then,
	\begin{align}
		\delta_{\mathfrak{X}}(E_S) \leq \frac{2}{\ln 2} \times \mathscr{Z}^\mathrm{SOW}_{\Lambda(w_\mathrm{x}, L_\mathrm{y})}(1/T_X),
		\label{eq:surface Z SOWs}
	\end{align}
	where $T_X \coloneqq \max_{\ell \in S} T_X(\mathcal{N}_\ell)$, and
    \begin{align}
	\mathscr{Z}_{\Lambda(w_\mathrm{x}, L_\mathrm{y})}^{\mathrm{SOW}}(\beta) \coloneqq \sum_{\gamma \in \mathrm{SOW}(\Lambda(w_\mathrm{x}, L_\mathrm{y}))} e^{-\beta |\gamma|}
	\label{eq:partition SOW}
    \end{align}
    is the partition function for a fluctuating SOW with inverse temperature $\beta$, defined on a rectangular lattice $\Lambda(L_x , w_\mathrm{y})$, and constrained to start on the top boundary of $S$ and terminate on the bottom. 
\end{lemma}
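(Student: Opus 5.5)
We start from the $X$-dephasing decomposition \eqref{eq:delta decomp X}, $\mathcal{\Delta}_{Q\leftarrow R}=\sum_{\Sigma}\tilde{\mathcal{\Delta}}^{(\Sigma;X)}_{Q\leftarrow R}$, and first establish the two selection rules. The spanning-path rule \eqref{eq:sigma constraint X} should be read as $\Sigma\notin\Omega^+\Rightarrow\tilde{\mathcal{\Delta}}^{(\Sigma;X)}=0$, and the parity rule is \eqref{eq:sigma constraint X even}.

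The spanning-path rule follows as in \cref{lem:delta bound simple}. If $\Sigma$ contains no spanning path, then $\Sigma^{\rm c}$ contains a dual path $\gamma^*$ carrying $\bar X_{\gamma^*}$. Since $\mathcal{E}^X_\ell$ commutes with conjugation by $X_\ell$, we have $\bigotimes_{\ell\in\Sigma^{\rm c}}\mathcal{E}^X_\ell\circ\mathcal{P}^{\mathfrak X}_{\Sigma^{\rm c}}=\bigotimes_{\ell\in\Sigma^{\rm c}}\mathcal{E}^X_\ell$. Hence $\tilde{\mathcal{\Delta}}^{(\Sigma;X)}=0$.

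For the parity rule, we expand the output of $\mathcal{\Delta}$ in the Pauli basis. Every operator surviving the projections has $Y$ or $Z$ on $\Sigma$ and $I$ or $X$ on $\Sigma^{\rm c}$. Because the encoded states are $A_v$ eigenstates and the logical $X$ commutes with every $A_v$, the operators $\mathcal{C}[\sigma]$ and $\mathcal{C}[\mathcal{P}^{\mathfrak X}\sigma]$ commute with each $A_v$. Projections that are diagonal in the $X$-Pauli frame preserve this commutation. A Pauli string commuting with $A_v=\prod X$ on a vertex $v\in V_S$ must carry an even number of $Y/Z$ factors on $\partial^\top v$, so $\deg_\Sigma(v)$ is even.

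Next comes the combinatorial resummation. Given the parity constraint, $\Sigma$ restricted to the bulk vertices $V_S$ is an even subgraph, apart from degrees at vertices on the strip boundary. By the standard Eulerian-decomposition argument, each connected component can be split into edge-disjoint walks. We fix the splitting by a deterministic local rule at degree-4 vertices, pairing edges so that the walks never cross. This yields self-osculating walks, and at least one of them, $\gamma$, spans from top to bottom. We then replace the sum over connected spanning clusters $W^+$ by a sum over spanning SOWs $\gamma$. For fixed $\gamma$, we define the partial sum over all $\Sigma$ whose canonical decomposition contains $\gamma$ as a component. We try to re-sum the unconstrained edges exactly as in \eqref{eq:delta Cplus rep}, which yields the bound $\le 2e^{-|\gamma|/T_X}$ per walk via \cref{def:noise strength def xz} and submultiplicativity of the diamond norm. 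The $\mathcal{E}^X_\ell$ factors are channels, so they contribute norm at most $1$.

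We expect the main obstacle to be this last step. First, the condition that $\gamma$ be a component of the canonical decomposition must be expressible as a product constraint on edges: inclusion on $\gamma$, plus prescribed exclusions or inclusions near $\gamma$ at osculation vertices, with all other edges free. Second, $\Sigma$'s containing several spanning SOWs are overcounted. For the multiplicity, we would use an inclusion--exclusion or weighting scheme. Assign each $\Sigma$ with $k\ge1$ spanning walks the weight $1-(1-1)^k$, expanded over nonempty subsets of walks. Alternatively, write the indicator as an integral $\int_0^1$ of a product over walks of $(1-t)$ factors. Bounding the alternating sum by the sum of its absolute values with a $1/k$-type weighting produces $\sum_{k}1/k\cdot 2^{-k}$-style series, which is the origin we anticipate for the prefactor $2/\ln2$. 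Summing the per-walk bound over all spanning SOWs then gives $\frac{2}{\ln 2}\mathscr{Z}^{\rm SOW}(1/T_X)$, as claimed.
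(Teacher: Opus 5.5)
\paragraph{Gap: existence of a spanning walk.} The step that fails as written is your claim that splitting $\Sigma$ by a fixed local non-crossing rule always leaves at least one top-to-bottom walk. Your two selection rules do not imply this. They are: no spanning path $\Rightarrow$ the term vanishes, and odd degree at some $v\in V_S$ $\Rightarrow$ the term vanishes.

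Here is a counterexample. Take a walk anchored at two points of the top boundary that dips down to a vertex $v$, and a walk anchored at the bottom that rises to the same $v$, kissing it there. This $\Sigma$ has even degree on all of $V_S$ and contains a top-to-bottom path through $v$. However, one of the two non-crossing pairings at $v$ separates the two pieces. So for a suitably rotated copy of this configuration, your rule produces no spanning walk, and your resummation does not account for the term.

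Such terms do vanish, but only because of a stronger constraint. $\Sigma$ must be the domain-wall set of a plaquette Ising configuration with opposite fixed spins on the two smooth sides; this is what the paper extracts from the Pauli expansion of $\mathbb{B}_{P_S}\bar Z_S$ in \eqref{eq:delta sigma X proj}. Equivalently, every left-to-right dual path $\gamma^*\subset S$ crosses $\Sigma$ an odd number of times.

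You can obtain this by sharpening your own argument. We have
\[
\mathcal{E}^X_\ell\circ\mathrm{Ad}_{X_\ell}=\mathcal{E}^X_\ell,\qquad (\mathrm{id}_\ell-\mathcal{E}^X_\ell)\circ\mathrm{Ad}_{X_\ell}=-(\mathrm{id}_\ell-\mathcal{E}^X_\ell).
\]
Hence the projector defining the $\Sigma$-term of \eqref{eq:delta decomp X} absorbs conjugation by $\bar X_{\gamma^*}$ up to the sign $(-1)^{|\gamma^*\cap\Sigma|}$. The term therefore vanishes whenever this crossing number is even.

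Combine odd crossing with your parity rule. Then every trail decomposition, whatever the local rule, has an odd number of top-to-bottom trails. This is because trails can only end on the rough top and bottom boundaries, and closed, top--top and bottom--bottom trails cross a fixed left-to-right dual curve an even number of times. The paper instead reaches existence by making the splitting rule \eqref{eq:domain-wall-ambiguity} spin-dependent, so that the boundary of the minus-cluster attached to the left side is itself one of the walks.

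\paragraph{Open steps: resummation and prefactor.} Your last two steps are left open or misdescribed.

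The product-constraint characterisation you flag as the main obstacle is in fact easy with your configuration-independent rule. It is arguably cleaner than with the paper's rule, where the side on which a walk may osculate is fixed by the spins, hence non-locally by $\Sigma$. The characterisation is: $\gamma$ is a component of the decomposition of $\Sigma$ iff $\gamma\subseteq\Sigma$ and $\Sigma$ omits both off-walk edges at every vertex where $\gamma$'s local pairing differs from the rule's. Such vertices are those where $\gamma$ passes straight, or takes the turn the rule does not select. All other edges are free. Walks that kiss themselves must use the rule's pairing, so they form a subset of SOWs. This yields the analogue of \eqref{eq:delta Cplus rep} exactly, with $(\mathrm{id}_\ell-\mathcal{E}^X_\ell)$ on $X[\Gamma']$ and $\mathcal{E}^X_\ell$ on the excluded edges.

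The prefactor, however, does not arise as you suggest. Summing $2e^{-|\gamma|/T_X}$ over single walks controls only the $n=1$ term of the inclusion--exclusion \eqref{eq:sum J decomp z2}. The $n$-walk terms consist of edge-disjoint walks, so $|X[\Gamma']|=\sum_i|\gamma_i|$, and they are bounded by $2(\mathscr{Z}^{\mathrm{SOW}})^n/n!$. Summing gives $2(e^{\mathscr{Z}^{\mathrm{SOW}}}-1)$, as in \eqref{eq:delta norm Z bound}. The linear bound with prefactor $2/\ln 2$ then comes from combining this with the trivial bound $2$ on the diamond norm, via $\min\{2,2(e^x-1)\}\le \frac{2}{\ln 2}\,x$. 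It does not come from a $\sum_k 2^{-k}/k$ series.
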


Using known properties of the SOW partition function, we can convert \cref{lem:SOW mapping surface} into an improved bound on the critical noise strength $T_{\rm c}$ appearing in Theorem \ref{thm:fat}. The number of SOWs of length $r$ with a fixed start point grows as $n_r \leq (\mu_{\Lambda})^{r}$, where $\mu_\Lambda$ is the \textit{connective constant} for SOWs on the lattice $\Lambda$, defined as the smallest constant for which this inequality holds. If $\beta$ exceeds the critical inverse temperature $\ln \mu_{\Lambda}$, then the sum in the partition function \eqref{eq:partition SOW} converges, and is dominated by the shortest-length paths. This results in an upper bound (see Appendix \ref{apdx:z2 improvements})
\begin{align}
    \mathscr{Z}^{\text{SOW}}_{\Lambda(L_x, w_{\rm y})}(\beta) &\leq L_{\rm y} \frac{e^{-(\beta - \ln \mu_\Lambda)w_{\rm y}}}{1 - e^{-(\beta - \ln \mu_\Lambda)}} & \text{for }\beta > \ln \mu_{\Lambda}
    \label{eq:SOW bound}
\end{align}
which is tight up to polynomial prefactors. This implies the following result.

\begin{theorem}
    In Theorem \ref{thm:fat}, the condition \eqref{eq:temp bound} can be replaced by
    \begin{align}
        T_X \coloneqq \max_{\ell \in S} T_X(\mathcal{N}_\ell) < 1/\ln \mu_\Lambda,
    \end{align}
    where $T_X(\mathcal{N}_\ell)$ is given in \cref{def:noise strength def xz}, and $\mu_\Lambda$ is the connective constant for SOWs on the lattice $\Lambda$. For vertical strips, with $\mathfrak{X}$ replaced by $\mathfrak{Z}$, one uses $T_Z$ in place of $T_X$, and $\mu_{\Lambda^*}$ in place of $\mu_\Lambda$. 
\end{theorem}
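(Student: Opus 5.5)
The statement follows by chaining results already in place: the self-osculating-walk bound of \cref{lem:SOW mapping surface}, the partition-function estimate \eqref{eq:SOW bound}, the local recovery bound \eqref{eq:eps rec bound local}, and \cref{lem:effective operator}. The argument only rewires the proof of \cref{thm:fat} so that the crude estimate of \cref{lem:delta bound simple} is replaced by the SOW estimate; nothing else in that proof depends on the value of $T^{\Lambda_*}_{\mathrm c}$.

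\textbf{Step 1: bound $\delta_{\mathfrak{X}}(E_S)$.} Fix a strip $S$ of the appropriate orientation and set $\beta = 1/T_X$. Assume $T_X < 1/\ln\mu_\Lambda$, so that $\beta - \ln\mu_\Lambda =: \alpha > 0$. Combining \cref{lem:SOW mapping surface} with \eqref{eq:SOW bound} gives
\begin{align}
\delta_{\mathfrak{X}}(E_S) \le \frac{2}{\ln 2}\,\frac{L_{\rm y}}{1-e^{-\alpha}}\,e^{-\alpha w},
\end{align}
where $w$ is the width of the strip transverse to the logical string. This has exactly the form $c_1' L\, e^{-c_2' w}$ that \cref{lem:delta bound simple} supplied, with $c_1' = 2/[(1-e^{-\alpha})\ln 2]$ and $c_2' = \alpha$.

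\textbf{Step 2: convert to recovery error and the ribbon operator.} Apply \eqref{eq:eps rec bound local}, which gives $\epsilon^{\mathrm{rec}}_{\mathfrak{X}}(S) \le 2\sqrt{\delta_{\mathfrak{X}}(E_S)}$. This is at most $c_1(T_X)\sqrt{L_{\rm y}}\,e^{-c_2 w}$ with
\begin{align}
c_1 = 2\sqrt{c_1'}, \qquad c_2 = \alpha/2,
\end{align}
which reproduces \eqref{eq:epsilon rec thm-fat}. The thickened ribbon operator and the bound \eqref{eq:effective op thm-fat} then follow from \cref{lem:effective operator} applied to the superoperator $O \mapsto X_R O$, exactly as in the original proof.

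\textbf{Step 3: the other strip orientation.} Repeat the argument with $X$ and $Z$ exchanged. The decomposition now uses $Z$-dephasing $\mathcal{E}^Z_\ell$, so the relevant single-site strength is $T_Z$. The parity selection rule \eqref{eq:sigma constraint X even} is replaced by the analogous rule on plaquettes, i.e.\ on vertices of the dual lattice. The surviving regions therefore decompose into SOWs on $\Lambda^*$, so $\mu_{\Lambda^*}$ replaces $\mu_\Lambda$.

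\textbf{Main obstacle.} The step needing care is checking that \cref{lem:SOW mapping surface} uses only $T_X < 1/\ln\mu_\Lambda$, and not the stronger condition \eqref{eq:temp bound}. Its proof should rely only on two ingredients: the per-site bound $2e^{-|W^+|/T_X}$ and the combinatorial SOW decomposition. If the inclusion–exclusion step inside that proof implicitly needs extra convergence, for example summing over several spanning components, I would redo it. The fix would be to bound the multiplicity-corrected sum by the same SOW series, using the $\ln 2$ factor in \eqref{eq:surface Z SOWs} as the price of the resummation.
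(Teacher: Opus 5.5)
Your proposal is correct and is essentially the paper's own argument. \cref{lem:SOW mapping surface} holds for every $\beta$: the inclusion--exclusion resummation gives $2(e^{\mathscr{Z}}-1)$, and $\min\bigl(2,2(e^{x}-1)\bigr)\le\tfrac{2}{\ln 2}\,x$ is exactly the $\ln 2$ price you anticipated. So $T_X<1/\ln\mu_\Lambda$ enters only through the convergence in \eqref{eq:SOW bound}, after which \eqref{eq:eps rec bound local} and \cref{lem:effective operator} finish the proof. Note that \cref{lem:effective operator} yields $4\sqrt{\delta}$ rather than $2\sqrt{\delta}$, with $\delta$ your Step~1 bound, so you should simply enlarge $c_1$.

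One point you inherit from the paper: $n_r\le\mu_\Lambda^{\,r}$ holds only under the paper's definition of $\mu_\Lambda$ as the smallest such constant. For the standard connective constant $\limsup_r n_r^{1/r}$ it already fails at $r=1$ (e.g.\ $n_1=4>2.74$ on the square lattice). In general you should therefore use $n_r\le C_\varepsilon(\mu_\Lambda+\varepsilon)^r$. Under the strict inequality this still gives exponential decay, with rate $\beta-\ln(\mu_\Lambda+\varepsilon)>0$ and a modified prefactor.
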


For the square lattice, the best known rigorous upper bounds for the SOW connective constant is $\mu_\square^\mathrm{SOW} \leq 2.73911$~\cite{pkim2025existence}. Though stated for the square lattice, it is clear that the theorem can be generalised to other lattices or even amorphous planar or cylindrical graphs, as well as regions $S$ that are not strictly rectangular. In this case, $w_\mathrm{x}$ would correspond to the length of the shortest lattice path from one boundary of $S$ to the other. The best known upper bound for the triangular lattice is $\mu_\trilattice^\mathrm{SOW} \leq 4.44931$ and for the hexagonal lattice, the connective constant is the same as that of self-avoiding walks, known exactly as $\mu_{\hexlattice}^\mathrm{SOW} = \mu_{\hexlattice}^\mathrm{SAW} = \sqrt{2+ \sqrt{2}} \approx 1.85$ \cite{duminil2012connective}. These can be used to obtain bounds on the thresholds for various noise channels, which are included in \cref{tab:beta-ps-and-thresholds}. For amorphous planar and cylindrical graphs of maximum degree $k$ on the primal graph $\mathscr{G}$ and $k_*$ on the dual graph $\mathscr{G}_*$, the connective constants can be loosely upper bounded by $\mu_\mathscr{G}^\mathrm{SOW} \leq (k-1)$ and $\mu_{\mathscr{G}_*}^\mathrm{SOW} \leq  (k_*-1)$ respectively, since SOWs are subsets of non-reversing walks.

\section{Non-Abelian codes} \label{sec:quantum-doubles}

In this Section, we introduce families of non-Abelian quantum error correcting codes, and provide bounds on the recovery thresholds after noise is applied. Specifically, we generalise Theorem \ref{thm:fat} and Corollaries \ref{cor:lre}, \ref{cor:anyon-distinguishability}, and \ref{cor:recoverability} to these more general non-Abelian codes. The two code families we consider are Kitaev's quantum double codes \cite{kitaev2003fault}, and the string-net codes of Levin and Wen \cite{levin2005string}.


\subsection{The quantum double model}
Here we describe the model introduced by Kitaev in Ref.~\onlinecite{kitaev2003fault}, which is parametrised by a finite group $G$. Up to a difference in boundary conditions, the case $G = \mathbb{Z}_2$ will correspond to the familiar surface code described above. While the geometry can be chosen freely, to keep matters as simple as possible we will work with a $L_x \times L_y$ square lattice with periodic boundary conditions, with each edge assigned an orientation as below,
\begin{equation}
	\begin{matrix}
		& \vdots & \\
		\cdots & \intikz{
			\foreach \y in {0,...,3} {
							\foreach \x in {0,...,3} {
								\draw[-<-] (\x,\y) -- ({\x+1},\y);
								\draw[->-] (\x,\y) -- (\x,{\y+1});
							}
						}
		} & \cdots \\
		& \vdots &
	\end{matrix}\;\;.
    \label{eq:edge orientation}
\end{equation}
We denote the sets of vertices, edges, and plaquettes as denoted $V, E$, and $P$, respectively. 

Each edge $\ell \in E$ hosts a $|G|$-dimensional qudit, with local Hilbert space $\mathscr{H}_\ell = \mathrm{span}\{\ket{g} : g \in G\}$, where $\{\ket{g} : g \in G\}$ is a chosen computational basis, labelled by elements $g \in G$. On each qudit, we define the left and right shift operators, analogous to Pauli-$X$ operators, by
\begin{align}
	L^{g}_\ell \ket{h}_\ell = \ket{gh}_\ell, \quad R^g_\ell \ket{h} = \ket{h g^{-1}}_\ell.
	\label{eq: operators LR}
\end{align}
We also define projectors onto the basis states
\begin{align}
	T^h_\ell\ket{g}_\ell = \delta_{g,h} \ket{g}_\ell,
\end{align}
which play a similar role to the Pauli-$Z$ operators for the $\mathbb{Z}_2$ case.

The code space projector again has the form $\mathbb{\Pi} = \mathbb{A}_V \mathbb{B}_P$, where $\mathbb{A}_V = \prod_{v \in V} = \mathbb{A}_v$ and $\mathbb{B}_P = \prod_{p \in P} = \mathbb{B}_p$, and $\mathbb{A}_v$ and $\mathbb{B}_p$ are projectors that act only on the edges within a vertex $v$ and plaquette $p$, respectively. Here the vertex projector is given by
\begin{align}
	\mathbb{A}_v = \frac{1}{|G|} \sum_{g \in G} A^g_v,
\end{align}
where the vertex operators are
\begin{align}
	A^g_v = \prod_{\ell \in \partial^\top v} O^{g}_{v, \ell}.
\end{align}
where $O^g_{v, \ell} = L^g_\ell$ if the edge $\ell$ is pointed towards the vertex, and $O^g_{v, \ell} = R^g_\ell$ if it is pointed away. For example, we have
\begin{align}
	A^g_v = R^g_u L^g_r L^g_d R^g_l \; \text{ for edges labelled } \; \intikz{
			\draw[-<-, thick] (0, 1) -- node[right] {$u$} (0, 0);
			\draw[-<-, thick] (0, 0) -- node[below] {$r$} (1, 0);
			\draw[-<-, thick] (0, 0) -- node[left] {$d$} (0, -1);
			\draw[-<-, thick] (-1, 0) -- node[above] {$l$} (0, 0);
			\node[fill=white] (0, 0) {$v$};
	}.
\end{align}
Letting $\ell_1, \ldots, \ell_4 \in \partial p$ denote the four edges adjacent to a plaquette $p$, given in counter-clockwise order, the plaquette projector is given by $\mathbb{B}_p \coloneqq B^\mathrm{id}_p$, where $\mathrm{id}$ is the identity element of $G$ and the operators $\{B^g_p\}_{g \in G}$ act on computational basis states as
\begin{align}
	B^g_p \big(&\ket{g_1}_{\ell_1} \otimes \cdots \otimes \ket{g_4}_{\ell_4} \big) \nonumber\\ &= \delta\left\{{\prod_{i=1}^4} g_{\ell_i}^{o_p(\ell_i)}, g \right\} \ket{g_1}_{\ell_1} \otimes \cdots \otimes \ket{g_4}_{\ell_4}
\end{align}
where the product is ordered right to left, and $o_p(\ell) = +1$ ($-1$) if the orientation of $\ell$ is counter-clockwise (clockwise) around $p$. For example, we have
\begin{equation}
	\begin{aligned}
		B^g_p \ket{g_u g_r g_d g_l} & = \delta\qty{g_u g_l^{-1} g_d^{-1} g_r, g} \ket{g_u g_r g_d g_l} \\ 
		& \; \text{ for edges labelled } \; 
		\intikz{
				\draw[->-, thick] (-1/2, 1/2) -- node[above] {$u$} (1/2, 1/2);
				\draw[->-, thick] (1/2, 1/2) -- node[right] {$r$} (1/2, -1/2);
				\draw[->-, thick] (-1/2, -1/2) -- node[below] {$d$} (1/2, -1/2);
				\draw[->-, thick] (-1/2, 1/2) -- node[left] {$l$} (-1/2, -1/2);
				\draw (0, 0) node {$p$};
		}.
	\end{aligned}
\end{equation}

Analogously to the $\mathbb{Z}_2$ surface code, string-like logical operators can be defined that are supported on particular thickened paths, known as \textit{ribbons} \cite{kitaev2003fault,bombin2008}. Each ribbon $r$ can be associated with a collection of \textit{ribbon operators} $F^{h,g}_r$ labeled by group elements $h,g \in G$. These are described in full generality in Appendix \ref{apdx:q double codes}, but for the specific case of a straight-line path in the $x$-direction $\gamma_{\leftrightarrow} = (v_1, v_2, \ldots, v_{l})$, the corresponding ribbon operators $F^{h,g}_{\gamma_{\leftrightarrow}}$ act as
\tikzset{vdot/.style={circle, fill=black, minimum size=0.5,inner sep=0,outer sep=0}}
\begin{equation}
\label{eq:ribbon cartesian}
	\begin{aligned}
		& F^{h, g}_{\gamma_\leftrightarrow}
		\bket{\intikz[scale=1.75]{
			\draw[-<-] (0, 0) -- node[above] {$x_1$} node[pos=0, vdot] {} (1, 0);
			\draw[->-] (0, -1) -- node[right] {$\, y_1$} (0, 0);
			\draw[-<-] (1, 0) -- node[above] {$x_2$} (2, 0);
			\draw[->-] (1, -1) -- node[right] {$\, y_2$} (1, 0);
			\draw[-<-] (2, 0) -- node[above] {$x_3$} (3, 0);
			\draw[->-] (2, -1) -- node[right] {$\, y_3$} (2, 0);
		} \cdots} \\
		& = \delta_{g, (x_1 x_2 \cdots)} \bket{\intikz[scale=1.75]{
			\draw[-<-] (0, 0) -- node[above] {$x_1$} (1, 0);
			\draw[->-] (0, -1) -- node[right] {$\, h y_1\,$} (0, 0);
			\draw[-<-] (1, 0) -- node[above] {$x_2$} (2, 0);
			\draw[->-] (1, -1) -- node[right] {$\, \mathcal{x}_{1:2}[h] y_2\,$} (1, 0);
			\draw[-<-] (2, 0) -- node[above] {$x_3$} (3, 0);
			\draw[->-] (2, -1) -- node[right] {$\, \mathcal{x}_{1:3}[h] y_3\,$} (2, 0);
		} \cdots},
	\end{aligned}
\end{equation}
where $\mathcal{x}_{1:n}[h] = (x_1 x_2 \cdots x_n)^{-1} h (x_1 x_2 \cdots x_n)$ is the conjugation of $h$ by the product of the $x_i$'s. 

The ribbon operators $F_r^{h,g}$ can be shown to commute with all vertex and plaquette projectors $\mathbb{A}_v$, $\mathbb{B}_p$, except possibly those adjacent to their endpoints $v_1$, $v_l$. They also obey a particular multiplication rule \cite{kitaev2003fault}
\begin{align}
	F^{h_1,g_1}_{r} F^{h_2,g_2}_{r} = \delta_{h_1, g_1 h_2 g_1^{-1}} F^{h_1, g_1 g_2}_{r}
	\label{eq:ribbon product}
\end{align}
As a result, the space of operators $\mathfrak{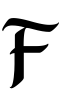}_{r} = \mathrm{span}\{F^{h,g}_{r}\}_{h,g \in G}$ constitutes an algebra. More specifically, \eqref{eq:ribbon product} is precisely the defining property of the algebra $D(G)$, known as the quantum double (or sometimes Drinfeld double) of $G$. Thus, $\mathfrak{F}_{r}$ forms a representation of $D(G)$.

Just as with groups, representations of $D(G)$ can be decomposed in terms of irreducible components (irreps), denoted $\mu \in \mathrm{Irr}(D(G))$. Each irrep $\mu$ can be associated with a unique type of anyon in the code, and \textit{vice-versa}. In particular, as detailed in the appendix, for each $\mu$ and for any open ribbon $r$ with endpoints $\partial_{0,1} r = v, v'$, one can identify a particular linear combination $F^\mu_r = \sum_{h,g} \alpha_\mu^{h,g} F^{h,g}_r$ such that for any code state $\rho_Q$, the state
\begin{align}
    \rho_Q^{(v,v'), \mu} \propto F^\mu_r \rho_Q (F^\mu_r)^\dagger
    \label{eq:anyon state}
\end{align}
features a pair of anyons $(\mu, \bar{\mu})$ at the locations of $(v, v')$. (Here $\bar{\mu}$ denotes the irrep/anyon dual to $\mu$.) In particular, for the trivial anyon $\mu = \varnothing$, we have $\rho_Q^{(v,v'), \varnothing} = \rho_Q$. The operators $F^\mu_r$ generalise the $X$-type and $Z$-type open string operators of the $\mathbb{Z}_2$ surface code, which generate pairs of $e$- and $m$-type anyons. These are defined more thoroughly in Appendix \ref{apdx:q double codes}.

One can similarly construct closed string operators that can be used to implement logical operations. For instance, if $\gamma_{\leftrightarrow} = (v_1, v_2, \ldots, v_l, v_1)$ is a closed non-contractible loop winding around the system in the $x$ direction, then for each anyon type $\mu$ we can find coefficients  $\alpha^\leftrightarrow_\mu(h,g)$ such that the operators
\begin{align}
	K^\mu_{\gamma_{\leftrightarrow}} = \sum_{h,g \in G} \alpha^\leftrightarrow_\mu(h,g) F^{h,g}_{\gamma_{\leftrightarrow}}
\end{align}
are  mutually orthogonal projectors that commute with the full code projector $\mathbb{\Pi}$. Physically, these operators create a $(\mu, \bar{\mu})$ anyon pair, one of which is transported around the torus, which then fuse back to the vacuum. As it turns out, one can construct a complete basis of code states of the form $\dyad{\bar{\mu}^\leftrightarrow}_Q = K^\mu_{\gamma_\leftrightarrow} \mathbb{\Pi}$. This allows one to define an encoding isometry from a reference system $R$, with dimension $d_R = |\mathrm{Irr}(D(G))|$, to the physical space $Q$,
\begin{align}
	\mathcal{C}^\leftrightarrow_{Q \leftarrow R}[\,\cdot\,] &= V^{\mathcal{C}^\leftrightarrow}_{Q \leftarrow R} \,\cdot\, (V^{\mathcal{C}^\leftrightarrow}_{Q \leftarrow R})^\dagger, \nonumber\\  \text{where } V_{Q \leftarrow R}^{\mathcal{C}_\leftrightarrow} &\coloneqq \sum_{\mu \in \mathrm{Irr}(D(G))} \ket{\bar{\mu}^\leftrightarrow}_Q \bra{\mu^\leftrightarrow}_R,
\end{align}
with $\{\ket{\mu^\leftrightarrow}_R\}_{\mu \in \mathrm{Irr}(D(G))}$ an arbitrary basis for $\mathscr{H}_R$. 
For example, if $G = \mathbb{Z}_2$, the code basis states $\ket{\bar{\mu}^\leftrightarrow}_Q$ would be simultaneous eigenstates of $\bar{Z}_{\gamma_\leftrightarrow}$ and $\bar{X}_{\gamma_\leftrightarrow^*}$, namely the product of Pauli-$Z$s and Pauli $X$s along loops in the horizontal direction, sometimes referred to as minimally entangled states \cite{zhang2012}.

Naturally, ribbon operators for other paths $\gamma'$ that are homologically equivalent to $\gamma_\leftrightarrow$ can also be defined, and their action on the code space is identical. However, if we take a ribbon to be a path along the $y$ direction, $\gamma_\updownarrow$, then this defines a non-equivalent set of logical projectors, $K^\mu_{\gamma_\updownarrow}$, of which $\ket{\bar{\mu}^\leftrightarrow}$ are not eigenstates. However, all the same considerations as before apply---these must be orthogonal projectors $K^\mu_{\gamma_\updownarrow}$ that support a new complete set of code states $\ket{\bar{\mu}^\updownarrow}_Q$. The two bases will be related by some unitary matrix $S$,
\begin{align}
	\label{eq:S matrix}
	\ket*{\bar{\mu}^\updownarrow}_{Q} &= \sum_{\nu}S_{\mu \nu}\ket{\bar{\nu}^\leftrightarrow}_{Q}. 
\end{align}
While $S$ can in principle be derived microscopically from the definition of the ribbon operators, it turns out to be a special topological object called the modular $S$-matrix \cite{simon2023topological}. Its structure is closely related to the braiding and fusion rules for the anyons $\mu$, $\nu$. 

This allows us to define two important subalgebras of $\mathfrak{B}(\mathscr{H}_R)$,
\begin{subequations}
\label{eq:subalgebras K}
	\begin{align}
		\mathfrak{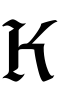}_\leftrightarrow &\coloneqq \mathrm{span}\Big\{\dyad{\mu^\leftrightarrow}_R : \mu \in \mathrm{Irr}(D(G))\Big\},\\
        \mathfrak{K}_\updownarrow &\coloneqq \mathrm{span}\qty{\ketbra*{\mu^\updownarrow}_R : \mu \in \mathrm{Irr}(D(G))},
	\end{align}
\end{subequations}
where we understand $S$ to act the reference states in the same way as \eqref{eq:S matrix}. These algebras recoverable from ribbons of qubits that form loops in the horizontal $\leftrightarrow$ and vertical $\updownarrow$ directions, respectively. In particular, the algebras obey two important properties (which we prove in Appendix \ref{apdx:full recoverability proof}),
\begin{align}
	\mathfrak{K}_\leftrightarrow \cap \mathfrak{K}_\updownarrow &= \mathbb{C} I_R & \big\langle\mathfrak{K}_\leftrightarrow, \mathfrak{K}_\updownarrow \big\rangle &= \mathfrak{B}(\mathscr{H}_R).
	\label{eq:algebra conds}
\end{align}
That is, the intersection of these algebras is the trivial algebra consisting of scalar multiples of the identity, and together they generate the full algebra of operators over $\mathscr{H}_R$.

\subsection{String-net models}
The string-net model of Levin and Wen \cite{levin2005string} is a family of lattice models that allows one to realise a broader range of topological orders. For convenience, these are usually defined on the hexagonal lattice, and again we will take a finite system  with periodic boundary conditions, and dimensions $L_\mathrm{x} \times L_\mathrm{y}$ (measuring the length of the shortest non-contractible path horizontally and vertically). The vertices, links, and plaquettes of this lattice will be written $(V, E, P)$.

Whereas the quantum double model is parametrised by a finite group $G$, here the string net model is parametrised by a unitary fusion category $\mathscr{C}$. The category $\mathscr{C}$ contains simple objects $s \in \mathscr{L}$, of which there are $q = |\mathscr{L}|$, along with fusion coefficients $N^{c}_{ab} \in \mathbb{Z}_{\geq 0}$, labeled by $a,b,c \in \mathscr{L}$. Each edge of the hexagonal lattice, which again is given a fixed but arbitrary orientation, hosts a qudit of dimension $N$, with Hilbert space $\mathscr{H}_\ell = \mathrm{span} \{\ket{s} : s \in \mathscr{L} \}$, where $\{\ket{s}\}_{s \in \mathscr{L}}$ are a set of orthonormal basis states. In the most general case, qudits must also be placed at the vertices of the hexagonal lattice, but this is not necessary if $\mathscr{C}$ is multiplicity-free, namely if $N^{c}_{ab} \in \{0,1\}$. We will describe this simpler case in the main text, though all our arguments apply for categories with fusion multiplicities, $N^{c}_{ab} > 1$.

The code-space projector is again given by a similar structure as the quantum double models, $\mathbb{\Pi} = \mathbb{A}_V \mathbb{B}_P$. The vertex projector, diagonal in the computational basis, is given by
\begin{align}
    \mathbb{A}_v \ket{a, b, c} = \delta\{N^c_{ab}, 1\} \ket{a, b, c},
\end{align}
for simple objects $a, b, c \in \mathscr{L}$. 
The plaquette projectors $\mathbb{B}_p$ are more cumbersome to write down, but importantly we can characterise their support as follows. For a collection of edges $\gamma \subseteq E$, let $\bigstar(\gamma)$ be the set of edges $e$ that are contained in $\gamma$ or adjacent to some $e' \in \gamma$. Then, $\mathbb{B}_p$ is supported on $\bigstar(\partial p)$, where $\partial p$ is the set of edges around the plaquette $p$. The particular matrix elements of $\mathbb{B}_p$ are functions of the so-called $F$-symbols, which are part of the data used to specify $\mathscr{C}$, see Ref.~\onlinecite{levin2005string} for specifics.

We will not need to know the exact form of the stabilizers, but rather the algebraic and spatial properties of the string operators, which generalise the ribbon operators described above. Given a path on the primal lattice, $\gamma = (v_1, v_2, \ldots)$, a `simple' string operator $W_\gamma^s$ can be defined, which is supported on the region $\bigstar(\gamma)$, namely the set of all edges that are either contained in $\gamma$ or adjacent to some edge in $\gamma$,
\begin{equation}
	\begin{aligned}
		& \bigstar \qty(\intikz[scale=0.4]{\newcommand{\zcolor}{tolblue}
\newcommand{\xcolor}{tolred}
\newcommand{\definep}[2]{\coordinate (p) at ($(60:#2)+(120:#2)+(0:3*#1)$);}
\newcommand{\definepn}[3]{\coordinate (p#3) at ($(60:#2)+(120:#2)+(0:3*#1)$);}
\newcommand{\defineq}[2]{\coordinate (q) at ($(60:#2)+(120:#2)+(0:3*#1)+(0:1)+(60:1)$);}
\tikzmath{\X = 2; \Y = 3;}
\foreach \x in {0,...,\X}{
    \foreach \y in {0,...,\Y}{
        \definep{\x}{\y}
        \draw[] ($(p)+(0:1)$)--($(p)+(60:1)$)--($(p)+(120:1)$)--($(p)+(180:1)$);
        \defineq{\x}{\y}
        \draw[] ($(q)+(0:1)$) -- ($(q)+(60:1)$) -- ($(q)+(120:1)$) -- ($(q)+(180:1)$);
    }
}
\foreach \y in {0,...,\Y}{ 
    \definep{0}{\y}
    \draw[] ($(p)+(180:1)$) -- ($(p)+(-120:1)$);
    \defineq{\X}{\y}
    \draw[] ($(q)+(0:1)$) -- ($(q)+(-60:1)$);
}
\foreach \x in {0,...,\X}{
    \definep{\x}{0}
    \draw ($(p)+(180:1)$) -- ($(p)+(-120:1)$) -- ($(p)+(-60:1)$) -- ($(p)+(0:1)$) -- ($(p)+(0:2)$);
    }
\definep{2}{2}
\draw[tolblue, line width=1.2mm, line cap=round] ($(p)+(-120:1)$) -- ($(p)+(-60:1)$);
\definep{2}{2}
\draw[tolblue, line width=1.2mm, line cap=round] ($(p)+(-120:1)$) -- ($(p)+(-180:1)$);
\definep{1.5}{2.5}
\draw[tolblue, line width=1.2mm, line cap=round] ($(p)+(-120:1)$) -- ($(p)+(-60:1)$);
\definep{1}{2}
\node[tolblue] at ($(p)$) {\Large $\gamma$};
\draw[tolblue, line width=1.2mm, line cap=round] ($(p)+(0:1)$) -- ($(p)+(-60:1)$);
\draw[tolblue, line width=1.2mm, line cap=round] ($(p)+(-120:1)$) -- ($(p)+(-60:1)$);
\definep{0.5}{1.5}
\draw[tolblue, line width=1.2mm, line cap=round] ($(p)+(0:1)$) -- ($(p)+(-60:1)$);
\draw[tolblue, line width=1.2mm, line cap=round] ($(p)+(-60:1)$) -- ($(p)+(-120:1)$);
}) \\ & \qquad \qquad \qquad = \; \intikz[scale=0.4]{\newcommand{\zcolor}{tolblue}
\newcommand{\xcolor}{tolred}
\newcommand{\definep}[2]{\coordinate (p) at ($(60:#2)+(120:#2)+(0:3*#1)$);}
\newcommand{\definepn}[3]{\coordinate (p#3) at ($(60:#2)+(120:#2)+(0:3*#1)$);}
\newcommand{\defineq}[2]{\coordinate (q) at ($(60:#2)+(120:#2)+(0:3*#1)+(0:1)+(60:1)$);}
\tikzmath{\X = 2; \Y = 3;}
\foreach \x in {0,...,\X}{
    \foreach \y in {0,...,\Y}{
        \definep{\x}{\y}
        \draw[] ($(p)+(0:1)$)--($(p)+(60:1)$)--($(p)+(120:1)$)--($(p)+(180:1)$);
        \defineq{\x}{\y}
        \draw[] ($(q)+(0:1)$) -- ($(q)+(60:1)$) -- ($(q)+(120:1)$) -- ($(q)+(180:1)$);
    }
}
\foreach \y in {0,...,\Y}{ 
    \definep{0}{\y}
    \draw[] ($(p)+(180:1)$) -- ($(p)+(-120:1)$);
    \defineq{\X}{\y}
    \draw[] ($(q)+(0:1)$) -- ($(q)+(-60:1)$);
}
\foreach \x in {0,...,\X}{
    \definep{\x}{0}
    \draw ($(p)+(180:1)$) -- ($(p)+(-120:1)$) -- ($(p)+(-60:1)$) -- ($(p)+(0:1)$) -- ($(p)+(0:2)$);
    }
\definep{2}{2}
\draw[tolblue, line width=1.2mm, line cap=round] ($(p)+(-120:1)$) -- ($(p)+(-60:1)$);
\draw[tolblue, line width=1.2mm, line cap=round] ($(p)+(-120:1)$) -- ($(p)+(-180:1)$);
\definep{1.5}{2.5}
\draw[tolblue, line width=1.2mm, line cap=round] ($(p)+(-120:1)$) -- ($(p)+(-60:1)$);
\definep{1}{2}
\draw[tolblue, line width=1.2mm, line cap=round] ($(p)+(0:1)$) -- ($(p)+(-60:1)$);
\draw[tolblue, line width=1.2mm, line cap=round] ($(p)+(-120:1)$) -- ($(p)+(-60:1)$);
\definep{0.5}{1.5}
\draw[tolblue, line width=1.2mm, line cap=round] ($(p)+(0:1)$) -- ($(p)+(-60:1)$);
\draw[tolblue, line width=1.2mm, line cap=round] ($(p)+(-60:1)$) -- ($(p)+(-120:1)$);
\definep{2}{2}
\draw[tolblue, line width=1.2mm, line cap=round] ($(p)+(180:1)$) -- ($(p)+(120:1)$);
\draw[tolblue, line width=1.2mm, line cap=round] ($(p)+(0:1)$) -- ($(p)+(-60:1)$);
\definep{2}{1}
\draw[tolblue, line width=1.2mm, line cap=round] ($(p)+(180:1)$) -- ($(p)+(120:1)$);
\draw[tolblue, line width=1.2mm, line cap=round] ($(p)+(0:1)$) -- ($(p)+(60:1)$);
\definep{1.5}{1.5}
\draw[tolblue, line width=1.2mm, line cap=round] ($(p)+(180:1)$) -- ($(p)+(-120:1)$);
\definep{1}{1}
\draw[tolblue, line width=1.2mm, line cap=round] ($(p)+(180:1)$) -- ($(p)+(-120:1)$);
\draw[tolblue, line width=1.2mm, line cap=round] ($(p)+(180:1)$) -- ($(p)+(120:1)$);
\definep{0}{1}
\draw[tolblue, line width=1.2mm, line cap=round] ($(p)+(0:1)$) -- ($(p)+(60:1)$);
\draw[tolblue, line width=1.2mm, line cap=round] ($(p)+(0:1)$) -- ($(p)+(-60:1)$);
\definep{1}{2}
\draw[tolblue, line width=1.2mm, line cap=round] ($(p)+(0:1)$) -- ($(p)+(60:1)$);
\draw[tolblue, line width=1.2mm, line cap=round] ($(p)+(180:1)$) -- ($(p)+(-120:1)$);

} \; \;.
	\end{aligned}
    \label{eq:string net ribbon}
\end{equation}
Products and linear combinations of these simple objects can be used to generate a full algebra of string operators, which turns out to form a representation of the so-called Drinfeld center of $\mathscr{C}$, denoted $Z(\mathscr{C})$. The irreducible components of this representation $\mu \in \mathrm{Irr}(Z(\mathscr{C}))$ label the different anyon types.

In particular, if $\gamma_{\leftrightarrow}$ is a closed loop that wraps around the torus in the horizontal direction, then we can construct a basis of code states $\ket{\bar{\mu}^{\leftrightarrow}}_Q \in \mathscr{H}_{\mathrm{code}}$ that are each eigenstates of the corresponding string operators $W_\gamma^{\mu}$, which are supported on the region $\bigstar(\gamma)$. Namely, we have $W_\gamma^{\mu} \ket{\bar{\nu}^{\leftrightarrow}}_Q = \delta_{\mu, \nu} \ket{\bar{\nu}^{\leftrightarrow}}_Q$. As with the quantum double model, we can also do the same with a loop $\gamma$ that winds around the torus in the vertical direction to obtain a distinct basis of code states $\ket{\bar{\mu}^{\updownarrow}}_Q$, which is again related to the horizontal basis via the modular $S$-matrix, Eq.~\eqref{eq:S matrix}. The associated maximally commutative algebras $\mathfrak{K}_{\leftrightarrow}$ and $\mathfrak{K}_{\updownarrow}$, defined by analogy to Eqs.~\eqref{eq:subalgebras K}, continue to satisfy Eq.~\eqref{eq:algebra conds}.

For categories with fusion multiplicities, logical string operators $W_\gamma^\mu$ can be defined similarly. In this case we should understand the region $\bigstar(\gamma)$ to include the same set of edges, along with all the qudits at the vertices visited by the path $\gamma$. Again, we will not need to know the explicit form of these operators, only their support, and their action on the logical space.  For the purposes of the following, the subregions $S \subset Q$ will be understood to contain all vertex qudits for which every edge adjacent to the vertex is also contained in $S$.

\subsection{Results for non-Abelian codes \label{subsec:nonab results}}

Having characterised the spatial and algebraic structure of logical operators in quantum double and string-net codes, we now have the necessary prerequisites to generalise our arguments for the $\mathbb{Z}_2$ surface code to obtain the following result.

\begin{customthm}{1'}[2D non-abelian codes]
    \label{thm:fat-nonab}
    For the quantum double model $D(G)$ on the square lattice with periodic boundary conditions, let $S \subset Q$ be a horizontal strip of qudits of height $w_{\rm y}$, with rough edges, and take a product noise channel $\mathcal{N}_Q = \bigotimes_\ell \mathcal{N}_\ell$. Then, there exists a constant $T_\star \leq [1 + \ln 7]^{-1}$ such that for $T \coloneqq \max_\ell T(\mathcal{N}_\ell) < T_\star$, there exist operators $W^\mu_S$ supported on $S$ whose action on the noise-corrupted state approximate the logical projectors $\Pi^{x, \mu}_R \coloneqq \dyad{\bar{\mu}^{\leftrightarrow}}_R \in \mathfrak{K}_{\leftrightarrow}$, in that
    \begin{equation}
    \begin{aligned}
        & \norm{ W^\mu_S \mathcal{M}_{Q \leftarrow R}[\sigma_R] - \mathcal{M}_{Q \leftarrow R}[\Pi^{x, \mu}_R\sigma_R]}_1 \\ 
        & \qquad \qquad \qquad \qquad \qquad \qquad \qquad \leq c_1(T)\sqrt{L_{\rm x}} e^{-c_2(T) w_{\rm y}}
    \end{aligned}
    \end{equation}
    for every anyon type $\mu \in \mathrm{Irr}(D(G))$.
    
    The same holds for the string-net model on the hexagonal lattice with input category $\mathscr{C}$, for each $\mu \in \mathrm{Irr}(Z(\mathcal{C}))$, with $T_{\star} \leq [1 + \ln 25]^{-1}$ if no fusion multiplicities are present,  and $T_{\star} \leq [1 + \ln 39]^{-1}$ otherwise.  In this case, $w_\mathrm{y}$ is the length of the shortest lattice path from the top boundary of $S$ to the bottom.
\end{customthm}

The proof of the above result, given in Appendix \ref{apdx:q double codes}, uses a similar line of argument to that presented in Section \ref{sec:upper bound simple}. As before, we proceed by upper bounding the quantity $\delta_{\mathfrak{K}_{\leftrightarrow}}(E_S)$ via the decomposition \eqref{eq:delta decomp}. The first step is to identify a condition on the region $\Sigma \subseteq S$ of the kind given in Eq.~\eqref{eq:sigma constraint}, which captures local indistinguishability of the relevant code states---see \cref{lem:indist q double}. Specifically, by the same logic as before, we can show that $\mathcal{\Delta}^{(\Sigma)}_{Q \leftarrow R} = 0$ if $\Sigma^\mathrm{c}$ supports a family of logical operators that span $\mathfrak{K}_{\leftrightarrow}$, and we establish a simple graph-theoretic condition on $\Sigma$ that suffices for this to be true. Then, we invoke a similar combinatorial decomposition of the sum \eqref{eq:delta decomp}, where $\Sigma$ is decomposed in terms of certain connected components---see \cref{lem:bound-norm-connectivity}. Combining these results gives the desired bound on $\delta_{\mathfrak{K}_{\leftrightarrow}}(E_S)$, and in turn Theorem \ref{thm:fat-nonab}.

As for the $\mathbb{Z}_2$ surface code, the existence of thickened string operators that approximate the action of logicals can be used to infer a number of properties about the noise-robustness of topological order.

\begin{customcor}{1'}[2D non-abelian codes]
\label{cor:lre-nonab}
	Let $\rho_Q$ be any code state of a quantum double or string-net code, and $\tilde{\rho}_Q = \mathcal{N}_Q[\rho_Q]$ the corresponding noise-corrupted state, with $\mathcal{N}_{Q} = \bigotimes_{\ell \in Q}\mathcal{N}_\ell$. Then there exists a constant $c_0 > 0$ such that, for $T  < T_\star$, where $T$, $T_\star$ are given in \cref{thm:fat}, $\tilde{\rho}_Q$ is $(r, t, c_0)$-long-range entangled.
\end{customcor}

\begin{customcor}{2'}[2D non-abelian codes]
\label{cor:anyon-distinguishability-nonab}
	For the quantum double model $D(G)$, let $\rho_Q^{(v,v'), \mu}$ be the states \eqref{eq:anyon state} that feature a $(\mu, \bar{\mu})$ anyon pair at the locations of vertices $v, v'$, and  $\tilde{\rho}_Q^{(v,v'), \mu} \coloneqq \mathcal{N}_Q[\rho_Q^{(v,v'), \mu}]$ the corresponding noise-corrupted states, with $\mathcal{N}_{Q} = \bigotimes_{\ell \in Q}\mathcal{N}_\ell$. Then, if $T < T_\star$, for any annulus $S$ in the geometry of Fig.~\ref{fig:geometries}(d), there exist operators $K^\mu_S$ supported on $S$ such that $|\mathrm{tr}[K^{\mu'}_S \tilde{\rho}_Q^{(v,v'), \mu}] - \delta_{\mu, \mu'}| \leq \epsilon$, with $\epsilon = e^{-\Omega(w_S)}$.
\end{customcor}

\begin{customcor}{3'}[2D non-abelian codes]
 \label{cor:recoverability-nonab}
	For a quantum double model $D(G)$ or string net model $Z(\mathscr{C})$, let $\mathcal{C}_{Q \leftarrow R}$ be the encoding channel \eqref{eq:encoding}, on which a noise channel $\mathcal{N}_{Q} = \bigotimes_{\ell \in Q}\mathcal{N}_\ell$ is applied. If $T  < T_\star$, where $T$, $T_\star$ are given in \cref{thm:fat}, then there exists a channel $\mathcal{R}_{R \leftarrow Q}$ that recovers the full logical information to accuracy $\epsilon_{\textup{rec}} = e^{-\Omega(\mathrm{min}(L_{\rm x}, L_{\rm y}))}$, in the sense of Eq.~\eqref{eq:recov error}.
\end{customcor}

This last result implies \cref{thm:nonab example}, since one can take $\mathcal{R}_Q = \mathcal{C}_{Q \leftarrow R} \circ \mathcal{R}_{R \leftarrow Q}$, which satisfes $\|\mathcal{R}_Q \circ \mathcal{N}_Q \circ  \mathcal{C}_{Q \leftarrow R} - \mathcal{C}_{Q \leftarrow R}\|_\diamond \leq \epsilon_{\mathrm{rec}}$.

\section{Further generalisations} \label{sec:further-generalisations}

Our method is highly adaptable to an even wider range of codes and noise channels. In this section we outline some ways in which the results presented here can be generalised, the complete theories of which we leave for future work. Specifically, we consider certain correlated noise channels (\cref{sec:correlated}), and the 3D $\mathbb{Z}_2$ surface code (\cref{sec:3d}).

\subsection{Correlated noise channels} \label{sec:correlated}

While we have focused on noise that factorises across each qudit $\mathcal{N}_{Q} = \bigotimes_{\ell \in Q} \mathcal{N}_\ell$, our methods can be used to bound thresholds for quasi-local noise channels that do not necessarily factorise, e.g.~low-depth circuit channels.  The intuition given in \cref{subsec:local global indist} still holds in this case---the corresponding complementary channels will suppress the norm of high-weight Pauli strings, which combined with the local indistinguishability property of the code states should suffice to bound the quantity $\delta_{\mathfrak{J}}(E_S)$.

We will illustrate this concretely by presenting bounds for two cases: single-site separable errors after a finite depth unitary circuit of $t$ layers, and plaquette-correlated bit-flip errors.

\subsubsection{Errors following a finite-depth circuit} \label{thm:finite-depth}
Consider the setup discussed in Theorem \ref{thm:fat-nonab} for the most general non-Abelian case, with the modification that the single-site separable error channel $\mathcal{N}_{Q' \leftarrow Q}$ is applied after one applies a unitary circuit composed of $t$ layers of gates, where each gate acts locally on $O(1)$ numbers of sites:
\begin{align}
    \mathcal{N}^\mathrm{fd}_{Q' \leftarrow Q} = \mathcal{N}_{Q' \leftarrow Q} \mathcal{U}_t \cdots \mathcal{U}_1 \eqqcolon \mathcal{N}_{Q' \leftarrow Q} \mathcal{U},
\end{align}
Here $\mathcal{U}_\tau = \prod_i \mathcal{u}_{i, \tau}$ is the channel describing the $\tau$th layer of gates, with each $\mathcal{u}_{i, \tau}[\cdot] = u_{i, \tau} \, \cdot \, u^\dag_{i, \tau}$ a unitary gate. Let $\xi_{\mathcal{U}}(t) = O(t)$ be the light-cone distance of this circuit, i.e.~the maximum distance by which an operator can grow under the action of $\mathcal{U}$. For example, for circuits comprised of gates acting on two neighbouring sites, $\xi(t) \leq 2t$. Then, as we show in Appendix \ref{apdx:finite-depth}, Theorem \ref{thm:fat-nonab} and its Corollaries \ref{cor:lre-nonab}, \ref{cor:anyon-distinguishability-nonab}, and \ref{cor:recoverability-nonab} hold with the modification that the inverse noise strength should be replaced by an effective proxy
\begin{align}
	\beta_{\text{eff}}(\mathcal{N}^{\text{fd}}) = \frac{1}{A[\xi_{\mathcal{U}}(t)]} \beta(\mathcal{N}_\ell) - 2 \ln 2,
    \label{eq:finite depth rescaling}
\end{align}
where $A[\xi_{\mathcal{U}}(t)] = O(t^2)$ is the maximum number of qudits within a ball with radius $\xi_{\mathcal{U}}$.

The idea behind the proof of this result is that, because the output of the map $\mathcal{\Delta}^{\{W^+\}}_{Q \leftarrow R}$ [Eq.~\eqref{eq:delta singleton}] is traceless on many sites [by virtue of Eq.~\eqref{eq:delta Cplus rep}], when the circuit $\mathcal{U}_t \cdots \mathcal{U}_1$ is then applied, the result must still be traceless on proportionally many sites---at least a fraction $1/A(\xi_{\mathcal{U}})$ of the original number.
This is proved in Appendix \ref{apdx:finite-depth}. The rescaling \eqref{eq:finite depth rescaling} is rather severe, in that it captures the worst case over all possible unitary circuits of a given depth. Thus we do not expect the resulting bound on the threshold to be quantitatively tight. Nevertheless this proves the existence of a threshold for any circuit of constant depth $t = O(1)$.

\subsubsection{Plaquette-correlated bit-flips}
We can also consider innately correlated error channels, such as the following example:
We take the 2D $\mathbb{Z}_2$ surface code on the square lattice affected with a particular plaquette-correlated bit-flip error channel 
\begin{align} \label{eq:plaquette-correlated-bit-flips}
	\mathcal{N} = \prod_{p} \mathcal{N}_p,
\end{align}
where
\begin{align}
	\mathcal{N}_p[\rho] = (1-\uppi) \rho + \uppi X_p \rho X_p
\end{align}
and $\uppi$ is the probability to apply a correlated bit-flip around a plaquette $X_p \coloneqq \prod_{\ell \in \partial p} X_\ell$. Here, it is clear that the logical-$\bar X$ operator is preserved, since it commutes with the noise channel. For the logical-$\bar Z$ operator, with the same geometry as \cref{thm:fat}, we can still find an effective noise strength $T(\mathcal{N})$ and the same $T_\star = T^\Lambda_\mathrm{c} = 1/\ln(\mu_\square^\mathrm{SOW})$ below which the thickened $W^\mathrm{Z}_S$ operator can be proven to exist. This results in a lower bound of $p_\mathrm{th} \geq 0.02\%$. See Appendix \ref{apdx:plaquette-correlated-bit-flip} for more details.

\subsection{3D surface code} \label{sec:3d}
\subsubsection{Brief recap of the 3D surface code}
We consider the 3D $\mathbb{Z}_2$ surface code, defined on the cubic lattice ($\cube$) of dimensions $L_\mathrm{x} \times L_\mathrm{y} \times L_\mathrm{z}$ with open boundary conditions as shown in \cref{fig:3d-geometry}.
\begin{figure}
    \centering
    \begin{tikzpicture}[scale=0.5, baseline={([yshift=-.5ex] current bounding box.center)}]
    \foreach \x in {1,...,3} {
    \foreach \z in {1,...,3}{
    \def\nowcolor{black!\grayval}
    \def\nowthick{0.8pt}
    \def\nowopacity{1}
    \ifnum\x=1
        \ifnum\z=3
            \def\nowcolor{blue!\grayval}
            \def\nowthick{1mm}
            \def\nowopacity{0.75}
        \fi
    \fi
    \pgfmathtruncatemacro\grayval{30 + 20*\z} 
    \draw[\nowcolor, line width=\nowthick, opacity=\nowopacity] (\x*2, 2, \z*2) -- (\x*2, 3, \z*2);
    \fill[red!\grayval, opacity=0.75] (\x*2+0.8, 3, \z*2+0.8)
    -- (\x*2+0.8, 3, \z*2-0.8) 
    -- (\x*2-0.8, 3, \z*2-0.8) 
    -- (\x*2-0.8, 3, \z*2+0.8) -- cycle;
    \draw[\nowcolor, line width=\nowthick, opacity=\nowopacity] (\x*2, 3, \z*2) -- (\x*2, 5*2, \z*2);
    }
    }
    \foreach \y in {2,...,4} {
        \foreach \z in {1,...,3}{
            \pgfmathsetmacro\grayval{30 + 20*\z}
            \draw[thick, black!\grayval] (2,\y*2,\z*2) -- (3*2,\y*2,\z*2);
        }
    }
    \foreach \x in {1,...,3} {
        \foreach \y in {2,...,4}{
            \pgfmathsetmacro\grayval{30 + 20*1} 
            \draw[thick, black!\grayval] (\x*2,\y*2,2) -- (\x*2,\y*2,3*2);
        }
    }
\end{tikzpicture}
    \caption{Geometry of the 3D $\mathbb{Z}_2$ surface code. Here, The red surface on the dual lattice denotes a representative of the logical-$\bar X$ operator, and the blue primal path denotes a representative of the logical-$\bar Z$ operator.}
    \label{fig:3d-geometry}
\end{figure}
It has the same formal expressions for the code space projector $\mathbb{\Pi}$ in terms of the vertex $A_v$ and plaquette $B_p$ operators but on the 3D cubic lattice, diagrammatically given by
\begin{align}
	A_v = \prod_{\ell \in \partial^\top v} X_\ell = \, \, 
	\intikz{
		\draw[ultra thick,tolred] (-1, 0, 0) -- (1, 0, 0);
		\draw[ultra thick,tolred] (0, -1, 0) -- (0, 1, 0);
		\draw[ultra thick,tolred] (0, 0, -1) -- (0, 0, 1);
		\filldraw[tolred] (0,0,0) circle (0pt) node[above left, black] {$v$};
	} \, ,
\end{align}
\begin{align}
	B_p = \prod_{\ell \in \partial p} Z_\ell = \,
	\, \intikz{
		\draw[ultra thick, tolblue] (0, 0, 0) -- (0, 0, 1) -- (0, 1, 1)  -- (0, 1, 0) -- cycle;
		\node[black] at (0, 0.5, 0.5) {$p$};
	}\, , \;\;
	\, \intikz{
		\draw[ultra thick, tolblue] (0, 0, 0) -- (0, 0, 1) -- (1, 0, 1)  -- (1, 0, 0) -- cycle;
		\node[black] at (0.5, 0, 0.5) {$p$};
	}\, , \;\;
	\, \intikz{
		\draw[ultra thick, tolblue] (0, 0, 0) -- (1, 0, 0) -- (1, 1, 0)  -- (0, 1, 0) -- cycle;
		\node[black] at (0.5, 0.5, 0) {$p$};
	}\, ,
\end{align}
where the red lines ($\color{tolred} \blacksquare$) denote Pauli-$X$ operators and blue lines ($\color{tolblue} \blacksquare$) denote Pauli-$Z$ operators.

The logical-$\bar X$ operators are 2D sheets on the (dual) cubic lattice, whereas logical-$\bar{Z}$ are strings of $Z_\ell$'s spanning vertically on the primal lattice. They are shown in red and blue in \cref{fig:3d-geometry}, respectively.

\subsubsection{Results for the 3D surface code} \label{thm:3d}

\begin{figure}
	\centering
	\includegraphics[width=\linewidth]{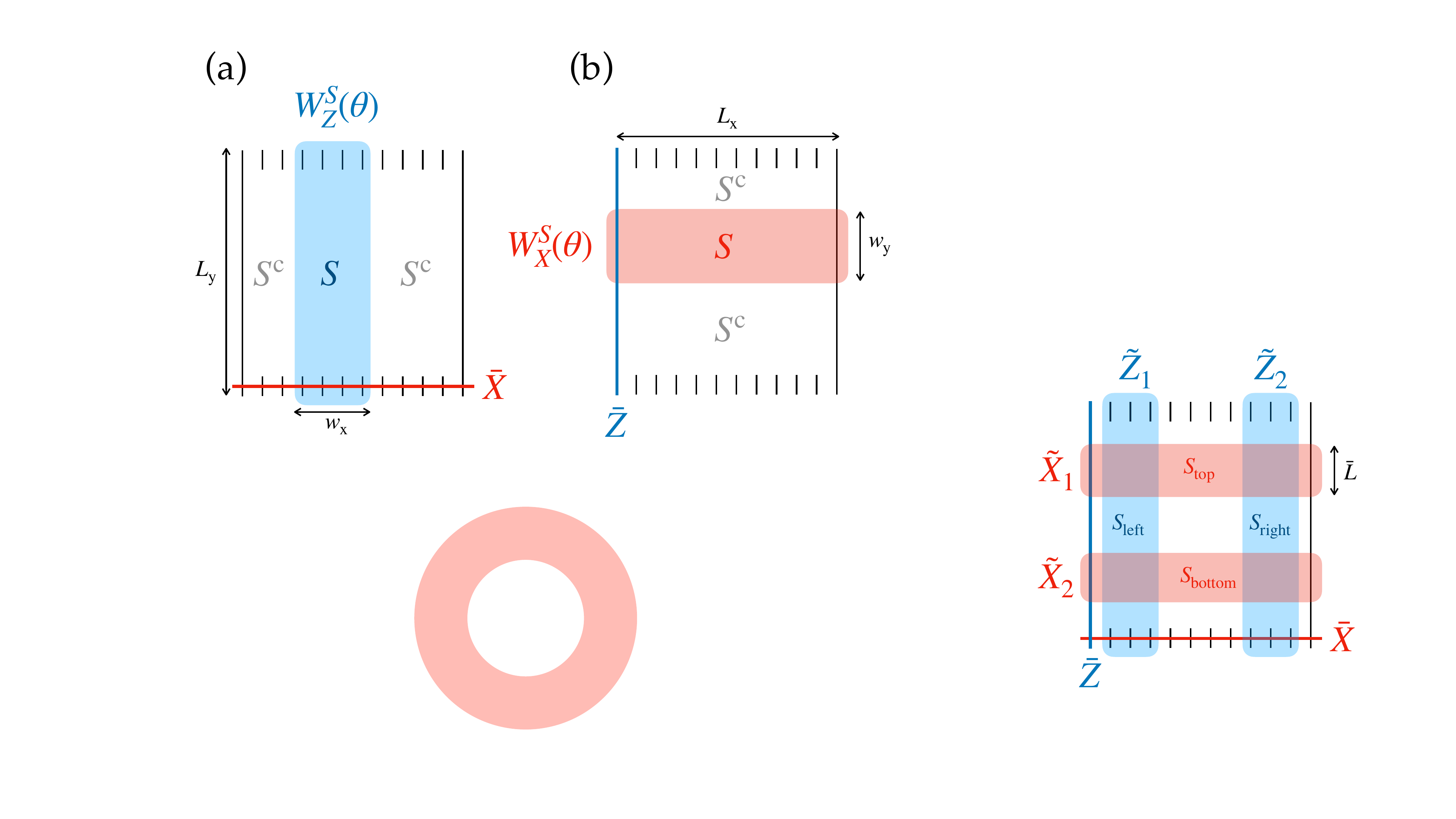}
	\caption{Schematic diagram of supported regions of thickened logical operators for the 3D $\mathbb{Z}_2$ surface code.}
	\label{fig:3d-regions}
\end{figure}

We can adapt the methods devised for \cref{thm:fat} for the 3D $\mathbb{Z}_2$ surface code. In this case, since logical-$\bar X$ operators in form a sheet on a dual lattice, we consider the thickened slab-like region $S$ of thickness $w_\mathrm{z}$ for the support of the approximate logical-$W^\mathrm{X}_S$ operator [\cref{fig:3d-regions}(b)]. On the other hand, logical-$\bar Z$ operators are paths on the primal lattice; therefore we consider tube-like regions $S_\updownarrow$ of diameter $w$ for the support of the approximate logical-$W^\mathrm{Z}_{S_\updownarrow}$ operator [\cref{fig:3d-regions}(a)].

In Appendix \cref{apdx:3d}, we show that there exists $T_\mathrm{c}^\Lambda$ such that for $T < T_\mathrm{c}^\Lambda$, there exists an approximate logical-$W_S^\mathrm{X}$ operator such that
\begin{align}
	& \left\| W_{S}^\mathrm{X} \big( \mathcal{N}_Q\circ \mathcal{C}_{Q \leftarrow R}[\sigma_R]\big) - \mathcal{N}_Q\circ \mathcal{C}_{Q \leftarrow R}[X_R \sigma_R] \right\|_1 \nonumber \\ & \quad \leq c_1(T) \sqrt{L_\mathrm{x} L_\mathrm{y}} e^{-c_2(T) w_\mathrm{z}},
    \label{eq:3d claim X}
\end{align}
and $T_\mathrm{c}^{\Lambda_*}$ such that for $T_Z < T_\mathrm{c}^{\Lambda_*}$, we have
\begin{align}
	& \left\| W_{S_\updownarrow}^\mathrm{Z} \big( \mathcal{N}_Q\circ \mathcal{C}_{Q \leftarrow R}[\sigma_R]\big) - \mathcal{N}_Q\circ \mathcal{C}_{Q \leftarrow R}[Z_R \sigma_R] \right\|_1 \nonumber \\ & \quad \leq c_3(T_Z) \sqrt{L_\mathrm{z}} e^{-c_4(T_Z) w^2},
    \label{eq:3d claim Z}
\end{align}
where $c_{1,2,3,4}(T) > 0$ are constants in $L = \mathrm{min}(L_\mathrm{x}, L_\mathrm{y}, L_\mathrm{z})$. Here, $1/T_\mathrm{c}^\Lambda = \ln \mu^\mathrm{SOS}_\cube$, where $\mu^\mathrm{SOS}_\cube$ is the growth constant of self-osculating surfaces (SOSs) with the loose upper bound $\leq 20.25$ \cite{pkim2025existence}. On the other hand, $T_\mathrm{c}^{\Lambda_*} \leq \ln(9e) \approx \ln(24.5)$.

As examples, we can obtain the lower bound for the threshold for bit-flip and phase-flip errors for the 3D $\mathbb{Z}_2$ surface code, $p^\mathrm{th}_{\cube, \mathrm{bf}} \geq 0.061 \%$ and $p^\mathrm{th}_{\cube, \mathrm{pf}} \geq 0.041 \%$, respectively. We note that for these error channels, the true value of the threshold corresponds to the phase transition point of certain stat-mech models \cite{pkim2026optimal}. In particular, the corresponding stat-mech models are the 3D random-bond Ising model (RBIM) \cite{dennis2002topological} and the 3D $\mathbb{Z}_2$ gauge theory with quenched disorder \cite{xu2025phenomenological}, respectively, along the Nishimori line~\cite{zdeborova2016statistical,p2025planted,pkim2026measurement}. Therefore, our result also provides a rigorous lower bound for the critical point of these stat-mech models.

Note that these results on the 3D $\mathbb{Z}_2$ surface code below demonstrate that our methods can be applied to non-matchable codes. Matchable codes are codes such that a single error results in only a pair of syndromes, as is the case for the 2D $\mathbb{Z}_2$ surface code (both $X$- and $Z$- Pauli errors result in a pair of syndromes). In comparison, for the 3D $\mathbb{Z}_2$ surface code, $X$-errors are non-matchable, since an $X$-error at an edge $\ell$ excites four $Z_p$ stabilizers, namely those that neighbor $\ell$, $\partial^\top \ell = \{p_1, p_2, p_3, p_4 \}$.

\section{Conclusion and outlook \label{sec:discussion}}

In this work we developed techniques for quantifying mixed-state topological order in noise-corrupted code states, both Abelian and non-Abelian, and the associated effect on quantum memory. We used these to put explicit bounds on recovery thresholds for various different codes, and to derive the existence of thickened logical string operators that act on the noise-corrupted states. This in turn was used to show that these states are long-range entangled and exhibit an emergent 1-form symmetry associated with the anyonic excitations of the code.

Given that non-Abelian codes offer a complementary set of properties to the more familiar Pauli stabilizer codes---most importantly the ability to perform non-Clifford gates transversally---we anticipate that the techniques developed here will be useful in the design and analysis of fault-tolerant approaches to universal quantum computation, in the spirit of Refs.~\onlinecite{davydova2025, bauer2026, manjunath2026universal, sajith2026, huang2025generating, huang2026hybridlatticesurgerynonclifford, warman2026constant}. Moreover, the information-theoretic techniques we formulate here are likely to find use in future studies of mixed-state phases more broadly, e.g.~those associated with strong-to-weak symmetry breaking \cite{sang2025mixedstate}. 


A number of concrete and interesting directions for future research immediately present themselves.
Building on our extensions to the 3D $\mathbb{Z}_2$ surface code, there is scope to develop a more complete theory for more general higher-dimensional topological codes, such as color codes \cite{bombin2006}. In addition, one could consider non-geometrically local low-density parity check (LDPC) codes \cite{breuckmann2021,panteleev2022}, non-Abelian examples of which have recently been introduced \cite{christos2026nonabelian, zhu2026nonabelian, williamson2026fast}. The methods outlined in Section \ref{sec:methods} are naturally adaptable to this setting, as are the combinatorial decompositions introduced in Appendix \ref{apdx:bounds}. In addition, there is scope for quantitatively improving the various thresholds we have derived, using similar techniques as in Appendix \ref{apdx:z2 improvements}.

The setup considered in this work involved a clean code state subject to a single application of some particular noise channel. The is a natural choice for understanding the resilience of mixed-state topological order against noise \cite{fan2024diagnostics}. However, in the context of fault-tolerant quantum computing, it is necessary to understand the effect of repeated applications of noise and syndrome measurements \cite{dennis2002topological, hauser2026}. Thanks to our channel-based formalism, it should be possible to use similar techniques to bound the effect of noise across both space and time on the encoded logical information. This would allow one to prove threshold theorems for general noise channels and non-stabilizer codes, thereby generalising some of the results of Ref.~\onlinecite{christandl2025}. Also, while we have focused on demonstrating the existence of recovery operations, which amounts to establishing the existence of some optimal decoder, it would also be interesting to incorporate particular, possibly sub-optimal, decoding operations into this formalism.

While our interest has been in topological phases, the methods developed here also have broader significance in light of the increased interest in information-theoretic approaches to classifying phases of matter. Our usage of the operator algebra formalism could be used to better understand phases such as the classical loop gas highlighted in Ref.~\onlinecite{sang2025mixedstate}, which features strong-to-weak spontaneous symmetry breaking of an emergent 1-form symmetry, as well as intrinsically mixed-state phases \cite{wang2025, sohal2025}. Furthermore, the information-disturbance tradeoff argument \cite{kretschmann2008information, beny2009}, which underpins the relationship between properties of the environment and the encoding of information in the system [Eq.~\eqref{eq:eps rec bound}], could be used to understand mixed-state ordering and the associated constraints on state preparation algorithms \cite{mcginley2025lower, yi2024complexity} in more general settings. 

Finally, having successfully characterised the encoding of logical information in the corrupted state in a spatially-resolved way, it would be appealing to go one step further and demonstrate the existence of local recovery channels that bring the noisy state back to the clean limit in a fully spatially local way, e.g.~as implemented by a low-depth channel circuit. This would provide the most complete picture of mixed-state topological order \cite{Coser2019classificationof, zhang2025stability}, and could potentially be used to compute other fundamental information-theoretic properties such as the Markov length \cite{yi2026universal, sang2025mixedstate}.

\section*{Acknowledgements}
 We would like to thank Michael Levin and Jeongwan Haah for insightful comments. SWPK is supported by Engineering and Physical Sciences Research Council (EPSRC) DTP International Studentship Grant Ref.\ No.\ EP/W524475/1. MM acknowledges support from Trinity College, Cambridge.

\bibliography{bibliography} 

\onecolumngrid

\beginappendices

\counterwithin*{theorem}{section}
\counterwithin*{lemma}{section}
\counterwithin*{corollary}{section}
\renewcommand{\thetheorem}{\Alph{section}\arabic{theorem}}
\renewcommand{\thelemma}{\Alph{section}\arabic{lemma}}
\renewcommand{\thecorollary}{\Alph{section}\arabic{corollary}}

\crefalias{section}{appendix}
\crefalias{subsection}{appendix}
\crefalias{subsubsection}{appendix}
\makeatother


\section{Proof of \cref{lem:effective operator}: Recoverability implies that logical operations can be implemented} \label{app:recov-means-log-ops}
In this Appendix, we prove \cref{lem:effective operator}, which was stated informally in Section \ref{sec:aec-analysis}. The main tool in our proof is Theorem 1 from Ref.~\onlinecite{beny2009}, which we described in \cref{subsec:recoverability conditions}; our arguments essentially improve this result to obtain stronger implications.

From the definition of approximate recoverability of an algebra, \eqref{eq:recov error algebra}, it is relatively straightforward to show that the expectation value of any logical observable $O_R \in \mathfrak{J}$ can be approximately read out by measuring some other observable $\tilde{O}_{Q'}$ on the noise-corrupted system $Q'$, see Eq.~\eqref{eq:exp value approx}. As we will now see, approximate recoverability also implies that logical operations in $\mathfrak{J}$ can be implemented by acting on $Q'$, while ensuring high fidelity globally, i.e.~on the joint-system environment state.

\begin{customlem}{1}[Formal]
	Let $\mathcal{W}_{Q'E \leftarrow R}[\,\cdot\,] = W_{Q'E \leftarrow R}\,\cdot\, (W_{Q'E \leftarrow R})^\dagger$ be an isometric channel with `system' ($Q'$) and `environment' ($E$) output spaces, which simultaneously dilates the mutually complementary channels $\mathcal{M}_{Q' \leftarrow R} = \tr_{E}\circ \mathcal{W}_{Q'E \leftarrow R}$ and $\widehat{\mathcal{M}}_{E \leftarrow R} = \tr_{Q'} \circ \mathcal{W}_{Q'E \leftarrow R}$. Let $\mathfrak{J} \subseteq \mathfrak{B}(\mathscr{H}_R)$ be a von Neumann algebra. Suppose that at least one of the following conditions hold:
	\begin{enumerate}
		\item There exists a channel $\mathcal{R}_{R \leftarrow Q'}$ such that $\|\mathcal{R}_{R \leftarrow Q'} \circ \mathcal{M}_{Q' \leftarrow R} - \mathcal{P}^{\mathfrak{J}}_R\|_\diamond \leq \frac{\epsilon^2}{2}$
		\item $\|\widehat{\mathcal{M}}_{E \leftarrow R} - \widehat{\mathcal{M}}_{E \leftarrow R}\circ \mathcal{P}^{C(\mathfrak{J})}_R\|_\diamond \leq \epsilon$
	\end{enumerate}
	where $\mathcal{P}^{\mathfrak{J}}_R$ and $\mathcal{P}^{C(\mathfrak{J})}_R$ are the super-projectors onto $\mathfrak{J}$ and its commutant $C(\mathfrak{J})$. Then, for any superoperator $\mathcal{T}_R$ whose action can be expressed in the form
	\begin{align}
		\mathcal{T}_R[\sigma_R] &= \sum_{i} J_i\sigma_R J'_i, & \text{where }J_i, J_{i}' \in \mathfrak{J},
		\label{eq:superoperator in algebra}
	\end{align}
	there is a corresponding superoperator $\tilde{\mathcal{T}}_{Q'}$ acting on the system $Q'$ that has approximately the same effect as $\mathcal{T}_R$ on the joint system-environment state,
	\begin{align}
		\left\| \big(\tilde{\mathcal{T}}_{Q'}\otimes \mathrm{id}_E\big) \circ \mathcal{W}_{Q'E \leftarrow R} - \mathcal{W}_{Q'E \leftarrow R} \circ \mathcal{T}_{R}\right\|_\diamond \leq 4\sqrt{\epsilon}\|\mathcal{T}_R\|_\diamond.
		\label{eq:TR approx}
	\end{align}
	If $\mathcal{T}_R$ is a CPTP map, then $\tilde{\mathcal{T}}_{Q'}$ can be chosen to be CPTP and the right hand side of \eqref{eq:TR approx} becomes $4\sqrt{\epsilon}$. Simultaneously, if $\mathcal{T}_R$ has Kraus rank 1, such that $\mathcal{T}_R[\sigma_R] = O_R \sigma_R O'_R$ with $O_R, O_R' \in \mathfrak{J}$, then  $\tilde{\mathcal{T}}_{Q'}[\sigma_{Q'}] = O_{Q'} \sigma_{Q'} O_{Q'}'$, where the operators $O_{Q'}$, $O_{Q'}'$ can be chosen to be of the form $O_{Q'}^{(\prime)} = \mathcal{L}_{Q' \leftarrow R}[O_{R}^{(\prime)}]$ for a completely positive unital map $\mathcal{L}_{Q' \leftarrow R}$. In particular, if $O_R$, $O_R'$ are unitary, then $O_{Q'}$, $O_{Q'}'$ are unitary.
\end{customlem}

The two conditions in this Lemma correspond to the definition of approximate recoverability of an algebra Eq.~\eqref{eq:recov error algebra} and the proxy condition \eqref{eq:algebra comp error} respectively. The fact that the error in Eq.~\eqref{eq:TR approx} is defined with respect to the joint system-environment state $Q'E$, and not just the system $Q'$, will be crucial for our proof of the existence of thickened logical operators, \cref{thm:fat}. \\

\noindent \textit{Proof of Lemma \ref{lem:effective operator}.~---~}Our proof uses the main result of Ref.~\onlinecite{kretschmann2008information}, which was also central to the arguments of Ref.~\onlinecite{beny2009}. To keep the argument self-contained, we state this result here.
\begin{lemma}[Ref.~\onlinecite{kretschmann2008information}]
	If $\mathcal{N}_{A \leftarrow R}^{\vphantom{'}}$ and $\mathcal{N}_{A \leftarrow R}'$ are channels satisfying $\|\mathcal{N}_{A \leftarrow R}^{\vphantom{'}} - \mathcal{N}_{A \leftarrow R}'\|_\diamond \leq \epsilon$, then for any  isometry $V_{AB \leftarrow R}$ that dilates $\mathcal{N}_{A \leftarrow R}^{\vphantom{'}}$, there exists a an isometry $V_{AB \leftarrow R}'$ that dilates $\mathcal{N}_{A \leftarrow R}'$ such that $\|V_{AB \leftarrow R} - V_{AB \leftarrow R}'\|_\infty \leq \sqrt{\epsilon}$. In particular, if $\mathcal{V}_{AB \leftarrow R}$, $\mathcal{V}_{AB \leftarrow R}'$ are the corresponding isometric channels, then $\|\mathcal{V}_{AB \leftarrow R}^{\vphantom{'}} - \mathcal{V}_{AB \leftarrow R}'\|_\diamond \leq 2\sqrt{\epsilon}$.
	\label{lem:stinespring cont}
\end{lemma}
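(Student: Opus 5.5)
The proof is a minimax version of Uhlmann's theorem. I will build the dilation of $\mathcal{N}'$ by rotating a fixed dilation of $\mathcal{N}'$ with a unitary on $B$, choosing that unitary so the overlap with $V$ is uniformly close to $1$ over all inputs.

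\emph{Setup.} Fix an arbitrary isometry $V'_0$ that dilates $\mathcal{N}'_{A\leftarrow R}$ into the \emph{same} environment space $B$ as $V_{AB\leftarrow R}$. This needs $\dim B$ to be at least the Kraus rank of $\mathcal{N}'$. I assume this as part of the setting; if it fails, one enlarges $B$ by embedding it in a larger space, and I will note this in the write-up. Every other dilation of $\mathcal{N}'$ into $B$ then has the form $(I_A\otimes U_B)V'_0$ with $U_B$ unitary. For a contraction $Y_B$ ($\|Y_B\|_\infty\le1$) and a density matrix $\rho_R$, define the bilinear function
\[
f(\rho_R,Y_B)\coloneqq \mathrm{Re}\,\tr\!\big[\rho_R\, V^\dagger (I_A\otimes Y_B) V'_0\big].
\]
It is real-linear in each argument, and both domains are convex and compact. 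By a minimax theorem (von Neumann/Sion),
\[
\max_{Y}\min_{\rho} f = \min_{\rho}\max_{Y} f .
\]

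\emph{Inner maximisation.} Fix $\rho_R$ and let $\ket{\psi}_{RR'}$ be a purification of it. Then $f(\rho_R,Y_B)=\mathrm{Re}\bra{\psi}(V^\dagger\otimes I_{R'})(I_A\otimes Y_B\otimes I_{R'})(V'_0\otimes I_{R'})\ket{\psi}$. The two vectors $(V\otimes I)\ket\psi$ and $(V'_0\otimes I)\ket\psi$ purify $(\mathcal{N}\otimes\mathrm{id})[\psi]$ and $(\mathcal{N}'\otimes\mathrm{id})[\psi]$, respectively, with $B$ as the purifying system. Uhlmann's theorem (maximising over contractions on $B$) gives $\max_Y f = F\big((\mathcal{N}\otimes\mathrm{id})[\psi],(\mathcal{N}'\otimes\mathrm{id})[\psi]\big)$, the root fidelity. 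The Fuchs--van de Graaf inequality $1-F\le \tfrac12\|\cdot\|_1$, together with the diamond-norm hypothesis, then gives $\max_Y f\ge 1-\epsilon/2$ for every $\rho_R$.

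\emph{From contraction to unitary.} By the minimax identity there is a single contraction $Y_B$ with $f(\rho_R,Y_B)\ge 1-\epsilon/2$ for all $\rho_R$. Because $f$ is linear in $Y$, the maximum over the contraction ball is attained at an extreme point, and for a square matrix on $B$ the extreme points are unitaries. So we may take $Y_B=U_B$ unitary. Set $V'\coloneqq (I_A\otimes U_B)V'_0$; this is an isometry dilating $\mathcal{N}'$. Since the bound holds for all $\rho_R$, we have $\mathrm{Re}(V^\dagger V')\ge(1-\epsilon/2)I_R$ as operators. Hence
\[
(V-V')^\dagger(V-V') = 2I_R-2\,\mathrm{Re}(V^\dagger V')\le \epsilon I_R ,
\]
which gives $\|V-V'\|_\infty\le\sqrt\epsilon$.

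\emph{Diamond norm.} For any $\sigma_{RR''}$, write
\[
V\sigma V^\dagger - V'\sigma V'^\dagger=(V-V')\sigma V^\dagger + V'\sigma(V-V')^\dagger
\]
(with $\otimes I_{R''}$ understood). Hölder's inequality bounds the trace norm of each term by $\|V-V'\|_\infty$, giving $\|\mathcal{V}-\mathcal{V}'\|_\diamond\le 2\sqrt\epsilon$.

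The main obstacle is making the Uhlmann unitary independent of the input state. A naive input-by-input application of Uhlmann gives a different $U_B$ for each $\rho_R$, and the minimax step is what removes this dependence. It requires care in two places. First, the objective must be genuinely bilinear; taking the real part of a trace that is linear in both $\rho$ and $Y$ achieves this. Second, the optimum must be realised by a unitary rather than merely a contraction, which relies on the extreme-point argument and on $B$ being large enough.
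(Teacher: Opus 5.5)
\textbf{The gap is in ``From contraction to unitary''.} Your Uhlmann, minimax, Fuchs--van de Graaf and H\"older steps are all correct, and this is the standard route to the continuity theorem the paper cites. But after the minimax exchange, the function you must maximise is $g(Y_B)\coloneqq\min_{\rho_R} f(\rho_R,Y_B)$. This is concave (a minimum of linear functions), not linear, so its maximum over the contraction ball can sit at a non-extreme point. Linearity only gives a unitary maximiser of $f(\rho_R,\cdot)$ for each fixed $\rho_R$. That reintroduces exactly the input dependence the minimax step was meant to remove.

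The conclusion genuinely fails when $B$ is fixed with dimension equal to the Kraus rank. Take the following data:
\begin{itemize}
\item $R=\mathbb{C}^4$ with basis $\{|j\rangle\}$, phases $\omega_j\in\{1,-1,\mathrm{i},-\mathrm{i}\}$, and $D=\mathrm{diag}(\omega_j)$;
\item $A=R\otimes\mathbb{C}^2$ and $B=\mathbb{C}^2$;
\item $V|j\rangle=|j\rangle\otimes(|00\rangle+|11\rangle)/\sqrt2$ and $V_0'|j\rangle=|j\rangle\otimes(|00\rangle+\omega_j|11\rangle)/\sqrt2$.
\end{itemize}
Then $(\mathcal{N}-\mathcal{N}')[\rho]=\tfrac12(\rho-D\rho D^\dagger)\otimes|1\rangle\langle1|$, so $\epsilon=1$, because the numerical range of $D$ contains $0$. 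Moreover $V^\dagger(I_A\otimes Y)V_0'=\tfrac12\,\mathrm{diag}_j(Y_{00}+\omega_jY_{11})$. The contraction $Y=\mathrm{diag}(1,0)$ gives $\tfrac12 I_R$, which is exactly $1-\epsilon/2$. However, every $2\times 2$ unitary has $|U_{00}|=|U_{11}|$. This forces $\min_j\mathrm{Re}(U_{00}+\omega_jU_{11})\le 1-1/\sqrt2$, and hence $\|V-(I_A\otimes U)V_0'\|_\infty^2\ge 1+1/\sqrt2>\epsilon$. Since $\dim B$ equals the Kraus rank of $\mathcal{N}'$, every dilation of $\mathcal{N}'$ into $B$ has this form. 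So no admissible $V'$ exists in $A\otimes B$, and the fixed-$B$ statement you set out to prove is false in general.

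\textbf{The repair is to dilate the contraction instead of looking for an extreme point.} On $B\otimes\mathbb{C}^2$, the operator $U=\begin{pmatrix} Y & (I-YY^\dagger)^{1/2}\\ (I-Y^\dagger Y)^{1/2} & -Y^\dagger\end{pmatrix}$ is unitary with $\langle 0|U|0\rangle=Y$. Hence $V'\coloneqq(I_A\otimes U)(V_0'\otimes|0\rangle)$ is an isometry dilating $\mathcal{N}'$, and $(V\otimes|0\rangle)^\dagger V'=V^\dagger(I_A\otimes Y)V_0'$. Your operator inequality and H\"older step then go through verbatim, with $V\otimes|0\rangle$ (itself a dilation of $\mathcal{N}$) in place of $V$.

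Equivalently, the same-$B$ statement does hold if $\dim B$ is at least the sum of the Kraus ranks of $\mathcal{N}$ and $\mathcal{N}'$. In that case the only relevant part of $Y$, its compression between the two environment supports, can be completed to a unitary on $B$. So the result should be stated with the environment allowed to be enlarged, as in the Kretschmann--Schlingemann--Werner theorem, where the common dilation space is free. Your remark about enlarging $B$ is therefore needed for this more essential reason, not only to accommodate the Kraus rank of $\mathcal{N}'$.
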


\noindent As shown in Ref.~\onlinecite{beny2009}, this result can be used to show that Condition 1 of \cref{lem:effective operator} implies Condition 2; therefore, without loss of generality, we can assume the latter.
Then, by the same result, we can infer the existence of an isometry $W'_{Q'E \leftarrow R}$  that dilates the composite channel $\big(\widehat{\mathcal{M}}_{E \leftarrow R} \circ \mathcal{P}^{C(\mathfrak{J})}_R\big)$, such that the associated channel $\mathcal{W}'_{Q'E \leftarrow R}$ satisfies
\begin{align}
	\|\mathcal{W}_{Q'E \leftarrow R}^{\vphantom{'}} - \mathcal{W}'_{Q'E \leftarrow R}\|_\diamond \leq 2\sqrt{\epsilon}
	\label{eq:W diff bound}
\end{align}
Equation \eqref{eq:super-projector idempotent} gives $\big(\widehat{\mathcal{M}}_{E \leftarrow R} \circ \mathcal{P}^{C(\mathfrak{J})}_R\big) \circ \mathcal{P}^{C(\mathfrak{J})}_R = \big(\widehat{\mathcal{M}}_{E \leftarrow R} \circ \mathcal{P}^{C(\mathfrak{J})}_R\big)$, which, by the main result of Ref.~\onlinecite{beny2009} [see also Eq.~\eqref{eq:eps rec bound}], implies that the algebra $\mathfrak{J}$ is \textit{exactly} correctable for the channel $\tr_E \circ \mathcal{W}'_{Q'E \leftarrow R}$. That is, there exists a recovery channel $\mathcal{R}_{R \leftarrow Q'}$ such that $(\mathcal{R}_{R \leftarrow Q'} \otimes \tr_E) \circ \mathcal{W}'_{Q'E \leftarrow R} = \mathcal{P}^{\mathfrak{J}}_R$. As a consequence, if $\mathcal{Y}_{RE' \leftarrow Q'}$ is a dilation of $\mathcal{R}_{R \leftarrow Q'}$, then the isometric channel 
\begin{align}
	\mathcal{F}^{\mathfrak{J}}_{REE' \leftarrow R} \coloneqq (\mathcal{Y}_{RE' \leftarrow Q'}\otimes \mathrm{id}_E) \circ \mathcal{W}'_{Q'E \leftarrow R}
	\label{eq:dilation YW}
\end{align}
is an exact dilation of $\mathcal{P}^{\mathfrak{J}}_R$.

We now claim that for any superoperator $\mathcal{T}_R$ that can be written in the form \eqref{eq:superoperator in algebra}, and for any $\mathcal{F}^{\mathfrak{J}}_{RE'' \leftarrow R}$ that dilates $\mathcal{P}^{\mathfrak{J}}_R$, we have
\begin{align}
	\mathcal{F}_{RE'' \leftarrow R}^{\mathfrak{J}} \circ \mathcal{T}_R = (\mathcal{T}_R \otimes \mathrm{id}_{E''})\circ \mathcal{F}_{RE'' \leftarrow R}^{\mathfrak{J}}.
	\label{eq:TR commute}
\end{align}
These claims suffice to to prove \eqref{eq:TR approx} since, given the isometric (and hence reversible) nature of $\mathcal{Y}_{RE' \leftarrow Q'}$ there exists a CPTP map $\tilde{\mathcal{Y}}_{Q' \leftarrow RE'}$ that satisfies
\begin{align}
	\label{eq:channel Y inv}
	\big((\tilde{\mathcal{Y}}_{Q' \leftarrow RE'}\circ\mathcal{Y}_{RE' \leftarrow Q'})\otimes \mathrm{id}_E\big)\circ \mathcal{W}'_{Q'E \leftarrow R} &= \mathcal{W}_{Q'E \leftarrow R}'. 
\end{align}
Then we can choose the superoperator $\mathcal{T}_{Q'}'$ to be
\begin{align}
	\mathcal{T}_{Q'}' = \tilde{\mathcal{Y}}_{Q' \leftarrow RE'}\circ (\mathcal{T}_R \otimes \mathrm{id}_{E'}) \circ \mathcal{Y}_{RE' \leftarrow Q'}.
	\label{eq:TQ def}
\end{align}
which satisfies
\begin{align}
	\big(\mathcal{T}'_{Q'}\otimes \mathrm{id}_E\big) \circ \mathcal{W}_{Q'E \leftarrow R}' &= (\tilde{\mathcal{Y}}_{Q' \leftarrow RE'}\otimes \mathrm{id}_E)\circ (\mathcal{T}_R \otimes \mathrm{id}_{EE'}) \circ  \mathcal{F}_{REE' \leftarrow R}^{\mathfrak{J}} && \text{by Eqs.~(\nopareneqref{eq:dilation YW}, \nopareneqref{eq:TQ def}) }  \nonumber\\ &= (\tilde{\mathcal{Y}}_{Q' \leftarrow RE'}\otimes \mathrm{id}_E)\circ\mathcal{F}_{REE' \leftarrow R}^{\mathfrak{J}} \circ \mathcal{T}_R  && \text{by Eq.~\eqref{eq:TR commute}} \nonumber\\ &= \big((\tilde{\mathcal{Y}}_{Q' \leftarrow RE'}\circ\mathcal{Y}_{RE' \leftarrow Q'})\otimes \mathrm{id}_E\big)\circ \mathcal{W}_{Q'E \leftarrow R}'\circ \mathcal{T}_R. && \text{by Eq.~\eqref{eq:dilation YW}} \nonumber\\ &= \mathcal{W}_{Q'E \leftarrow R}'\circ \mathcal{T}_R && \text{by Eq.~\eqref{eq:channel Y inv}.}
	\label{eq:TQ property}
\end{align}
Combining \eqref{eq:TQ property} with the triangle inequality, the norm in question can then be bounded as
\begin{align}
	\Big\| \big(\mathcal{T}'_{Q'}\otimes \mathrm{id}_E\big) \circ \mathcal{W}_{Q'E \leftarrow R} - \mathcal{W}_{Q'E \leftarrow R} \circ \mathcal{T}_{R}\Big\|_\diamond   \leq 
	\Big\| \big(\mathcal{T}'_{Q'}\otimes &\mathrm{id}_E\big) \circ (\mathcal{W}_{Q'E \leftarrow R} - \mathcal{W}'_{Q'E \leftarrow R})\Big\|_\diamond \nonumber\\ &+ \Big\| (\mathcal{W}_{Q'E \leftarrow R} - \mathcal{W}_{Q'E \leftarrow R}') \circ \mathcal{T}_{R}\Big\|_\diamond.
	\label{eq:TQ bound diff}
\end{align}
By the general inequality $\|\mathcal{T}\circ \mathcal{S}\|_\diamond \leq \|\mathcal{T}\|_\diamond \|\mathcal{S}\|_\diamond$ for superoperators $\mathcal{T}, \mathcal{S}$ (see e.g.~Lemma 12 of Ref.~\onlinecite{aharonov1998quantum}), along with $\|\mathcal{N}\|_\diamond = 1$ for any CPTP map $\mathcal{N}$, each of the norms on the right hand side of \eqref{eq:TQ bound diff} can be upper bounded by $2\sqrt{\epsilon}\|\mathcal{T}_R\|_\diamond$, thanks to Eq.~\eqref{eq:W diff bound}. This proves Equation \eqref{eq:TR approx}.

While there are many choices of $\tilde{\mathcal{Y}}_{Q' \leftarrow RE'}$ that satisfy \eqref{eq:channel Y inv}, for concreteness we can choose
\begin{align}
	\label{eq:channel Y inv CPTP}
	\tilde{\mathcal{Y}}_{Q' \leftarrow RE'}[\sigma_{RE'}] &= (Y_{RE' \leftarrow Q'})^\dagger \sigma_{RE'}Y_{RE' \leftarrow Q'} + \tr[\Pi^\perp_{RE'}\sigma_{RE'}] (I_{Q'}/d_{Q'}) \\  \text{where }\Pi^\perp_{RE'} &\coloneqq I_{RE'} - Y_{RE' \leftarrow Q'}(Y_{RE' \leftarrow Q'})^\dagger 
\end{align}
The first term on the right of \eqref{eq:channel Y inv CPTP} describes the action of $\tilde{\mathcal{Y}}_{Q' \leftarrow RE'}$ on states in the image of $Y_{RE' \leftarrow Q'}$, and the second term describes the action on the complement $\Pi^\perp_{RE'}$. Note that for the choice \eqref{eq:TQ def}, we have that $\mathcal{T}_{Q'}'$ is CPTP whenever $\mathcal{T}_R$ is CPTP, as claimed in the lemma.

We now have to prove \eqref{eq:TR commute}. For this purpose, we will explicitly construct a dilation of $\mathcal{P}^{\mathfrak{J}}_R$, to which all other dilations are equivalent up to isometries on the environment degrees of freedom \cite{watrous2018theory}. 
To do so, we invoke the Wedderburn-Artin theorem (see e.g.~Appendix 11.8 of Ref.~\onlinecite{petz2007quantum}), which implies that all von Neumann algebras over finite-dimensional Hilbert spaces $\mathscr{H}_R$ are classified in the following way. There is a decomposition of Hilbert space as
\begin{align}
	\mathscr{H}_R = \left[\bigoplus_k (\mathscr{A}_k \otimes \mathscr{B}_k)\right] 
	\label{eq:hilbert decomp}
\end{align}
where each $\mathscr{A}_k$, $\mathscr{B}_k$ is a Hilbert space, such that
\begin{align}
	\mathfrak{J} = \left[\bigoplus_k \mathfrak{B}(\mathscr{A}_k) \otimes I_{\mathscr{B}_k}\right] 
\end{align}
where $\mathfrak{B}(\mathscr{A}_k)$ is the algebra of all operators over $\mathscr{A}_k$. 
In other words, $O \in \mathfrak{J}$ if and only if one can write $O = \sum_k (O_{\mathscr{A}_k} \otimes I_{\mathscr{B}_k})$, for some collection of operators $O_{\mathscr{A}_k} \in \mathfrak{B}(\mathscr{A}_k)$. Moreover, the commutant $C(\mathfrak{J})$ is made up of operators of the form $\sum_{k} (I_{\mathscr{A}_k} \otimes O_{\mathscr{B}_k})$.

Using this decomposition, the super-projector can be explicitly constructed as $\mathcal{P}^{\mathfrak{J}}_R[\sigma_R] = \sum_k (\mathrm{id}_{\mathscr{A}_k}\otimes \mathcal{D}_{\mathscr{B}_k})[P_k \sigma P_k]$, where $P_k$ is the projector onto the component  $(\mathscr{A}_k \otimes \mathscr{B}_k) \subseteq \mathscr{H}_R$, and $\mathcal{D}_{\mathscr{B}_k}$ is the completely depolarizing channel on $\mathscr{B}_k$ \cite{beny2009}. We can construct an isometry $F_{RE'' \leftarrow R}$ that dilates this channel using an environment $E''$ whose Hilbert space has the structure $\mathscr{H}_{E''} = \bigoplus_k \mathscr{B}_k^{\otimes 2}$, namely
\begin{align}
	F_{RE'' \leftarrow R} = \sum_k \frac{1}{\sqrt{|\mathscr{B}_k|}} \sum_{a = 1}^{|\mathscr{A}_k|} \sum_{b,b' = 1}^{|\mathscr{B}_k|} \big(\ket{k,b'\otimes b}_{E''} \otimes \ket{k,a\otimes b'}_R\big) \bra{k,a \otimes b}_R.
	\label{eq:dilation projector}
\end{align}
where $\ket{k,a \otimes b}_R$ and $\ket{k,b \otimes b'}_{E''}$ are bases for $\mathscr{H}_R$ and $\mathscr{H}_{E''}$ that naturally respect the structure \eqref{eq:hilbert decomp}. By direct computation of matrix elements in this basis, one can verify that if $O_k \in \mathfrak{J}$ is supported within a single block $k$, then $F_{RE'' \leftarrow R} O_{k} =   \big(O_{k}\otimes I_{E''}\big) F_{RE'' \leftarrow R}$. By linearity, this extends to the whole algebra
\begin{align}
	F_{RE'' \leftarrow R} O_{R} &=   \big(O_{R}\otimes I_{E''}\big) F_{RE'' \leftarrow R} & \forall \, O_R \in \mathfrak{J}.
	\label{eq:FO commute}
\end{align}
This extends property \eqref{eq:super-projector commute} of the super-projector $\mathcal{P}^{\mathfrak{J}}_{R}$ to its dilation $F_{RE'' \leftarrow R}$. Since Stinespring dilations are unique up to isometry on $E''$, and the dilation \eqref{eq:dilation projector} has the minimum possible dimension for the purifying degrees of freedom, the dilation $F_{REE'\leftarrow R}$ appearing in \eqref{eq:dilation YW} must be expressible in the form $F_{REE'\leftarrow R} = (L_{E E' \leftarrow E''}\otimes I_R)F_{RE'' \leftarrow R}$ for some isometry $L_{E E' \leftarrow E''}$. Combining this with \eqref{eq:FO commute}, we see that if $\mathcal{T}_R$ can be written in the form \eqref{eq:superoperator in algebra}, then we have \cref{eq:TR commute}, as claimed.

Finally, we will prove the final statement of \cref{lem:effective operator}. If $\mathcal{T}_R$ is unitary then it also has Kraus rank 1, so we assume that $\mathcal{T}_R[\,\cdot\,] = O_R\,\cdot\, O_R'$ for $O_R, O_R' \in \mathfrak{J}$. In this case, the superoperator $\mathcal{T}'_{Q'}$ [Eq.~\eqref{eq:TQ def}] has the form $\mathcal{T}'_{Q'}[\,\cdot\,] = V_{Q'}\,\cdot\, V_{Q'}'$, where
\begin{align}
	O_{Q'}^{(\prime)} = (Y_{RE' \leftarrow Q'})^\dagger (O_R^{(\prime)}\otimes I_{E'}) Y_{RE' \leftarrow Q'} \eqqcolon \mathcal{L}_{Q' \leftarrow R}[O_R^{(\prime)}].
	\label{eq:Vprime def}
\end{align}
where $O_{R}^{(\prime)}$ is a placeholder for $O_{R}$ or $O_{R}'$, and similar for $O_{Q'}^{(\prime)}$. By Stinespring's dilation theorem \cite{watrous2018theory}, the map $\mathcal{L}_{Q' \leftarrow R}[\,\cdot\,] = (Y_{RE' \leftarrow Q'})^\dagger (\,\cdot\,\otimes I_{E'}) Y_{RE' \leftarrow Q'}$ is completely positive and unital, as claimed. We now consider the case where $O_{R}$, $O_{R}'$ are both unitary. As written in Eq.~\eqref{eq:Vprime def}, $O_{Q'}^{(\prime)}$ may not be unitary, but we do have the identity $(O_{Q'}^{(\prime)}\otimes I_E)W_{Q'E \leftarrow R}' = W_{Q'E \leftarrow R}'O_R^{(\prime)}$, thanks to Eq.~\eqref{eq:TQ property}. Consequently, for any $\ket{\phi}_R \in \mathscr{H}_R$ we have $\|(O_{Q'}^{(\prime)}\otimes I_E)W_{Q'E \leftarrow R}'\ket{\phi}_R\| = \|W_{Q'E \leftarrow R}'\ket{\phi}_R\|$, which means that $(O_{Q'}^{(\prime)}\otimes I_E)$ acts isometrically on the range of $W_{Q'E \leftarrow R}'$,
\begin{align}
	\big((O_{Q'}^{(\prime)})^\dagger O_{Q'}^{(\prime)} \otimes I_E\big)W_{Q'E \leftarrow R}' = W_{Q'E \leftarrow R}'.
	\label{eq:Vprime isom}
\end{align}
Now, let $\{\ket{i}_R\}_{i}$ and $\{\ket{j}_E\}_j$ be orthonormal bases for $\mathscr{H}_R$ and $\mathscr{H}_E$, respectively. Define the subspace
\begin{align}
	\mathscr{F}_{Q'} \coloneqq \mathrm{span}\big\{\ket{\psi_{ij}}_{Q'} \coloneqq
	(I_{Q'}\otimes \bra{j}_E)W_{Q'E \leftarrow R}'\ket{i}_R \}
	\big\}
	\subseteq \mathscr{H}_{Q'},
	\label{eq:FQ subspace}
\end{align}
with associated projector $\Pi_{Q'}^{\mathscr{F}}$. Any state in $\ket{\psi}_{Q'} \in \mathscr{F}_{Q'}$ can be written as $\ket{\psi}_{Q'} = \sum_{ij}c_{ij} \ket{\psi_{ij}}_{Q'}$ for coefficients $c_{ij} \in \mathbb{C}$, and for such a state we have
\begin{align}
	\left\|O_{Q'}^{(\prime)}\ket{\psi}_{Q'}\right\|^2 &= \sum_{i,j,i',j'} c_{ij}c_{i'j'}^* \bigg\langle i'\bigg|(W_{Q'E \leftarrow R}')^\dagger\Big((V_{Q'}^{(\prime)})^\dagger V_{Q'}^{(\prime)} \otimes \ket{j'}\bra{j}_{E}\Big)W_{Q'E \leftarrow R}'\bigg|i \bigg\rangle \nonumber\\ &= \sum_{i,j,i',j'} c_{ij}c_{i'j'}^* \bigg\langle i'\bigg|(W_{Q'E \leftarrow R}')^\dagger\Big(I_{Q'} \otimes \ket{j'}\bra{j}_{E}\Big)\bigg[\Big((V_{Q'}^{(\prime)})^\dagger V_{Q'}^{(\prime)} \otimes I_E\Big) W_{Q'E \leftarrow R}'\bigg]\bigg|i \bigg\rangle  \nonumber\\ 
	&= \sum_{i,j,i',j'} c_{ij}c_{i'j'}^* \bigg\langle i'\bigg|(W_{Q'E \leftarrow R}')^\dagger\Big(I_{Q'} \otimes \ket{j'}\bra{j}_{E}\Big) W_{Q'E \leftarrow R}'\bigg|i \bigg\rangle = \left\|\ket{\psi}_{Q'}\right\|^2,
\end{align}
where we use \eqref{eq:Vprime isom} in going to the last line. Hence, $O_{Q'}^{(\prime)}$ acts isometrically on $\mathscr{F}_{Q'}$, and in particular maps $\mathscr{F}_{Q'}$ to a space $\mathscr{G}_{Q'} = \text{span}\{O_{Q'}^{(\prime)}\ket{\psi}_{Q'} : \ket{\psi}_{Q'} \in \mathscr{F}_{Q'}\}$ of equal dimension. If we choose any isometry $U^\perp_{Q'}$ that maps  $\mathscr{F}_{Q'}^\perp$ to $\mathscr{G}_{Q'}^\perp$, then the modified operator
\begin{align}
	\tilde{O}^{(\prime)}_{Q'} \coloneqq O_{Q'}^{(\prime)} \Pi_{Q'}^{\mathscr{F}} + U^\perp_{Q'}(I_{Q'} - \Pi_{Q'}^{\mathscr{F}})
\end{align}
is unitary, and has the same action as $O_{Q'}^{(\prime)} $ on $\mathscr{F}_{Q'}$. Due to the definition \eqref{eq:FQ subspace}, the range of $W_{Q'E \leftarrow R}'$ is contained in $\mathscr{F}_{Q'} \otimes \mathscr{H}_E$, and thus $(\tilde{O}^{(\prime)}_{Q'}\otimes I_{E})W_{Q'E \leftarrow R}' = (O^{(\prime)}_{Q'}\otimes I_{E})W_{Q'E \leftarrow R}'$. This proves the claim. \hfill $\square$

\section{Further details on the quantum double model \label{apdx:q double codes}}

In Section \ref{sec:quantum-doubles}, we specified the code stabilizers of the quantum double model $D(G)$, and described a particular type of logical operator that is supported on a thickened path following one of the Cartesian axes [Eq.~\eqref{eq:ribbon cartesian}]. Here we describe the full algebra of logical operators on more general paths---known as \textit{ribbons} $r$. These definitions and results appear in Ref.~\onlinecite{bombin2008}. 

\begin{figure}
    \centering
    \includegraphics[width=0.4\linewidth]{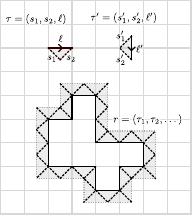}
    \caption{Direct ($\tau$) and dual ($\tau'$) triangles, with arrows showing their orientation. Relative to the edge orientation shown in \eqref{eq:edge orientation}, $\tau$ is anti-aligned and $\tau'$ is aligned. Below is an example of a closed ribbon $r = (\tau_1, \tau_2, \dots)$.}
    \label{fig:ribbons}
\end{figure}

For convenience, let us define a site $s = (v,p)$ as a vertex $v = v(s)$ together with one of its neighboring plaquettes $p = p(s)$. Distinct sites $s, s'$ are said to be directly (dually) adjacent if they share a vertex (plaquette), in which case there is a unique edge $\ell$ shared by the different plaquettes (vertices). If $s_{1,2}$ are a pair of adjacent sites that share the edge $\ell$, then $\tau = (s_1, s_2, \ell)$ defines a direct (dual) \textit{triangle}, whose orientation is determined by the ordering of $s_1$ and $s_2$ [\cref{fig:ribbons}]. If $\tau$ is direct then we say it is aligned if the orientation of the edge matches the orientation of the triangle, i.e.~$\ell = (v({s_1}), v(s_2))$, and we say it is anti-aligned if the opposite is true i.e.~$\ell = (v({s_2}), v(s_1))$. If $\tau$ is dual, then we say $\tau$ is aligned if its direction (i.e.~the vector from the centerpoint of $p(s_1)$ to the centrepoint of $p(s_2)$, denoted by an arrow in \cref{fig:ribbons}) is a $90^\circ$ counterclockwise rotation of the orientation of the edge $\ell$.


A strip $r$ is then a sequence of sites $(s_0, s_1, \ldots, s_l)$ such that $s_{t-1}$ and $s_{t}$ are distinct and either directly or dually adjacent for all $t = 1, \ldots, l$. Each strip can be defined in terms of its triangles $\tau_t = (s_{t-1}, s_t, \ell_{t})$, where $\ell_t$ is the unique edge that ensures $\tau_t$ is a valid triangle. For the purposes of this work, we will use the following definition of a ribbon $r$.
\begin{definition}
	For the quantum double model, a ribbon $r$ is a strip, i.e.~a sequence of triangle-adjacent sites $(s_0, s_1, \ldots, s_l)$, with corresponding triangles $(\tau_1, \tau_2, \ldots, \tau_l)$ such that the following holds. No two distinct triangles $\tau_{t_1}$, $\tau_{t_2}$ appearing in the strip can have the same edge (irrespective of orientation). 
\end{definition}
A slightly more permissive condition is given in Ref.~\onlinecite{bombin2008}, but this will not be needed here. The endpoints of a ribbon $r$ are denoted $\partial_0r = s_0$, $\partial_1 r = s_l$, and $r$ is said to be closed if $\partial_0r = \partial_1 r$, and open otherwise.

The building blocks of ribbon operators are triangle operators $\{F^{h,g}_\tau\}_{h,g \in G}$, which act only on the edge $\ell \in \tau$ of a direct or dual triangle. These are defined as
\begin{equation}
	\begin{aligned}
		F^{h,g}_\tau  &= \delta_{g, 1} L^h_{\ell} && \text{if }\tau \text{ is direct aligned,} \\  F^{h,g}_\tau  &= \delta_{g, 1} R^h_{\ell} && \text{if }\tau \text{ is direct anti-aligned,} \\
		F^{h,g}_\tau  &= \dyad{g}_\ell && \text{if }\tau \text{ is dual aligned}, \\
        F^{h,g}_\tau  &= \dyad{g^{-1}}_\ell && \text{if }\tau \text{ is dual anti-aligned}.
	\end{aligned}
\end{equation}
A single triangle $\tau = (s_1, s_2, \ell)$ constitutes the shortest possible open ribbon $r = (s_1, s_2)$, and the corresponding ribbon operator is defined as $F^{h,g}_r = F^{h,g}_\tau$. If $r = (s_0, \ldots, s_l)$ and $r' = (s'_0, \ldots, s'_{l'})$ are ribbons such that $s_l = s'_0$, and the composition $r'' = (s_0, \ldots, s_l, s'_1, \ldots, s'_{l'})$ also forms a ribbon, then the associated operators are given by
\begin{align}
	F^{h,g}_{r''} = \sum_{k \in G} F^{h,k}_r F^{k^{-1}hk,k^{-1}g}_{r'}
\end{align}
Together with the base case of a single triangle, this gives us a complete recursive definition of any ribbon operator. One can verify that a sequence of alternating dual and direct triangles can chosen to reproduce the straight-line ribbon described in the main text, Eq.~\eqref{eq:ribbon cartesian}. The resulting operators can be shown to satisfy \cite{bombin2008}
\begin{equation}
	\begin{aligned}
		[F^{h,g}_r, A_v] &= 0 &&\forall v \notin \{v(\partial_i r)\}_{i=0,1} \\
		[F^{h,g}_r, B_p] &= 0 &&\forall p \notin \{p(\partial_i r)\}_{i=0,1}
	\end{aligned}
	\label{eq:open ribbon commute}
\end{equation}
As we saw for the `straight path' ribbon in the main text [Eq.~\eqref{eq:ribbon product}], these more general ribbon operators also obey the product rule \cite{kitaev2003fault}
\begin{align}
	F^{h_1,g_1}_r F^{h_2,g_2}_r = \delta_{h_1, g_1 h_2 g_1^{-1}} F^{h_1, g_1g_2}_r
	\label{eq:F product appendix}
\end{align}
The space of operators $\mathfrak{F}_r \coloneqq \mathrm{span}\{F^{h,g}_r\}_{h,g \in G}$ is therefore closed under multiplication. Together with the properties  $(F^{h,g}_r)^\dagger = F^{h^{-1}, g}_r$ and $\sum_g F^{1,g}_r = I$, this implies that $\mathfrak{F}_r$ forms a von Neumann algebra, which, as discussed in the main text, forms a representation of the Drinfeld double $D(G)$.

If $r$ is closed, so that $\partial_0 r = \partial_1 r$, then $\mathfrak{F}_r$ contains a nontrivial subalgebra of codespace-preserving operators
\begin{align}
	\mathfrak{K}_r \coloneqq \big\{F \in \mathfrak{F}_r : [F, \mathbb{A}_v] = [F, \mathbb{B}_p] = 0\, \forall v \in V, p \in P\big\}
\end{align}
Since the projectors $\mathbb{A}_v$, $\mathbb{B}_p$ are Hermitian, this is also a von Neumann subalgebra, which again constitutes a representation of $D(G)$. Our aim is now to decompose $\mathfrak{K}_r$ as a direct sum of irreps $\mu \in \mathrm{Irr}(D(G))$, which have the following structure. Each irrep $\mu$ can be identified with a pair $(C, \pi) \equiv \mu$, where $C \in [G]_{\text{cj}}$ is a conjugacy class of $G$, of which a fixed representative $c \in C$ is chosen, and $\pi$ is a group representation of the centralizer $Z(c) = \{g \in G : gc = cg\}$. For each $x \in C$, we also choose a $q_x \in G$ such that $x = q_x c q_x^{-1}$. Then define the operators
\begin{align}
	K^{\mu}_r \equiv K^{(C, \pi)}_r \coloneqq \frac{d_\pi}{|Z(c)|}\sum_{D \in [Z(c)]_{\text{cj}}} \bar{\chi}_\pi(D) \sum_{d \in D} \sum_{x \in C} F^{q_x d q_x^{-1}, x}_r
\end{align}
where the outer sum is over conjugacy classes of $Z(c)$, and $\chi_{\pi}(D)$ is the character of $\pi$ at $D$. As shown in Appendix B of Ref.~\onlinecite{bombin2008}, these operators are orthogonal projectors, and form a linear basis for $\mathfrak{K}_r$. Thus, $\mathfrak{K}_r$ is a commutative algebra. 

As in the $\mathbb{Z}_2$ case, if $r$ is a contractible loop then operators in $\mathfrak{K}_r$ have a trivial effect on code states, $K^{(C, \pi)}_r \ket{\psi} = \delta_{C, 1} \delta_{\pi, \text{triv}}\ket{\psi},\; \forall \ket{\psi} \in \mathscr{H}_{\mathrm{code}}$. However, if $r$ is a non-contractible loop, then $\mathfrak{K}_{r}$ can act non-trivially on the logical space. In particular, the restriction of $K^{\mu}_r$ on the code space, $K^{\mu}_r \mathbb{\Pi}$, turns out to be a rank-1 projector onto a particular code state $\ket{\bar{\mu}}_{Q}$ (provided $r$ is in the homology class $(1,0)$ or $(0,1)$, corresponding to a single cycle around the $x$- or $y$-direction) \cite{simon2023topological}. 

Open ribbon operators can generate anyons out of the vacuum. In particular, in light of Eq.~\eqref{eq:open ribbon commute}, we see that if $\ket{\psi}$ is a codestate, then $F^{h,g}_r\ket{\psi}$ can only contain anyons at the endpoints of the ribbon $\partial_i r$, $i = 0, 1$. As with the closed ribbon operators, particular linear combinations of the basis operators $F^{h,g}_r$ can be chosen that correspond to a particular $\mu \equiv (C, \pi)  \in \mathrm{Irr}(D(G))$. These operators create a $\mu$-$\bar{\mu}$ pair, where $\bar{\mu}$ is the anyon dual to $\mu$ (corresponding to the inverse conjugacy class $C^{-1}$ and the dual representation $\bar{\pi}$).

More precisely, these operators are denoted $F^{(C, \pi), (c, j),(c',j')}_r$, where $c, c' \in C$ are elements of the conjugacy class $C$, and the indices  $j, j'  \in [d_\pi]$ label basis states within the irrep $\pi$. They have the form \cite{bombin2008}
\begin{align}
	F_r^{(C, \pi), (c, j),(c',j')} = \frac{d_\pi}{|Z(c)|}\sum_{x \in Z(c)} \bar{\pi}(x)_{j,j'} F^{c^{-1}, q_{c^{\vphantom{\prime}}} x q_{c'}^{-1}}_r
\end{align}
When acting on a code state $\ket{\psi} \in \mathscr{H}_{\mathrm{code}}$, the resulting (properly normalised) state
\begin{align}
    \ket{\psi^{\mu, (c, j),(c',j')}_r} \coloneqq\frac{F_r^{\mu, (c, j),(c',j')}\ket{\psi}}{\|F_r^{\mu, (c, j),(c',j')}\ket{\psi}\|_2}
    \label{eq:psi anyon}
\end{align}
has the desired property of hosting a $\mu$-$\bar{\mu}$ at the endpoints $\partial_0 r$, $\partial_1 r$, where $\mu = (C, \pi)$.  One can verify that the norm in the denominator of Eq.~\eqref{eq:psi anyon} takes the value
\begin{align}
    \|F_r^{\mu, (c, j),(c',j')}\ket{\psi}\| = \eta_{\mu} \coloneqq \sqrt{\frac{d_\pi|C|}{|G|^2}}
\end{align}
for $\mu = (C, \pi)$. This is independent of the labels $(c, j)$, $(c',j')$. In particular, for the trivial anyon $\varnothing = (\{1\}, \text{triv})$, we have $\ket{\psi^{\mu, (c, j),(c',j')}_r} =\ket{\psi} \in \mathscr{H}_{\mathrm{code}}$. 

It turns out that, using local transformations, the state \eqref{eq:psi anyon} can be transformed to any other state for which $(c, j)$, $(c',j')$ differ, but $(C, \pi)$ is the same. Thus, the labels $(c, j)$, $(c',j')$ should be thought of as describing `non-topological' local degrees of freedom.  In contrast, the presence of the $\mu$, $\bar{\mu}$ anyon pair is a topological property of the state that cannot be altered by local means. Since our interest is in this latter degree of freedom, we will assume that for each $\mu \in \text{Irr}(D(G))$, a particular choice of $c, j, c', j'$ has been made, and we will simply label the states in question by $\mu$. In particular, the state \eqref{eq:anyon state} referred to in the main text will be given by
\begin{align}
    \rho^{(v, v'), \mu}_Q = \frac{1}{\eta_\mu^2}F_r^{\mu, (c, j),(c',j')} \rho_Q (F_r^{\mu, (c, j),(c',j')})^\dagger
    \label{eq:anyon state full}
\end{align}
for the chosen reference values of $c, j, c', j'$ associated with the given $\mu$. Here, when declaring that the anyons are at the vertices $v, v'$, we mean that the ribbon $r$ has endpoints $\partial_0 r = (v, p)$, $\partial_1 r = (v',p')$, where the plaquettes $p, p'$ are to the immediate North East of $v,v'$, respectively.

If $r$ is a closed, contractible ribbon that encircles the vertex $v$ (which hosts the $\mu$ anyon) but not $v'$ (which hosts the $\bar{\mu}$ anyon), then \cite{bombin2008}
\begin{align}
    K^{\mu'}_{r'} \rho^{(v, v'), \mu}_Q = \delta_{\mu,\mu'} \rho^{(v, v'), \mu}_Q
    \label{eq:anyon distinguish ribbon}
\end{align}
That is, the closed contractible ribbon operators $K^{\mu'}_{r'}$ detect whether there is an anyon of type $\mu'$ in the interior of $r'$. These operators realise a particular 1-form symmetry \cite{mcgreevy2023generalized,cordova2025, song2026strong} that emerges in the quantum double model. Often the symmetry is specified in terms of the transformed operators
\begin{align}
    \tilde{K}^{(C, D)}_r &\coloneqq \sum_{\pi \in \text{Irr}(Z(c))} \frac{|D|}{d_\pi} \chi_{\pi}(D) K^{(C, \pi)}_r & \forall D \in [Z(c)]_{\text{cj}}
\end{align}
in light of the fact that these operators obey the product rule \cite{bombin2008}
\begin{align}
    \tilde{K}^{(C, D)}_r \tilde{K}^{(C, D')}_r = \sum_{D'' \in [Z(c)]_{\text{cj}}} N^{D''}_{D D'} \tilde{K}^{(C, D'')}_r
\end{align}
with the integer-valued fusion coefficients $N^{D''}_{D D'} \geq 0$. This elucidates the non-invertible (or categorical) nature of this 1-form symmetry \cite{kong2020, mcgreevy2023generalized, song2026strong}. 
Corollary \ref{cor:anyon-distinguishability} in the main text demonstrates that, below the threshold noise strength, the noise-corrupted state continues to exhibit an emergent 1-form symmetry, with the bare ribbon operators $K^\mu_r$ replaced by thickened versions.

\section{Proof of Theorem \ref{thm:fat-nonab} for general non-Abelian codes \label{apdx:bounds}}

In this appendix, we present the proof of Theorem 1 in its most general form, covering the $\mathbb{Z}_2$ surface code, quantum double models, and Levin-Wen string net models. At a high level, our approach follows the same overall logic as outlined in Section \ref{sec:upper bound simple}, making use of the methods described in Section \ref{sec:methods}. 

For the quantum double model on a $L_{\rm x} \times L_{\rm y}$ square lattice with periodic boundary conditions, we let $S \subset Q$ be a $L_{\rm x} \times w_{\rm y}$ horizontal strip of qubits with rough boundaries. For a string-net code on a $L_{\rm x} \times L_{\rm y}$ hexagonal lattice, we also take $S \subset Q$ to be a horizontal strip of qudits. If $\mathscr{C}$ contains fusion multiplicities, there are also qudits at the vertices, in which was we understand $S$ to contain all vertex qudits for which all adjacent edges are also in $S$. 
We let $\{\ket{l_\mu}_Q\}_{\mu \in \mathrm{Irr}(D(G))}$ be the basis of logical states that are eigenstates of the logical operators that wind non-trivially in the $x$ direction, labeled by anyon types $\mu \equiv (C, \pi) \in \mathrm{Irr}(D(G))$, of which there are $d_R = |\mathrm{Irr}(D(G))|$. To show that the corresponding subalgebra $\mathfrak{K}_\leftrightarrow$ remains recoverable, we seek to upper bound the quantity 
\begin{align}
	\delta_{\mathfrak{K}_\leftrightarrow}(E_S) \coloneqq  \norm{\widehat{\mathcal{N}}_{E_S \leftarrow Q}\circ \mathcal{C}^{\mathcal{C}}_{Q \leftarrow R} - \widehat{\mathcal{N}}_{E_S \leftarrow Q}\circ \mathcal{C}^{\mathcal{C}}_{Q \leftarrow R}\circ \mathcal{P}_{R}^{\mathfrak{K}_\leftrightarrow}}_\diamond
	\label{eq:code distinguish parameter}
\end{align}
which was introduced in Section \ref{subsec:recoverability conditions}. Combined with \cref{lem:effective operator}, this suffices to prove \cref{thm:fat}.

Our upper bound on $\delta_{\mathfrak{K}_\leftrightarrow}(E_S)$ will be derived by proving two separate results, Lemmas \ref{lem:bound-norm-connectivity} and \ref{lem:indist q double}, which formalise the arguments given in Section \ref{subsec:local global indist}. The latter result captures the fact that the outputs of the channels $\mathcal{C}^{\mathcal{C}}_{Q \leftarrow R}$ and $ \mathcal{C}^{\mathcal{C}}_{Q \leftarrow R}\circ \mathcal{P}_{R}^{\mathfrak{K}_\leftrightarrow}$ are locally indistinguishable on particular (small) subregions of $S$. The former result captures the intuition that the complementary channel $\widehat{\mathcal{N}}_{E_S \leftarrow Q}$ suppresses operators that are supported on large subregions, which can lead to approximate global indistinguishability of $\widehat{\mathcal{N}}_{E_S \leftarrow Q}\circ \mathcal{C}^{\mathcal{C}}_{Q \leftarrow R}$ and $\widehat{\mathcal{N}}_{E_S \leftarrow Q}\circ \mathcal{C}^{\mathcal{C}}_{Q \leftarrow R}\circ \mathcal{P}_{R}^{\mathfrak{K}_\leftrightarrow}$.

\subsection{From local to global indistinguishability}

The individual results outlined above can be more easily formalised once we introduce a particular definition of local indistinguishability of two channels.

\begin{definition}[Local indistinguishability]
	For a $n$-qudit system $Q$, let $A \subset[n]$, $B = [n] \setminus A$ be a bipartition of the qudits, and let $\mathscr{W} \subset \mathscr{P}(A)$ be a collection of subsets of $A$. Two CPTP maps $\mathcal{F}_{Q \leftarrow R}, \mathcal{G}_{Q \leftarrow R}$ with input space $R$ and output space $Q$ are said to be $(B,\mathscr{W})$-indistinguishable if the following statement holds. For any $\Sigma \subset A$ such that $W \nsubseteq \Sigma$ for every $W \in \mathscr{W}$, we have
	\begin{align}
		\tilde{\mathcal{F}}_{\Sigma B \leftarrow R} = \tilde{\mathcal{G}}_{\Sigma B \leftarrow R},
		\label{eq:indistinguishable}
	\end{align}
	where $\tilde{\mathcal{F}}_{\Sigma B \leftarrow R} \coloneqq \tr_{[n] \setminus \Sigma B} \circ \mathcal{F}_{Q \leftarrow R}$, and similar for $\tilde{\mathcal{G}}_{\Sigma B \leftarrow R}$.
	\label{def:locally-indistinguishable}
\end{definition}
The maps appearing Eq.~\eqref{eq:indistinguishable} describe the relationship between the input state on $R$ and the marginal of the output state on $\Sigma B$. Operationally, this condition implies that in a black-box scenario where either $\mathcal{F}_{Q \leftarrow R}$ or $\mathcal{G}_{Q \leftarrow R}$ is applied at some point during an experiment, an observer with access to the outputs $B$ must also gain access to at least one of the regions $W \in \mathscr{W}$ in order to decide which channel was applied. Note that if we choose $R$ to be a trivial Hilbert space, $R = \varnothing$, then the channels are equivalent to density matrices, $\mathcal{F}_{Q \leftarrow R} \equiv \rho_Q$, $\mathcal{G}_{Q \leftarrow R} \equiv \sigma_Q$ for some $\rho_Q, \sigma_Q$. This definition thus also gives us a notion of local indistinguishability for states.

As an example, the relation \eqref{eq:sigma constraint} for the $\mathbb{Z}_2$ surface code is a manifestation of local indistinguishability. In this instance, we set $A = S$, $B = S^c$, and $\mathscr{W}$ contains all subsets of qubits that constitute a spanning walk. Then, the maps $\mathcal{F}_{Q \leftarrow R} = \mathcal{C}_{Q \leftarrow R}$ and $\mathcal{G}_{Q \leftarrow R} = \mathcal{C}_{Q \leftarrow R}\circ \mathcal{P}_R^{\mathfrak{X}}$ are $(S^c, \mathscr{W})$-indistinguishable.\\

We now show that if $\mathcal{F}_{Q \leftarrow R}, \mathcal{G}_{Q \leftarrow R}$ are $(\mathscr{W}, B)$-indistinguishable, then when a sufficiently noisy channel is applied to each output qudit in $A = Q \setminus B$, the resulting composite channels become approximately \textit{globally} indistinguishable.

\begin{lemma} 
	Consider a system of qudits arranged at the vertices of a graph $\mathscr{G} = (V, E)$ of maximum degree $k$, which is divided into complementary regions $A \subset V$ and $B = V \setminus A$. Let $\mathscr{W} \subset \mathscr{P}(A)$ be a collection of subsets of $A$, each of which is connected with respect to $\mathscr{G}$, and define $d_{\textup{min}}(\mathscr{W}) \coloneqq \min_{W \in \mathscr{W}} |W|$. Let $\mathcal{N}_A = \bigotimes_{v \in A} \mathcal{N}_{Q'_v \leftarrow v}$ be a product channel acting on $A$, with output space $Q_v'$ for each qudit $v$, and define
	\begin{align}
		\max_{v \in A}\big\|\mathcal{N}_{Q'_v \leftarrow v}\circ (\mathrm{id}_v - \mathcal{D}_v)\big\|_\diamond \eqqcolon e^{-\beta}.
        \label{eq:beta def lemma}
	\end{align}
	where $\mathcal{D}_v$ is the depolarizing channel on qudit $v$.
	
	Suppose that the pair of channels $\mathcal{F}_{Q \leftarrow R}, \mathcal{G}_{Q \leftarrow R}$ are ($B$, $\mathscr{W}$)-indistinguishable. 
	Then, for any $\beta > \beta^*_k$, where $\beta^*_k \coloneqq 1 + \ln(k-1)$ depends only on $k$, there exists a constant $\kappa > 0$ such that
	\begin{align}
		\label{eq:lemma distinguish}
		\left\|(\mathcal{N}_A\otimes \mathrm{id}_B) \circ \mathcal{F}_{Q \leftarrow R}  - (\mathcal{N}_A\otimes \mathrm{id}_B) \circ \mathcal{G}_{Q \leftarrow R}\right\|_\diamond \leq \kappa |A| e^{-(\beta - \beta^*_k) d_{\textup{min}}(\mathscr{W})}.
	\end{align}\label{lem:bound-norm-connectivity}
\end{lemma}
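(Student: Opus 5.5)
Define $\mathcal{\Delta}_{Q \leftarrow R} \coloneqq \mathcal{F}_{Q \leftarrow R} - \mathcal{G}_{Q \leftarrow R}$ and expand the identity on each $v \in A$ as $\mathrm{id}_v = \mathcal{D}_v + (\mathrm{id}_v - \mathcal{D}_v)$. This gives $\mathcal{\Delta} = \sum_{\Sigma \subseteq A} \mathcal{\Delta}^{(\Sigma)}$, exactly as in Eq.~\eqref{eq:delta decomp}. In each term we apply $\mathrm{id}_v - \mathcal{D}_v$ on $\Sigma$ and $\mathcal{D}_v$ on $A \setminus \Sigma$.

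\textbf{Step 1: only sets containing some $W$ contribute.} Since $\mathcal{D}_v = (I_v/d_v)\tr_v$ and $\mathrm{id}_v - \mathcal{D}_v$ act on $\Sigma B$ alone, each $\mathcal{\Delta}^{(\Sigma)}$ factors through $\tr_{[n]\setminus \Sigma B}\circ\mathcal{\Delta}$. By $(B,\mathscr{W})$-indistinguishability, this marginal vanishes whenever $\Sigma$ contains no $W \in \mathscr{W}$. So the sum runs over $\Sigma$ containing some $W$.

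\textbf{Step 2: group terms by connected components.} Each $W$ is connected, so if $W \subseteq \Sigma$ then $W$ lies inside a single connected component of $\Sigma$ (in $\mathscr{G}$ restricted to $A$). Let $\mathscr{C}^+$ be the set of connected $C \subseteq A$ that contain some $W \in \mathscr{W}$; every such $C$ has $|C| \geq d_{\min}$. The nonzero $\Sigma$ are exactly those whose component set $\Gamma[\Sigma]$ meets $\mathscr{C}^+$.

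To avoid the double counting discussed in Sec.~\ref{sec:upper bound simple}, use inclusion--exclusion over collections of such components:
\begin{align}
\mathbf{1}\big[\Gamma[\Sigma]\cap\mathscr{C}^+ \neq \varnothing\big] = \sum_{\varnothing \neq \{C_1,\dots,C_m\} \subseteq \Gamma[\Sigma]\cap \mathscr{C}^+} (-1)^{m+1}.
\end{align}
Swapping the order of summation gives $\mathcal{\Delta} = \sum_{m\ge1} (-1)^{m+1} \sum_{\{C_i\}} \mathcal{\Delta}^{\{C_1,\dots,C_m\}}$. Here the inner sum is over families of pairwise non-adjacent sets in $\mathscr{C}^+$, that is, disjoint sets with $C_i \cap \partial C_j = \varnothing$.

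For a fixed family, the compatible $\Sigma$ are exactly those that contain every $C_i$ and exclude every $\partial C_i$. All other sites are free, so the resummation behind Eq.~\eqref{eq:delta Cplus rep} gives
\begin{align}
\mathcal{\Delta}^{\{C_i\}} = \Big[\textstyle\bigotimes_{v\in\cup C_i}(\mathrm{id}_v-\mathcal{D}_v)\otimes\bigotimes_{v\in\cup\partial C_i}\mathcal{D}_v\Big]\circ\mathcal{\Delta}.
\end{align}
After applying $\mathcal{N}_A$, each factor $\mathcal{N}_v\circ(\mathrm{id}_v-\mathcal{D}_v)$ contributes at most $e^{-\beta}$ in diamond norm. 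The remaining factors are channels, and $\|\mathcal{\Delta}\|_\diamond\le2$. Submultiplicativity then gives
\begin{align}
\big\|(\mathcal{N}_A\otimes\mathrm{id}_B)\circ\mathcal{\Delta}^{\{C_i\}}\big\|_\diamond \le 2e^{-\beta\sum_i|C_i|}.
\end{align}

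\textbf{Step 3: count and sum.} By the triangle inequality, the left side of \eqref{eq:lemma distinguish} is at most $2\sum_{m\ge1}\frac{1}{m!}\big(\sum_{C\in\mathscr{C}^+}e^{-\beta|C|}\big)^m$. This is $2(e^{Z}-1)$, where $Z \coloneqq \sum_{C\in\mathscr{C}^+}e^{-\beta|C|}$ and the ordered-tuple overcount has been absorbed. Using the lattice-animal bound already quoted from Bollob\'as, the number of connected subsets of size $s$ containing a given vertex is at most $e^{s(1+\ln(k-1))}$. Hence
\begin{align}
Z \le |A|\sum_{s\ge d_{\min}} e^{-(\beta-\beta^*_k)s} = \frac{|A|\, e^{-(\beta-\beta^*_k)d_{\min}}}{1-e^{-(\beta-\beta^*_k)}}.
\end{align}

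The main obstacle is that $e^{Z}-1\le Ze^{Z}$ is linear in $Z$ only when $Z=O(1)$, which requires $|A|e^{-(\beta-\beta^*_k)d_{\min}}$ to be bounded. When it is not, the claimed right side of \eqref{eq:lemma distinguish} exceeds the trivial bound $2$ once $\kappa\ge2$, so we can take $\kappa=\max\{2,\ldots\}$ and treat both regimes.

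If one instead wants to avoid exponentiation entirely, an alternative is an ordered first-component decomposition. Fix an ordering of $\mathscr{C}^+$ and assign each nonzero $\Sigma$ to its smallest component in $\mathscr{C}^+$. The constraint ``no earlier component of $\mathscr{C}^+$ lies in $\Sigma$'' then has to be resummed as a telescoping difference of channels. I would try the inclusion--exclusion version first and fall back to this ordered decomposition if the combinatorial factors fail to close.
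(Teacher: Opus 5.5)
Your proposal is correct and follows the paper's proof essentially step for step. The shared steps are the $\mathcal{D}_v$ versus $(\mathrm{id}_v-\mathcal{D}_v)$ expansion, the vanishing of terms via $(B,\mathscr{W})$-indistinguishability, inclusion--exclusion over families of maximal components containing some $W$, the resummation to the form acting on $X[\Gamma']\cup Y[\Gamma']$, the $2(e^{Z}-1)$ bound, and the lattice-animal count. Your two-regime treatment of the last step is what the paper packages as $\min\{2,2(e^x-1)\}\le \frac{2}{\ln 2}x$, giving $\kappa=\frac{2}{\ln 2}[1-e^{-(\beta-\beta^*_k)}]^{-1}$, so the ordered first-component fallback is not needed.
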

\noindent Here, the composite channels $(\mathcal{N}_A\otimes \mathrm{id}_B) \circ \mathcal{F}_{Q \leftarrow R}$ and $(\mathcal{N}_A\otimes \mathrm{id}_B) \circ \mathcal{G}_{Q \leftarrow R}$ are approximately globally indistinsguishable if 1) the channel $\mathcal{N}_A$ is sufficiently noisy, as quantified by the parameter $\beta$, and 2) the regions $W \in \mathscr{W}$ are all large, i.e.~the channels cannot be distinguished by looking at $\Sigma B$ for small $\Sigma$. 

This result will be used to upper bound the quantity $\delta_{\mathfrak{J}}(E_S)$ [Eq.~\eqref{eq:algebra comp error}] for the various different codes considered in this work. In each case, we identify a local indistinguishability property that applies to the given code (see \cref{lem:indist q double}), and identify a bounded-degree graph $\mathscr{G}$ whose vertices are the physical qudits, such that each $W \in \mathscr{W}$ is connected with respect to $\mathscr{G}$. This will suffice to bound the threshold for the given code. Indeed, combined with the local indistinguishability property of the $\mathbb{Z}_2$ surface code, Eq.~\eqref{eq:sigma constraint}, the above suffices to prove \cref{lem:delta bound simple}.

While having the advantage of being very general, the numerical value of the threshold $\beta_k^*$ implied by this result may not always be optimal. In Appendix \ref{apdx:z2 improvements}, we show how the argument can be refined to quantitatively improve our bound on the threshold noise strength.

\subsection{Application to quantum double and string-net models}

In order to apply this result to the non-Abelian codes being considered here, we need to show that the noiseless code states in question obey an indistinguishability property of the kind specified in \cref{def:locally-indistinguishable}. Since qudits in the quantum double and string-net models live on the edges (and possibly also vertices) of the square and hexagonal lattices, respectively, we must first specify an graph for each of these, i.e.~a graph whose vertices are the appropriate set of qudits, with some rule for deciding whether two qudits are considered adjacent. We can then identify an appropriate choice for $\mathscr{W}$. 

For the quantum double model, letting $(V_{\mathrm{sq}}, E_{\mathrm{sq}}, F_{\mathrm{sq}})$ be the vertices, edges, and faces of the square lattice, we define the edge graph $G^\#_{\mathrm{sq}} = (E_{\mathrm{sq}}, L_{\mathrm{sq}})$, whose vertices are edges $e \in E_{\mathrm{sq}}$, and whose adjacencies are determined via
\begin{align}
	G^\#_{\mathrm{sq}} = (E_{\mathrm{sq}}, L_{\mathrm{sq}}) \;:\; \{e, e'\} \in L_{\mathrm{sq}} \text{ iff } e \overset{V}{\sim}e' \text{ or }e \overset{F}{\sim}e'  
	\label{eq:edge graph square}
\end{align}
where the relations $\overset{V}{\sim}$ $(\overset{F}{\sim})$ will indicate that two edges share a vertex (face) from here on. We will also use the notation $G^\#_{\mathrm{sq}}|_S$ to denote the induced subgraph of $G^\#_{\mathrm{sq}}$ obtained by removing vertices outside of $S \subset E$. The degree of the edge graph $G^\#_{\mathrm{sq}}$ is $k_{\text{ sq}} = 8$. 

For the string net model without fusion multiplicities, which is defined on the hexagonal lattice $(V_{\mathrm{hex}}, E_{\mathrm{hex}}, F_{\mathrm{hex}})$, we will use a different edge graph whose connectivity is higher than the na{\"i}ve generalisation of \eqref{eq:edge graph square}. This is necessary to deal with the fact that the logical operators for this model are thicker.  To specify this graph, we first introduce the following discrete distance measure between edges on the hexagonal lattice. Let $E^+ \coloneqq E_{\mathrm{hex}} \cup F_{\mathrm{hex}}$ and define the graph $G' = (E^+, D)$ as follows. For two edges $e, e' \in E_{\mathrm{hex}}$, we have $\{e, e'\} \in D$ if $e$ and $e'$ share a vertex. For an edge $e \in E_{\mathrm{hex}}$ and a face $f \in F_{\mathrm{hex}}$,  we have $\{e, f\} \in D$ if $e$ is contained in $f$. For any two faces $f, f' \in F_{\mathrm{hex}}$, we have $\{f, f'\} \notin D$. Then, $d_{G'}(e, e')$ is defined as the graph distance between $e$ and $e'$ within $G'$. Now we define a new edge graph $G^\#_{\mathrm{hex}} = (E_{\mathrm{hex}}, L_{\mathrm{hex}})$ as
\begin{align}
	G^\#_{\mathrm{hex}} = (E_{\mathrm{hex}}, L_{\mathrm{hex}}) \; : \: \{e, e'\} \in L_{\mathrm{hex}}\text{ if } d_{G'}(e, e') \leq 3.
	\label{eq:edge graph hex}
\end{align}
The degree of this edge graph can be shown to be $k_{\rm hex} = 26$.

For string-net models with fusion multiplicities, there are additional qudits at each vertex of the hexagonal lattice. Accordingly, we define an enlarged graph
\begin{align}
    G_{\text{hex, mult}}^\# = \big((V_{\text{hex}} \cup E_{\text{hex}} ) , L'_{\text{hex}}\big)
\end{align}
with $L'_{\text{hex}}$ defined as follows. The adjacency rule for edges $\ell, \ell' \in E_{\text{hex}}$ is the same as for $L_{\mathrm{hex}}$. Two vertices $v,v' \in V_{\text{hex}}$ are adjacent if they are both contained in the same plaquette. An edge $\ell \in E_{\text{hex}}$ and a vertex $v \in V_{\text{hex}}$ are adjacent if there is some other edge $\ell'$ that contains $v$, such that $d_{G'}(\ell, \ell') \leq 2$.  The degree of $G_{\text{hex, mult}}^\#$ can be shown to be $k_{\text{hex, mult}} = 40$.

\begin{lemma}
	Let $\mathcal{C}_{Q \leftarrow R}$ be the isometric encoding channel for the quantum double (string net) model on the square (hexagonal) lattice with periodic boundary conditions, and let $\mathfrak{K}_\leftrightarrow \subset \mathfrak{B}(\mathscr{H}_R)$ be the subalgebra of logical operators that are supported on ribbons that wind in the $x$ direction, with corresponding super-projector $\mathcal{P}^{\mathfrak{K}_\leftrightarrow}_R$.
	Let $S$ be a $L_{\rm x} \times w_{\rm y}$ horizontal strip of qudits with rough edges, with complement $S^c = E_{\mathrm{sq}} \setminus S$. Define $\mathscr{W} \subset P(S)$ to be the collection of graph paths on the relevant subgraph $G^\#_{\mathrm{sq}}|_S$, $G^\#_{\mathrm{hex}}|_S$, $G^\#_{\mathrm{hex, mult}}|_S$ that connect the top and bottom boundaries of $S$. Then, the channels
	\begin{align}
		\mathcal{F}_{Q \leftarrow R} &\coloneqq \mathcal{C}_{Q \leftarrow R} \;\;\text{ and }\;\; \mathcal{G}_{Q \leftarrow R} \coloneqq \mathcal{C}_{Q \leftarrow R} \circ \mathcal{P}^{\mathfrak{K}_\leftrightarrow}_R
		\label{eq:channel choice qd}
	\end{align}
	are $(S^c, \mathscr{W})$-indistinguishable.
	\label{lem:indist q double}
\end{lemma}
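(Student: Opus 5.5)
We follow the template of the $\mathbb{Z}_2$ argument around Eq.~\eqref{eq:sigma constraint}: exhibit a physical channel supported on $\Sigma^{\mathrm{c}} \coloneqq S\setminus\Sigma$ that implements the logical super-projector $\mathcal{P}^{\mathfrak{K}_\leftrightarrow}_R$. Fix $\Sigma\subseteq S$ such that no $W\in\mathscr{W}$ satisfies $W\subseteq\Sigma$. Equivalently, $\Sigma$ contains no path in $G^\#|_S$ joining the top and bottom boundaries of $S$.

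\emph{Step 1 (topology).} We show that $\Sigma^{\mathrm{c}}$ contains the support of a horizontal, non-contractible closed ribbon. For the quantum double, this ribbon $r$ has all its triangle edges in $\Sigma^{\mathrm{c}}$. For the string net, it is a closed primal path $\gamma$ winding in $x$ with $\bigstar(\gamma)\subseteq\Sigma^{\mathrm{c}}$, together with the relevant vertex qudits in the multiplicity case.

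The argument is a planar duality on the cylinder $S$. Let $\Sigma'$ be the union of all connected components of $\Sigma$ (in $G^\#|_S$) that touch the bottom boundary. By hypothesis, $\Sigma'$ avoids the top boundary. Its outer frontier, consisting of edges of $S\setminus\Sigma$ adjacent to $\Sigma'$ or to the bottom of $S$, separates $\Sigma'$ from the top. This frontier therefore winds around the cylinder.

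The adjacency rule in $G^\#$ is exactly what makes this frontier thick enough to carry the ribbon. On the square lattice, edges are adjacent if they share a vertex or a face. Hence the edges touching the frontier from above form a set in which consecutive vertices and plaquettes are all bordered by non-$\Sigma$ edges, which is what is needed to string together alternating direct and dual triangles. On the hexagonal lattice, the distance-$3$ rule in $G'$ guarantees that whenever $\gamma$ lies at $G'$-distance greater than $3$ from $\Sigma'$, its entire neighbourhood $\bigstar(\gamma)$ avoids $\Sigma$. I would make this precise with a standard argument of the form ``complement of a non-crossing set contains a crossing dual cycle'', choosing $\gamma$ as the boundary of the $3$-neighbourhood of $\Sigma'$.

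\emph{Step 2 (algebra).} The closed-ribbon projectors $K^\mu_r$ supported on $\Sigma^{\mathrm{c}}$ restrict on the code space to the logical projectors $\dyad{\bar\mu^\leftrightarrow}$. This holds because the action of $K^\mu_r$ on the code space depends only on the homology class of $r$, and $\mathcal{C}$ intertwines the two. Note that $\mathfrak{K}_\leftrightarrow$ is maximal abelian, so $C(\mathfrak{K}_\leftrightarrow)=\mathfrak{K}_\leftrightarrow$. Its super-projector is the pinching $\sigma\mapsto\sum_\mu \Pi_\mu\sigma\Pi_\mu$. We therefore define the physical channel
\[
\mathcal{P}_{\Sigma^{\mathrm{c}}}[\rho]=\sum_\mu K^\mu_r\rho K^\mu_r .
\]
It is CPTP because the $K^\mu_r$ are orthogonal projectors summing to $I$; completeness follows from $\sum_g F^{1,g}_r=I$ and the decomposition of $\mathfrak{K}_r$. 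It satisfies $\mathcal{P}_{\Sigma^{\mathrm{c}}}\circ\mathcal{C}=\mathcal{C}\circ\mathcal{P}^{\mathfrak{K}_\leftrightarrow}_R$. Since this channel is supported in $\Sigma^{\mathrm{c}}$, tracing out $\Sigma^{\mathrm{c}}$ gives $\tr_{\Sigma^{\mathrm{c}}}\circ\mathcal{P}_{\Sigma^{\mathrm{c}}}=\tr_{\Sigma^{\mathrm{c}}}$. Hence $\tilde{\mathcal F}_{\Sigma S^c\leftarrow R}=\tilde{\mathcal G}_{\Sigma S^c\leftarrow R}$, which is exactly $(S^c,\mathscr{W})$-indistinguishability.

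\emph{Main obstacle.} The hard part is Step 1: a rigorous combinatorial proof that the chosen adjacency graphs are exactly strong enough to leave room for a full ribbon, or a full $\bigstar(\gamma)$, in the complement. This is most delicate for the hexagonal graph and its multiplicity variant, where $\bigstar(\gamma)$ is thick. I would handle it by constructing the separating cycle explicitly as the boundary of a neighbourhood of $\Sigma'$ in $G'$, and then checking locally, plaquette by plaquette, that each piece of $\bigstar(\gamma)$ lies within $G'$-distance $3$ of only non-$\Sigma$ edges.
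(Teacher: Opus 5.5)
Your overall architecture matches the paper's, and Step~2 is correct: it is the computation in Eq.~\eqref{eq:local indist logical} recast as a channel. You only need $\sum_\mu (K^\mu_r)^2$ to act as the identity on the range of $V^{\mathcal{C}}_{Q\leftarrow R}$. That already follows from $K^\mu_r V^{\mathcal{C}}_{Q\leftarrow R}=V^{\mathcal{C}}_{Q\leftarrow R}\dyad{\mu^\leftrightarrow}_R$, so you do not need global completeness of the string-net projectors.

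The gap is Step~1. It is the whole content of \cref{lem:logical path}, and you only sketch it, with the wrong mechanism. Keeping $\gamma$ far from $\Sigma'$ controls $\bigstar(\gamma)\cap\Sigma'$ but says nothing about the other components of $\Sigma$. What works is to place the separating curve immediately outside the blocked region of $\Sigma'$ (the paper's $\partial Z^\epsilon$) and argue by maximality. A $\Sigma$-edge whose blocked region meets that curve would be $G^\#$-adjacent to $\Sigma'$, hence already in $\Sigma'$. This requires that the blocked regions of $e,e'$ intersect only if $e\sim e'$ in $G^\#$.

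On the square lattice this property holds, since two diamonds touch iff their edges share a vertex or a plaquette. Step~1 then reduces to the $4$-/$8$-connectivity duality on the diamond tiling, plus some bookkeeping. Keep the curve off diamond corners so that consecutive edges share a site, and excise repeated edges. There is one further catch: where the curve crosses a diamond between opposite sides, the two sites differ in both vertex and plaquette. You then need one direct and one dual triangle on the same edge. The strict ribbon definition of Appendix~\ref{apdx:q double codes} forbids this, so that definition must be relaxed.

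On the hexagonal lattice your plan cannot be completed with $G^\#_{\mathrm{hex}}$ as defined, because the required property fails. Since $\bigstar(\gamma)$ consists of all edges incident to vertices of $\gamma$, you need $\gamma$ to avoid every endpoint of every $\Sigma$-edge. Take a hexagon $f$ and, at two opposite corners of $f$, the edge incident to that corner that does not lie on $f$. These two edges are at $G'$-distance $4$, so they are not adjacent in $G^\#_{\mathrm{hex}}$.

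Here is a concrete counterexample. Let $\Sigma$ be all edges lying on a non-horizontal line through hexagon centres parallel to an edge direction. Consecutive ones are such an opposite-corner pair, so no two are $G^\#_{\mathrm{hex}}$-adjacent. Hence $\Sigma$ contains no $W\in\mathscr{W}$ once $w_{\rm y}$ exceeds a few lattice spacings. The line meets no edge transversally, so every horizontal $\gamma$ must cross it at a vertex, and each such vertex is an endpoint of a $\Sigma$-edge. So no $\gamma$ with $\bigstar(\gamma)\subseteq S\setminus\Sigma$ exists.

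The paper's proof has the same defect. Its enlarged cells $X_{\{e\}}$ for the opposite-corner pair meet at the centre of $f$, contradicting the claimed analogue of Eq.~\eqref{eq:edge graph X square}. To close Step~1, declare $e\sim e'$ whenever an endpoint of $e$ and an endpoint of $e'$ lie on a common hexagon; this includes $G'$-distance $4$. Site-percolation duality between the honeycomb and its matching lattice then produces $\gamma$. The price is a larger degree for the graph, which changes the bound on $T_\star$ in Theorem~\ref{thm:fat-nonab}.
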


Together with \cref{lem:bound-norm-connectivity}, this result swiftly implies \cref{thm:fat} for non-Abelian codes (see Section \ref{subsec:nonab results}), via the following argument. The reduced noise channel $\mathcal{N}_{S \leftarrow Q} = \tr_{S^c} \circ \mathcal{N}_Q$ has complement $\widehat{\mathcal{N}}_{E_S \leftarrow Q} = \left(\bigotimes_{\ell \in S^c} \mathrm{id}_{\ell}\right) \otimes (\bigotimes_{\ell \in S} \widehat{\mathcal{N}}_{E_\ell \leftarrow \ell} )$ [Eq.~\eqref{eq:noise channels subsystem}]. Thus, $\delta_{\mathfrak{K}_\leftrightarrow}$ is precisely in the form of the left hand side of \eqref{eq:lemma distinguish}, with $\mathcal{F}_{Q \leftarrow R}$ and $\mathcal{G}_{Q \leftarrow R}$ specified as in Eq.~\eqref{eq:channel choice qd}. \cref{lem:indist q double} implies that the required indistinguishability condition is met.

For the quantum double model, it is straightforward to show that for the chosen $\mathscr{W}$, we have $d_{\mathrm{min}}(\mathscr{W}) = w_{\rm y}$, and the underlying graph $G^\#_{\mathrm{sq}}$ has degree 8. Thus, for $\beta > \beta_c$, where $\beta_c = 1 + \ln(7)$, we have
\begin{align}
	\delta(\mathfrak{K}_\leftrightarrow) \leq \kappa \mathrm{poly}(L_{\mathrm{x}} w_{\mathrm{y}}) e^{-(\beta - \beta_c)w_y}
    \label{eq:delta kx apdx}
\end{align}
for some constant $\kappa > 0$.
corresponding to a critical noise strength $T_\star^{\mathrm{QD}} = [1 + \ln(7)]^{-1} \approx 0.339$. Similarly, for the string net model without fusion multiplicities, we have $d_{\mathrm{min}}(\mathscr{W}) \geq w_{\rm y}/2$,  and the underlying graph $G^\#_{\mathrm{sq}}$ has degree 26, corresponding to a critical noise strength $T_\star = [1 + \ln(25)]^{-1} \approx 0.237$. With fusion multiplicities, this changes to $T_\star = [1 + \ln(39)]^{-1} \approx 0.214$. \hfill $\square$

\subsection{Proof of \cref{lem:bound-norm-connectivity}}

As in Section \ref{sec:upper bound simple}, we start with the trivial resolution of the identity map on a single qudit,
\begin{align}
    \mathrm{id}_v = \mathcal{D}_v + (\mathrm{id}_{v} - \mathcal{D}_v).
    \label{eq:triv identity app}
\end{align}
and by analogy to Eq.~\eqref{eq:Delta def}, define the map
\begin{align}
    \mathcal{\Delta}_{Q \leftarrow R} \coloneqq \mathcal{F}_{Q \leftarrow R} - \mathcal{G}_{Q \leftarrow R}.
\end{align}
Applying  Eq.~\eqref{eq:triv identity app} to the output of $\mathcal{\Delta}_{Q \leftarrow R}$ for every $v \in A$, and expanding the terms, we obtain
\begin{align}
	\mathcal{\Delta}_{Q \leftarrow R} &= \sum_{\Sigma \in \mathscr{P}(A)} \mathcal{\Delta}_{Q \leftarrow R}^{\,(\Sigma)} & \text{where } \mathcal{\Delta}_{Q \leftarrow R}^{\,(\Sigma)} \coloneqq \left(\left[\bigotimes_{v \in \Sigma} (\mathrm{id}_v - \mathcal{D}_v) \right] \otimes  \left[\bigotimes_{v \in A \setminus \Sigma}  \mathcal{D}_v \right]\right)\circ \mathcal{\Delta}_{Q \leftarrow R} \label{eq:channel sum J}
\end{align}
where $\mathscr{P}(X) = \{Y : Y \subseteq X\}$ denotes the powerset of a set $X$. By the definition of $(B, \mathscr{W})$-indistinguiahsbility and the assumptions of the lemma, we have
\begin{align}
    W \nsubseteq \Sigma \,\forall W \in \mathscr{W} \; \Longrightarrow \; \mathcal{\Delta}_{Q \leftarrow R}^{\,(\Sigma)} = 0,
    \label{eq:sigma indist condition}
\end{align}
since, writing $\pi_{\Sigma^\mathrm{c}} \coloneqq \bigotimes_{v \in \Sigma^\mathrm{c}}(I_v/d_v)$, one can write $\mathcal{\Delta}_{Q \leftarrow R}^{\,(\Sigma)} = \pi_{\Sigma^\mathrm{c}} \otimes [\bigotimes_{v \in \Sigma}(\mathrm{id}_v - \mathcal{D}_v)] \circ (\mathcal{F}_{\Sigma B \leftarrow R} - \mathcal{G}_{\Sigma B \leftarrow R})$. This vanishes for any $\Sigma$ satisfying the condition on the left side of \eqref{eq:sigma indist condition}, by the definition of $(B, \mathscr{W})$-indistinguiahsbility [Eq.~\eqref{eq:indistinguishable}].

For a given $\Sigma \subseteq A$, let $\Gamma[\Sigma] \coloneqq \{C_1, \ldots, C_{|\Gamma[\Sigma]|}\}$ denote the maximally connected components of $\Sigma$ with respect to the graph $G$. This defines a map $\Gamma : \mathscr{P}(A) \rightarrow \mathscr{P}(\mathscr{P}(A))$, whose image we denote $\Omega \coloneqq \text{Im}(\Gamma)$. Now, define $\mathscr{W}^+ \subset \mathscr{P}(A)$ to be the set of connected subregions of $A$ that contain at least one $W \in \mathscr{W}$. We then denote $\Gamma^\mathscr{W}[\Sigma] \coloneqq \Gamma[\Sigma] \cap \mathscr{W}^+$, which contains all the connected components of $\Sigma$ that are supersets of some $W \in \mathscr{W}$; similarly $\Omega^\mathscr{W} \coloneqq \text{Im}(\Gamma^\mathscr{W}) = \Omega \cap \mathscr{P}(\mathscr{W}^+)$. We now claim that $\Gamma^\mathscr{W}[\Sigma] = \varnothing \Rightarrow W \nsubseteq \Sigma \,\forall W \in \mathscr{W}$, which, by the relation \eqref{eq:sigma indist condition} then implies
\begin{align}
	\Gamma^\mathscr{W}[\Sigma] = \varnothing \; \Longrightarrow \; \mathcal{\Delta}_{Q \leftarrow R}^{\,(\Sigma)} = 0
	\label{eq:NJ vanish}
\end{align}
To prove our claim, suppose for a given $\Sigma$ there were some $W \in \mathscr{W}$ such that $W \subseteq \Sigma$. Then by virtue of $W$ being connected, this would imply that one of the connected components $C \in \Gamma[\Sigma]$ would contain $W$. Thus, we would have $C \in \mathscr{W}^+$ and $C \in \Gamma[\Sigma]$, implying $\Gamma^\mathscr{W}[\Sigma] \coloneqq \Gamma[\Sigma] \cap \mathscr{W}^+ \neq \varnothing$. The contrapositive of this relation gives the desired claim.


As a result, for every $\Sigma \in \mathscr{P}(A)$ that contributes to the sum \eqref{eq:channel sum J}, there must be some connected region  $W^+ \in \mathscr{W}^+$ such that $W^+ \in \Gamma^\mathscr{W}[\Sigma]$. This motivates us to consider the partial sum $\mathcal{\Delta}^{\{W^+\}}_{Q \leftarrow R} = \sum_{\Sigma : \Gamma^\mathscr{W}[\Sigma] \ni W^+} \mathcal{\Delta}^{(\Sigma)}_{Q \leftarrow R}$ [the analogue of Eq.~\eqref{eq:delta singleton}], which includes the contribution from all $\Sigma$ that have $W^+$ as one of their connected components. Then, as outlined in Section \ref{sec:upper bound simple}, the sum $\sum_{W^+ \in \mathscr{W}^+} \mathcal{\Delta}^{\{W^+\}}_{Q \leftarrow R}$ [Eq.~\eqref{eq:delta all Cplus}] contains every nonzero $\mathcal{\Delta}^{(\Sigma)}_{Q \leftarrow R}$, but with some double-counting. To fix this double-counting, we must generalise the object $\mathcal{\Delta}^{\{W^+\}}_{Q \leftarrow R}$. For any collection of connected regions $\Gamma' = \{C_1^+, \ldots, C_n^+ \} \in \Omega^{\mathscr{W}}$, we define
\begin{align}
    \mathcal{\Delta}_{Q \leftarrow R}^{\Gamma'} \coloneqq \sum_{\substack{\Sigma \in \mathscr{P}(A)\\ \Gamma[\Sigma] \supseteq \Gamma'}} \mathcal{\Delta}_{Q \leftarrow R}^{(\Sigma)}
    \label{eq:delta superset}
\end{align}
which includes all $\Sigma$ whose connected components are a superset of $\Gamma'$, i.e.~$C_1^+, \ldots, C_n^+ \in \Gamma^{\mathscr{W}}[\Sigma]$. Then, by way of the inclusion-exclusion principle \cite{Cameron1994combinatorics}, we claim that
\begin{align}
    \mathcal{\Delta}_{Q \leftarrow R} = \sum_{\Sigma \in \mathscr{P}(A)}  \mathcal{\Delta}_{Q \leftarrow R}^{(\Sigma)} = \sum_{n=1}^{|A|} (-1)^{n+1} \sum_{\substack{\Gamma' \in \Omega^{\mathscr{W}}\\|\Gamma'|=n }} \mathcal{\Delta}^{\Gamma'}_{Q \leftarrow R}.
	\label{eq:sum J decomp}
\end{align}
Here, the second sum on the right hand side is over all sets $\Gamma' = \{C_1^+, \ldots, C_n^+ \}$ of cardinality $n$. To prove Eq.~\eqref{eq:sum J decomp}, consider an arbitrary $\Sigma \in \mathscr{P}(A)$, and denote $m \coloneqq |\Gamma^{\mathscr{W}}[\Sigma]|$. If $m = 0$, then by Eq.~\eqref{eq:NJ vanish} $\mathcal{\Delta}_{Q \leftarrow R}^{(\Sigma)} = 0$, so we can restrict to the case $m \geq 1$. Inserting the definition \eqref{eq:delta superset} into right hand side of \eqref{eq:sum J decomp}, the resulting sum will feature a contribution $\mathcal{\Delta}_{Q \leftarrow R}^{(\Sigma)}$ for every nonempty $\Gamma' \subseteq \Gamma^{\mathscr{W}}[\Sigma]$. There are ${m \choose n}$ such subsets $\Gamma'$ of a given cardinality $|\Gamma'| = n$, each appearing with sign $(-1)^{n+1}$. Thus, the net multiplicity of the term corresponding to a given $\Sigma$ is $\sum_{n=1}^m {m \choose n}(-1)^{n+1} = 1$, as required.

Now, for a given $\Gamma' \in \Omega^\mathscr{W}$, denote $X[\Gamma'] \coloneqq \bigcup_{W \in \Gamma'} W$ and also define $Y[\Gamma'] \coloneqq \partial(X[\Gamma'])$, where for a subset $Z \subset A$, the boundary $\partial Z \subset A$ is the set of vertices that are not in $Z$, but adjacent to some $v \in Z$. Since $\Gamma' \in \Omega^\mathscr{W}$, there exists a $\Sigma' \in \mathscr{P}(A)$ such that $\Gamma^\mathscr{W}[\Sigma'] = \Gamma'$. Thus, any two distinct elements $W^+_1, W^+_2 \in \Gamma'$ must be non-overlapping and non-adjacent---otherwise $C_1^+ \cup C_2^+$ would be a connected subset of $\Sigma'$, contradicting the fact that $C_1^+, C_2^+$ are maximally connected subsets of $\Sigma'$. As a result, the boundary of $X[\Gamma']$ is precisely the union of the boundaries of each component, $Y[\Gamma'] = \bigcup_{W^+ \in \Gamma'}\partial W^+$. 
We can then derive the following implication for any $\Gamma' \in \Omega^\mathscr{W}$.
\begin{align}
	\Gamma[\Sigma] \supseteq \Gamma'\quad &\Longleftrightarrow \quad X[\Gamma'] \subseteq \Sigma \text{ and } Y[\Gamma'] \cap \Sigma = \varnothing. 
	\label{eq:Gamma J cond}
\end{align}
If we assume $\Gamma' \subseteq \Gamma[\Sigma]$, then each $W^+ \in \Gamma'$ is a connected component of $\Sigma$; thus $X[\Gamma'] = \bigcup_{W^+ \in \Gamma'}W^+ \subseteq \Sigma$. Furthermore, for every $W^+ \in \Gamma[\Sigma]$, there can be no vertex $v \in \partial W^+$ that is contained in contained in $\Sigma$, since otherwise $\{v\}\cup W^+$ would be a connected subset of $\Sigma$, contradicting the fact that $W^+$ is a maximally connected component of $\Sigma$.
For the reverse implication, assume the right hand side of \eqref{eq:Gamma J cond}. Then, for every $W^+ \in \Gamma'$, we have $W^+ \subseteq \Sigma$ and $\partial W^+ \cap \Sigma = \varnothing$. Thus, $W^+$ is a connected subset of $\Sigma$ whose boundary has no overlap with $\Sigma$; this suffices to show that $W^+ \in \Gamma^{\mathscr{W}}[\Sigma]$.

Thanks to the relation \eqref{eq:Gamma J cond} we can write
the map \eqref{eq:delta superset} as
\begin{align}
    \mathcal{\Delta}_{Q \leftarrow R}^{\Gamma'} =\sum_{\substack{\Sigma \in \mathscr{P}(A) \\   \Sigma \supseteq X[\Gamma'] \text{ and } \Sigma \cap Y[\Gamma'] = \varnothing }} \mathcal{\Delta}_{Q \leftarrow R}^{(\Sigma)} &=  \sum_{ \Sigma' \subseteq A \setminus (X[\Gamma'] \cup Y[\Gamma'])} \mathcal{\Delta}_{Q \leftarrow R}^{(\Sigma' \cup X[\Gamma'])} \nonumber\\ &= \left(\left[\bigotimes_{v \in X[\Gamma']} (\mathrm{id}_v - \mathcal{D}_v) \right] \otimes  \left[\bigotimes_{v \in Y[\Gamma']}  \mathcal{D}_v \right]\right) \circ \mathcal{\Delta}_{Q \leftarrow R}.
    \label{eq:delta gamma xy}
\end{align}
where, in going to the last line, we invoke the definition \eqref{eq:delta decomp}, along with the fact that $\sum_{\Sigma' \subseteq Z} \big(\bigotimes_{v \in \Sigma'}(\mathrm{id}_v - \mathcal{D}_v)\big) \otimes \big(\bigotimes_{v \in Z \setminus \Sigma'} \mathcal{D}_v \big) = \mathrm{id}_{Z}$ for a given region $Z$.

Finally, we are ready to upper bound the norm \eqref{eq:lemma distinguish}, which can be written as $\|(\mathcal{N}_A \otimes \mathrm{id}_B)\circ \mathcal{\Delta}_{Q \leftarrow R}\|_\diamond$, where $\mathcal{N}_A  = \bigotimes_{v \in A} \mathcal{N}_{Q'_v \leftarrow v}$. We observe that
\begin{align}
    \norm{(\mathcal{N}_A \otimes \mathrm{id}_B)\circ  \mathcal{\Delta}_{Q \leftarrow R}^{\Gamma'}}_\diamond &= \norm{\left(\bigotimes_{v \in X[\Gamma']}\mathcal{N}_{Q'_v \leftarrow v} (\mathrm{id}_{v} - \mathcal{D}_v)  \right)  \left(\bigotimes_{v \in Y[\Gamma']}\mathcal{N}_{Q'_v \leftarrow v} \mathcal{D}_v  \right) \left(\bigotimes_{\ell \in A \setminus (X[\Gamma']\cup Y[\Gamma'])}\mathcal{N}_{Q'_v \leftarrow v}  \right)\circ \mathcal{\Delta}_{Q \leftarrow R}}_\diamond \nonumber\\ &\leq 2e^{-\beta |X[\Gamma']|}
\end{align}
where in going to the second line we have used the definition of $\beta$ [Eq.~\eqref{eq:beta def lemma}], the sub-multiplicativity of the diamond norm under composition, $\|\mathcal{T} \circ \mathcal{T}'\|_\diamond \leq \|\mathcal{T} \|_\diamond \|\mathcal{T}'\|_\diamond$ for arbitrary composable maps $\mathcal{T}, \mathcal{T}'$,  and the fact that for a CPTP map $\mathcal{T}$ (including $\mathcal{N}_{Q'_v \leftarrow v}$ and $\mathcal{D}_v$), we have $\|\mathcal{T}\|_\diamond = 1$. We also use the triangle inequality in the form $\|\mathcal{\Delta}_{Q \leftarrow R}\|_\diamond = \|\mathcal{F}_{Q \leftarrow R} - \mathcal{G}_{Q \leftarrow R}\|_\diamond \leq \|\mathcal{F}_{Q \leftarrow R}\|_\diamond + \| \mathcal{G}_{Q \leftarrow R}\|_\diamond = 2$. We then insert the decomposition \eqref{eq:sum J decomp} into the norm in question \eqref{eq:lemma distinguish} and apply the triangle inequality, giving
\begin{align}
    \|(\mathcal{N}_A \otimes \mathrm{id}_B)\circ (\mathcal{F}_{Q \leftarrow R} - \mathcal{G}_{Q \leftarrow R})\|_\diamond &\leq \sum_{n=1}^{|A|} \sum_{\substack{\Gamma' \in \Omega^{\mathscr{W}}\\|\Gamma'|=n }} \norm{(\mathcal{N}_A \otimes \mathrm{id}_B) \circ \mathcal{\Delta}^{\Gamma'}_{Q \leftarrow R} }_\diamond \leq 2\sum_{n=1}^{|A|} \sum_{\substack{\Gamma' \in \Omega^{\mathscr{W}}\\|\Gamma'|=n }} e^{-\beta \sum_{W^+ \in \Gamma'} |W^+|}
\end{align}
where we have used the definition $X[\Gamma'] \coloneqq \bigcup_{W^+ \in \Gamma'} W^+$, along with the constraint that the elements of $\Gamma'$ are mutually non-overlapping. The sum over $\Gamma'$ can now be upper bounded by an independent sum over each components $C_1^+, \ldots, C_n^+ \in \mathscr{W}^+$, with a factor of $1/n!$ to account for the fact that different permutations of the components $C_i^+$ correspond to the same set $\{C_1^+, \ldots, C_n^+\}$. Then,
\begin{align}
    \|(\mathcal{N}_A \otimes \mathrm{id}_B)\circ (\mathcal{F}_{Q \leftarrow R} - \mathcal{G}_{Q \leftarrow R})\|_\diamond &\leq 2\sum_{n=1}^{|A|} \frac{1}{n!} \sum_{C_1^+, \ldots, C_n^+ \in \mathscr{W}^+} e^{-\beta \sum_{i=1}^n |C_i^+|} \nonumber\\ &= 2\sum_{n=1}^{|A|} \frac{1}{n!} \left(\sum_{W^+ \in \mathscr{W}^+} e^{-\beta|W^+|} \right)^n \leq 2(e^{\mathscr{Z}^+} -1)
    \label{eq:delta norm Z bound}
\end{align}
where $\mathscr{Z}^+ \coloneqq \sum_{W^+ \in \mathscr{W}^+} e^{-\beta|W^+|}$ is a weighted sum over all components $W^+ \in \mathscr{W}^+$, analogous to the partition function \eqref{eq:partition SOW}. To bound this quantity, we use the fact that on a graph of bounded degree $k$, the number of connected subgraphs with $m$ vertices that pass through a particular point is at most $(e(k-1))^m$ (see Ref.~\onlinecite{bollobas2006art}, p.~130). Since every $W^+ \in \mathscr{W}^+$ has size at least $d_{\rm min}$, by the assumption of the lemma, we have
\begin{align}
    \mathscr{Z}^+ &\leq |A|\sum_{l = d_{\mathrm{min}}}^{|A|} (e(k-1))^{l}e^{-\beta l} \nonumber\\ &\leq \frac{|A|}{1-e^{-(\beta-\beta^*_k)}}e^{-(\beta-\beta^*_k)d_{\mathrm{min}}} & \text{for }\beta > \beta_k^*.
\end{align}
Finally, since the norm itself \eqref{eq:lemma distinguish} is always upper bounded by 2, we can use the fact that $\min(2, 2[e^x -1]) \leq \frac{2}{\ln 2} x$ for all $x > 0$ to prove Eq.~\eqref{eq:lemma distinguish}, with $\kappa = \frac{2}{\ln 2}[1 - e^{-(\beta-\beta^*_k)}]^{-1}$. \hfill $\square$

\subsection{Proof of Lemma \ref{lem:indist q double}}

As described in Appendix \ref{apdx:q double codes}, the logical operators of the quantum double model are defined on ribbons $r$, which are subsets of edges with a particular string-like structure. To show that 

\begin{lemma}
	For the quantum double model or the string net model, let $S$, $S^c$, $\mathscr{W}$ be as in \cref{lem:indist q double}. Then, if $\Sigma \subset S$ satisfies $W \nsubseteq \Sigma\;\forall W \in \mathscr{W}$, then the model in question supports a closed ribbon $r$ in the homology class $(1,0)$, with the corresponding qudits contained in $S \setminus \Sigma$. 
	\label{lem:logical path}
\end{lemma}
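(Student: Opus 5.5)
The plan is a planar-duality (``blocking'') argument, applied to each lattice separately. Fix $\Sigma \subset S$ such that no top-to-bottom path of the relevant edge graph ($G^\#_{\mathrm{sq}}|_S$, $G^\#_{\mathrm{hex}}|_S$ or $G^\#_{\mathrm{hex,mult}}|_S$) lies inside $\Sigma$. Let $\Sigma^+$ be the union of all connected components of $\Sigma$, with respect to that graph, that touch the top boundary of $S$. By hypothesis no component of $\Sigma^+$ reaches the bottom boundary. Consider the set $\partial\Sigma^+ \cap S$ of qudits in $S$ that are adjacent to $\Sigma^+$ but not in it. By maximality of the components, $\partial\Sigma^+\cap S \subseteq S\setminus\Sigma$. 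The task is to show that a suitable ``outer boundary'' of $\Sigma^+ \cup \{\text{top row}\}$ can be traced out as a horizontally winding ribbon (or, for string nets, a horizontally winding path $\gamma$ with $\bigstar(\gamma)\subseteq S\setminus\Sigma$). If $\Sigma^+$ is empty, this ribbon is simply the topmost row of $S$.

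The concrete construction goes through the usual duality between a graph and its ``matching'' graph on the cylinder $S$ (periodic in $x$). First, thicken $\Sigma^+$ to the region $U$ of the plane covered by the closed cells (triangles, plaquettes or stars) containing its qudits, together with the region above the strip. Next, take the lower boundary curve of the connected region containing the top of the cylinder. Because $\Sigma^+$ does not reach the bottom, this curve is a non-contractible loop in class $(1,0)$. Finally, walk along the sites just below this curve to read off a sequence of direct and dual triangles, which forms a closed ribbon $r$. For the quantum double, the adjacency relation $\overset{V}{\sim}\cup\overset{F}{\sim}$ in \eqref{eq:edge graph square} is chosen exactly so that any edge of a triangle of $r$ that lay in $\Sigma$ would be $G^\#_{\mathrm{sq}}$-adjacent to $\Sigma^+$. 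That would merge it into $\Sigma^+$, contradicting the choice of boundary, so every triangle edge of $r$ lies in $S\setminus\Sigma$. For the string net, the argument is identical except that the support is $\bigstar(\gamma)$. The distance-$3$ rule in $G'$ (edge–face–edge hops) guarantees that any edge of $\bigstar(\gamma)$ lying in $\Sigma$ would be within distance $3$ of $\Sigma^+$, and hence adjacent to it. The vertex-qudit rule for the multiplicity graph handles the extra vertex qudits in the same way.

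I would then need to check that the resulting object is a legitimate ribbon: no two triangles share an edge, and the sites are consistently adjacent. It must also have the right homology and stay inside $S$. The strip's rough boundaries and the convention for $S$ ensure that the top/bottom rows are usable, and the winding number is $1$ because the boundary curve separates the top from the bottom of the cylinder. Where the traced boundary revisits an edge, one shortcuts the loop to remove repetitions. This preserves homology and support.

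I expect the main obstacle to be the local geometric verification. Specifically, one must show that the boundary walk always produces valid triangle sequences (particularly at corners of $\Sigma^+$ and at concave points). One must also show that the thickened support of a string-net operator, $\bigstar(\gamma)$, never touches $\Sigma$. This is a finite case analysis on the unit cells. I would first do it on the square lattice by enumerating the local configurations of the boundary around a vertex, then adapt it to the hexagonal lattice. The distance thresholds $3$ and $2$ in $G^\#_{\mathrm{hex}}$ and $G^\#_{\mathrm{hex,mult}}$ are precisely what should make each case close.
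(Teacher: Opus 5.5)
Your route is the paper's: a horizontally winding curve separating the top-touching part of $\Sigma$ from the bottom, discretised into a ribbon through the cells it crosses. Working from the top rather than the bottom, and with the boundary of a thickening rather than the paper's $\partial Z^\epsilon$, are cosmetic differences. The gap is in the local case analysis you defer. It cannot be closed as planned, in two places, and the paper's proof passes over the same two points.

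\emph{Quantum double.} Each side of a Voronoi diamond $U_e$ is a site $(v,p)$, and a triangle on $e$ joins two sides of $U_e$ that share a corner. Wherever the curve crosses some $U_e$ between \emph{opposite} sides, the ribbon needs both a direct and a dual triangle on $e$. Your check ``no two triangles share an edge'' then fails, and this situation can be forced. Let $S\setminus\Sigma$ be a horizontally winding lattice path built from long staircases such as $(0,0)\to(1,0)\to(1,1)\to(2,1)\to\cdots$, alternately rising and falling. Since $G^\#_{\mathrm{sq}}$-adjacency is exactly ``diamonds touch'', and diamonds of $\Sigma$ on opposite sides of this path never touch, $\Sigma$ contains no $W\in\mathscr{W}$. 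The edge from $(1,0)$ to $(1,1)$, however, meets its only free neighbours along the opposite sides $[(1,0),(\tfrac12,\tfrac12)]$ and $[(1,1),(\tfrac32,\tfrac12)]$ of its diamond, and its other two sides border $\Sigma$. Hence every winding ribbon in $S\setminus\Sigma$ has two triangles on that edge, and under the ribbon definition in force the statement fails. You need the relaxed notion in which one direct and one dual triangle may share an edge, presumably the ``more permissive'' condition the paper cites from Bombin and Martin-Delgado. You must then check that closed ribbon operators on such strips keep the properties used downstream. The paper's $\tau_i=(s_{i-1/2},s_{i+1/2},e_i)$ fails to be a triangle in exactly this situation.

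\emph{String nets.} Requiring $\bigstar(\gamma)\cap\Sigma=\emptyset$ means $\gamma$ visits no endpoint of any edge of $\Sigma$, and radius $3$ in $G'$ is too small to guarantee a $\gamma$ with this property. Let $x_0,x_3$ be the edges leaving a hexagon $H$ at opposite corners $h_0,h_3$. Then $d_{G'}(x_0,x_3)=4$, realised by the route $x_0$, an edge of $H$ at $h_0$, the face $H$, an edge of $H$ at $h_3$, $x_3$; one checks nothing shorter exists. Let $\Sigma$ be a chain of edges from the top of $S$ ending in $x_0$, together with a chain from $x_3$ to the bottom. Take both chains running straight away from $H$, so every cross pair has $d_{G'}\ge 4$; then $\Sigma$ contains no $W\in\mathscr{W}$. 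Yet the two chains, together with the segment $h_0h_3$ through the open face $H$, form a top-to-bottom curve. A winding lattice path can cross this curve only at an endpoint of an edge of $\Sigma$. So no admissible $\gamma$ exists, and the statement is false for $G^\#_{\mathrm{hex}}$ and $G^\#_{\mathrm{hex,mult}}$ as defined. Concretely, your curve hugging $\Sigma^+$ must go around the bottom of $H$ through $h_3$, which puts $x_3\in\bigstar(\gamma)$ at distance $4$ from $\Sigma^+$.

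What the continuum argument actually supports, yours or the paper's, is the version with $d_{G'}\le 4$. Indeed the paper's cells $X_{\{x_0\}}$ and $X_{\{x_3\}}$ meet at the centre of $H$, contrary to its claimed correspondence with $G^\#_{\mathrm{hex}}$. That version has a larger degree and hence a smaller $T_\star$. Separately, the step ``adjacent to $\Sigma^+$, hence in $\Sigma^+$'' gives a contradiction only for edges already known to lie outside $\Sigma^+$. So your curve must bound the vertex-thickening of $\Sigma^+$, not $\Sigma^+$ itself.
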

The above result, which is proved in the following section, can quickly be used to prove \cref{lem:indist q double}, as we now show. Recall that the super-projector $\mathcal{P}^{\mathfrak{K}_\leftrightarrow}_R$ acts on states $\sigma_R$ as
\begin{align}
	\mathcal{P}^{\mathfrak{K}_\leftrightarrow}_R[\sigma_R] &= \sum_{\mu \in \mathrm{Anyons}} \dyad{x; \mu}_R\sigma_R\dyad{x; \mu}_R.
\end{align}
Since $S \setminus \Sigma$ contains a ribbon $r_\leftrightarrow$ in the homology class $(1,0)$, the corresponding ribbon operator $K_{r_\leftrightarrow}^\mu$ is supported in $S \setminus \Sigma$, and by construction satisfies
\begin{align}
	K_{r_\leftrightarrow}^\mu V_{Q \leftarrow R}^{\mathcal{C}} = V_{Q\leftarrow R}^{\mathcal{C}} \dyad{x; \mu}_R
	\label{eq:ribbon enc action}
\end{align}
where $V_{Q\leftarrow R}^{\mathcal{C}}$ is the encoding isometry appearing in the channel $\mathcal{C}_{Q \leftarrow R}$ [Eq.~\eqref{eq:encoding}].
for any state $\sigma_R$ we have
\begin{align}
	(\tr_{S \setminus \Sigma} \circ \mathcal{C}_{Q \leftarrow R} \circ \mathcal{P}^{\mathfrak{K}_\leftrightarrow}_R)[\sigma_R] &= \sum_{\mu \in \mathrm{Anyons}} \tr_{S \setminus \Sigma}\left[ V^{\mathcal{C}}_{Q \leftarrow R} \dyad{x;\mu}_R \sigma_R \dyad{x;\mu}_R (V^{\mathcal{C}}_{Q \leftarrow R})^\dagger \right] \nonumber\\
	&= \sum_{\mu \in \mathrm{Anyons}} \tr_{S \setminus \Sigma}\left[ K_{r_\leftrightarrow}^\mu V^{\mathcal{C}}_{Q \leftarrow R} \sigma_R (V^{\mathcal{C}}_{Q \leftarrow R})^\dagger K_{r_\leftrightarrow}^\mu \right] & \text{by Eq.~\eqref{eq:ribbon enc action}} \nonumber\\
	&= \sum_{\mu \in \mathrm{Anyons}} \tr_{S \setminus \Sigma}\left[ (K_{r_\leftrightarrow}^\mu)^2 V^{\mathcal{C}}_{Q \leftarrow R} \sigma_R (V^{\mathcal{C}}_{Q \leftarrow R})^\dagger  \right] \nonumber\\
	&= \sum_{\mu \in \mathrm{Anyons}} \tr_{S \setminus \Sigma}\left[ V^{\mathcal{C}}_{Q \leftarrow R} \dyad{x;\mu} \sigma_R (V^{\mathcal{C}}_{Q \leftarrow R})^\dagger  \right] & \text{by Eq.~\eqref{eq:ribbon enc action}} \nonumber\\ &= \tr_{S \setminus \Sigma}\left[ V^{\mathcal{C}}_{Q \leftarrow R} \sigma_R (V^{\mathcal{C}}_{Q \leftarrow R})^\dagger  \right] = \tr_{S \setminus \Sigma} \circ \mathcal{C}_{Q \leftarrow R}[\sigma_R].
    \label{eq:local indist logical}
\end{align}
In going to the third line, we have used the cyclic property of the partial trace $\tr_{S\setminus \Sigma}[ABC] = \tr_{S\setminus \Sigma}[CAB]$ for any operator $C$ supported on $S \setminus \Sigma$, and in going to the final line we use $\sum_{\mu \in \mathrm{Anyons}} \dyad{x; \mu}_R = I_R$, which follows from the fact that $\{\ket{x;\mu}\}_{\mu \in \mathrm{Anyons}}$ forms a basis for $\mathscr{H}_R$. The above implies the necessary equality \eqref{eq:indistinguishable}, with $\mathcal{F}_{Q \leftarrow R}$ and $\mathcal{G}_{Q \leftarrow R}$ specified in Eq.~\eqref{eq:channel choice qd}. \hfill $\square$\\

\textit{Proof of \cref{lem:logical path}.~---~}This result can be intuitively understood via the schematic illustration depicted in the main text as Fig.~\ref{fig:local-and-global-approximate-indistinguishability}: If $\Sigma$ doesn't contain a connected region that joins the top and bottom boundaries of $S$, then there will be some path in its complement $\Sigma^c = S \setminus \Sigma$ that winds around $S$ in the orthogonal direction. Our proof makes this intuition precise at the lattice level.

We start with the quantum double model on the square lattice. The graph $(V, E)$ can be embedded on the torus $T^2 = \mathbb{R}^2 / (L_x \mathbb{Z} \times L_y \mathbb{Z})$ in the standard way, with each vertex $v \in X$ located at a point $\vec{r}_v = (x_v, y_v)$, where $x_v \in \{0, 1, \ldots, L_{\rm x}-1\}$, $ y_v \in \{0, 1, \ldots, L_{\rm y}-1\}$, and each edge $e = \{v_1, v_2\} \in E$ located at the midpoint $\vec{r_e}$ between $\vec{r}_{v_1}$ and $\vec{r}_{v_2}$.  To each edge $e \in E$, we associate its (closed) Voronoi cell $U_e$, namely the set of points that are no closer to any other edge than to $e$. Explicitly, these are `diamond-shaped' regions $U_e = \{\vec{r}\in T^2 : \|\vec{r} - \vec{r}_e\|_1 \leq 1/2\}$, which naturally tesselate the plane.

This construction relates conveniently with the edge graph $G^\#_{\mathrm{sq}}$ defined in Eq.~\eqref{eq:edge graph square}: for any two edges $e_{1,2} \in E_{\rm sq}$ we have
\begin{align}
	\{e_1, e_2\} \in L_{\mathrm{sq}} \;\; \Longleftrightarrow \;\; X_{e_1} \cap X_{e_2}\neq \varnothing.
	\label{eq:edge graph X square}
\end{align}
(Since the regions are closed, the condition on the right includes cases where $X_{e_1}$ and $X_{e_2}$ touch at a corner). Thus, a subset of edges $S' \subset E$ is graph-connected if and only if the associated region $X_{S'} \coloneqq \bigcup_{e \in S'} U_{e'}$ is path-connected. Moreover, if $X_{S'}$ contains a particular continuous path $\mu : [0,1] \rightarrow X_{S'}$, then we can define an associated path $\gamma_\mu$ on the graph $G_{\mathrm{edge}}|_{S'}$, by breaking $\mu$ into continuous segments $\mu = \mu_1 \circ \mu_2 \circ \cdots$ such that the segment $\mu_i$ is contained in a single region $X_{e_i}$ for some $e_i \in {S'}$. We then choose $\gamma_\mu = (e_1, e_2, \ldots)$. 

Our interest is in subsets $\Sigma \subseteq S$, for which the corresponding regions $X_\Sigma$ will all be contained in $X_S \subset T^2$. We write $\partial_{\rm b} X_S$ and $\partial_{\rm t} X_S$ for the bottom and top boundaries of $X_S$, respectively. The condition on $\Sigma$ in the lemma implies that $X_\Sigma$ contains no path connecting $\partial_{\rm b}X_S$ to $\partial_{\rm t}X_S$. Therefore, if $\{Y_i\}_i$ are the connected components of $X_\Sigma$, then no $Y_i$ can intersect both $\partial_{\rm b}X_S$ and $\partial_{\rm t}X_S$. Now define
\begin{align}
	Z &\coloneqq \partial_{b}X_S \cup \bigcup_{i : \delta_i \cap \partial_{b}X_S \neq \varnothing} Y_i
	\label{eq:conn region def}
\end{align}
which is path-connected. In words, $Z$ is the union of the bottom boundary of $X_S$ and the components $Y_i$ that touch the bottom boundary. This region, which contains the line-like subset $\partial_{b}X_S$ and the area-like subsets $Y_i$, is colored yellow in Fig.~\ref{fig:diamonds}(a). Then let $Z^\epsilon \coloneqq \{\vec{r} \in X_{S} : \text{dist}(\vec{r}, Z) < \epsilon\}$ the set of points in $X_S$ that are at most $\epsilon$-from $Z$ for some $0 < \epsilon < 1/\sqrt{2}$. Since $\partial_{\rm b} X \subseteq Z$, we have $\partial_{\rm b} X \subset \text{Int}(Z^\epsilon)$, where $\text{Int}(U) \coloneqq U \setminus \overline{U}$ is the interior of a region $U$, whose closure is denoted $\overline{U}$. Since any two diamonds $X_e$, $X_{e'}$ either overlap or are at least a distance $1/\sqrt{2}$ away, $\overline{Z^\epsilon}$ must have trivial intersection with $\partial_{\rm t} X$, and with any $Y_i$ not appearing in \eqref{eq:conn region def}, namely those for which $Y_i \cap \partial_1 X_S = \varnothing$. As a result, the boundary $\partial Z^{\epsilon}$ (taken relative to the topology of $X_S$) is contained within $X_S \setminus X_{\Sigma}$, and separates $\partial_{\rm b} X$ from $\partial_{\rm t} X$. This suffices to show that one of the connected components of $\partial Z^{\epsilon}$ is a non-contractible loop $\mu$ supported in $X_{S} \setminus X_\Sigma$. Moreover, this path does not pass through any point in $\mathbb{Z}^2$ or $(1/2,1/2) + \mathbb{Z}^2$, namely the locations where two diamonds touch at a corner.

\begin{figure}
	\centering
	\includegraphics[width=\textwidth]{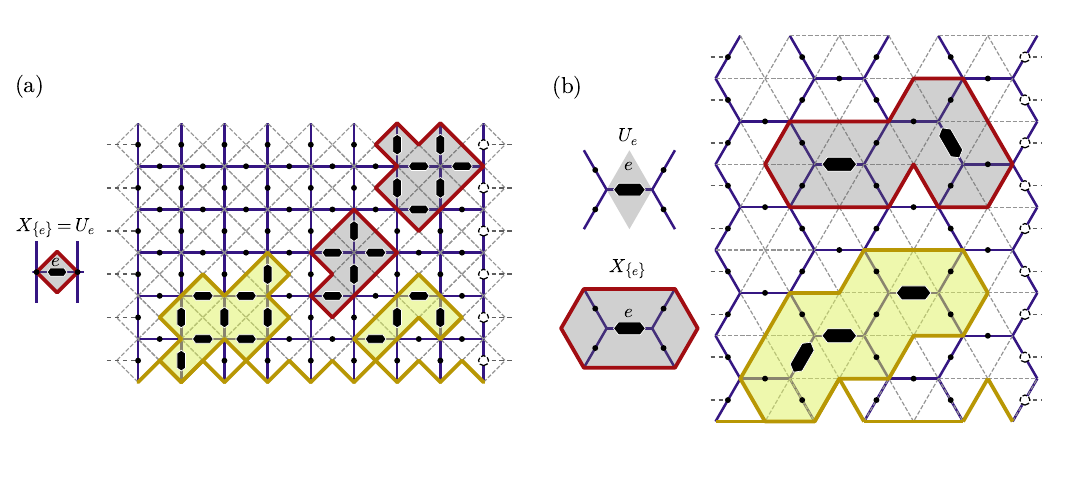}
	\caption{Geometric constructions used in the proof of \cref{lem:logical path}, for (a) the square lattice, corresponding to the quantum double model $D(G)$, and (b) the hexagonal lattice, corresponding to the general Levin-Wen string net model (here without vertex qubits). These illustrate our explicit lattice-level construction of the schematic diagram shown in Fig.~\ref{fig:local-and-global-approximate-indistinguishability}. In each case, we show only the region $S$, which is a strip of edges with rough boundaries on the top and bottom,  winding around the torus in the $x$-direction (here the left and right boundaries are identified). For each edge $e \in S$ (black dots), we identify the Voronoi cell $U_e \subset \mathbb{R}^2$,  boundaries of which are shown as gray dashed lines. Given a subset of qudits $\Sigma \subseteq  E$ (large black bars), we define a region $X_\Sigma$ (shaded regions). For the square lattice, we choose $X_\Sigma = \bigcup_{e \in \Sigma}U_e$, and in particular $X_{\{e\}} = U_e$ [left inset of (a)]. For the hexagonal lattice, we instead choose $X_{\{e\}} = U_e \cup \bigcup_{e' \sim e} U_{e'}$, where the union is over all $e'$ that share a vertex with $e$ [left inset of (b)]. The region $Z \subset \mathbb{R}^2$ (yellow areas and lines), defined in Eq.~\eqref{eq:conn region def} is made up of those connected components of $X_\Sigma$ that touch the bottom boundary $\partial_1 X$, along with $\partial_1 X$ itself.}
	\label{fig:diamonds}
\end{figure}

The loop $\mu(t)$, parametrised by $t \in [0,1]$ can now be divided into segments $\mu_1, \mu_2, \ldots$ such that each $\mu_i(t)$ is contained within a single $X_{e_i}$ for some $e_i \in \Sigma^\mathrm{c}$. As explained above, the start and end points of each $\mu_i$ are on the sides of the given diamond, not the corners. Each segment $\mu_i(t)$ can be continuously deformed such that it passes through $\vec{r}_{e_i}$ exactly once, while staying within $X_{e_i}$, while keeping the start and end points fixed. Now consider the sequence of edges $(e_1, e_2, \ldots )$ passed through by the deformed $\mu$, in order. If any edge $e \in \Sigma^\mathrm{c}$ appears more than once, let $t_1$ and $t_2$ be two parameters at which $\mu(t_1) = \mu(t_2) = \vec{r}_e$, with $t_1 < t_2$. Then, let $\mu'$ be the loop defined by the restriction of $\mu$ to $[t_1, t_2]$, and $\mu''$ the complement, such that $\mu = \mu'' \cdot \mu'$, where $\cdot$ denotes concatenation of loops. Since $\mu$ winds non-trivially in the $y$ direction, the same must be true of at least one of $\mu'$ or $\mu''$. Then replace $\mu$ by one of these homologically non-trivial components. This process can be repeated until no edge appears more than once in the sequence. Define the corresponding graph path $\gamma_\mu = (e_1, e_2, \ldots, e_{|\gamma|}, e_1)$.

Now, for any two sequential edges $e_i, e_{i+1}$, the corresponding diamond regions $X_{e_i}$ and $X_{e_{i+1}}$ must share an edge, since $\mu$ does not pass through any corners between diamonds. This implies that $e_i$ and $e_{i+1}$ share a vertex and a plaquette. Equivalently, one can find a unique site $s_{i+1/2}$ that is shared by both edges. A strip $r$ can then be formed from the sequence of triangles $\tau_i = (s_{i-1/2}, s_{i+1/2}, e_i)$, each of which is supported in $\Sigma^\mathrm{c}$. This forms a valid ribbon since no edge $e_i$ appears multiple times in the sequence.

The argument for the string net model follows a very similar line of reasoning. We embed the hexagonal lattice on the torus in the standard way, placing edges $e = (v_1, v_2) \in E_{\mathrm{hex}}$ at the midpoints between vertices, and define Voronoi cells $U_e$, which are rhombi whose corners are at $\vec{r}_{v_1}, \vec{r}_{f_1}, \vec{r}_{v_2}, \vec{r}_{f_2}$, with $f_{1,2}$ the two faces neighbouring $e$ [see Fig.~\ref{fig:diamonds}(b)]. Rather than choosing $X_{\{e\}} = U_e$ for singletons, this time we choose
\begin{align}
	X_{\{e\}} = U_e \cup \bigcup_{\substack{e' \in S \\e' \overset{V}{\sim} e}} U_{e'}
\end{align}
where again $\overset{V}{\sim}$ denotes vertex-adjacency. This choice also corresponds nicely with the chosen edge graph $G^\#_{\mathrm{hex}}$ [Eq.~\eqref{eq:edge graph hex}], in that the natural analogue of Eq.~\eqref{eq:edge graph X square} holds.  If the string-net model features multiplicities, then we also have to consider qudits at the vertices, and so to each vertex $v$ we associate the region $X_v = \bigcup_{e : v \in e} U_e$, which is in the shape of a regular hexagon centred on $v$. The natural analogue of Eq.~\eqref{eq:edge graph X square} then also holds for $G^\#_{\mathrm{hex, mult}}$.

By identical reasoning, we can then construct a path $\mu$ contained in $X_S \setminus X_{\Sigma}$ that winds nontrivially in the $(1,0)$ direction, never visits any Voronoi cell twice, and never passes through a vertex or the midpoint of a face. Defining $\gamma_\mu = (e_1, e_2, \ldots, e_{|\gamma_\mu|}, e_1)$ as before, each sequential pair of edges $e_t, e_{t+1}$ must have Voronoi cells meeting at a line, not just a corner, and thus they must share a unique vertex $v_{t+1/2}$. This defines a cyclic path on the primal lattice $(v_{1 + 1/2}, v_{2 + 1/2}, \ldots, v_{|\gamma_\mu| + 1/2}, v_{1+1/2})$, which is indeed a path since $(v_{t - 1/2},v_{t+1/2}) = e_t \in E_{\mathrm{hex}}$, that winds non-trivially. By construction, $e_t \notin \Sigma$, and thanks to the structure of $X_{\{e\}}$, we have $e' \notin \Sigma$ for all $e' \overset{V}{\sim} e_t$. Thus, every edge adjacent to $v_t$ is not contained in $\Sigma$. In the case of fusion multiplicities, we can show the same, and also that any vertex in the path is not contained in $\Sigma$ either. Therefore, $\gamma_\mu$ can be used to define a closed string-like operator supported on the region $\bigstar(\gamma_\mu)$ [Eq.~\eqref{eq:string net ribbon}], which is entirely contained in $S \setminus \Sigma$.
\hfill $\square$

\section{Improved bound for $\mathbb{Z}_2$ surface code \label{apdx:z2 improvements}}

As outlined in Section \ref{sec:sow-mapping}, we can obtain a quantitatively improved bound on the recovery threshold for the $\mathcal{Z}_2$ surface code by using the alternative decomposition [given in the main text as Eq.~\eqref{eq:delta decomp X}] 
\begin{align}
    \mathcal{\Delta}_{Q \leftarrow R}  &= \sum_{\Sigma \in P(S)} \tilde{\mathcal{\Delta}}_{Q \leftarrow R}^{\,(\Sigma; X)}, &\text{where }   \tilde{\mathcal{\Delta}}_{Q \leftarrow R}^{\,(\Sigma; X)} \coloneqq \left(\left[\bigotimes_{\ell \in \Sigma}(\mathrm{id}_\ell - \mathcal{E}_\ell^X) \right]\otimes \left[\bigotimes_{\ell \in \Sigma^\mathrm{c}}  \mathcal{E}_\ell^X \right] \right) \circ \mathcal{\Delta}_{Q \leftarrow R}.
    \label{eq:delta decomp X repeat}
\end{align}
in place of Eq.~\eqref{eq:channel sum J}. Here, $ \mathcal{E}_\ell^X[\sigma_\ell] = \frac{1}{2}(\sigma_\ell + X_\ell \sigma_\ell X_\ell)$ is the complete dephasing channel in the $X$-basis for qubit $\ell$. In this appendix, we use this method to prove \cref{lem:SOW mapping surface}.

First, we prove Eqs.~(\nopareneqref{eq:sigma constraint X}, \nopareneqref{eq:sigma constraint X even}), which describe the constraints on the regions $\Sigma \in \mathscr{P}(S)$. Recall the definition $\mathcal{\Delta}_{Q \leftarrow R} \coloneqq \mathcal{C}_{Q \leftarrow R}\circ(\mathrm{id}_R -  \mathcal{P}^{\mathfrak{X}}_R)$. Since the map $(\mathrm{id}_R -  \mathcal{P}^{\mathfrak{X}}_R)$ annihilates the operators $I_R, X_R$, the map $\mathcal{\Delta}_{Q \leftarrow R}$ is fully determined by its action on the span of the operators $Y_R, Z_R$, which is this case gives
\begin{align}
	\mathcal{\Delta}_{Q \leftarrow R}[c_Y Y_R + c_Z Z_R] &= \mathbb{A}_V (\iu c_Y \bar{X} + c_Z) \mathbb{B}_P \bar Z.
    \label{eq:delta Z2 image}
\end{align}
Recall that $\mathbb{B}_P = \prod_{p} \mathbb{B}_p$, with $\mathbb{B}_p  = (I+B_p)/2$, and $B_p = \prod_{\ell \in p}Z_p$ is the stabiliser for plaquette $p$. Here, $\bar{X}$ and $\bar{Z}$ can be any representative logical string operators. For concreteness, we fix $\bar{Z}$ to be the product of Pauli-$Z$ operators on each of the leftmost links. Letting $P_S$ denote the subset of plaquettes whose support intersects $S$, it is helpful to write $\mathbb{B}_P = \mathbb{B}_{P_S} \mathbb{B}_{P \setminus P_S}$, where $\mathbb{B}_{P'} \coloneqq \prod_{p \in P'}\frac{I+B_p}{2}$ is the product of stabilizer projectors in the subset $P' \subseteq P$. We also write $\bar{Z} = \bar{Z}_S\bar{Z}_{S^c}$, where $\bar{Z}_S$ is the product of Pauli-$Z$s on the links on the leftmost column that are contained in $S$, and similar for $\bar{Z}_{S^c}$.

Now, choose a $\Sigma \in \mathscr{P}(S)$. Since $\mathcal{E}^X_\ell[X_\ell O_\ell] = X_\ell\mathcal{E}^X_\ell[ O_\ell]$, and the map inside the brackets in \eqref{eq:delta decomp X repeat} acts only on $S$, the corresponding map $\tilde{\mathcal{\Delta}}_{Q \leftarrow R}^{\,(\Sigma; X)}$ acts as
\begin{align}
    \tilde{\mathcal{\Delta}}_{Q \leftarrow R}^{\,(\Sigma; X)}[c_Y Y_R + c_Z Z_R] = \mathbb{A}_V (\iu c_Y \bar{X} + c_Z)\mathbb{B}_{P \setminus P_S} \bar{Z}_{S^c}\cdot  \left(\left[\bigotimes_{\ell \in \Sigma}(\mathrm{id}_\ell - \mathcal{E}_\ell^X) \right]\otimes \left[\bigotimes_{\ell \in \Sigma^\mathrm{c}}  \mathcal{E}_\ell^X \right] \right)[\mathbb{B}_{P_S} \bar{Z}_S]
    \label{eq:delta sigma X proj}
\end{align}
Now, the operator $\mathbb{B}_{P_S}\bar{Z}_S$ can be decomposed as a sum of $2^{|P_S|}$ distinct $Z$-type Pauli strings by expanding out the product $\mathbb{B}_{P_S} =\prod_{p \in P_S}\frac{I+B_p}{2}$. This can be conveniently represented as a sum over Ising variables $\vec{\sigma} \in \{\pm 1\}^{P_S}$, where $\sigma_p = -1$ ($+1$) indicates that the term $B_p$ ($I$) is chosen. Explicitly, we have $\mathbb{B}_{P_S}\bar{Z}_S = 2^{-|P_S|}\sum_{\vec{\sigma} \in \{\pm 1\}^{P_S}} Z_{\vec{\sigma}}$, where $Z_{\vec{\sigma}} \coloneqq \bar{Z}_{S} \prod_{p \in P_S} B_p^{(1+\sigma_p)/2} = \prod_{\ell \in S[\vec{\sigma}]} Z_\ell$ is a Pauli string supported on a region $S[\vec{\sigma}] \subseteq S$, which can be formally specified as $S[\vec \sigma] \coloneqq \{\ell \in S : \prod_{p \in \partial^\top \ell} \sigma_p = (-1)^{\delta\{\ell \in S_{\mathrm{left}}\}+1}\}$, with $S_{\mathrm{left}}$. Since $\mathcal{E}^X_\ell[Z_\ell] = 0$ and $(\mathrm{id}_\ell - \mathcal{E}^X_\ell)[I_\ell] = 0$, the map \eqref{eq:delta sigma X proj} vanishes unless there is some $\vec{\sigma} \in \{\pm 1 \}^{P_S}$ such that $S[\vec{\sigma}] = \Sigma$. This defines a bijection between regions $\Sigma \in \mathscr{P}(S)$ for which $\tilde{\mathcal{\Delta}}_{Q \leftarrow R}^{\,(\Sigma; X)}$ is nonzero, and spin configurations $\vec{\sigma} \in \mathscr{P}(S)$; we will thus use $\Sigma$ and $\vec \sigma$ interchangeably, where meaning is obvious.

\begin{figure}
    \centering
    \includegraphics[width=0.5\linewidth]{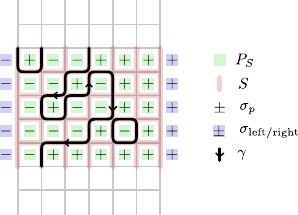}
    \caption{Example of a plaquette spin configuration and the resulting self-osculating walk.}
    \label{fig:sow-ising}
\end{figure}


For a given configuration $\vec{\sigma}$, we can interpret the corresponding region $\Sigma = S[\vec{\sigma}] \subseteq S$ as a collection of domain walls, using the following argument. For each edge $\ell$ not on one of the smooth boundaries of $S$, there are exactly two plaquettes, $p_1, p_2 \in P_S$ that contain $\ell$, in which case we have $\ell \in S[\vec{\sigma}]$ if and only if $ \sigma_{p_1} = - \sigma_{p_2}$. Edges on the left and right boundaries of $S$ only neighbour one plaquette $p_1$, so it is helpful to enlarge the set $P_S$ by introducing an extra fictitious plaquette $p_2(\ell)$, with a corresponding  spin variable $\sigma_{p_2(\ell)} \in \{\pm 1\}$, for each such $\ell$, as illustrated in \cref{fig:sow-ising}. These extra spins are fixed to $-1$ ($+1)$ if $\ell \in S_\mathrm{left}$ ($\ell \in S_\mathrm{right}$). With these included, $S[\vec{\sigma}]$ is precisely the set of edges on which the two neighbouring plaquette spins $\sigma_{p_1}, \sigma_{p_2}$ differ, i.e.~$\ell$ hosts a domain wall. Since domain walls can only terminate on the rough boundaries of $S$, for any configuration $\sigma$, there must be an even number of domain walls around each vertex $v \in V_S$. This suffices to prove the relation Eq.~\eqref{eq:sigma constraint X even}.

Following the well-known contour representation of domain walls in the Ising model \cite{friedli2017}, we now describe a unique way to decompose $\Sigma = S[\vec{\sigma}]$ into a collection of subsets $\Gamma[\vec \sigma] \equiv \Gamma[\Sigma] = \{\gamma_1, \gamma_2, \ldots \}$, such that $S[\vec{\sigma}] = \bigcup_{\gamma \in \Gamma[\vec{\sigma}]} \gamma$. This will refine the decomposition described in Section \ref{sec:upper bound simple}. In particular,  each $\gamma_i$ will correspond to the edges visited by a directed self-osculating walk. 

In the first step, mirroring the procedure described in Section \ref{sec:upper bound simple}, we decompose the domain walls $\Sigma$ into maximally connected components $C_1, C_2, \ldots$, where we consider two edges $\ell, \ell \in S$ to be adjacent if and only if they share a vertex $v \in V_S$. For each such component $C_i$, the extra condition \eqref{eq:sigma constraint X even} implies that every $v \in V_S$ will be contained in an even number of edges in $C_i$, that is $\text{deg}_{C_i}(v) = 0 \mod 2$. If $\text{deg}_{C_i}(v) \in \{0,2\}$ for all $v \in V_S$, then $C_i$ is automatically a path, which is either closed, or terminates at the rough boundaries of $S$. If instead $\text{deg}_{C_i}(v) = 4$ for some vertices $v \in V_S$, then $C_i$ may not uniquely correspond to a path. Nevertheless, we can decompose $C_i$ into a collection of paths using the following deformation rule
\tikzset{midarrow/.style={decorate, decoration={
			markings,
			mark=at position 0.5 with {\arrow{>}}}}}
\begin{equation} \label{eq:domain-wall-ambiguity}
	\begin{tikzpicture}[scale=0.25, baseline=-0.5ex]
		\node[] (bl) at (-1, -1) {$+$};
		\node[] (br) at (1, -1) {$-$};
		\node[] (tl) at (-1, 1) {$-$};
		\node[] (tr) at (1, 1) {$+$};
		\draw [thick] (-2, 0) -- (2, 0);
		\draw [thick] (0, -2) -- (0, 2);
	\end{tikzpicture}
	\quad \longrightarrow \quad
	\begin{tikzpicture}[scale=0.25, baseline=-0.5ex]
		\node[] (bl) at (-1, -1) {$+$};
		\node[] (br) at (1, -1) {$-$};
		\node[] (tl) at (-1, 1) {$-$};
		\node[] (tr) at (1, 1) {$+$};
		\draw [color=gray!50,] (-2, 0) -- (2, 0);
		\draw [color=gray!50,] (0, -2) -- (0, 2);
		\draw [rounded corners=2mm, thick] (-2, 0)--(0, 0)--(0, 2);
		\draw [rounded corners=2mm, thick] (0, -2)--(0, 0)--(2, 0);
	\end{tikzpicture}\,,
\end{equation}
and $90^\circ$ rotations thereof (i.e.~the paths are drawn such that the regions where $\sigma_p = +1$ stay together). Applying this rule at every vertex yields a collection of paths, and we combine all the paths generated from each connected component $C_i$ into a set which we denote $\Gamma[\Sigma] = \{\gamma_1, \gamma_2, \ldots \}$.
Each path in $\Gamma[\Sigma]$ does not self-intersect, but can self-osculate according to the rule \eqref{eq:domain-wall-ambiguity}. Furthermore, for convenience we will assign a direction to each path, such that the Ising spins with $\sigma_p = +1$ are on the left from the point of view of an observer following the directed path.  Note that this construction is distinct from the decomposition $\Gamma[\Sigma] = \{C_1, \ldots , C_{|\Gamma[\Sigma]|}\}$ used in the proof of \cref{lem:bound-norm-connectivity}; however we use the same notation since this will play an identical role in the following.

Due to the fixed and opposite boundary conditions for the spin configuration $\vec \sigma$ on the left and right boundaries, and our choice of deformation rule \eqref{eq:domain-wall-ambiguity}, the set $\Gamma[\Sigma]$ will always contain at least one path $\gamma$ that starts on the top boundary of $S$, and ends on the bottom boundary. To see this, let $P^-_\sigma \subseteq P_S$ be the maximal set of plaquettes that is 1) connected with respect to the dual lattice, 2) has $\sigma_p = -1$ for all $p \in P^-_\sigma$, and 3) contains all the extra plaquettes on the left side. Thanks to our choice of deformation rule \eqref{eq:domain-wall-ambiguity}, each connected component of the boundary $\partial P^-_\sigma \subset S$ will correspond to exactly one element $\gamma \in \Gamma[\Sigma]$ (note that this may not be true for other deformation rules, such as if we fix the osculating rule irregardless of Ising configuration). Because $P^-_\sigma$ contains the leftmost plaquettes and none of the rightmost plaquettes, one of these components must connect the top boundary of $S$ to the bottom boundary of $S$.

These paths, which we distinguish with a superscript $\gamma^+$, are a more refined version of the spanning walks described in the main text (Sec.~\ref{sec:upper bound simple}), and are now endowed with a direction. Mirroring the notation used in our proof of \cref{lem:bound-norm-connectivity}, we denote the set of all possible spanning walks as $\mathscr{W}^+$, and write $\Gamma^{\mathscr{W}}[\Sigma] \coloneqq \Gamma[\Sigma] \cap \mathscr{W}^+$. We can then follow the same steps as the proof of \cref{lem:bound-norm-connectivity} to arrive at the natural analogue of Eq.~\eqref{eq:sum J decomp}. That is, for any $\Gamma' = \{\gamma^+_1, \ldots, \gamma^+_n\} \subset \mathscr{W}^+$, we define $\tilde{\mathcal{\Delta}}^{\Gamma'} = \sum_{\Sigma : \Gamma[\Sigma] \supseteq \Gamma'}\tilde{\mathcal{\Delta}}^{(\Sigma)}$, and
\begin{align}
    \mathcal{\Delta}_{Q \leftarrow R} = \sum_{\Sigma \in \mathscr{P}(A)}  \mathcal{\Delta}_{Q \leftarrow R}^{(\Sigma)} = \sum_{n=1}^{|A|} (-1)^{n+1} \sum_{\substack{\Gamma' \in \Omega^{\mathscr{W}}\\|\Gamma'|=n }} \tilde{\mathcal{\Delta}}^{\Gamma'} .
	\label{eq:sum J decomp z2}
\end{align}
Now, a for a given collection of directed self-osculating walks $\Gamma' = \{\gamma^+_1, \ldots, \gamma^+_n\}$, we can characterise all the configurations $\Sigma$ that satisfy $\Gamma[\Sigma] \supseteq \Gamma'$. For each $\gamma \in \Gamma'$, we must have $\Sigma \supseteq \gamma$. Thus, we define $X[\Gamma'] = \bigcup_{\gamma \in \Gamma'}\gamma$, and by analogy to \eqref{eq:Gamma J cond} we have $\Gamma[\Sigma] \supseteq \Gamma' \Rightarrow X[\Gamma'] \subseteq \Sigma  $. Furthermore, for each $\gamma \in \Gamma'$, and for each vertex $v \in V_S$ visited by $\gamma$, there will be exactly two edges adjacent to $v$ that are not contained in a given walk $\gamma$. If $\gamma$ takes a straight-line path or a left-hand turn through $v$, then these two edges must not be in $\Sigma$. We collect all such edges into a set $Y[\Gamma']$. Note that if $v$ takes a right hand turn then these edges may or may not be contained in $\Sigma$, so we do not include these in $Y[\Gamma']$.

With these definitions, we recover the natural analogue of the relation \eqref{eq:Gamma J cond}, and in turn the expression \eqref{eq:delta gamma xy}, with the appropriate substitutions. The following steps then proceed identically, and we find
\begin{align}
    \norm{\mathcal{N}_{S \leftarrow Q}\circ \mathcal{\Delta}_{Q \leftarrow R} }_\diamond \leq \frac{2}{\ln 2} \mathscr{Z}^{\text{SOW}}_{\Lambda(w_\mathrm{x}, L_\mathrm{y})}
\end{align}
where $\mathscr{Z}^{\text{SOW}}_{\Lambda(w_\mathrm{x}, L_\mathrm{y})}$ is the weighted sum over spanning self-osculating walks $\gamma \in \mathscr{W}^+$. This is equivalent to the partition function of SOWs starting at $(0,y)$ and ending at $(w_\mathrm{x}, y')$ within the rectangular region $S$ for any $y, y' \in [0, L_\mathrm{y})$, as defined in Eq.~\eqref{eq:partition SOW}. This proves \cref{lem:SOW mapping surface}. \hfill $\square$\\

The partition function $\mathscr{Z}^\mathrm{SOW}_{\Lambda(w_\mathrm{x}, L_\mathrm{y})}$ can be upper bounded by partition function of all SOWs of length $w_\mathrm{x}$ or greater and less than some $O(w_\mathrm{x}L_\mathrm{y})$ starting from $(0, 0)$ without further restrictions, multiplied by $L_\mathrm{y}$ representing the entropic factor of all possible starting positions:
\begin{align}
	\mathscr{Z}^\mathrm{SOW}_{\Lambda(w_\mathrm{x}, L_\mathrm{y})} \leq L_\mathrm{y} \sum_{\gamma : |\gamma| \geq w_\mathrm{x}}^{O(w_\mathrm{x} L_\mathrm{y})} e^{-\beta |\gamma|}.
\end{align}
The total number of SOWs of length $n$ on $\Lambda$ can be  upper bounded by $(\mu_\Lambda^\mathrm{SOW})^n$, where $\mu_\Lambda^\mathrm{SOW}$ is the connective constant of SOWs. Therefore we have
\begin{align}
	\mathscr{Z}^\mathrm{SOW}_{\Lambda(w_\mathrm{x}, L_\mathrm{y})} \leq L_\mathrm{y} \sum_{n=w_\mathrm{x}}^{O(w_\mathrm{x} L_\mathrm{y})} (\mu_\Lambda^\mathrm{SOW})^n e^{-\beta n}.
\end{align}
Assuming that $\beta > \beta_\mathrm{c}^\Lambda$, the geometric series on the right hand side converges, and is upper bounded by the identical sum with an upper limit $n = \infty$,
\begin{align}
	\mathscr{Z}^\mathrm{SOW}_{\Lambda(w_\mathrm{x}, L_\mathrm{y})} \leq L_\mathrm{y} \frac{e^{-(\beta - \beta_\mathrm{c}) w_\mathrm{x}}}{1 - e^{-(\beta - \beta_\mathrm{c})}} \quad \text{ for } \quad \beta > \beta_\mathrm{c}^\Lambda,
\end{align}
thus proving Eq.~\eqref{eq:SOW bound}.

\section{Proofs of Corollaries following Theorem \ref{thm:fat}} \label{apdx:corollaries}
In the main text (\cref{sec:summary-sc}), we outlined how the existence of thickened logical operator, proved in \cref{thm:fat}, can be used to rigorously demonstrate that the various hallmarks of topological order defined in Section \ref{subsec:diagnostics} are preserved---namely Corollaries \ref{cor:lre}, \ref{cor:anyon-distinguishability}, and \ref{cor:recoverability}. These were generalised to general group-valued surface codes in Section \ref{sec:quantum-doubles}. Here, we prove the general versions of these Corollaries, using Theorem \ref{thm:fat} as a starting point.

\subsection{Corollary \ref{cor:lre}: Mixed-state long-range entanglement} \label{apdx:lre}


We consider an arbitrary noise-corrupted code state $\tilde{\rho}_Q = \mathcal{M}_{Q \leftarrow R}[\sigma_R]$ corresponding to some logical state $\sigma_R$. Assuming the  conditions stated in Corollary \ref{cor:lre}, we will show that $\tilde{\rho}_Q$ is $(c_0, r, t)$-long-range entangled for some constant $c_0 > 0$. Our proof is by contradiction. We assert that there exists  a state $\sigma_Q$ that satisfies
\begin{align}
	\|\tilde{\rho}_Q - \sigma_Q\|_1 < \delta_0
	\label{eq:rho sigma close}
\end{align}
for some $\delta_0 > 0$, and is also $(r, t)$-short-range entangled. Importantly, this implies $\sigma_Q$ can be written as
\begin{align}
	\sigma_Q = \sum_i p_i \omega_Q^{(i)}
	\label{eq:sigmaQ sum omega}
\end{align}
where $p_i \geq 0$, $\sum_i p_i = 1$, and $\omega_Q^{(i)}$ are short-range correlated states, i.e.~states generated by a local ancilla-assisted circuit of range $r$ and depth $t$. We will show that this leads to conflicting statements when $\delta_0$ is sufficiently small.

For $rt < L/3$, we can identify horizontal strips $S_{x, 1}$, $S_{x, 2}$, each of width $w = L/6$, that are separated by a distance $2rt$. \cref{thm:fat} holds for each of these strips separately. Thus, for each anyon type $\mu \in \mathrm{Anyons}$ and each $\alpha = 1, 2$, there is an operator $X_{\alpha,\mu}$, with support on $S_{x, \alpha}$ that has approximately the same action as the logical projector $\Pi^{x, \mu}_R \coloneqq \dyad{\mu^\leftrightarrow}_R$, in the sense of Eq.~\eqref{eq:effective op thm-fat}. Thanks to \cref{lem:effective operator}, can assume that $X_{\alpha,\mu}$ is Hermitian, and is bounded as $0 \preceq X_{\alpha,\mu} \preceq 1$.  We similarly choose a pair of vertical strips $S_{y, 1}$, $S_{y,2}$, which support operators $Y_{\alpha, \mu}$ that act approximately the same as the projectors $\Pi^{y, \mu}_R = S \Pi^{x, \mu}_R S^\dagger$, where $\Pi^{y, \mu}_R \coloneqq \dyad{\mu^{\updownarrow}}_R$. Here, as in the main text [Eq.~\eqref{eq:S matrix}], $S$ is the modular $S$-matrix.  For the $\mathbb{Z}_2$ surface code, the same constructions apply---in particular we have $\Pi^{x, \mu}_R = \frac{1}{2} (1+(-1)^\mu X_R)$ and $\Pi^{y, \mu}_R = \frac{1}{2} (1+ (-1)^\mu Z_R)$, which are labeled by $\mu \in \{0,1\}$. These are related as $\Pi^{y, \mu}_R = H_R \Pi^{x, \mu}_R H^\dagger_R$, where $H_R = \smqty(1 & 1 \\ 1 & -1)_R$ is the Hadamard matrix. 

Thanks to the geometry of our construction, and the short-range correlated nature of the states $\omega^{(i)}_Q$, for any operators $O_{S_{x, 1}}$ and $O'_{S_{x, 2}}$ supported on $S_{x, 1}$ and $S_{x, 2}$, respectively, we have
\begin{align}
	\langle{O_{S_{x, 1}}O'_{S_{x, 2}}}\rangle_i = \langle{O_{S_{x, 1}}}\rangle_i\langle{O'_{S_{x, 2}}}\rangle_i
\end{align}
where we use the shorthand $\langle{Z_Q}\rangle_i \coloneqq \tr[\omega_Q^{(i)}Z_Q]$ for operators $Z_Q \in \mathfrak{B}(\mathscr{H}_Q)$. 

Now, as a consequence of Eq.~\eqref{eq:effective op thm-fat},  we have
\begin{align}
	\tr[X_{1, \mu}\big(1 - X_{2,\mu}\big)\tilde{\rho}_Q] \leq  \tr[\Pi_R^{x, \mu}(1 - \Pi_R^{x, \mu})\rho_R] + 2\epsilon = 2\epsilon
	\label{eq:x corr bound}
\end{align}
where $\epsilon$ is the error appearing in \cref{thm:fat}, which is at most $c_1(T) \sqrt{L} e^{-c_2(T) L/6}$. Combining this with Eq.~\eqref{eq:rho sigma close}, this implies that the short-ranged correlated states appearing in \eqref{eq:sigmaQ sum omega} must satisfy
\begin{align}
	\sum_i p_i \langle X_{1, \mu} \rangle_i \langle 1 - X_{2,\mu}\rangle_i  \leq 2\epsilon + \delta_0
	\label{eq:sigma exp value simple}
\end{align}
We also have the same with $X_{\alpha, \mu}$ replaced by $Y_{\alpha, \mu}$.  Furthermore, if we define
\begin{align}
	T_\sigma(\mu, \nu) \coloneqq \tr[X_{1,\mu} Y_{2,\nu}(1 - X_{2, \mu}) Y_{1, \nu} X_{1, \mu} \sigma_Q] = \sum_{i} \Big\langle X_{1,\mu} Y_{2,\nu}(1 - X_{2, \mu}) Y_{1, \nu} X_{1, \mu} \Big\rangle_i 
	\label{eq:Tsigma def}
\end{align}
then by the same logic we have
\begin{align}
	\left|T_\rho(\mu, \nu) - T_\sigma(\mu, \nu)  \right| \leq 5 \epsilon + \delta_0 \eqqcolon \epsilon' 
	\label{eq:lre projector prod}
\end{align}
where
\begin{align}
	T_\rho(\mu, \nu) \coloneqq \tr[\Pi^{x, \mu}_R\Pi^{y, \nu}_R(1-\Pi^{x, \mu}_R)\Pi^{y, \nu}_R\Pi^{x, \mu}_R \rho_R] = \tr[\Pi^{x, \mu}_R \rho_R] \sum_{\mu' \neq \mu}  |S_{\mu \nu}|^2 |S_{\mu' \nu}|^2.
	\label{eq:Trho express}
\end{align}

Since $X_{i, \mu}$ is Hermitian, and $0 \preceq X_{\alpha, \mu} \preceq 1$, each of the expectation values in Eq.~\eqref{eq:sigma exp value simple} are contained in the interval $[0,1]$. Using the fact that, for $x_{1,2} \in [0,1]$, we have $x_1(1 - x_1) = x_1(1-x_1)(1-x_2) + x_1(1-x_1)x_2 \leq x_1(1-x_2) + x_2(1 - x_1)$, Eq.~\eqref{eq:sigma exp value simple} then implies
\begin{align}
	\sum_i p_i \mathrm{Var}_i\big[X_{\alpha, \mu}\big] \leq 2\epsilon'
	\label{eq:sre var bound}
\end{align}
where $\mathrm{Var}_i[Z_Q] \coloneqq \langle Z_Q^2 \rangle_i - \langle Z_Q \rangle_i^2$ is the variance of an operator with respect to the state $\omega^{(i)}_Q$. Thus, typical states $\omega^{(i)}_Q$ are approximate eigenstates of $X_{\alpha, \mu}$. The same naturally follows for the operators $Y_{\alpha, \nu}$. If we define
\begin{align}
	\delta_i X_{\alpha, \mu} &= X_{\alpha, \mu} - \langle X_{\alpha, \mu} \rangle_i & \alpha \in \{1,2\};\, \mu \in \mathrm{Anyons},
\end{align}
and similar for $\delta_i Y_{\alpha, \nu}$, then \eqref{eq:sre var bound} can be rewritten as $\sum_i p_i \langle (\delta_i X_{\alpha, \mu})^2 \rangle_i \leq 2 \epsilon'$. The object $T_\sigma(\mu, \nu)$ defined in Eq.~\eqref{eq:Tsigma def} can now be rewritten by substituting the identity $X_{\alpha, \mu} = \delta_i X_{\alpha, \mu} +\langle X_{\alpha, \mu} \rangle_i$, and expanding the resulting product. This will result in 32 terms depending whether the operator $\delta_i X_{\alpha, \mu}$ or the scalar $\langle X_{\alpha, \mu} \rangle_i$ is chosen in each of the 5 factors. Since $\langle \delta_i  X_{\alpha, \mu} \rangle_i = 0$, the five terms that contain a single non-trivial operator will vanish. An example nonzero term is 
\begin{align}
	\sum_i p_i  \langle X_{1, \mu}\rangle_i \langle 1 - X_{2,\mu}\rangle_i   \Big\langle \delta_i Y_{2,\nu} \cdot  \delta_i Y_{1, \nu} \cdot \delta_i X_{1, \mu} \Big\rangle_i
	\label{eq:example term}
\end{align}
The norm of the last expectation value above can be bounded using H{\"o}lder's inequality in the form $\tr[ABC\rho] \leq \|C\rho^{1/2}\|_2 \|\rho^{1/2}A\|_2 \|B\|_\infty = \tr[\rho C^\dagger C]^{1/2} \tr[\rho A^\dagger A]^{1/2} \|B\|_\infty$. We have  $\| \delta_i X_{\alpha, \mu} \|_\infty \leq 2$, and thus this simplifies to
\begin{align}
	\big|\,\text{\eqref{eq:example term}}\, \big| &\leq 2\sum_i p_i \langle (\delta Y_{2,\nu})^2 \rangle_i^{1/2} \langle (\delta X_{1,\mu})^2 \rangle_i^{1/2} \nonumber\\
	&\leq 2 \left( \sum_i p_i \langle (\delta Y_{2,\nu})^2 \rangle_i \right)^{1/2} \left( \sum_i p_i \langle (\delta X_{1,\mu})^2 \rangle_i \right)^{1/2} \leq 4 \epsilon' 
\end{align} 
where we have used the Cauchy-Schwartz inequality in going to the second line. Combining all the terms together, we then find
\begin{align}
	\left| T_\sigma(\mu, \nu) -	\sum_i p_i  \langle X_{1, \mu}\rangle_i \langle Y_{2, \nu}\rangle_i\langle1 - X_{2, \mu}\rangle_i\langle Y_{1, \nu}\rangle_i\langle X_{1, \mu}\rangle_i \langle 1 \right| \leq 58 \epsilon'.
\end{align}
Using Eqs.~(\ref{eq:x corr bound}, \ref{eq:lre projector prod}) and the triangle inequality, we thus obtain
\begin{align}
	T_\rho(\mu, \nu) \leq 60 \epsilon'.
	\label{eq:Trho upper bound}
\end{align}
Now, returning to Eq.~\eqref{eq:Trho express}, we note that for any choice of $\rho_R$, there is always some $\mu^\star \in \mathrm{Anyons}$ for which $\tr[\Pi^{x, \mu}_R \rho_R] \geq d_R^{-1}$, since $\{\Pi^{x, \mu}_R\}_\mu$ form a complete orthogonal set of rank-1 projectors. Then, we choose $\nu = \varnothing$, namely the trivial anyon, for which the $S$-matrix satisfies \cite{simon2023topological}
\begin{align}
	S_{\mu \varnothing} = S_{\varnothing \mu} = \frac{d_\mu}{\mathscr{D}}
	\label{eq:S matrix q dimension}
\end{align}
where $d_\mu \geq 1$ is the quantum dimension of the anyon $\mu$ and $\mathscr{D} \coloneqq (\sum_\mu d_\mu^2)^{1/2}$ is the total quantum dimension. Thus, with these choices of $\mu$ and $\nu$ we have
\begin{align}
	T_\rho(\mu^\star, \varnothing) \geq \frac{1}{d_R\mathscr{D}^4} \eqqcolon c'
\end{align}
with $c'$ being a constant only depending on the anyon content of the model. This contradicts Eq.~\eqref{eq:Trho upper bound} whenever $60c_0 + 300 \epsilon \leq c'$. Since $\epsilon \leq c_1(T)\sqrt{L}e^{-c_2(T) L/6}$, which is less than $c_0/5$ for $L$ exceeding some critical value $L_0$, depending only on $T$, this proves the claim, with $c_0 = c'/61$. \hfill $\square$

\subsection{Corollary \ref{cor:anyon-distinguishability}: Distinguishability of anyons}

The proof of Corollary \ref{cor:anyon-distinguishability}, which we present here, requires some slight adaptation of Theorem \ref{thm:fat}. In particular, whereas we have previously considered the manifold of code states, here we will start from an arbitrary reference code state $\rho_Q$ (the choice of which will not matter) and generate different states with anyonic excitations $\rho^{(v,v'), \mu}_Q$, as defined in Eq.~\eqref{eq:anyon state full}. We can continue to use the formalism of AOAEC to analyse the distinguishability of these states after noise is applied, via the following argument.

Before noise is applied, the states $\rho^{(v,v'), \mu}_Q$ can be perfectly distinguished using the closed ribbon operators $K_{r'}^{\mu'}$ [Eq.~\eqref{eq:anyon distinguish ribbon}]. This can be reframed by defining a (classical) encoding channel $\mathcal{C}^{(v,v')}$ which act as
\begin{align}
    \mathcal{C}^{(v,v')}_{Q \leftarrow R}[\sigma_R] = \sum_\mu \langle\mu|\sigma_R|\mu\rangle_R \rho^{(v,v'), \mu}_Q
    \label{eq:anyon encoding}
\end{align}
where $\{\ket{\mu}_R\}_\mu$ is an arbitrarily chosen basis for a reference Hilbert space $\mathscr{H}_R$ of dimension $d_R = |\text{Irr}(D(G))|$. This channel does not preserve any coherences on $R$, and thus we can view $R$ as a classical register. Conditioned on the classical register being in the state $\mu$, the map $\mathcal{C}^{(v,v')}_{Q \leftarrow R}$ prepares the state $\rho^{(v,v'), \mu}_Q$. Pairwise orthogonality of these states is equivalent to exact recoverability of the diagonal algebra $\mathfrak{K} = \text{span}(\{\dyad{\mu}_R\})$ for the channel $\mathcal{C}^{(v,v')}_{Q \leftarrow R}$.

Recall that the region $S \subset Q$ in \cref{cor:anyon-distinguishability} is an annulus of the form shown in Fig.~\ref{fig:geometries}(d), with thickness $w$, such that the vertices $v$ and $v'$ are in separate connected subsets of $S^c \coloneqq Q \setminus S$ (for concreteness we take $v$ to be in the interior of the annulus). We will prove that after noise is applied, the marginals of the noise corrupted states $\tilde{\rho}^{(v,v'), \mu}_Q =\mathcal{N}_Q[\rho^{(v,v'), \mu}_Q]$ on  $S$  are approximately distinguishable, provided the noise strength is below threshold. Accordingly, we consider the concatenated channel $\mathcal{M}_{S \leftarrow R}^{(v,v')} \coloneqq \tr_{S^c} \circ \mathcal{N}_Q \circ \mathcal{C}^{(v,v')}_{Q \leftarrow R}$ and its complementary channel $\widehat{\mathcal{M}}_{E_S \leftarrow R}^{(v,v')}$, constructed in the same manner as Eqs.~(\nopareneqref{eq:environment S}, \nopareneqref{eq:noise channels subsystem}). Our strategy is to show that the algebra $\mathfrak{K}$ is approximately recoverable for $\mathcal{M}_{S \leftarrow R}^{(v,v')}$ by deriving an upper bound of the form $\delta_{\mathfrak{K}}(E_S) \leq \epsilon$, where  $\delta_{\mathfrak{K}}(E_S)$ is defined in Eq.~\eqref{eq:delta ES}. By \cref{lem:effective operator}, this implies that there are Hermitian operators $\tilde{K}^\mu_S$ satisfying $0 \preceq \tilde{K}^\mu_S \preceq I$ and supported on $S$, such that
\begin{align}
    \left|\tr[(\tilde{K}^{\mu'}_S \otimes I_{S^c}) \tilde{\rho}^{(v,v'),\mu}_Q] - \delta_{\mu, \mu'}\right| \leq \epsilon.
\end{align}
We then have, for any $\mu \neq \nu$,
\begin{align}
    \mathsf{T}\big(\tilde{\rho}^{(v,v'),\mu}_Q, \tilde{\rho}^{(v,v'),\nu}_Q\big) = \max_{0 \preceq K \preceq I}\tr[K(\tilde{\rho}^{(v,v'),\mu}_Q- \tilde{\rho}^{(v,v'),\nu}_Q)] \geq \tr[(\tilde{K}^{\mu}_S \otimes I_{S^c})(\tilde{\rho}^{(v,v'),\mu}_Q- \tilde{\rho}^{(v,v'),\nu}_Q)] \geq 1 - 2\epsilon.
\end{align}
Thus, defining $\delta = 4\epsilon$, the states are pairwise $\delta$-distinguishable (in the sense of \cref{cor:anyon-distinguishability}) if $\mathfrak{K}$ is $\epsilon$-recoverable. 

The remainder of the proof follows the same form as that of \cref{thm:fat}. Again, $\mathfrak{K}$ is a maximally commutative subalgebra of $\mathfrak{B}(\mathscr{H}_R)$, so $C(\mathfrak{K}) = \mathfrak{K}$, and the super-projector acts as a dephasing channel, $\mathcal{P}^{\mathfrak{K}}_R[\sigma_R] = \sum_\mu  \langle \mu|\sigma_R|\mu\rangle_R \dyad{\mu}_R $. Then, we can upper bound $\delta_{\mathfrak{K}}(E_S)$ by showing that $\mathcal{M}_{E_S \leftarrow R}^{(v,v')}$ and $\mathcal{M}_{S \leftarrow R}^{(v,v')} \circ \mathcal{P}^{\mathfrak{K}}_R$ obey a local indistinguishability property, after which \cref{lem:bound-norm-connectivity} will be invoked.

The local indistinguishability property in question has the same structure as that appearing in the proof of \cref{lem:indist q double}. In particular, we let $\mathscr{W}$ be the collection of graph paths on the subgraph $G^\#_{\mathrm{sq}}|_S$ that connect the inner boundary $S_{\text{in}}$ to the outer boundary $S_{\text{out}}$. Then, $\mathcal{C}_{Q \leftarrow R}^{(v,v')}$ and $\mathcal{C}_{Q \leftarrow R}^{(v,v')} \circ \mathcal{P}^{\mathfrak{K}}_R$ are $(S^c, \mathscr{W}$)-indistinguishable in the sense of \cref{def:locally-indistinguishable}. To prove this, we follow the same set of steps as in the proof of \cref{lem:logical path} to show that if $\Sigma$ does not contain some $W \in \mathscr{W}$, then $\Sigma^c$ supports a closed ribbon operator $K^\mu_r$ that encircles the interior of the annulus $S$. Thanks to 
Eq.~\eqref{eq:anyon distinguish ribbon}, such an operator would obey $K^\mu_r \cdot \mathcal{C}^{(v,v')}_{Q \leftarrow R}[\sigma_R] = \mathcal{C}^{(v,v')}_{Q \leftarrow R}[\dyad{\mu}_R \cdot \sigma_R]$ for any operator $\sigma_R \in \mathfrak{B}(\mathscr{H}_R)$. By analogy to Eq.~\eqref{eq:local indist logical}, we then have
\begin{align}
    \mathcal{G}_{\Sigma S^c \leftarrow R}[\sigma_R] &= \sum_{\mu \in \text{Irr}(D(G))}\tr_{\Sigma^\mathrm{c}} \circ \mathcal{C}^{(v,v')}_{Q \leftarrow R}[\ket{\mu} \langle \mu| \sigma_R| \mu\rangle \bra{\mu}_R] \nonumber\\
    &= \sum_{\mu \in \text{Irr}(D(G))}\tr_{\Sigma^\mathrm{c}} \big[ K_r^\mu\cdot  \mathcal{C}^{(v,v')}_{Q \leftarrow R}[ \sigma_R| \mu\rangle \bra{\mu}_R]\big] \nonumber\\
    &= \sum_{\mu \in \text{Irr}(D(G))}\tr_{\Sigma^\mathrm{c}} \big[  \mathcal{C}^{(v,v')}_{Q \leftarrow R}[ \sigma_R| \mu\rangle \bra{\mu}_R] \cdot K_r^\mu\big] \nonumber\\
    &= \sum_{\mu \in \text{Irr}(D(G))}\tr_{\Sigma^\mathrm{c}} \big[  \mathcal{C}^{(v,v')}_{Q \leftarrow R}[ \sigma_R| \mu\rangle \bra{\mu}_R] \big] = \tr_{\Sigma^\mathrm{c}} \circ \mathcal{C}^{(v,v')}_{Q \leftarrow R}[\sigma_R] = \mathcal{F}_{\Sigma S^c \leftarrow R}[\sigma_R].
\end{align}
Since the shortest path in $\mathscr{W}$ has length $\geq w$, we also have $d_{\mathrm{min}}(\mathscr{W}) \geq w$.

Combining all these results with \cref{lem:bound-norm-connectivity}, we find that, if the noise strength $T$ is below the value $T_\star = [1 + \ln 7]^{-1}$, then
\begin{align}
    \delta_{\mathfrak{K}}(E_S) \leq c_1(T)|S| e^{-c_2(T) w_S}
    \label{eq:delta anyon bound}
\end{align}
which gives the desired result.

We treat the $\mathbb{Z}_2$ surface code separately, in order to incorporate the adaptations described in Appendix \ref{apdx:z2 improvements}. 
We will also treat the $e$- and $m$-type anyons separately, reflecting the slightly different statement of the corollary for the $\mathbb{Z}_2$ case. By analogy to Eq.~\eqref{eq:anyon encoding}, we can define an encoding map that acts as $\mathcal{C}^{(v,v')}_{Q \leftarrow R}[\sigma_R] = \rho_Q \langle+|\sigma_R|+\rangle + U_\gamma \rho_Q U_\gamma^\dagger \langle-|\sigma_R|-\rangle$, where $\rho_Q$ is the code state without anyons, $\gamma$ is a path on the primal lattice connecting $v$ to $v'$, and $\ket{\pm}_R = \frac{1}{\sqrt{2}}(\ket{0}_R \pm \ket{1}_R)$ are $X_R$ eigenstates. Distinguishability then corresponds to preservation of the Pauli-$X$ algebra $\mathfrak{X}$. We define $\mathcal{\Delta}_{Q \leftarrow R} = \mathcal{C}_{Q \leftarrow R}^{(v,v')}\circ(\mathrm{id}_R - \mathcal{P}^{\mathfrak{X}}_R)$, and employ the same decomposition as in Section \ref{apdx:z2 improvements} [Eqs.~(\nopareneqref{eq:delta decomp X repeat}, \nopareneqref{eq:delta sigma X proj})], with $\bar{Z}_S$ and $\bar{Z}_{S^c}$ now defined as the parts of $\bar{Z}_\gamma$ that are supported in $S$ and $S^c$, respectively. The subsequent steps  can be repeated by expanding the operator $\mathbb{B}_{P_S} \bar{Z}_S$ in terms of Pauli strings. This again can be represented as a sum over Ising variables $\vec{\sigma} \in \{\pm 1\}^{P_S}$, such that the corresponding region $S[\vec{\sigma}] = \{\ell \in S : \prod_{p \in \partial^\top \ell}\sigma_p = (-1)^{\ell \in \gamma}\}$. We can view $S[\vec{\sigma}]$ as the set of domain walls of the configuration $\vec{\sigma}$ with twisted boundary conditions applied along the path $\gamma$, and again we can keep only the maps $\mathcal{\Delta}^{(\Sigma; X)}_{Q \leftarrow R}$ such that $\Sigma = S[\vec{\sigma}]$ for some $\vec{\sigma} \in \{\pm 1\}^{P_S}$. These regions $\Sigma$ can be decomposed into self-osculating walks as before.

In this setting, we will define a spanning walk as a self-osculating walk that starts on the inner boundary of $S$ and ends on the outer boundary. Due to the twisted boundary conditions, for every $\vec{\sigma} \in \{\pm\}^{P_S}$ the components of the corresponding region $\Sigma$ must feature at least one spanning walk. The rest of the argument then follows identically, with $w$ being the length of the shortest spanning walk. This proves the distinguishability of the noise-corrupted states with and without $e$ anyons. The $m$ anyon case proceeds identically, with Pauli-$Z$s in place of Pauli-$X$s, and the dual and primal lattices exchanged. \hfill $\square$

\subsection{Corollary \ref{cor:recoverability}: Recoverability of logical information \label{apdx:full recoverability proof}}

We can apply Theorem \ref{thm:fat} taking $S$ to be the entire system to show that for the concatenated channel $\mathcal{M}_{Q' \leftarrow R} = \mathcal{N}_{Q' \leftarrow Q} \circ \mathcal{C}_{Q \leftarrow R}$, the algebras $\mathfrak{K}_{\leftrightarrow}$ and $\mathfrak{K}_{\updownarrow}$ are each $\epsilon$-recoverable, with $\epsilon = e^{-\Omega(L)}$ (recalling that $L = \min(L_x, L_y)$). As claimed in the main text [Eq.~\eqref{eq:algebra conds}], the algebras  $\mathfrak{K}_{\leftrightarrow}$ and $\mathfrak{K}_{\updownarrow}$ together generate the whole matrix algebra $\mathfrak{B}(\mathscr{H}_R)$, which suggests that being able to recovery both subalgebras simultaneously should allow one to recover the full logical information. Here we will prove this is the case, and, \textit{inter alia}, will prove Eq.~\eqref{eq:algebra conds}.

Assume that $\delta_{\mathfrak{K}_{\leftrightarrow}}(E), \delta_{\mathfrak{K}_{\updownarrow}}(E) \leq \epsilon$. Then note that
\begin{align}
    \delta_{\mathfrak{B}(\mathscr{H}_R)}(E) &=  \norm{\widehat{\mathcal{M}}_{E_S \leftarrow R}\circ (\mathrm{id}_R - \mathcal{D}_R)}_\diamond \nonumber\\ &= \norm{\widehat{\mathcal{M}}_{E_S \leftarrow R}\circ (\mathrm{id}_R - \mathcal{P}^{\mathfrak{K}_{\leftrightarrow}}_R \circ \mathcal{P}^{\mathfrak{K}_{\updownarrow}}_R) \circ \mathcal{Q}_R}_\diamond \nonumber\\
    &= \norm{\widehat{\mathcal{M}}_{E_S \leftarrow R}\circ \big[(\mathrm{id}_R - \mathcal{P}^{\mathfrak{K}_{\updownarrow}}_R) + (\mathrm{id}_R - \mathcal{P}^{\mathfrak{K}_{\leftrightarrow}}_R)\circ \mathcal{P}^{\mathfrak{K}_{\updownarrow}}_R\big]\circ \mathcal{Q}_R }_\diamond \nonumber\\
    &\leq \left(\norm{\widehat{\mathcal{M}}_{E_S \leftarrow R}\circ (\mathrm{id}_R - \mathcal{P}^{\mathfrak{K}_{\updownarrow}}_R)}_\diamond + \norm{\widehat{\mathcal{M}}_{E_S \leftarrow R}\circ (\mathrm{id}_R - \mathcal{P}^{\mathfrak{K}_{\updownarrow}}_R)}_\diamond\right)\|\mathcal{Q}_R\|_\diamond \leq 2\epsilon \|\mathcal{Q}_R\|_\diamond,
    \label{eq:delta split QR}
\end{align}
where, in going to the second line,logic
we defined the map
\begin{align}
    \mathcal{Q}_R &\coloneqq (\mathrm{id}_R - \mathcal{A}_R)^{-1} (\mathrm{id}_R - \mathcal{D}_R) & \text{where }\mathcal{A}_R \coloneqq \mathcal{P}^{\mathfrak{K}_{\leftrightarrow}}_R \circ \mathcal{P}^{\mathfrak{K}_{\updownarrow}}_R.
    \label{eq:Qr map def}
\end{align}
Here, the inverse need only be defined on the image of $(\mathrm{id}_R - \mathcal{D}_R)$, namely the space of traceless operators---we will show later on that this is indeed well defined.

Observe that the map $\mathcal{A}_R$ annihilates any operators that are purely off-diagonal in the basis $\{\ket{{\mu}^\updownarrow}_R\}_\mu$, and outputs an operator that is diagonal in the basis $\{\ket{{\mu}^\leftrightarrow}_R\}_\mu$. Thus its action can be defined in terms of a matrix whose rows and columns are labeled by anyon times $\mu, \nu \in \mathrm{Anyons}$, with coefficients $q_{\mu \nu}$ chosen such that
\begin{align}
    \mathcal{A}_R\left[\sigma_R\right] = \sum_{\mu \nu} q_{\mu \nu} \langle \mu^\updownarrow|\sigma_R| \mu^{\updownarrow}\rangle \dyad{\nu^{\leftrightarrow}}.
    \label{eq:AR diag action}
\end{align}
In particular, by Eq.~\eqref{eq:S matrix}, we have $q_{\mu \nu} = |S_{\mu \nu}|^2$, where $S$ is the modular $S$ matrix. Since $S$ is unitary, $q_{\mu \nu}$ are the elements of a doubly stochastic matrix. 
The depolarizing channel $\mathcal{D}_R$ also has the same form, with coefficients $q'_{\mu \nu} = \frac{1}{d_L}$ independent of $\mu, \nu$. 

Let us consider consider integer powers of $\mathcal{A}_R$. These can again be written in the form \eqref{eq:AR diag action}, with different coefficients. To express these coefficients, it is convenient to define the matrix $K$ with elements $K_{\mu\nu} = \sum_\lambda q_{\mu \lambda} q_{\nu \lambda}$, i.e.~$K = q q^\top$. Note that $K$ is itself a doubly stochastic matrix. Then,
\begin{align}
    \mathcal{A}_R^n\left[\sigma_R\right] &= \sum_{\mu \nu} [K^{n-1}q]_{\mu \nu} \langle \mu^\updownarrow|\sigma_R| \mu^{\updownarrow}\rangle \dyad{\nu^{\leftrightarrow}}, & n = 1, 2, \ldots
    \label{eq:AR power}
\end{align}
This follows because $\mathcal{A}^n_R$ involves alternating projections of states $\ket{\mu^{\updownarrow}}$ onto the complementary basis $\ket{\nu^\leftrightarrow}$ and back again, each of which is described by the matrix $q$ or $q^\top$. That is, after the first application of $\mathcal{A}_R$, which generates a diagonal operator $\sum_{\nu} p_\nu \dyad{\nu^{\leftrightarrow}}$, each successive application is equivalent to applying the matrix $K$ to the distribution $p = (p_\nu)_\nu$.

Now, we use Eq.~\eqref{eq:S matrix q dimension}, which implies that every anyon $\mu$ has nonzero overlap with the vacuum sector $\varnothing$. In particular, we have $q_{\mu \varnothing} = q_{\varnothing \mu} = d_\mu^2/\mathscr{D}^2$, where $d_\mu$ is the quantum dimension of anyon $\mu$, and $\mathscr{D} = \sqrt{\sum_\mu d_\mu^2}$ is the total quantum dimension. If we define $r_\mu = d_\mu^2/\mathscr{D}^2$, which is a properly normalised probability distribution, then we have
\begin{align}
    K_{\mu \nu} &= \sum_{\lambda} q_{\mu \lambda} q_{\nu \lambda} \geq q_{\mu \varnothing}q_{\nu \varnothing} \geq\alpha r_\mu\;\forall \mu, \nu & \text{where }\alpha \coloneqq \frac{1}{\mathscr{D}^2}.
    \label{eq:markov lower bound}
\end{align}
This implies a uniform ergodicity property for the doubly stochastic matrix $K$ via the following argument (adapted from Theorem 16.2.4 of Ref.~\onlinecite{meyn2012markov}). We can write
\begin{align}
    K_{\mu \nu} &= \alpha K^{(0)}_{\mu \nu} + (1-\alpha)K_{\mu \nu}^{(1)} \nonumber\\
    \text{where }K^{(0)}_{\mu \nu} &= r_\mu, & K_{\mu\nu}^{(1)} = \frac{K_{\mu \nu} - \alpha r_\mu}{1-\alpha}
\end{align}
Both $K^0_{\mu \nu}$ and $K^1_{\mu \nu}$ are valid transition amplitudes for a Markov chain, since all coefficients are non-negative by virtue of \eqref{eq:markov lower bound}, and satisfy $\sum_\nu K^{(a)}_{\mu \nu} = 1$ for $a = 0,1$. Crucially, because $K^{(0)}_{\mu \nu}$ is independent of $\nu$, if $r_\mu$ and $r'_\mu$ are two probability distributions then $K^{(0)}$ annihilates the vector $(r - r')_\mu$, since $\sum_{\nu} K_{\mu \nu}^{(0)}(r_\nu - r'_\nu) = \sum_\nu(r_\nu - r'_\nu) = 0$. Accordingly,
\begin{align}
		\mathcal{A}^n_R &= \mathcal{A}^n_R\circ \mathcal{D}_R + \mathcal{A}^n_R\circ (\mathrm{id}_R -\mathcal{D}_R) \nonumber\\  &= \mathcal{D}_R + (1-\alpha)^{n-1} \big((\mathcal{A}^{(1)}_R)^{n-1} \circ \mathcal{A}_R \circ  (\mathrm{id}_R - \mathcal{D}_R)\big) \label{eq:A channel nth power} \\  \text{where } \mathcal{A}^{(1)}_R[\sigma_R] &\coloneqq \sum_{\mu \nu} [(K^{(1)})^{n-1}q]_{\mu \nu} \langle \nu^{\updownarrow}|\sigma_R|\nu^\updownarrow\rangle \dyad{\mu^{\leftrightarrow}}
        \label{eq:AR1 def}
	\end{align}
    In going to the second line, we use the fact that the identity matrix is a fixed point of $\mathcal{A}_R$, and thus $\mathcal{A}_R \circ \mathcal{D}_R = \mathcal{D}_R$. Then, the output of $\mathcal{A}_R \circ  (\mathrm{id}_R - \mathcal{D}_R)$ can always be written as the difference of two classical distributions, hence the terms featuring $\mathcal{A}^{(0)}_R$ can be discarded.

    Since $\mathcal{A}_R$ is the product of two projectors, the von Neumann alternating projection theorem can be invoked (see e.g.~Ref.~\onlinecite{kayalar1988error}). This implies that $\lim_{n\rightarrow \infty}\|\mathcal{A}_R^n - \mathcal{P}^\cap_R\|_\ast = 0$ for any norm $\|\,\cdot\,\|_\ast$, where $\mathcal{P}^\cap_R$ is the projector onto the intersection of $\mathfrak{K}_{\leftrightarrow}$ and $\mathfrak{K}_{\updownarrow}$. Now, the map $\mathcal{A}^{(1)}_R$ [Eq.~\eqref{eq:AR1 def}], which is constructed from the transition amplitudes $K^1_{\mu \nu}$, is itself a CPTP map, and hence its integer powers have unit diamond norm. Then, since $0 \leq (1-\alpha) < 1$, Eq.~\eqref{eq:A channel nth power}  implies convergence of $\mathcal{A}_R^n$ to $\mathcal{D}_R$ in diamond norm as $n \rightarrow \infty$. Since $\mathcal{D}_R$ is the projector onto the trivial algebra $\mathbb{C}I$, this implies $\mathfrak{K}_{\leftrightarrow} \cap \mathfrak{K}_{\updownarrow} = \mathbb{C}I$. Since (for finite-dimensional Hilbert spaces) the commutant of the intersection of two algebras is equal to the algebra generated by their individual commutants \cite{arveson1976invitation}, and $C(\mathfrak{K}_\leftrightarrow) = \mathfrak{K}_{\leftrightarrow}$ by virtue of $\mathfrak{K}_{\leftrightarrow}$ being a maximal commutative subalgebra (similarly for $\mathfrak{K}_{\updownarrow}$), we also have that $\mathfrak{K}_{\leftrightarrow}$ and $\mathfrak{K}_{\updownarrow}$ together generate the whole logical algebra $\mathfrak{B}(\mathscr{H}_R) = C(\mathbb{C}I)$. This proves Eq.~\eqref{eq:algebra conds}.

    We now have the necessary tools to show that $\mathcal{Q}_R$ is well-defined, and to bound its diamond norm. First, the above arguments make clear that the only fixed point of $\mathcal{A}_R$ is the identity matrix, so the map $(\mathrm{id}_R - \mathcal{A}_R)$ does not annihilate any operator in the traceless subspace (the image of $\mathrm{id}_R - \mathcal{A}_R$). This establishes the existence of the necessary inverse in Eq.~\eqref{eq:Qr map def}. We can formally expand this as a power series
    \begin{align}
        \mathcal{Q}_R &= \sum_{n=0}^\infty \mathcal{A}_R^n \circ (\mathrm{id}_R - \mathcal{D}_R) \nonumber\\
        &= (\mathrm{id}_R-\mathcal{D}_R) + \sum_{n=1}^{\infty} (1-\alpha)^{n-1} \big((\mathcal{A}^{(1)}_R)^{n-1} \circ \mathcal{A}_R \circ  (\mathrm{id}_R - \mathcal{D}_R)\big)
    \end{align}
    where we use Eq.~\eqref{eq:A channel nth power} in going to the second line. Since $0 <1 - \alpha < 1$, this sum converges in diamond norm, which establishes the validity of the power series. Then, taking the diamond norm of both sides and applying the triangle inequality, we obtain
    \begin{align}
        \|\mathcal{Q}_R \|_\diamond &\leq \norm{\mathrm{id}_R - \mathcal{D}_R}_\diamond + \sum_{n=1}^\infty (1-\alpha)^{n-1}\norm{  \big((\mathcal{A}^{(1)}_R)^{n-1} \circ \mathcal{A}_R \circ  (\mathrm{id}_R - \mathcal{D}_R)\big)}_\diamond \nonumber\\
        &\leq \|\mathrm{id}_R - \mathcal{D}_R\|_\diamond\left(1 + \sum_{n=1}^\infty (1-\alpha)^{n-1} \right) \leq 2\left(1 +\frac{1}{\alpha}\right) = 2(1 +\mathscr{D}^2).
    \end{align}
    Combined with Eq.~\eqref{eq:delta split QR}, we infer that $\delta_{\mathfrak{B}(\mathscr{H}_R)}(E) \leq 4\epsilon(1+\mathscr{D}^2)$. From Eq.~\eqref{eq:eps rec bound}, this can be used to show that the full logical algebra $\mathfrak{B}(\mathscr{H}_R)$ is $\epsilon'$-recoverable for suitable $\epsilon'$. In particular, since $\mathscr{D}$ is a system size-independent constant, and $\epsilon = e^{-\Omega(L)}$, we can conclude $\epsilon' = e^{-\Omega(L)}$, thus proving the corollary. \hfill $\square$\\

As explained in the main text, approximate recoverability of the full logical algebra implies that the coherent information $\mathsf{I}_C(R \rangle Q')$ is close to its maximal value of $\ln d_L$. Specifically, we claim that if $\mathfrak{B}(\mathscr{H}_R)$ is $\epsilon_{\mathrm{rec}}$-recoverable in the sense of Eq.~\eqref{eq:recov error} for some $\epsilon_{\mathrm{rec}} < 1$, then
\begin{align}
    0 \leq \ln d_L -\mathsf{I}_C(R \rangle Q') \leq 2\epsilon_{\rm rec}\ln d_L + 4 \epsilon_{\mathrm{rec}}\ln\epsilon_{\rm rec}^{-1} = \mathcal{O}(\epsilon_{\mathrm{rec}} \ln \epsilon_{\mathrm{rec}}^{-1})
    \label{eq:coh inf bound apdx}
\end{align}
We prove this relation now.

First recall the definition of the coherent information $\mathsf{I}_C(R \rangle Q') = \mathsf{S}(\tilde{\rho}^{Q'}) - \mathsf{S}(\tilde{\rho}^{Q'R})$ in terms of the state $\tilde{\rho}_{Q'R} \coloneqq ((\mathcal{N}_{Q' \leftarrow Q} \circ \mathcal{C}_{Q \leftarrow R'}) \otimes \mathrm{id}_R)[\Phi_{R'R}]$, where $\Phi_{R'R}$ is the projector onto a Bell state. This can alternatively be expressed as $\mathsf{I}_C(R \rangle Q') = -\mathsf{S}(\tilde{\rho}^{Q'R}|R)$, where for a bipartite state $\sigma^{AB}$,  $\mathsf{S}(\sigma^{AB}|B) \coloneqq \mathsf{S}(\sigma^{AB}) - \mathsf{S}(\sigma^B)$ is the conditional entropy.

Due to the definition of approximate recoverability, our assumptions imply the existence of a recovery channel $\mathcal{R}_{R'' \leftarrow Q'}$ such that $\|\mathcal{R}_{R'' \leftarrow Q'} \circ \mathcal{N}_{Q' \leftarrow Q} \circ \mathcal{C}_{Q \leftarrow R} - \mathrm{id}_{R'' \leftarrow R}\|_\diamond \leq \epsilon_{\rm rec}$. By the data processing inequality for the coherent information \cite{watrous2018theory}, we have $\mathsf{I}_c(R \rangle Q')_{\tilde{\rho}} \geq \mathsf{I}_c(R \rangle R'')$, 
where $\mathsf{I}_c(R \rangle R'')$ is the coherent information of the state
\begin{align}
    \rho_{R''R} \coloneqq (\mathcal{R}_{R'' \leftarrow Q'}  \otimes \mathrm{id}_R)[\tilde{\rho}^{Q'R}] = \big((\mathcal{R}_{R'' \leftarrow Q'} \circ \mathcal{N}_{Q' \leftarrow Q} \circ \mathcal{C}_{Q \leftarrow R'}) \otimes \mathrm{id}_R\big)[\Phi_{R'R}].
\end{align}
Then, thanks to the definition of the diamond distance, we have $\|\rho_{R''R} - \Phi_{R''R}\|_1 \leq \epsilon_{\mathrm{rec}}$. Finally, we use the following continuity bound on the conditional entropy, from Ref.~\cite{winter2016tight} (Lemma 2 therein),
\begin{align}
    \left|\mathsf{S}(\rho^{AB}|B) - \mathsf{S}(\sigma^{AB}|B)\right| &\leq 2 T\log(d_A) + (1+T)h_2\left( \frac{T}{1+T}\right) & \text{where }T \coloneqq \mathsf{T}(\rho^{AB}, \sigma^{AB}) \equiv \frac{1}{2}\|\rho^{AB} - \sigma^{AB}\|_1
\end{align}
with $d_A$ the Hilbert space dimension of subsystem $A$, and $h_2(x) \coloneqq - x\ln x - (1-x)\ln(1-x)$ the binary entropy function. Applying this to the coherent information, and using the fact that the coherent information for the Bell state $\Phi_{R''R}$ equals $\ln d_L$, we obtain
\begin{align}
    \left|\mathsf{I}_c(R \rangle R'')  - \ln d_L \right| \leq \epsilon_{\mathrm{rec}}\ln d_L - (1 +\epsilon_{\rm rec}/2)h_2 \left(\frac{\epsilon_{\mathrm{rec}}/2}{1+\epsilon_{\mathrm{rec}}} \right).
\end{align}
We finally combine the above with the data processing inequality $\mathsf{I}_c(R \rangle Q')_{\tilde{\rho}} \geq \mathsf{I}_c(R \rangle R'')$, along with the fact that $(1+x)h_2(\frac{x}{1+x}) \leq 4 x\ln(1/x)$ for $0 <x < 1/2$, in order to prove Eq.~\eqref{eq:coh inf bound apdx}.

\section{3D $\mathbb{Z}_2$ surface code} \label{apdx:3d}
In this section, we prove how the methods developed for the 2D $\mathbb{Z}_2$ surface code and the non-abelian codes can be generalised to 3D codes, focusing for concreteness on the 3D $\mathbb{Z}_2$ surface code. 




\subsection{Recoverability of the $\mathfrak{Z}$-algebra}
On the 3D $\mathbb{Z}_2$ surface code, $Z$-errors are matchable. This fact will allow us to use the same optimised proof strategy for the 2D $\mathbb{Z}_2$ surface code (Appendix \ref{apdx:z2 improvements}), with suitable adaptations to the 3D case. Here, we can choose a tube-like region $S_\updownarrow$, chosen to be a cuboidal region of dimension $w \times w \times L_\mathrm{z}$ for concreteness [\cref{fig:3d-regions}(a)].

We follow the stepds described in Appendix \ref{apdx:z2 improvements}, with the $\mathfrak{Z}$-algebra in place of $\mathfrak{X}$, up until the point of \cref{eq:delta sigma X proj}. This results in the expression
\begin{align}
    \tilde{\mathcal{\Delta}}_{Q \leftarrow R}^{\,(\Sigma; Z)}[c_X X_R + c_Y Y_R] = \mathbb{B}_P (c_X + \iu c_Y \bar{Z})\mathbb{A}_{V \setminus V_S} \bar{X}_{S^c_\updownarrow}\cdot  \left(\left[\bigotimes_{\ell \in \Sigma}(\mathrm{id}_\ell - \mathcal{E}_\ell^Z) \right]\otimes \left[\bigotimes_{\ell \in \Sigma^\mathrm{c}}  \mathcal{E}_\ell^Z \right] \right)[\mathbb{A}_{V_{S_\updownarrow}} \bar{X}_{S_\updownarrow}].
\end{align}
Here, $V_{S_\updownarrow}$ is the subset of vertices for which all adjacent edges are in $S_\updownarrow$, and  $\bar{X} = \bar{X}_{S^\mathrm{c}_\updownarrow} \bar {X}_{S_\updownarrow}$ is a sheet of Pauli-$X$ operators in the dual lattice, which we choose to be on the bottom of the lattice [\cref{fig:3d-geometry}], and $\bar {X}_{S_\updownarrow}$, $\bar{X}_{S^\mathrm{c}_\updownarrow}$  are the corresponding restrictions of this sheet to $S_\updownarrow$, $S_\updownarrow^c$.

In a similar fashion to the 2D $\mathbb{Z}_2$ case, we now consider the Pauli string decomposition of the operator $\mathbb{A}_{V_{S_\updownarrow}} \bar{X}_{S_\updownarrow}$. This can be expressed by expanding the product in the definition $\mathbb{A}_{V_{S_\updownarrow}} = \prod_{v \in V_{S_\updownarrow}}\frac{I + A_v}{2}$, which can again be written as a sum over Ising variables $\vec \sigma \in \{\pm 1\}^{V_{S_\updownarrow}}$, which label whether the term $I$ (for $\sigma_v = +1$) or $A_v$ (for $\sigma_v = -1$) is chosen. Since only two vertices neighbour the same edge (a reflection of the matchable nature of $Z$ syndromes), we again have  we may again use the same language of domain walls: If we define additional fixed Ising spins on the 1-valent vertices at the rough boundaries $S$, which are fixed to $\sigma_v = +1$ ($-1$) for the top (bottom) boundary, then the Pauli string for the configuration $\vec{\sigma}$ is supported on the qubits $S[\vec{\sigma}] \subseteq S$ where there is a domain wall. The map $\tilde{\mathcal{\Delta}}_{Q \leftarrow R}^{\,(\Sigma; Z)}$ is then nonzero only if the region $\Sigma$ corresponds to the set of domain walls of a given Ising configuration $\vec{\sigma}$, with the boundary conditions just specified. 



By defining a 3D version of the deformation rule \eqref{eq:domain-wall-ambiguity}, any such $\Sigma$ can be uniquely decomposed into a collection of self-osculating surfaces (SOSs), the precise definition of which is given in Ref.~\onlinecite{pkim2025existence}. Again the rule is chosen such that the $+$ spins always stay together. As an example, we have
\begin{align}
\intikz{
\draw [rounded corners=4mm, ultra thick, blue!75] (0,-1,-1) -- (0,0,-1) -- (1,0,-1);
\draw [rounded corners=4mm, ultra thick, blue!75] (-1,1,0) -- (-1,0,0) -- (-1,0,1);
\draw [rounded corners=4mm, ultra thick, blue!75] (0,-1,1) -- (0,-1,-1);
\draw [gray!50, ultra thick] (0, -1, 0) to (0, 1, 0);
\draw [gray!50, ultra thick] (-1, 0, 0) to (1, 0, 0);
\draw [gray!50, ultra thick] (0, 0, -1) to (0, 0, 1);
\node[circle, inner sep=0, draw=black, fill=white, minimum size=9, thick] at (-0.5, -0.5, -0.5) {};
\node[circle, inner sep=0, draw=black, fill=black, minimum size=9, thick] at (0.5, -0.5, -0.5) {};
\node[circle, inner sep=0, draw=black, fill=white, minimum size=9, thick] at (-0.5, 0.5, -0.5) {};
\node[circle, inner sep=0, draw=black, fill=white, minimum size=9, thick] at (0.5, 0.5, -0.5) {};
\node[circle, inner sep=0, draw=black, fill=white, minimum size=9, thick] at (-0.5, -0.5, 0.5) {};
\node[circle, inner sep=0, draw=black, fill=black, minimum size=9, thick] at (0.5, -0.5, 0.5) {};
\node[circle, inner sep=0, draw=black, fill=black, minimum size=9, thick] at (-0.5, 0.5, 0.5) {};
\node[circle, inner sep=0, draw=black, fill=white, minimum size=9, thick] at (0.5, 0.5, 0.5) {};
\draw[dashed] (1, 1, 1) -- (-1, 1, 1) -- (-1, -1, 1) -- (1, -1, 1) -- (1, 1, 1);
\draw[dashed] (1, 1, 1) -- (1, 1, -1);
\draw[dashed] (-1, 1, 1) -- (-1, 1, -1);
\draw[dashed] (-1, -1, 1) -- (-1, -1, -1);
\draw[dashed] (1, -1, 1) -- (1, -1, -1);
\draw[dashed] (1, 1, -1) -- (-1, 1, -1) -- (-1, -1, -1) -- (1, -1, -1) -- (1, 1, -1);
\draw [rounded corners=4mm, ultra thick, blue!75] (0,-1,1) -- (0,0,1) -- (1,0,1);
\draw [rounded corners=4mm, ultra thick, blue!75] (-1,0,1) -- (0,0,1) -- (0,1,1);
\draw [rounded corners=4mm, ultra thick, blue!75] (0,1,1) -- (0,1,0) -- (-1,1,0);
\draw [rounded corners=4mm, ultra thick, blue!75] (1,0,1) -- (1,0,-1);
\draw [rounded corners=4mm, ultra thick, blue!75] (0+0.12,0-0.12,1) -- (0+0.12,0-0.12,-1);
} \; ,
\end{align}
where the white (black) circles correspond to vertices with Ising spins in the state $\sigma_v = +1$ ($-1$). Importantly, because of the fixed and opposite boundary conditions, the SOSs of any Ising configuration must feature a \textit{spanning surface}, namely a surface that separates the top boundary of $S$ from the bottom boundary. 

The rest of the steps follow in the same way, and $\delta_{\mathfrak{Z}}(E_S)$ can be upper bounded by a sum over spanning SOSs $\gamma$, with weight $e^{-\beta |\gamma|}$. This sum converges if $\beta > \ln \mu_\cube^{\mathrm{SOS}}$, where $\mu_\cube^{\mathrm{SOS}}$ is the growth constant for SOSs on the cubic lattice, which has been (loosely) upper bounded as  $\mu_\cube^\mathrm{SOS} \leq 20.25$ \cite{pkim2025existence}. The minimum size of any spanning surface is $w^2$. Together this gives the result stated in the main text, \cref{eq:3d claim Z}.


Note that we could have used the general method introduced for non-abelian codes, \cref{lem:bound-norm-connectivity}, to show a similar result, but with worse bounds ($11e \approx 30$ vs. $20.25$). To show approximate correctability of the $\mathfrak{X}$-algebra, we use the general method, shown below.

\subsection{Recoverability of the $\mathfrak{X}$-algebra}


In the case of the $\mathfrak{X}$-algebra, we cannot use a similar approach developed for the 2D $\mathbb{Z}_2$ surface code or $\mathfrak{Z}$ for the 3D case. This is because more than two plaquettes neighbor an edge, which means that $X$-errors are not matchable and the language of domain walls cannot be used. We therefore employ the more general technique developed for non-Abelian codes.

As explained in the main text, we take $S$ to be a slab of size $L_\mathrm{x} \times L_\mathrm{y} \times w_\mathrm{z}$ [\cref{fig:3d-regions}(a)]. For this choice, we will upper bound $\delta_{\mathfrak{X}}(E_S)$ by proving a local indistinguishability property for the channels $\mathcal{C}_{Q \leftarrow R}$ and $\mathcal{C}_{Q \leftarrow R}\circ \mathcal{P}^{\mathfrak{X}}_R$, and then invoke \cref{lem:bound-norm-connectivity}. We first define the graph $G^\#_{\text{cubic}}$ to be the standard edge graph for the cubic lattice, where edges are adjacent if they share a vertex; the induced subgraph for the region $S$ is again denoted $G^\#_{\text{cubic}}|_S$. Let $\mathscr{W} \subset P(S)$ be the set of paths in $G^\#_{\text{cubic}}$ that connect the bottom to the top boundary of $S$. Then, we claim that $\mathcal{C}_{Q \leftarrow R}$ and $\mathcal{C}_{Q \leftarrow R}\circ \mathcal{P}^{\mathfrak{X}}_R$ are $(S^c, \mathscr{W})$-indistinguishable. Indeed, if $\Sigma$ does not contain some such path, then $\Sigma^c$ must support a sheet-like $X$-type logical operator. (This statement can be proved using a generalisation of our argument used to prove \cref{lem:logical path}.) Indistinguishability then follows using the same steps as in Eq.~\eqref{eq:local indist logical}.

The graph $G^\#_{\text{cubic}}$ has degree $k=10$, as shown below:
\begin{align}
	\intikz{
		\draw[tolsky, ultra thick, opacity=0.5] (0, -1, 0) -- (0, 2, 0);
		\draw[tolsky, ultra thick, opacity=0.5] (-1, 1, 0) -- (1, 1, 0);
		\draw[tolsky, ultra thick, opacity=0.5] (0, 1, -1) -- (0, 1, 1);
		\draw[tolsky, ultra thick, opacity=0.5] (-1, 0, 0) -- (1, 0, 0);
		\draw[tolsky, ultra thick, opacity=0.5] (0, 0, -1) -- (0, 0, 1);
		\node[black] at (0, 0.5, 0)[circle,fill,inner sep=1.75pt]{};
		\draw (0, 0.5, 0) -- (0, 1.5, 0);
		\draw (0, 0.5, 0) -- (0, -0.5, 0);
		\draw (0, 0.5, 0) -- (0.5, 1, 0);
		\draw (0, 0.5, 0) -- (-0.5, 1, 0);
		\draw (0, 0.5, 0) -- (0, 1, 0.5);
		\draw (0, 0.5, 0) -- (0, 1, -0.5);
		\draw (0, 0.5, 0) -- (0.5, 0, 0);
		\draw (0, 0.5, 0) -- (-0.5, 0, 0);
		\draw (0, 0.5, 0) -- (0, 0, 0.5);
		\draw (0, 0.5, 0) -- (0, 0, -0.5);
		\node[gray] at (0, 1.5, 0)[circle,fill,inner sep=1.5pt]{};
		\node[gray] at (0, -0.5, 0)[circle,fill,inner sep=1.5pt]{};
		\node[gray] at (0.5, 1, 0)[circle,fill,inner sep=1.5pt]{};
		\node[gray] at (-0.5, 1, 0)[circle,fill,inner sep=1.5pt]{};
		\node[gray] at (0, 1, 0.5)[circle,fill,inner sep=1.5pt]{};
		\node[gray] at (0, 1, -0.5)[circle,fill,inner sep=1.5pt]{};
		\node[gray] at (0.5, 0, 0)[circle,fill,inner sep=1.5pt]{};
		\node[gray] at (-0.5, 0, 0)[circle,fill,inner sep=1.5pt]{};
		\node[gray] at (0, 0, 0.5)[circle,fill,inner sep=1.5pt]{};
		\node[gray] at (0, 0, -0.5)[circle,fill,inner sep=1.5pt]{};
	}
\end{align}
and the minimum length of a spanning path is $w_{\mathrm{z}}$. Therefore, via \cref{lem:bound-norm-connectivity}, we have the result quoted in the main text, Eq.~\eqref{eq:3d claim X}.

\section{Correlated errors}
\subsection{Errors after finite depth circuits \label{apdx:finite-depth}}
Here we will consider a variant of \cref{thm:fat} for the general non-Abelian setting with the change of having code states under separable noise channel following a finite-depth unitary circuit,
\begin{align}
    \mathcal{N}^\mathrm{fd}_{Q' \leftarrow Q} = \mathcal{N}_{Q' \leftarrow Q} \circ \mathcal{U}_t \cdots \mathcal{U}_1 \equiv \mathcal{N}_{Q' \leftarrow Q}  \circ \mathcal{U},
\end{align}
where $\mathcal{N}_{Q' \leftarrow Q} = \bigotimes_{\ell \in Q} \mathcal{N}_\ell$ is a separable channel as before, and $\mathcal{U} = \mathcal{U}_t \cdots \mathcal{U}_1 $ is the unitary channel describing the whole circuit. Again we take $S$ to be a horizontal strip of height $w_{\rm y}$. The corresponding localised channel is $\mathcal{N}_{S \leftarrow Q}^\mathrm{fd} = \tr_{Q' \setminus S} \mathcal{N}^\mathrm{fd}_{Q' \leftarrow Q}$, whose complementary channel is
\begin{align}
    \widehat{\mathcal{N}}^\mathrm{fd}_{E_S \leftarrow Q} = \qty(\bigotimes_{\ell \in S} \widehat{\mathcal{N}}_{E_\ell \leftarrow \ell}) \circ  \mathcal{U} \eqqcolon \widehat{\mathcal{N}}_{S} \circ \mathcal{U}.
\end{align}
where, abusing notation, we write $\widehat{\mathcal{N}}_{S} \coloneqq \bigotimes_{\ell \in S}\widehat{\mathcal{N}}_{E_\ell \leftarrow \ell}$. We seek to upper bound $\delta_\mathfrak{J}(E_S)= \norm{\widehat{\mathcal{N}}^\mathrm{fd}_{E_S \leftarrow Q} \mathcal{\Delta}_{Q \leftarrow R}}_\diamond$, where $\mathcal{\Delta}_{Q \leftarrow R} = \mathcal{C}_{Q \leftarrow R}\circ (\mathrm{id}_R - \mathcal{P}^{\mathfrak{K}_{\leftrightarrow}}_R)$.

Let $\xi$ be the (pairwise) lightcone distance of the circuit $\mathscr{U}$, defined such that for any two operators $O$, $O'$ whose supports are separated by a distance $\geq \xi$, the images under the circuit $\mathcal{U}[O]$, $\mathcal{U}[O']$ do not intersect. (Here we use the standard edge graph to define distances between qudits.) We assume $w_{\rm y} > 2\xi$, and define $S' \subset S$ to be the subregion of $S$ in which all qudits within a distance $\xi$ of the top and bottom boundaries of $S$ are removed---this a horizontal strip of height $w_{\rm y} - 2\xi$ and width $L_{\rm x}$. We will use the same steps as in the proof of \cref{lem:bound-norm-connectivity} to decompose the $\mathcal{\Delta}_{Q \leftarrow R}$ in terms of subregions of $S'$ (not of $S$). By \cref{lem:logical path}, the channels in question obey a $(Q \setminus S', \mathscr{W})$-indistinguishability property, where $\mathscr{W}$ is the set of paths that connect the top and bottom  boundaries of $S'$. All the arguments up to  \cref{eq:delta gamma xy} can then be repeated. 
 Specifically, for each $\Gamma' \in \Omega^{\mathscr{W}}$, we can define the corresponding operator $\mathcal{\Delta}^{\Gamma'}_{Q \leftarrow R}$ expressed as \cref{eq:delta gamma xy} (with the suitable adaptations), which is traceless for each qudit $\ell \in \bigcup_{W^+ \in \Gamma'}W^+ \eqqcolon X[\Gamma']$,
\begin{align}
    \mathcal{\Delta}^{\Gamma'}_{Q \leftarrow R} = (\mathrm{id}_\ell - \mathcal{D}_\ell)\circ \mathcal{\Delta}^{\Gamma'}_{Q \leftarrow R} \quad \forall \, \ell \in X[\Gamma'].
    \label{eq:circuit delta traceless}
\end{align}
For notational convenience we will write $X \equiv X[\Gamma']$ where clear from context.
We now seek to upper bound the norm $\|\widehat{\mathcal{N}}^{\mathrm{fd}}_{E_S \leftarrow Q} \circ \mathcal{\Delta}^{\Gamma'}_{Q \leftarrow R} \|_\diamond = \|\widehat{\mathcal{N}}_{S} \circ \mathcal{U} \circ \mathcal{\Delta}^{\Gamma'}_{Q \leftarrow R} \|_\diamond$. Our strategy is to cover $X[\Gamma'] \subseteq S'$ with $n_B$ non-overlapping balls $\{B_i\}_{i=1}^{n_B}$, each of radius $\xi$, and centered on some qudit $\ell_i \in X[\Gamma']$. (Here a ball of radius $r$ is the set of qudits that are at at most a graph distance $r$ from its center qudit.) Thanks to the definition of $S'$, we have $B_i \subseteq S$. Then, if $\mathcal{D}_{B_i} =\bigotimes_{\ell \in B_i} \mathcal{D}_\ell$ is the depolarizing channel on a given ball $B_i$, we claim that, as a consequence of Eq.~\eqref{eq:circuit delta traceless},
\begin{align}
    \mathcal{D}_{B_i} \circ \mathcal{U} \circ  \mathcal{\Delta}^{\Gamma'}_{Q \leftarrow R} &= 0.
    \label{eq:depolarize ball}
\end{align}
In turn, defining the map
\begin{align}
    \mathcal{\Theta}_{B_i} \coloneqq \mathrm{id}_{B_i} - \mathcal{D}_{B_i} = \sum_{M \subseteq B_i: \abs{M} \geq 1} \left(\prod_{{\ell} \in M} (\mathrm{id} - \mathcal{D}_\ell)\right) \left(\prod_{{\ell}' \in B \setminus M} \mathcal{D}_{{\ell}'} \right),
    \label{eq:theta Bi def}
\end{align}
we can freely apply this to the output of $\mathcal{U} \circ \mathcal{\Delta}^{\Gamma'}_{Q \leftarrow R}$ without changing the result. Doing this simultaneously for each $i \in [n_B]$, the quantity in question then becomes
\begin{align}
    \norm{\widehat{\mathcal{N}}_{S} \circ \mathcal{U} \circ \mathcal{\Delta}^{\Gamma'}_{Q \leftarrow R}}_\diamond = \norm{\widehat{\mathcal{N}}_{S} \circ \left(\bigotimes_{i=1}^{n_B}\mathcal{\Theta}_{B_i}\right) \circ  \mathcal{U} \circ \mathcal{\Delta}^{\Gamma'}_{Q \leftarrow R}}_\diamond \leq \|\mathcal{\Delta}^{\Gamma'}_{Q \leftarrow R}\|_\diamond \prod_{i=1}^{n_B} \|\widehat{\mathcal{N}}_{B_i} \circ \mathcal{\Theta}_{B_i} \|_\diamond
\end{align}
where $\widehat{\mathcal{N}}_{B_i} \coloneqq \bigotimes_{\ell \in B_i} \widehat{\mathcal{N}}_{E_\ell \leftarrow \ell}$. To evaluate the diamond norm inside the product on the right hand side, we use the expansion \eqref{eq:theta Bi def} and apply the triangle inequality, giving
\begin{align}
    \|\widehat{\mathcal{N}}_{B_i} \circ \mathcal{\Theta}_{B_i} \|_\diamond &\leq \sum_{M \subseteq B_i: \abs{M} \geq 1} \left(\prod_{\ell \in M} \|\widehat{\mathcal{N}}_{E_\ell \leftarrow \ell} \circ (\mathrm{id}_\ell - \mathcal{D}_\ell)\|_\diamond \right) \left(\prod_{\ell \in B_i \setminus M}\widehat{\mathcal{N}}_{E_\ell \leftarrow \ell} \circ \mathcal{D}_\ell \right) \nonumber\\ &\leq \sum_{M \subseteq B_i: \abs{M} \geq 1} \left(\prod_{\ell \in M}\epsilon\right) = (1+\epsilon)^{|B_i|} -1.
\end{align}
where the last inequality follows from the definition of the effective noise strength $T$ [Eq.~\eqref{eq:noise strength def}], which we are assuming satisfies $e^{-1/T} \leq \epsilon$.

As in the main text, we define $A_\xi = O(\xi^2)$ to be the maximum size of a ball of radius $\xi$. We claim that for any $\Gamma'$, it is possible to find $n_B$ balls with the properties described above, where
\begin{align}
    n_B \geq \big\lfloor|X[\Gamma']/A_\xi|\big \rfloor.
    \label{eq:n ball bound}
\end{align}
Since $\norm{\mathcal{\Delta}^{\Gamma'}_{Q \leftarrow R}}_\diamond \leq 2^{|X[\Gamma']|}$, we then have
\begin{align}
    \norm{\widehat{\mathcal{N}}_{S} \circ \mathcal{U} \circ \mathcal{\Delta}^{\Gamma'}_{Q \leftarrow R}}_\diamond \leq 2^{|X[\Gamma']|} [(1+\epsilon)^{A_\xi} - 1]^{|X[\Gamma']|/A_\xi - 1}
\end{align}
Then, if we define an effective inverse noise rate
\begin{align}
    \beta_{\text{eff}} = -\frac{1}{A_\xi} \ln[(1+\epsilon)^{A_\xi} - 1] - \ln 2 \geq \frac{\beta_0}{A_\xi}-2\ln 2
\end{align}
(with the bound on the right holding for $ \epsilon < 1, A_\xi \geq 2$) then the sum over $\Gamma'$ [cf.~\cref{eq:delta norm Z bound}] will converge if $\beta_{\text{eff}} > (T_\star)^{-1}$, where $T_\star$ is the relevant threshold from Theorem \ref{thm:fat-nonab}, as claimed in the main text. Note that, due to the definition of $S'$, the upper bound of $\delta_{\mathfrak{K}}(E_S)$ will scale as $\sim e^{-(w_{\rm y} - 2\xi)}$.

We now prove Eqs.~\eqref{eq:depolarize ball} and \eqref{eq:n ball bound}, starting with the former. By construction, the center $\ell_i$ of the ball $B_i$ is contained in $X[\Gamma']$, and thus by Eq.~\eqref{eq:circuit delta traceless} we have $(\text{id}_{\ell_i} - \mathcal{D}_{\ell_i})\circ \mathcal{\Delta}^{\Gamma'}_{Q \leftarrow R} = 0$. This means that for any operator $O$ in the image of $\mathcal{\Delta}^{\Gamma'}_{Q \leftarrow R}$, its Pauli decomposition must only feature operator strings that act nontrivially on site $\ell_i$. Furthermore, $\mathcal{D}_{B_i}$ annihilates all operators except for those that act as the identity on $B_i$. If we evolve any such operator backwards under $\mathcal{U}^\dagger$, the result will have support within the set of qudits whose distance to $Q \setminus B_i$ is less than $\xi$. Since $\text{dist}(Q \setminus B_i, \ell_i)\geq \xi$, this must act as the identity on $\ell_i$, and thus have vanishing overlap with $O$.

To prove Eq.~\eqref{eq:n ball bound}, we construct an explicit procedure for constructing balls with the desired properties. Let $X \equiv X[\Gamma']$ be a subset of $S'$, and we will denote the set of center points of the balls as $Y$. Choose an arbitrary $\ell_1 \in X$, which we take to be in $Y$. Then, we update $X \mapsto X \setminus B_1$, where $B_1$ is the ball of redius $\xi$ centered on $\ell_1$. This removes at most $A_\xi$ elements (including $\ell_1$ itself). All the remaining elements of $X$ will all be at most $\xi$ from the first choice $\ell_1$, and so we can pick an arbitrary $\ell_2$ from the remainder, which we place in $Y$. This procedure can be iterated, and in each repetition the size of $Y$ increases by 1, and the size of $X$ decreases by at most $A_\xi$. At the end, there will be at least $\lfloor|\gamma|/A_\xi\rfloor$ elements in $Y$, thereby proving the bound.

\subsection{Plaquette-correlated bit-flips} \label{apdx:plaquette-correlated-bit-flip}
Here, we consider plaquette-correlated bit-flips noise channel considered in the main text. Here, since $X_p$ commutes with $X_\ell$, it is clear that the logical-$\bar X$ operator is trivially preserved. We therefore consider the existence of the $W^\mathrm{Z}_{S_\updownarrow}$ operator. We consider the same geometry as that for single-site separable errors.

The correlated plaquette-flip error channel [\cref{eq:plaquette-correlated-bit-flips}] 
can be easily dilated to
\begin{align}
    \mathcal{V}^{\mathcal{N}}_{QE \leftarrow Q}[\cdot] = \mathcal{U}_{QE} [\cdot \otimes \ketbra{\uppi}^{\otimes |P|}_E],
\end{align}
where $\mathcal{U}_{QE}[\cdot] = U_{QE} \cdot U^\dag_{QE}$ and
\begin{align}
    U_{QE} = \prod_{p} U_p,
\end{align}
where $U_p = X_p \otimes \mathbb{P}^0_{E_p} + I_p \otimes \mathbb{P}^1_{E_p}$, with $\ket{\uppi}_{E_p} = \sqrt{\uppi} \ket{0}_{E_p} + \sqrt{1-\uppi} \ket{1}_{E_p}$.

Now, for the localised channel $\mathcal{N}_{S \leftarrow Q} = \mathrm{tr}_{Q \setminus S} \mathcal{N}_{Q' \leftarrow Q}$,
we choose the complementary channel to be simply given by
\begin{align}
    \widehat{\mathcal{N}}_{E_S \leftarrow Q} = \mathrm{tr}_{S} \mathcal{V}^{\mathcal{N}}_{QE \leftarrow Q}.
\end{align}

Now let's consider the point \cref{eq:sum J decomp z2} in the proof of \cref{thm:fat} for the $\mathbb{Z}_2$ case, where similarly to before, we write $\mathcal{\Delta}_{Q \leftarrow R}$ as a signed sum over $\tilde{\mathcal{\Delta}}^{\Gamma'}_{Q \leftarrow R}$. $\Gamma'$ is a set of self-osculating walks on the dual lattice, $\Gamma' = \{\gamma_*^{(1)}, \gamma_*^{(2)}, \dots\}$. For simplicity, let's consider the case where there is only one self-osculating walk, $\gamma_*$. The generalisation to multiple self-osculating walks will be obvious. Again, for simplicity, denote $\mathcal{\Delta}^{\gamma_*} = \tilde{\mathcal{\Delta}}^{\Gamma'}_{Q \leftarrow R}$.

To upper bound the diamond norm, denote $\Phi_{\gamma_*} = \tilde{\mathcal{\Delta}}^{\Gamma'}_{Q \leftarrow R}[O_{RR'}]$, an output of $\tilde{\mathcal{\Delta}}^{\Gamma'}_{Q \leftarrow R}$ for any input $O_{RR'}$ in the enlarged space, $RR'$, where $R'$ has the same dimensions as $R$. 
For this operator, we have
\begin{align}
    \Phi_{\gamma_*} = \qty(\prod_{\ell \in \gamma_*} (\mathrm{id}_\ell - \mathcal{E}^Z_\ell)) \Phi_{\gamma_*},
    \label{eq:correlated plaquette Z proj}
\end{align}
with $\gamma_*$ the SOW on the dual lattice.
The image of this operator under the complementary channel can then be written as
\begin{align}
    \widehat{\mathcal{N}}_{E_S \leftarrow Q}[\Phi_{\gamma_*}] = \sum_{\vec b, \vec b'} \mathrm{tr}_S \qty[\qty(\prod_{p \in P_S} X^{b_p}) \Phi_\gamma \qty(\prod_{p \in P_S} X^{b'_{p}})] \otimes \qty(\prod_{p \in P_S} \mathbb{P}^{b_p}_{E_p}) \ketbra{\vec \uppi}_E \qty(\prod_{p' \in P_S} \mathbb{P}^{b'_{p}}_{E_p}),
    \label{eq:correlated plaquette b sum}
\end{align}
where $b_p \in \{0, 1\}$, and also define the corresponding Ising variables $\sigma_p = -1 \leftrightarrow b_p = 0$, $\sigma_p = 1 \leftrightarrow b_p = 1$, and $P_S$ is the set of plaquettes for which all edges are contained in $S$. Now the SOW on the links of the dual lattice always goes through a sequence of plaquettes on the primal lattice. Therefore, the plaquettes can be numbered $1, 2, \dots, |\gamma_*| -1$. Here, the same plaquette may appear multiple times in this sequence if the paths osculate. We will refer to this set of plaquettes as $P_{\gamma_*}$. Now, consider the trace over $\gamma_*$,
\begin{align}
    \tr_{\gamma_*} \left[ \left(\prod_p X_p^{b_p}\right) \Phi_{\gamma_*} \left(\prod_{p'} X_{p'}^{b'_{p'}}\right) \right].
\end{align}
Labelling the $\ell \in \gamma_*$ sequentially too, consider $\ell = 1$. By Eq.~\eqref{eq:correlated plaquette Z proj}, if we decompose $\Phi_{\gamma_*}$ in terms of Pauli strings, every string must feature  $X_1$ or $Y_1$ for this qubit. Therefore, we require that $\sigma_1 \sigma'_1 = -1$, for the trace to be non-zero, i.e. the binary variables must disagree. Next, consider $\ell = 2$. Since this link neighbours both $p=1$ and $p=2$, we require $\sigma_1 \sigma'_1 \sigma_2 \sigma'_2 = -1$, which simplifies to $\sigma_2 \sigma'_2 = +1$. This repeats, with the signs of the products $\sigma_i \sigma_i'$ alternating.  For $\Gamma'$ containing multiple SOWs, the same procedure can be applied to each $\gamma \in \Gamma'$ separately. Overall, this leads to a set of constraints on the variables $\sigma_p$, which either cannot be satisfied (in which case the operator \eqref{eq:correlated plaquette b sum} vanishes), or which have $2^{|\gamma_*|}$ solutions, each of which have $\lfloor |\gamma_*|/2 \rfloor$ plaquettes with $b_p = b_p'$, and $\lceil|\gamma_*|/2\rceil$ plaquettes with $b_p \neq b_p'$.

To upper bound the norm for $\widehat{\mathcal{N}}[\Phi_{\gamma_*}]$, we observe that
\begin{align}
    \mathbb{P}^{b_p}_{E_p} \ketbra{\uppi} \mathbb{P}_{E_p}^{b'_p} = \begin{cases}
        \mqty(p & 0 \\ 0 & 1-p) & \text{ if } b_p = b'_p, \\
        \mqty(0 & \sqrt{p(1-p)} \\ \sqrt{p(1-p)} & 0) & \text{ if } b_p \neq b'_p.
    \end{cases}
\end{align}
The trace-norms of the above matrices are $1$ ($b_p = b'_p$) and $2 \sqrt{p(1-p)}$ ($b_p \neq b'_p$), respectively. Then the norm of the operator in question can be bounded as
\begin{align}
    \norm{\widehat{\mathcal{N}}_{E_S \leftarrow Q}[\Phi_{\gamma_*}]}_1 & = \norm{\sum'_{\vec b} \mathrm{tr}_S \qty[\qty(\prod_{p \in P} X^{b_p}) \Phi_\gamma \qty(\prod_{p \in P} X^{b'(b_p)})] \otimes \qty(\prod_{p \in P} \mathbb{P}^{b_p}_{E_p}) \ketbra{\vec \uppi}_E \qty(\prod_{p' \in P} \mathbb{P}^{b'(b_p)}_{E_p})}_1 \\
    & \leq \sum'_{\vec b} \norm{\mathrm{tr}_S \qty[\qty(\prod_{p \in P} X^{b_p}) \Phi_\gamma \qty(\prod_{p \in P} X^{b'(b_p)})]}_1 \qty(2 \sqrt{p(1-p)})^{|\gamma_*|/2} \\
    & \leq 2^{|\gamma_*|} \qty(2\sqrt{p(1-p)})^{|\gamma_*|/2} \norm{\Phi_{\gamma_*}}_1,
\end{align}
where the sums over $\vec{b}$ are over configurations that satisfy the constraints described above. We therefore have
\begin{align}
    \norm{\widehat{\mathcal{N}}_{E_S \leftarrow Q} \mathcal{\Delta}^{\gamma_*}_{Q \leftarrow R} [O_{RR'}]}_1 \leq 2^{|\gamma_*|} \qty(2\sqrt{p(1-p)})^{|\gamma_*|/2} \norm{\mathcal{\Delta}^{\gamma_*}_{Q \leftarrow R}[O_{RR'}]}_1 \leq 2^{|\gamma_*|} \qty(2\sqrt{p(1-p)})^{|\gamma_*|/2} \times 2 \times \norm{O_{RR'}}_1,
\end{align}
repeatedly using the property that $\norm{\mathcal{T}_1 \mathcal{T}_2}_\diamond \leq \norm{\mathcal{T}_1}_\diamond \norm{\mathcal{T}_2}_\diamond$, $\norm{\mathcal{C}}_\diamond =1$ for a channel $\mathcal{C}$, and  $\norm{\mathrm{id}_\ell - \mathcal{E}^Z_\ell}_\diamond = 1$, we have
\begin{align}
    \norm{\widehat{\mathcal{N}}_{E_S \leftarrow Q} \mathcal{\Delta}^{\gamma_*}_{Q \leftarrow R} [O_{RR'}]}_1 \leq 2 \times 2^{|\gamma_*|} \qty(2\sqrt{p(1-p)})^{|\gamma_*|/2} \norm{O_{RR'}}_1
\end{align}
for any input $O_{RR'}$. Therefore by the definition of diamond norm, we have
\begin{align}
    \norm{\widehat{\mathcal{N}}_{E_S \leftarrow Q} \mathcal{\Delta}^{\gamma_*}_{Q \leftarrow R}}_\diamond \leq 2 \times 2^{|\gamma_*|} \qty(2\sqrt{p(1-p)})^{|\gamma_*|/2}.
\end{align}
Looking at the factors that exponentiate $|\gamma_*|$, the above scales as $e^{\beta(p)|\gamma_*|}$, where
\begin{align}
    \beta(p) = \frac{3}{2} \ln 2 + \frac{1}{4} \ln(p(1-p)).
\end{align}
This determines an effective (inverse) noise strength for this correlated noise channel, which implies the threshold described in the main text. 

\section{Derivation of diamond norm expressions} \label{apdx:diamond-norms}
In this section, we will compute the effective noise strengths $T(\mathcal{N}_\ell)$ for various single-site noise channels, including those listed in Table \ref{tab:beta-ps-and-thresholds}. In general, evaluating the diamond norm is challenging due to its definition in terms of an optimization problem. There is a relatively simple relation that one can use to upper bound the norm: by Eq. (3.414) of Ref. \onlinecite{watrous2018theory}, for any superoperator $\mathcal{T}_A$, we have
\begin{align}
    \norm{\mathcal{T}_A}_\diamond \leq d_A \norm{(\mathcal{T}_A \otimes \mathrm{id}_{A'}) \Phi_{AA'}}_1,
\end{align}
where $A'$ has the same dimension as $A$ and $\Phi_{AA'}$ is the Bell state on $A$ and $A'$. However, the factor of $d_A$ is generally quite loose, as we will show by example below. In addition, for Hilbert spaces of low-dimensional qudits, the diamond norms could also be computed numerically via semi-definite programming. Instead, here we will explore settings where the diamond norm can be evaluated analytically, to find:
\begin{lemma} \label{lem:effective-noise-strengths}
    The effective noise strength for the following channels are computed as follows. Here, the diamond norms for $\mathcal{T}^{(p)} = \widehat{\mathcal{N}}_{E_\ell \leftarrow \ell} (\mathrm{id}_\ell - \mathcal{E}^{X,Z}_\ell)$ are given for qubits ($d_\ell = 2$).

\begin{center}
    \setlength{\tabcolsep}{10pt}
    \renewcommand{\arraystretch}{1.8}
    \begin{tabular}{ccc}
        \toprule
        {Channel name \& definition} $\mathcal{N}_\ell^{(p)}[\sigma]$ & {Superoperator $\mathcal{T}^{(p)}$} & {Diamond norm $\norm{\mathcal{T}^{(p)}}_\diamond$} \\
        \hline
        
        \multirow{3}{*}{\shortstack[c]{{Bit-flip error} \\ \\ $\displaystyle (1-p) \sigma + p X_\ell \sigma X_\ell$}} 
        & ${\widehat{\mathcal{N}}_{E_\ell \leftarrow \ell}(\mathrm{id}_\ell - \mathcal{D}_\ell)}$ & $2\sqrt{p(1-p)}$ \\
        & ${\widehat{\mathcal{N}}_{E_\ell \leftarrow \ell}(\mathrm{id}_\ell - \mathcal{E}^X_\ell)}$ & $0$ \\
        & ${\widehat{\mathcal{N}}_{E_\ell \leftarrow \ell}(\mathrm{id}_\ell - \mathcal{E}^Z_\ell)}$ & $2\sqrt{p(1-p)}$ \\
        \hline
        
        \multirow{3}{*}{\shortstack[c]{{Amplitude damping} \\ \\ $\displaystyle \smqty[1 & 0 \\ 0 & \sqrt{1-p}]_\ell \sigma \smqty[1 & 0 \\ 0 & \sqrt{1-p}]_\ell$ \\ $\displaystyle + \; \smqty[0 & \sqrt{p} \\ 0 & 0]_\ell \sigma \smqty[0 & 0 \\ \sqrt{p} & 0]_\ell$}} 
        & ${\widehat{\mathcal{N}}_{E_\ell \leftarrow \ell}(\mathrm{id}_\ell - \mathcal{D}_\ell)}$ & $\sqrt{p} + \frac{p}{2}$ \\
        & ${\widehat{\mathcal{N}}_{E_\ell \leftarrow \ell}(\mathrm{id}_\ell - \mathcal{E}^X_\ell)}$ & $\sqrt{p}$ \\
        & ${\widehat{\mathcal{N}}_{E_\ell \leftarrow \ell}(\mathrm{id}_\ell - \mathcal{E}^Z_\ell)}$ & $\sqrt{p}$ \\
        \hline
        
        \multirow{3}{*}{\shortstack[c]{{Partially depolarising} \\ \\ $\displaystyle (1-p)\sigma + p \frac{I_\ell}{d_\ell}\mathrm{tr}_\ell[\sigma]$}} 
        & ${\widehat{\mathcal{N}}_{E_\ell \leftarrow \ell}(\mathrm{id}_\ell - \mathcal{D}_\ell)}$ & $\frac{p(d_\ell^2-2)}{d_\ell^2} + \frac{\sqrt{p}}{d_\ell} \sqrt{4(d^2_\ell - 1) - p(3d^2_\ell - 4)}$ \\
        & ${\widehat{\mathcal{N}}_{E_\ell \leftarrow \ell}(\mathrm{id}_\ell - \mathcal{E}^X_\ell)}$ & $\sqrt{p(2-p)}$ \\
        & ${\widehat{\mathcal{N}}_{E_\ell \leftarrow \ell}(\mathrm{id}_\ell - \mathcal{E}^Z_\ell)}$ & $\sqrt{p(2-p)}$ \\
        \hline
        
        \multirow{3}{*}{\shortstack[c]{{Heralded erasure} \\ \\ $\displaystyle (1-p)\sigma \otimes \ketbra{0}_C$ \\ $\displaystyle + \; p \tr_\ell[\sigma] \otimes \ketbra{e}_\ell \otimes \ketbra{1}_C$}} 
        & ${\widehat{\mathcal{N}}_{E_\ell \leftarrow \ell}(\mathrm{id}_\ell - \mathcal{D}_\ell)}$ & $2 p \qty(1 - \frac{1}{d_\ell^2})$ \\
        & ${\widehat{\mathcal{N}}_{E_\ell \leftarrow \ell}(\mathrm{id}_\ell - \mathcal{E}^X_\ell)}$ & $p$ \\
        & ${\widehat{\mathcal{N}}_{E_\ell \leftarrow \ell}(\mathrm{id}_\ell - \mathcal{E}^Z_\ell)}$ & $p$ \\
        \botrule
    \end{tabular}
\end{center}
\end{lemma}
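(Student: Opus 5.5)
The plan is to compute each entry from an explicit Stinespring dilation. For a channel with Kraus operators $\{K_i\}$ I will use the canonical complementary channel $\widehat{\mathcal{N}}_{E_\ell \leftarrow \ell}[O] = \sum_{i,j} \tr[K_i O K_j^\dagger]\,\ket{i}\bra{j}_E$. Any other complement differs by an isometry on $E$, which leaves diamond norms unchanged.

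For heralded erasure I will include the flag register $C$ in the system output. The environment then receives $\ell$ itself on the erasure branch, together with a classical copy of the flag.

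Each superoperator $\mathcal{T}^{(p)}$ is Hermiticity-preserving. By the standard result in Ref.~\onlinecite{watrous2018theory}, its diamond norm is therefore attained on a pure input $\ket{u}\bra{u}_{\ell\ell'}$ with $\ell'$ a copy of $\ell$. Up to local unitaries on $\ell'$, this input is fixed by a Schmidt basis on $\ell$ and the Schmidt coefficients. For qubits this is a low-dimensional optimisation. Each entry will be obtained by (i) writing $\mathcal{T}^{(p)}[O]$ explicitly in terms of the Pauli components of $O$ that survive $\mathrm{id}_\ell - \mathcal{E}^{X,Z}_\ell$ or $\mathrm{id}_\ell - \mathcal{D}_\ell$, (ii) proving an upper bound via the triangle inequality and H\"older's inequality, and (iii) matching it with an explicit input state.

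The simple cases I will do first. For bit-flip, $\widehat{\mathcal{N}}[O]$ has entries $(1-p)\tr O$, $p\tr O$ and $\sqrt{p(1-p)}\tr[XO]$. Hence $\mathrm{id}-\mathcal{E}^X$, whose image is $\mathrm{span}\{Y,Z\}$, is annihilated, giving $0$. On the image of $\mathrm{id}-\mathcal{E}^Z$, the map becomes $O \mapsto \sqrt{p(1-p)}\,\tr_\ell[XO]\otimes \sigma^x_E$, whose norm is $\|\sigma^x\|_1 \sqrt{p(1-p)} = 2\sqrt{p(1-p)}$. Equality is attained by a $\ket{0}$ or a Bell input, and the same computation handles $\mathrm{id}-\mathcal{D}$.

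For heralded erasure, the complement is $O \mapsto (1-p)\tr[O]\ket{0}\bra{0} \oplus p\,O$. On traceless parts this is $p$ times a projection map, so the norm is $p\,\|\mathrm{id}-\mathcal{E}\|_\diamond$ or $p\,\|\mathrm{id}-\mathcal{D}\|_\diamond$. I will then evaluate $\|\mathrm{id}-\mathcal{E}^X\|_\diamond = 1$ and $\|\mathrm{id}-\mathcal{D}\|_\diamond = 2(1-1/d^2)$, both with a maximally entangled input; the latter is the textbook value.

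For amplitude damping with $\mathrm{id}-\mathcal{E}^{X,Z}$, the output of $\widehat{\mathcal{N}}$ on a traceless operator has off-diagonal entry $\sqrt{p}\,\tr[\sigma^-O]$-type terms and diagonal terms $\propto p\,\tr[\ket{1}\bra{1}O]$. The dephasing kills one of these, leaving a norm of $\sqrt{p}$ after optimisation. I will check that the optimal input is a suitable single-qubit pure state. For the $\mathcal{E}$-versions of partial depolarising, I will use a Pauli-twirl dilation with weights $(1-3p/4, p/4, p/4, p/4)$ and reduce to the same two-Pauli computation, which should yield $\sqrt{p(2-p)}$.

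The main obstacle is the $\mathrm{id}-\mathcal{D}$ rows for amplitude damping ($\sqrt{p}+p/2$) and for depolarising in general $d$. In these, diagonal and off-diagonal environment blocks coexist, so the output on $\ket{u}\bra{u}$ is a genuinely $2\times2$ or $d^2\times d^2$ Hermitian matrix whose trace norm depends on the Schmidt data. For amplitude damping I will parametrise $\ket{u}=\sqrt{\lambda}\ket{00}+\sqrt{1-\lambda}\ket{11}$ rotated by a qubit unitary, and compute the eigenvalues of the resulting low-rank matrix in closed form. I will then maximise, expecting the optimum at a boundary or symmetric point.

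For depolarising, I will exploit $U\otimes\bar U$ covariance of both the channel and $\mathcal{D}$. This reduces the optimisation to the maximally entangled input plus a one-parameter family. The output then has a two-eigenvalue structure: an isotropic part and a rank-one part. Diagonalising it should produce the square-root expression in the table. I expect this covariance reduction and the verification that the maximally entangled input is optimal to be the delicate part.
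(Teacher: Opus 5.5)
Your bit-flip and heralded-erasure computations are fine and agree with the paper. In those rows only a single Pauli component survives, or the complement is $p$ times an isometric copy of the input, so the term-by-term bound is already tight.

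The gap is in the amplitude-damping and depolarising rows. There your stated upper-bound tool, the triangle and H\"older inequalities applied to the Pauli components of the input, is not tight. The ingredient that would close it, optimality of the maximally entangled input, is left as an expectation (``expecting the optimum at a boundary or symmetric point'', ``the delicate part'').

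Concretely, for amplitude damping the complement composed with $\mathrm{id}_\ell-\mathcal{D}_\ell$ acts as $X\mapsto\sqrt{p}\,X_E$, $Y\mapsto\sqrt{p}\,Y_E$, $Z\mapsto p\,Z_E$. On $\rho_{\ell\ell'}=\frac12\sum_P P\otimes\rho_P$ the output is therefore $\frac12(\sqrt{p}\,X_E\otimes\rho_X+\sqrt{p}\,Y_E\otimes\rho_Y+p\,Z_E\otimes\rho_Z)$. Bounding term by term gives $2\sqrt{p}+p$, not $\sqrt{p}+p/2$.

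Your claim that for $\mathcal{E}^X_\ell$ ``the dephasing kills one of these'' is also false. $X$-dephasing removes only the $X$ component, so both $Y\mapsto\sqrt{p}\,Y_E$ (off-diagonal) and $Z\mapsto p\,Z_E$ (diagonal) survive. A $Y$-eigenstate input gives the lower bound $\sqrt{p}$, but your method only yields the upper bound $\sqrt{p}+p$.

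The missing idea is Pauli covariance, which is how the paper proceeds. Once the identity input is annihilated, each amplitude-damping superoperator is Pauli-diagonal, hence of the form $\sum_P c_P\,P\cdot P^\dagger$ (\cref{lem:pauli-equivalence}). Apply the triangle inequality in \emph{this} representation rather than to the Pauli components of the input. That gives $\|\mathcal{T}\|_\diamond\le\sum_P|c_P|$, which the Bell state saturates, so the norm is $\frac14\sum_{b,b'=\pm1}|\lambda_I+b\lambda_X+bb'\lambda_Y+b'\lambda_Z|$. This produces all three amplitude-damping entries at once; for $\mathcal{E}^X_\ell$ it gives $\max(\sqrt{p},p)=\sqrt{p}$.

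For depolarising noise the environment is larger, so use the group structure instead:
\begin{itemize}
\item The Pauli-Kraus complement satisfies $\widehat{\mathcal{N}}[QOQ^\dagger]=D_Q\,\widehat{\mathcal{N}}[O]\,D_Q^\dagger$ with diagonal phase unitaries $D_Q$ on $E$.
\item The maps $\mathrm{id}-\mathcal{D}$ and $\mathrm{id}-\mathcal{E}^{X,Z}$ commute with Pauli conjugation.
\item Write a pure input as $(I\otimes M)\ket{\Phi}$ with $\tr[M^\dagger M]=d$, and use $\|A\|_1=\max_U|\tr[UA]|$.
\item Twirl over the Pauli group and use $\frac{1}{d^2}\sum_P P^\top M^\dagger M P^*=I$.
\end{itemize}
This shows the Bell input is optimal outright (\cref{lem:generalised-pauli-invariant}); no residual one-parameter family remains.

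The Bell-state output for the $\mathcal{D}$ row is also not ``isotropic plus rank one''. It splits into $d$ two-dimensional blocks spanned by $\ket{I}_E\ket{\alpha}$ and $(d^2-1)^{-1/2}\sum_{P\neq I}\ket{P}_E P^\top\ket{\alpha}$. Each block has trace norm $\frac{\sqrt{p}}{d^2}\sqrt{4(d^2-1)-p(3d^2-4)}$. The remaining $(d^3-2d)$-dimensional eigenspace has eigenvalue $-p/d^3$, and summing everything gives the tabulated formula.

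For the $\mathcal{E}^{X,Z}$ rows ($d=2$) the output is an $8\times 8$ matrix, not a reuse of the two-Pauli qubit computation. It has eigenvalues $\pm\frac14\sqrt{p(2-p)}$, each twice, and four zeros, giving $\sqrt{p(2-p)}$.
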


We will also prove the following identity upper bounding the information leaked into the environment via the distance to the identity, stated informally in the main text.
\begin{lemma} \label{lem:info-leaked-to-dist-to-id}
	Consider a channel $\mathcal{N}_{B \leftarrow A}$, where $d_B \geq d_A$. Then there exists a choice of Stinespring dilation and therefore complementary channel such that the information leaked into the environment is upper bounded by the diamond distance to the identity channel:
	\begin{align}
		\norm*{\widehat{\mathcal{N}}_{E \leftarrow A}(\mathrm{id}_A - \mathcal{D}_A)}_\diamond \leq 4 \qty(1 - \frac{1}{d^2_A}) \times \norm{\mathcal{N}_{B \leftarrow A} - \mathrm{id}_{B \leftarrow A}}_\diamond^{1/2},
	\end{align}
	where $\mathcal{D}_A$ is the depolarising channel, and $\widehat{\mathcal{E}}_{E \leftarrow A}$ denotes the complementary channel of $\mathcal{E}_{B \leftarrow A}$. Here, the `identity channel' is $\mathrm{id}_{B \leftarrow A}[\cdot] = \mathrm{id}_A$ if $d_A = d_B$, $\mathrm{id}_{B \leftarrow A}[\cdot] = (\cdot) \otimes \sigma_{B \setminus A}$ if $d_B > d_A$ for any $\sigma_{B \setminus A}$.
\end{lemma}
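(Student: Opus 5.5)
The plan is to go through the information–disturbance (Stinespring continuity) theorem, \cref{lem:stinespring cont}. Let $\epsilon \coloneqq \norm{\mathcal{N}_{B \leftarrow A} - \mathrm{id}_{B \leftarrow A}}_\diamond$. First I would fix a dilation of the reference channel $\mathrm{id}_{B \leftarrow A}$. For $d_B = d_A$ take the trivial isometry $V^{\mathrm{id}}_{BE \leftarrow A} = I \otimes \ket{0}_E$. For $d_B > d_A$ take $V^{\mathrm{id}} = (\cdot)\otimes\ket{\psi}_{(B\setminus A)E}$, where $\ket{\psi}$ purifies $\sigma_{B\setminus A}$. In both cases the complementary channel $\widehat{\mathrm{id}}_{E \leftarrow A}$ has the form $X \mapsto \tr[X]\,\omega_E$ for a fixed state $\omega_E$. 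It therefore annihilates every traceless operator, so $\widehat{\mathrm{id}}_{E\leftarrow A}\circ(\mathrm{id}_A - \mathcal{D}_A) = 0$.

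Next I would apply \cref{lem:stinespring cont} with the roles chosen so that $V^{\mathrm{id}}$ is held fixed. This gives a dilation $V^{\mathcal{N}}_{BE\leftarrow A}$ of $\mathcal{N}_{B\leftarrow A}$ on the same environment $E$, enlarged if necessary so that both dilations fit, with isometric channels satisfying $\|\mathcal{V}^{\mathcal{N}} - \mathcal{V}^{\mathrm{id}}\|_\diamond \le 2\sqrt{\epsilon}$. I define $\widehat{\mathcal{N}}_{E\leftarrow A} \coloneqq \tr_B\circ\mathcal{V}^{\mathcal{N}}$, which is the claimed choice of complementary channel. Partial trace is a channel, so contractivity gives $\|\widehat{\mathcal{N}}_{E\leftarrow A} - \widehat{\mathrm{id}}_{E\leftarrow A}\|_\diamond \le 2\sqrt{\epsilon}$.

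Combining the two steps,
\begin{align}
\norm{\widehat{\mathcal{N}}_{E\leftarrow A}\circ(\mathrm{id}_A-\mathcal{D}_A)}_\diamond = \norm{(\widehat{\mathcal{N}}_{E\leftarrow A}-\widehat{\mathrm{id}}_{E\leftarrow A})\circ(\mathrm{id}_A-\mathcal{D}_A)}_\diamond \le 2\sqrt{\epsilon}\,\norm{\mathrm{id}_A-\mathcal{D}_A}_\diamond .
\end{align}
It remains to bound $\norm{\mathrm{id}_A-\mathcal{D}_A}_\diamond$. I would evaluate it on the maximally entangled input: the output $\Phi - I/d_A^2$ has trace norm $2(1-1/d_A^2)$, so this is the value on that input and a lower bound in general. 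Showing that the supremum is attained there is the main obstacle. The most direct route is to write $\mathrm{id}-\mathcal{D} = (1-1/d_A^2)(\mathrm{id} - \mathcal{G})$, where $\mathcal{G}$ is the twirl over a unitary 1-design with the identity removed, and then use the triangle inequality. If that step proved delicate, I would fall back on the trivial bound $\norm{\mathrm{id}_A-\mathcal{D}_A}_\diamond\le 2$. This still yields a bound of the form $4\sqrt{\epsilon}$, and the factor $(1-1/d_A^2)\le 1$ in the stated inequality is then only a refinement. Assembling the pieces gives $4(1-1/d_A^2)\sqrt{\epsilon}$, as claimed.
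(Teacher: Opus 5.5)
Your proposal is correct and follows the paper's proof essentially step for step. You write $\widehat{\mathrm{id}}_{E\leftarrow A}$ as a replacement channel and subtract it for free; you then use Stinespring continuity (the paper cites Kretschmann et al.\ and Aharonov et al.'s Lemma 12 separately, which together are exactly \cref{lem:stinespring cont}), contractivity under $\tr_B$, and sub-multiplicativity. The step you flagged as the obstacle is not delicate. Since $\mathcal{G}=\frac{1}{d_A^2-1}\sum_{P\neq I}P\,\cdot\,P^\dagger$ is a mixed-unitary channel, the triangle inequality immediately gives $\|\mathrm{id}_A-\mathcal{D}_A\|_\diamond\le 2(1-1/d_A^2)$. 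This is the same computation behind the paper's \cref{lem:id-D}, where a Pauli superoperator has diamond norm $\sum_P|c_P|$. So your fallback is never needed, which is just as well, since it would only give the constant $4$ rather than the stated $4(1-1/d_A^2)$.
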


We will also derive a number of important properties of the diamond norm that will be useful in future when evaluating  $T(\mathcal{N}_\ell)$ for different channels. Some of the results may overlap with that of Ref.~\onlinecite{magesan2012characterizing}. \\


It will be useful to consider the set of generalised Pauli operators acting on of Hilbert space of dimension $d$. They are given by
\begin{align} \label{eq:generalised-paulis}
    P = X^{a_P} Z^{b_P},
\end{align}
where $Z$ and $X$ are clock and shift operators, respectively. These are generalisations of the qubit Pauli-$Z$ and $X$ operators. They are unitary but not Hermitian, and have the properties
\begin{align}
    Z^d = X^d = 1,
\end{align}
\begin{align} \label{eq:generalised-pauli-commutation}
    ZX = \omega XZ,
\end{align}
where $\omega = \exp(2 \pi \mathrm{i}/d)$ such that $\omega^d = 1$. They form a complete basis for operators, which means that any operator on a Hilbert space of dimension $d$ can be written in the form $O = \sum_P o_P P$, and traceless except for the identity Pauli operator, i.e. $\mathrm{tr}[P] = d \times \delta_{a_P, 0} \delta_{b_P, 0}$.

The commutation relations \eqref{eq:generalised-pauli-commutation} can be written in a compact form as the following. For Paulis $P$ and $Q$,
\begin{align} \label{eq:generalised-pauli-commutation-compact}
    QP = \omega^{a_P b_Q - a_Q b_P} PQ, \quad P^\dagger Q^\dagger = \omega^{a_Q b_P - a_P b_Q} Q^\dagger P^\dagger.
\end{align}

To calculate the diamond norms exactly, we will rely on the following results and definitions.

\begin{lemma}[Theorem 4.14 of Ref.~\onlinecite{watrous2018theory}] \label{lem:pauli-equivalence}
    Consider the set of generalised Pauli matrices $\{P\}$.

    Then the following three properties of a superoperator $\mathcal{T}$ are equivalent:
    \begin{itemize}
        \item[(a)] $\mathcal{T}$ is a Pauli superoperator, i.e. it can be written in the form $\mathcal{T}[\cdot] = \sum_P c_P P \cdot P^\dagger$, where $P$ is a Pauli and $c_P$ are not necessarily a positive real number.
        \item[(b)] $\mathcal{T}$ is Pauli diagonal, i.e. $\mathcal{T}[P] = \lambda_P P$ for all Pauli $P$.
        \item[(c)] $\mathcal{T}$ is Pauli-to-Pauli covariant, i.e. $\mathcal{T}[P O P^\dagger] = P \mathcal{T}[O] P^\dagger$ for all Pauli $P$.
    \end{itemize}
\end{lemma}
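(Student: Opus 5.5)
The statement to prove is the equivalence of (a) Pauli superoperator, (b) Pauli diagonal and (c) Pauli-to-Pauli covariant. The plan is to work throughout in the generalised Pauli basis $\{P\}$ of \eqref{eq:generalised-paulis}. This basis is orthogonal under the Hilbert--Schmidt inner product, $\tr[P^\dagger Q] = d\,\delta_{P,Q}$, and spans the space of operators, so every superoperator is determined by its action on it. I will prove the cycle (a)$\Rightarrow$(b)$\Rightarrow$(c)$\Rightarrow$(a). The step (c)$\Rightarrow$(a) is closed off through (b), using invertibility of the character matrix.

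For (a)$\Rightarrow$(b), take $\mathcal{T}[\cdot] = \sum_P c_P P\cdot P^\dagger$. By the commutation relation \eqref{eq:generalised-pauli-commutation-compact}, $P Q P^\dagger = \omega^{\langle P,Q\rangle} Q$, where $\langle P,Q\rangle$ denotes the symplectic form $a_P b_Q - a_Q b_P$ up to sign convention. Summing over $P$ gives $\mathcal{T}[Q] = \lambda_Q Q$ with $\lambda_Q = \sum_P c_P \omega^{\langle P,Q\rangle}$.

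For (b)$\Rightarrow$(c), I will check the covariance identity on basis elements $O = Q$ and extend by linearity. On the left, $\mathcal{T}[PQP^\dagger] = \omega^{\langle P,Q\rangle}\lambda_Q Q$. On the right, $P\mathcal{T}[Q]P^\dagger = \lambda_Q \omega^{\langle P,Q\rangle} Q$. These agree.

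For (c)$\Rightarrow$(b), expand $\mathcal{T}[Q] = \sum_R t_{RQ} R$. Covariance gives $\omega^{\langle P,Q\rangle} \sum_R t_{RQ} R = \sum_R t_{RQ} \omega^{\langle P,R\rangle} R$ for all $P$. It follows that $t_{RQ}\big(\omega^{\langle P,Q\rangle} - \omega^{\langle P,R\rangle}\big) = 0$ for every $P$. The form is nondegenerate on $\mathbb{Z}_d^2$, so whenever $R \neq Q$ some $P$ separates them (e.g.\ $P=X$ or $Z$). Hence $t_{RQ} = 0$ for $R \neq Q$, which is (b).

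For (b)$\Rightarrow$(a), I need coefficients $c_P$ with $\sum_P c_P \omega^{\langle P,Q\rangle} = \lambda_Q$ for all $Q$. The map $(P,Q)\mapsto\omega^{\langle P,Q\rangle}$ is a nondegenerate bicharacter of the group $\mathbb{Z}_d^2$, so the $d^2\times d^2$ matrix $M_{QP}=\omega^{\langle P,Q\rangle}$ is, up to relabelling, the character table of $\mathbb{Z}_d^2$. It therefore satisfies $M^\dagger M = d^2 I$, giving $c_P = d^{-2}\sum_Q \omega^{-\langle P,Q\rangle}\lambda_Q$. The main obstacle is the orthogonality $\sum_Q \omega^{\langle P-P',Q\rangle} = d^2\delta_{P,P'}$. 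This requires nondegeneracy of the symplectic form, which holds for any $d$ since $\langle P,\cdot\rangle$ is trivial only when $a_P=b_P=0$ mod $d$. Finally, I note the conventional phases in the label $P=X^{a}Z^{b}$ cancel in $P\cdot P^\dagger$, so nothing depends on them.
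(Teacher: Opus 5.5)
The paper does not prove this lemma itself; it only cites Ref.~\onlinecite{watrous2018theory}. So there is no internal argument to compare against, and your proof serves as a correct, self-contained replacement.

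All four implications check out:
\begin{itemize}
\item The conjugation phase $PQP^\dagger=\omega^{\langle P,Q\rangle}Q$ gives (a)$\Rightarrow$(b) and (b)$\Rightarrow$(c) directly.
\item In (c)$\Rightarrow$(b), testing covariance against $P=X$ and $P=Z$ separates differing $b$ and $a$ labels respectively. Because $\omega$ is a primitive $d$-th root, this works for composite $d$ as well.
\item In (b)$\Rightarrow$(a), character orthogonality $\sum_Q\omega^{\langle P'-P,Q\rangle}=d^2\delta_{P,P'}$ gives $M^{-1}=d^{-2}M^\dagger$. The map $\sum_P c_P P\cdot P^\dagger$ then agrees with $\mathcal{T}$ on the Pauli basis, hence everywhere. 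State that last step explicitly, since it is what actually closes the implication.
\end{itemize}

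Beyond the bare citation, your route gives the explicit inversion $c_P=d^{-2}\sum_Q\omega^{-\langle P,Q\rangle}\lambda_Q$. Combine this with \cref{lem:diamond-norm-is-bell-state}, which gives $\|\mathcal{T}\|_\diamond=\sum_P|c_P|$. For $d=2$, the four combinations $\tfrac14(\lambda_I\pm\lambda_X\pm\lambda_Y\pm\lambda_Z)$ with an even number of minus signs are exactly $c_I,c_Z,c_X,c_Y$. These are precisely the four terms in \cref{lem:qubit-to-qubit}, so that formula follows in one line. The paper instead gets it by diagonalising $\mathcal{T}[\Phi_{AA'}]$ through $X_AX_{A'}$ and $Z_AZ_{A'}$.
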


For Pauli superoperators, we may calculate their diamond norms exactly via the next lemma:

\begin{lemma} \label{lem:diamond-norm-is-bell-state}
    For Pauli superoperators, its diamond norm is given by the trace norm
    \begin{align}
        \norm{\mathcal{T}_{B \leftarrow A}}_\diamond = \norm{\mathcal{T}_{B \leftarrow A} [\Phi_{AA'}]}_1 = \sum_P |c_P|,
    \end{align}
    where $\Phi_{AA'}$ is a Bell state such that $A'$ has the same dimensions as $A$.
\end{lemma}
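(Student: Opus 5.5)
The plan is to show that for a Pauli superoperator the supremum in the diamond norm is attained on the maximally entangled input, and that the resulting operator is diagonal in the Bell basis. That makes the trace norm equal to $\sum_P |c_P|$. Write $\mathcal{T}[\cdot] = \sum_P c_P P \cdot P^\dagger$, as allowed by \cref{lem:pauli-equivalence}(a).

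First compute $(\mathcal{T}\otimes \mathrm{id}_{A'})[\Phi_{AA'}] = \sum_P c_P (P\otimes I)\Phi_{AA'}(P\otimes I)^\dagger$. The vectors $(P\otimes I)\ket{\Phi}$, with $P$ ranging over the $d^2$ generalised Paulis, are orthonormal, because
\begin{align}
\bra{\Phi}(Q^\dagger P\otimes I)\ket{\Phi} = \tr[Q^\dagger P]/d = \delta_{PQ}.
\end{align}
Hence the output is a Bell-basis-diagonal matrix with eigenvalues $c_P$, and its trace norm is $\sum_P|c_P|$. This gives $\|\mathcal{T}\|_\diamond \geq \sum_P |c_P|$ directly, since $\Phi_{AA'}$ is an admissible input.

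For the upper bound, I would use the triangle inequality. Each map $\mathcal{U}_P[\cdot] = P\cdot P^\dagger$ is a unitary channel, so $\|\mathcal{U}_P\|_\diamond = 1$. Therefore $\|\mathcal{T}\|_\diamond \leq \sum_P |c_P|\,\|\mathcal{U}_P\|_\diamond = \sum_P |c_P|$. Combining the two bounds gives the chain of equalities in the statement.

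A small subtlety concerns the index $B\leftarrow A$. Pauli superoperators map a space to itself, so I would note that $B\cong A$ here. I would also check uniqueness of the coefficients $c_P$, since the Paulis $P\cdot P^\dagger$ are linearly independent as superoperators; this follows from the orthogonality computation above via the Choi isomorphism. The step needing most care is the orthonormality of the Bell-basis vectors for generalised (non-Hermitian) Paulis, which reduces to $\tr[Q^\dagger P] = d\,\delta_{PQ}$, itself a consequence of $\tr[P] = d\,\delta_{P,I}$ and the commutation relation \eqref{eq:generalised-pauli-commutation-compact}. Otherwise the argument is short.
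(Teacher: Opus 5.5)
Your proof is correct and follows essentially the same argument as the paper. The Bell-state output is diagonal in the orthonormal basis $\{(P\otimes I)\ket{\Phi}\}_P$, which gives $\sum_P|c_P|$. The matching upper bound comes from the triangle inequality together with the unit norm of each unitary conjugation; the paper phrases this step as isometric invariance of the trace norm applied to each input state, after a reduction to pure inputs that is not actually needed.
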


Combining \cref{lem:pauli-equivalence,lem:diamond-norm-is-bell-state}, we can come up with a simple formula for diamond norms for qubit Pauli superoperators:
\begin{lemma} \label{lem:qubit-to-qubit}
    For a qubit-to-qubit Pauli-diagonal superoperator $\mathcal{T}$ its diamond norm is given by
    \begin{align}
        \norm{\mathcal{T}}_\diamond = \frac{1}{4} \Big(|\lambda_I + \lambda_X + \lambda_Y + \lambda_Z| + |\lambda_I - \lambda_X - \lambda_Y + \lambda_Z| + |\lambda_I + \lambda_X - \lambda_Y - \lambda_Z| + |\lambda_I - \lambda_X + \lambda_Y - \lambda_Z|\Big),
    \end{align}
    where $\mathcal{T}[P] = \lambda_P P$ for $P \in \{I, X, Y, Z\}$. 
    
    Note that we use a different convention for Paulis for qubits (i.e. $Y$ vs. $XZ$) compared to the generalised definition \eqref{eq:generalised-paulis}.
\end{lemma}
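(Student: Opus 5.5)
The plan is to write the qubit-to-qubit Pauli-diagonal superoperator $\mathcal{T}$ in its Pauli-superoperator form using \cref{lem:pauli-equivalence}, and then evaluate the diamond norm with \cref{lem:diamond-norm-is-bell-state}. By the equivalence (b)$\Rightarrow$(a), there exist coefficients $c_P$, $P\in\{I,X,Y,Z\}$, such that $\mathcal{T}[\cdot]=\sum_P c_P\, P\cdot P^\dagger$. Then \cref{lem:diamond-norm-is-bell-state} gives $\norm{\mathcal{T}}_\diamond=\sum_P|c_P|$. So the whole task reduces to expressing the $c_P$ in terms of the eigenvalues $\lambda_Q$.

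For this, I compute the action of the Kraus-like form on each Pauli $Q$. For qubit Paulis, $PQP^\dagger=\chi(P,Q)\,Q$, where the sign is $\chi(P,Q)=+1$ if $P$ and $Q$ commute and $-1$ if they anticommute. Hence
\begin{align}
\lambda_Q=\sum_P \chi(P,Q)\, c_P .
\end{align}
The matrix $\chi$ is a $4\times 4$ symmetric $\pm1$ Hadamard-type matrix: the character table of $\mathbb{Z}_2\times\mathbb{Z}_2$ under the symplectic pairing. It satisfies $\chi^2=4\,\mathbb{1}$, so $c_P=\tfrac14\sum_Q\chi(P,Q)\lambda_Q$. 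The rows can be read off directly:
\begin{itemize}
\item for $P=I$, every sign is $+$;
\item for $P=X$, the signs are $(+,+,-,-)$ on $(I,X,Y,Z)$;
\item for $P=Y$, the signs are $(+,-,+,-)$;
\item for $P=Z$, the signs are $(+,-,-,+)$.
\end{itemize}
Substituting into $\sum_P|c_P|$ yields exactly the four absolute values in the statement (up to the labelling of which row is which).

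The one point needing care is \cref{lem:diamond-norm-is-bell-state} itself, which I am allowed to assume. For completeness, the reason it holds is as follows. The Bell-state images $(P\otimes I)\Phi(P\otimes I)^\dagger$ are four mutually orthogonal rank-one projectors, so $\norm{\mathcal{T}\otimes\mathrm{id}[\Phi]}_1=\sum_P|c_P|$ gives the lower bound on the diamond norm. The upper bound comes from the triangle inequality, since each map $P\cdot P^\dagger$ has unit diamond norm.

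The remaining obstacle is only the bookkeeping of the convention $Y$ versus $XZ$ in \cref{eq:generalised-paulis}. Conjugation by $Y$ and by $XZ$ induce the same map up to a phase that cancels in $P\cdot P^\dagger$, so it does not affect the result.
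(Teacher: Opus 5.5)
Your proof is correct. It differs from the paper's only in which side of the equality in \cref{lem:diamond-norm-is-bell-state} you compute.

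The paper uses $\|\mathcal{T}\|_\diamond=\|(\mathcal{T}\otimes\mathrm{id})[\Phi]\|_1$ and computes the spectrum directly:
\begin{enumerate}
\item it expands $\Phi=\tfrac14(I+XX-YY+ZZ)$ and rewrites $-YY=(XX)(ZZ)$;
\item it reads off the eigenvalues from the joint eigenvalues $b,b'\in\{\pm1\}$ of the commuting operators $XX$ and $ZZ$, obtaining $\mu_{b,b'}=\tfrac14(\lambda_I+b\lambda_X+bb'\lambda_Y+b'\lambda_Z)$.
\end{enumerate}
You instead use $\|\mathcal{T}\|_\diamond=\sum_P|c_P|$ and get the $c_P$ by inverting the commutation-sign matrix $\chi$.

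These are the same four numbers. The eigenvectors of $(\mathcal{T}\otimes\mathrm{id})[\Phi]$ are exactly the Bell states $(P\otimes I)\Phi(P\otimes I)^\dagger$, with eigenvalue $c_P$: $\mu_{1,1}=c_I$, $\mu_{1,-1}=c_X$, $\mu_{-1,-1}=c_Y$, $\mu_{-1,1}=c_Z$.

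What each route buys:
\begin{itemize}
\item The paper's route is a slightly shorter spectral computation that never needs the $c_P$ explicitly. It still relies on the cited equivalence (b)$\Rightarrow$(a) of \cref{lem:pauli-equivalence} before it may apply \cref{lem:diamond-norm-is-bell-state}.
\item Your observation that $\chi^2$ is four times the identity proves that implication constructively for qubits, so your argument is self-contained on that point. It also extends directly to qudits by replacing $\chi$ with the symplectic character matrix.
\end{itemize}
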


For channels which change and input and output spaces, such as the heralded leakage channel (due to the output flag classical register) the following Lemma will allow us to calculate the diamond norm exactly.

\begin{lemma} \label{lem:generalised-pauli-invariant}
    Consider a set complete set of unitaries $\{P_A\}$ of size $d_A^2$ such that $\mathrm{tr}_A[P_A^\dag P'_A] = d_A \delta_{PP'}$ (e.g. Pauli superoperators). Let a unitary-set covariant superoperator be one such that
    \begin{align}
        \tilde{P}_B \mathcal{T}_{B \leftarrow A} [\cdot] \tilde{P}_B^\dagger = \mathcal{T}_{B \leftarrow A} [P_A \cdot P^\dag_A] \quad \forall \; P,
    \end{align}
    where $\{\tilde{P}_B\}$ are unitary, but need not form an orthogonal set. We call such superoperators unitary-set invariant. Then, the diamond norm of the superoperator $\mathcal{T}_{B \leftarrow A}$ is given by the trace norm on the Bell state $\Phi_{AA'}$, i.e.
    \begin{align}
        \norm{\mathcal{T}_{B \leftarrow A}}_\diamond = \norm{\mathcal{T}_{B \leftarrow A} \Phi_{AA'}}_1.
    \end{align}
\end{lemma}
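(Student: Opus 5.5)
The plan is to first reduce the diamond norm to a maximisation over pure inputs, and then use the covariance to move all such inputs onto the maximally entangled state. Recall that
\[
\norm{\mathcal{T}_{B \leftarrow A}}_\diamond = \sup_{\ket{\psi}_{AA'}} \norm{(\mathcal{T}_{B \leftarrow A}\otimes \mathrm{id}_{A'})[\psi_{AA'}]}_1 ,
\]
where the supremum may be restricted to pure states with $A' \cong A$. This restriction is valid by convexity of the trace norm and the Schmidt decomposition.

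Every such pure state can be written as $\ket{\psi} = (I_A \otimes M_{A'})\ket{\Phi}_{AA'}$ with $\tr[M^\dagger M] = d_A$, equivalently $\sqrt{d_A}\,(I\otimes \sqrt{\rho^T})$ up to a unitary on $A'$, where $\rho$ is the reduced state. Because $M$ acts on $A'$ and commutes with $\mathcal{T}\otimes\mathrm{id}$, this gives
\[
(\mathcal{T}\otimes\mathrm{id})[\psi] = (I\otimes M)\,(\mathcal{T}\otimes\mathrm{id})[\Phi]\,(I\otimes M^\dagger).
\]
The lower bound $\norm{\mathcal{T}}_\diamond \geq \norm{\mathcal{T}[\Phi]}_1$ is immediate from this representation.

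For the upper bound I will exploit the covariance. Let $\mathcal{P}_A$ denote conjugation by $P_A$. The key identity is the transpose trick $(P_A\otimes I)\ket{\Phi} = (I\otimes P^T_{A'})\ket{\Phi}$. Combined with covariance, it gives
\[
(\tilde P_B\otimes \bar P_{A'})\,X\,(\tilde P_B\otimes \bar P_{A'})^\dagger = X, \qquad X \coloneqq (\mathcal{T}\otimes\mathrm{id})[\Phi],
\]
up to choosing the conjugation conventions consistently. Unitary invariance of the trace norm then also yields $\norm{(\mathcal{T}\otimes\mathrm{id})[(I\otimes M)\Phi(I\otimes M^\dagger)]}_1 = \norm{(\mathcal{T}\otimes\mathrm{id})[(I\otimes \bar P M P^T)\Phi(\cdots)^\dagger]}_1$.

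Next I average over the unitary set. Since the trace norm is convex and the quantity is invariant under $M \mapsto \bar P M P^T$ for each $P$, I consider the averaging map
\[
\frac{1}{d_A^2}\sum_P \bar P\, (\cdot)\, P^T ,
\]
and I want to conclude that the worst-case $\psi$ may be replaced by $\Phi$. Because $\{P_A\}$ is an orthogonal unitary basis, this average of $M\,\cdot\, M^\dagger$-type terms is the completely depolarising map. The difficulty is that the quantity is \emph{not} linear in $\psi$; it is only invariant under $M\mapsto \bar P M P^T$. I must therefore avoid a naive convexity argument in $\psi$.

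The cleanest route, and the step I expect to be the main obstacle, is to use the SDP/dual characterisation instead. Write
\[
\norm{\mathcal{T}}_\diamond = \max_{\rho,\sigma}\, \norm{(\mathbb{1}\otimes\sqrt{\sigma})\, J(\mathcal{T})\,(\mathbb{1}\otimes\sqrt{\rho})}_1
\]
with $J$ the Choi operator, which is Watrous' formula. Covariance makes $J$ invariant under the unitaries $\tilde P_B\otimes \bar P_{A'}$. The set of optimal $(\rho,\sigma)$ is convex and invariant under $\rho\mapsto \bar P\rho P^T$, and the objective is jointly concave in $(\rho,\sigma)$ in the appropriate sense (this is the fidelity-like concavity used in Watrous' Theorem 3.~on covariant maps). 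Averaging an optimal pair over the group therefore does not decrease the objective. Because $\{P\}$ is a unitary $1$-design, the averaged pair is $\rho=\sigma = I/d_A$, which is exactly the Bell-state input. This yields $\norm{\mathcal{T}}_\diamond \le \norm{\mathcal{T}[\Phi]}_1$.

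If the concavity claim is troublesome because $\tilde P_B$ is not a representation, my fallback is the alternative formula $\norm{\mathcal{T}}_\diamond=\min\{\frac12(\norm{\tr_B Y_0}_\infty+\norm{\tr_B Y_1}_\infty)\}$ over the dual SDP. Averaging feasible dual solutions over the covariance then gives feasible dual solutions whose partial traces are proportional to the identity, and these match the primal value attained at $\Phi$.
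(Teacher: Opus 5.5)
Your argument is correct, but it works on the primal side whereas the paper's works on the dual side.

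\textbf{The paper's route.} It keeps the parametrisation $\ket{\Psi}=M_{A'}\ket{\Phi}$ and the variational formula $\norm{Y}_1=\max_U|\mathrm{tr}[UY]|$. This gives $\norm{\mathcal{T}}_\diamond=\max_{M,U}|\mathrm{tr}[WJ]|$ with $W=M^\dagger U M$ and $J=\mathcal{T}\Phi_{AA'}$. Covariance of $J$ lets it replace $W$ by the twirl $\bar W=d_A^{-2}\sum_P(\tilde P_B\otimes P^*_{A'})W(\tilde P_B\otimes P^*_{A'})^\dagger$. Cauchy--Schwarz together with the 1-design identity $d_A^{-2}\sum_P P^* M^\dagger M P^{T}=\mathrm{tr}[M^\dagger M]\,I/d_A=I$ then gives $\norm{\bar W}_\infty\le 1$, and H\"older yields $\norm{\mathcal{T}}_\diamond\le\norm{J}_1$.

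\textbf{Your route.} You instead symmetrise the input pair $(\rho,\sigma)$ in Watrous' formula, and concavity does the work that the bound on $\norm{\bar W}_\infty$ does in the paper.

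\textbf{The concavity step.} The concavity you cite only by a vague reference is true and quick to justify. Note that $\left(\begin{smallmatrix} I\otimes\sigma & X\\ X^\dagger & I\otimes\rho\end{smallmatrix}\right)\succeq 0$ if and only if $X=(I\otimes\sqrt{\sigma})K(I\otimes\sqrt{\rho})$ with $\norm{K}_\infty\le 1$. So your objective is the maximum of the linear functional $\mathrm{Re}\,\mathrm{tr}[J^\dagger X]$ over a constraint set that is jointly convex in $(\rho,\sigma,X)$. Hence it is jointly concave in $(\rho,\sigma)$.

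\textbf{The representation worry.} Your concern about $\tilde P_B$ not forming a representation is unfounded, so the dual-SDP fallback is not needed.
\begin{itemize}
\item Invariance under $(\rho,\sigma)\mapsto(\bar P\rho\bar P^\dagger,\bar P\sigma\bar P^\dagger)$ holds for each $P$ separately. Rewrite $(I\otimes\bar P^\dagger)J(I\otimes\bar P)=(\tilde P_B\otimes I)J(\tilde P_B^\dagger\otimes I)$ and pull the $B$-unitaries out of the trace norm.
\item The averaging step is just Jensen's inequality for a finite uniform mixture. It only needs $\{\bar P\}$ to be a unitary 1-design, not a group.
\end{itemize}

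\textbf{Comparison.} Your version exposes the structure: the input is symmetrised inside a concave program, a standard technique that carries over to other symmetry settings. The paper's version is fully elementary and self-contained, using nothing beyond H\"older, Cauchy--Schwarz and the 1-design identity, with no appeal to Watrous' formula or the block-matrix characterisation.
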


\begin{lemma} \label{lem:chan-covariant-means-complement-covariant}
    If a channel (CPTP map) is unitary-set covariant, i.e. that for an complete, orthogonal set of unitaries $\{P_A\}$ such that $P_A^\dagger P_A = I_A$, $\mathrm{tr}_A[P_A^\dagger P'_A] = \delta_{AA'}$, such that
    \begin{align}
        \mathcal{C}_{B \leftarrow A}[P_A \cdot P^\dagger_A] = \tilde{P}_B \mathcal{C}_{B \leftarrow A}[\cdot] \tilde{P}_B^\dagger,
    \end{align}
    where $\{\tilde{P}_B\}$ are unitaries that need not be orthogonal nor complete.
    then its complementary channel $\widehat{\mathcal{C}}_{E \leftarrow A}$ is also unitary-set covariant, i.e.
    \begin{align}
        \widehat{\mathcal{C}}_{E \leftarrow A}[P_A \cdot P^\dagger_A] = \bar{P}_E \widehat{\mathcal{C}}_{E \leftarrow A}[\cdot] \bar{P}_E^\dagger,
    \end{align}
    for some unitary-set $\{\bar{P}_E\}$, which need not to form a covariant set.
\end{lemma}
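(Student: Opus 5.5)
The strategy is to reduce complementary-channel covariance to a statement about the Stinespring isometry, and then read off the environment unitaries $\bar{P}_E$ from the freedom in that isometry. Fix a Stinespring isometry $V_{BE \leftarrow A}$ for $\mathcal{C}_{B \leftarrow A}$, so that $\widehat{\mathcal{C}}_{E \leftarrow A} = \tr_B[V \cdot V^\dagger]$. For each $P$ in the set, consider the two isometries
\begin{align}
	V_1 \coloneqq V P_A, \qquad V_2 \coloneqq (\tilde{P}_B^\dagger \otimes I_E) V .
\end{align}
Both dilate the same channel. Indeed, $\tr_E[V_1 \cdot V_1^\dagger] = \mathcal{C}[P_A \cdot P_A^\dagger]$, while $\tr_E[V_2 \cdot V_2^\dagger] = \tilde{P}_B^\dagger\, \mathcal{C}[\cdot]\, \tilde{P}_B$. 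By the covariance hypothesis these agree after conjugation; more precisely, $(\tilde{P}_B \otimes I)V_1$ and $V$ dilate the same channel $\mathcal{C}[P_A \cdot P_A^\dagger] = \tilde{P}_B\,\mathcal{C}[\cdot]\,\tilde{P}_B^\dagger$ up to relabelling. The cleaner comparison is therefore between $W_1 = V P_A$ and $W_2 = (\tilde{P}_B \otimes I_E) V$, which dilate $\mathcal{C}[P_A\cdot P_A^\dagger]$ and $\tilde{P}_B\mathcal{C}[\cdot]\tilde{P}_B^\dagger$ respectively. These are equal channels, with the same environment space $E$.

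The key step is the uniqueness of Stinespring dilations up to a partial isometry on the environment, as stated in Ref.~\onlinecite{watrous2018theory} and already used in the proof of \cref{lem:effective operator}. Applied to $W_1, W_2$ on the same $E$, it gives a unitary $\bar{P}_E$ on $E$ such that $W_1 = (I_B \otimes \bar{P}_E) W_2$.

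The point needing care is that uniqueness in general yields only a partial isometry between the supports. Both dilations have the same environment dimension, and the partial isometry maps the span of the environment components of $W_2$ onto that of $W_1$. These two subspaces have equal dimension, since they are supports of reduced states related by a unitary on $A$. Hence it extends to a full unitary $\bar{P}_E$ on $\mathscr{H}_E$, by the same extension trick used at the end of the proof of \cref{lem:effective operator}. If one prefers, one can first choose the minimal (Kraus-rank) dilation, in which case the uniqueness map is already unitary.

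Given this relation, tracing out $B$ finishes the argument. We have
\begin{align}
	\widehat{\mathcal{C}}[P_A \cdot P_A^\dagger] = \tr_B[W_1 \cdot W_1^\dagger] = \bar{P}_E \tr_B[(\tilde{P}_B\otimes I)V\cdot V^\dagger(\tilde{P}_B^\dagger\otimes I)]\bar{P}_E^\dagger = \bar{P}_E\, \widehat{\mathcal{C}}[\cdot]\, \bar{P}_E^\dagger ,
\end{align}
where $\tilde{P}_B$ drops out by cyclicity of the partial trace over $B$. Doing this for every $P$ yields the family $\{\bar{P}_E\}$.

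This family need not be orthogonal, complete, or a projective representation, consistent with the statement. If the complementary channel is taken with respect to a different dilation, composing with the fixed environment isometry relating dilations transports the covariance, with the unitary extended as above.
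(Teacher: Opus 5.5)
Your argument is correct and is essentially the paper's proof. The paper compares the two dilations $\tilde{P}_B V$ and $V P_A$ of the equal channels $\tilde{P}_B\,\mathcal{C}[\cdot]\,\tilde{P}_B^\dagger = \mathcal{C}[P_A \cdot P_A^\dagger]$, relates them by an operator $\bar{P}_E$ on $E$ via Stinespring uniqueness, and traces out $B$, using cyclicity to drop $\tilde{P}_B$. Your opening pair with $\tilde{P}_B^\dagger$ does not dilate the same channel and should be deleted, while your remark on extending the partial isometry to a unitary on $E$ supplies a detail the paper leaves implicit.
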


\subsection{Proofs of useful Lemmas}
\begin{proof}[Proof of \cref{lem:diamond-norm-is-bell-state}]
    Recall that the diamond norm is given by
    \begin{align}
        \norm{\mathcal{T}_{B \leftarrow A}}_\diamond = \sup_{\rho_{AA'}} \norm{(\mathcal{T}_{B \leftarrow A} \otimes \mathrm{id}_{A'}) \rho_{AA'}}_1.
    \end{align}
    
    Firstly, we note that any trace norm for a mixed state is upper bounded by those of pure states. This is because any mixed state $\rho_{AA'}$ can written as a mixture of pure states, $\sum_i p_i \Psi^{(i)}_{AA'}$, for which we have
    \begin{align}
        \norm{\mathcal{T}_{B \leftarrow A} \rho_{AA'}}_1 = \norm{\mathcal{T}_{B \leftarrow A} \sum_i p_i \Psi^{(i)}_{AA'}}_1 \leq \sum_i p_i \norm{\mathcal{T}_{B \leftarrow A} \Psi^{(i)}_{AA'}}_1 \leq \mathrm{max}_i \norm{\mathcal{T}_{B \leftarrow A} \Psi^{(i)}_{AA'}}_1,
    \end{align}
    where the last equality follows since an average is smaller than the maximum. Therefore, the optimal state is given by a pure state.
    
    Now for any state $\rho_{AA'}$, we have
        \begin{align}
            \norm{\sum_P c_P P_{B \leftarrow A} \rho_{AA'} (P_{B \leftarrow A})^\dagger}_1 \leq \sum_P |c_P| \norm{P_{B \leftarrow A} \rho_{AA'} (P_{B \leftarrow A})^\dagger}_1 = \sum_P |c_P|.
        \end{align}
        In the last equality, we have used the fact conjugation by an isometry $W_{B \leftarrow A}$ does not change the trace norm of an operator $X_A$. This is because for any operator $X_A = \sum_\alpha \sigma_\alpha \ketbra{u_\alpha}{v_\alpha}_A$ with trace norm $\norm{X_A}_1 = \sum_\alpha |\sigma_\alpha|$, we have $W_{B \leftarrow A} X_A (W^\dagger)_{A \leftarrow B} = \sum_\alpha \sigma_\alpha (W_{B \leftarrow A} \ket{u_\alpha}_A) (\bra{v_\alpha}_A (W^\dagger)_{A \leftarrow B})$ have the same trace norm since $\bra{u_\beta}_A (W^\dagger)_{A \leftarrow B} W_{B \leftarrow A} \ket{u_\alpha}_A =\bra{u_\beta}_A I_A \ket{u_\alpha}_A = \delta_{\beta \alpha}$, similarly for $W_{B \leftarrow A} \ket{v_\alpha}$ and therefore the above is already written in SVD form.
        
        Now consider its action on the Bell state. Now $\Phi^{(P)}_{BA'} = P_{B \leftarrow A} \Phi_{AA'} (P^\dagger)_{A \leftarrow B}$ are orthogonal to each other. This can be easily seen as follows:
        \begin{align}
            \braket{\Phi^{(P)}_{BA'}}{\Phi^{(P')}_{BA'}} 
            & = \frac{1}{d_A} \sum_{k k'} \bra{kk}_{AA'} (P^\dag)_{A \leftarrow B} P'_{B \leftarrow A} \ket{k'k'}_{AA'} 
            = \frac{1}{d_A}  \sum_k \mel{k}{(P^\dag)_{A \leftarrow B} P'_{B \leftarrow A}}{k}_A \\
            & = \frac{1}{d_A}  \tr_A[(P^\dag)_{A \leftarrow B} P'_{B \leftarrow A}] = \delta_{PP'}.
        \end{align}
        Therefore after application of an isometry-set superoperator, the operator is in block diagonal form. Therefore
        \begin{align}
            \norm{\sum_P c_P P_{B \leftarrow A} \Phi_{AA'} (P^\dagger)_{A \leftarrow B}}_1 = \sum_P \norm{c_P \Phi^{(P)}_{BA'}}_1 = \sum_P |c_P|,
        \end{align}
        and therefore the Bell state maximises the trace norm.
    
\end{proof}

\begin{proof}[Proof of \cref{lem:qubit-to-qubit}]
    By \cref{lem:diamond-norm-is-bell-state}, for a Pauli-diagonal superoperator, we have
    \begin{align}
        \norm{\mathcal{T}_A}_\diamond = \norm{\mathcal{T}_A [\Phi_{AA'}]}_1.
    \end{align}
    For qubits, this can be written as $\Phi_{AA'} = \frac{1}{4} \qty(1 + X_A X_{A'} - Y_A Y_{A'} + Z_A Z_{A'})$. Now applying the Pauli-diagonal superoperator, we have
    \begin{align}
        \mathcal{T}_A [\Phi_{AA'}] & = \frac{1}{4} \qty(\lambda_I + \lambda_X X_A X_{A'} - \lambda_Y Y_A Y_{A'} + \lambda_Z Z_A Z_{A'}) \\
        & = \frac{1}{4} \qty(\lambda_I + \lambda_X X_A X_{A'} + \lambda_Y (X_A X_{A'}) (Z_A Z_{A'}) + \lambda_Z Z_A Z_{A'}).
    \end{align}
    Now since $X_A X_{A'}$ and $Z_A Z_{A'}$ commute, they can be simultaneously diagonalised. Therefore the eigenvalues $\mu$ are given by setting $X_A X_{A'} \leftrightarrow b$, $Z_A Z_{A'} \leftrightarrow b'$, for $b,b' \in \qty{\pm1}$:
    \begin{align}
        \mu_{b, b'} = \frac{1}{4}(\lambda_I + \lambda_X b + \lambda_Y bb' + \lambda_Z b').
    \end{align}
    By definition, the trace norm is found by taking their modulus and summing over them.
\end{proof}

\begin{proof}[Proof of \cref{lem:generalised-pauli-invariant}]
    First, note that any pure state on $AA'$ can be written in the form
    \begin{align}
        \ket{\Psi}_{AA'} = M_{A'} \ket{\Phi}_{AA'},
    \end{align}
    where $\mathrm{tr}_{A'}[M_{A'}^\dagger M_{A'}] = d_A$. This can be shown as follows:
    \begin{align}
        \ket{\Psi}_{AA'} = \sum_\alpha \sqrt{\lambda_\alpha} \ket{u_\alpha}_A \ket{v_\alpha}_{A'} = U_A V_{A'} \sum_\alpha \sqrt{\lambda_\alpha} \ket{\alpha}_A \ket{\alpha}_{A'} = U_A V_{A'} \Lambda_{A'} \frac{1}{\sqrt{d_A}} \sum_\alpha \ket{\alpha}_A \ket{\alpha}_{A'}.
    \end{align}
    Here, $U_A$ and $V_{A'}$ are unitaries which change basis from the orthonormal sets $\{\ket{u_\alpha}\}$, $\{\ket{v_\alpha}\}$ to $\{\ket{\alpha}\}$, and $\Lambda = \mathrm{diag}(\sqrt{d_A/\lambda_\alpha})_\alpha$. Using the transpose trick on Bell states, we have
    \begin{align}
        \ket{\Psi}_{AA'} = V_{A'} \Lambda_{A'} U^\top_{A'} \ket{\Phi}_{AA'} =: M_{A'} \ket{\Phi}_{AA'}.
    \end{align}
    since $M$ is already written in SVD form, it's easy to check that $\mathrm{tr}_{A'}[M_{A'}^\dagger M_{A'}] = d_A$.

    Now recalling from the proof of \cref{lem:diamond-norm-is-bell-state} that only pure states need to be considered to compute the diamond norm, we have
    \begin{align}
        \norm{\mathcal{T}_{B \leftarrow A}}_\diamond = \mathrm{max}_{\Psi_{AA'}} \norm{\mathcal{T}_{B \leftarrow A} \Psi_{AA'}}_1 = \mathrm{max}_{\mathrm{tr}_{A'}[M^\dagger_{A'} M_{A'}]=d_A} \norm{M_{A'}\mathcal{T}_{A}[\Phi_{AA'}] M^\dag_{A'}}_1.
    \end{align}
    Let $J_{BA'} = \mathcal{T}_{B \leftarrow A} \Phi_{AA'}$. Then via covariance and the transpose trick, we have
    \begin{align} \label{eq:twirl-identity}
        J_{BA'} = \tilde{P}_{B} P^*_{A'} J_{BA'} P^\top_{A'} \tilde{P}^\dag_{B}.
    \end{align}
    Now trace norm obeys the property $\norm{X}_1 = \mathrm{max}_{U} |\mathrm{tr}[UX]|$. Therefore
    \begin{align}
        \norm{\mathcal{T}_{B \leftarrow A}}_\diamond & = \mathrm{max}_{\mathrm{tr}_{A'}[M^\dagger_{A'} M_{A'}]=d_A} \mathrm{max}_{U_{BA'}} |\mathrm{tr}[U_{B A'} M_{A'}\mathcal{T}_{B \leftarrow A}[\Phi_{AA'}] M^\dag_{A'}]| \\
        & = \mathrm{max}_{\mathrm{tr}_{A'}[M^\dagger_{A'} M_{A'}]=d_A} \mathrm{max}_{U_{BA'}} |\mathrm{tr}[\underbrace{M^\dag_{A'} U_{B A'} M_{A'}}_{:= W_{B'A}} J_{BA'}]|.
    \end{align}
    Now using the twirling identity \eqref{eq:twirl-identity}, we have
    \begin{align}
        \mathrm{tr}[W_{BA'} J_{BA'}] = \mathrm{tr}[\bar{W}_{BA'} J_{BA'}],
    \end{align}
    where
    \begin{align}
        \bar{W}_{BA'} := \frac{1}{d^2_A} \sum_P \tilde{P}_{B} P^*_{A'} W_{BA'} P^\top_{A'} \tilde{P}^\dag_{B}.
    \end{align}
    Therefore 
    \begin{align} \label{eq:diamond-norm-equal-to-WJ}
        \norm{\mathcal{T}_{B \leftarrow A}}_\diamond = \abs{\mathrm{tr}[\bar{W}_{BA'} J_{BA'}]} \leq \norm{\bar{W}_{BA'}}_\infty \norm{J_{BA'}}_1.
    \end{align}
    Now
    \begin{align}
        \norm{\bar{W}_{BA'}}_\infty = \max_{\norm{\ket{\psi}}_2, \norm{\ket{\phi}}_2 = 1} |\mel{\psi}{\bar W}{\phi}| = \abs{\frac{1}{d^2_A} \sum_P \mel{\psi}{\tilde{P}_B P^*_{A'} M^\dag_{A'} U_{BA'} M_{A'} \tilde{P}_B^\dag P^\top_{A'}}{\phi}}.
    \end{align}
    Let $\ket{\phi^{(P)}} := M_{A'} \tilde{P}_B^\dag P^\top_{A'} \ket{\phi}$, and similarly for $\ket{\psi^{(P)}}$. Then we have
    \begin{align}
        \norm{\bar{W}_{BA'}}_\infty \leq \frac{1}{d^2_A} \sum_P \abs{\mel{\psi^{(P)}}{U_{BA'}}{\phi^{(P)}}} \leq \frac{1}{d^2_A} \sum_P \norm{\ket{\psi^{(P)}}}_2 \norm{\ket{\phi^{(P)}}}_2
    \end{align}
    Now we can interpret $\norm{\ket{\psi^{(P)}}}_2$ as the $P$th element of a vector $v$, and same for $\norm{\ket{\phi^{(P)}}}_2$ as the $P$th element of a vector $u$. Again using the triangular inequality, we have
    \begin{align}
        \norm{\bar{W}_{BA'}}_\infty \leq \frac{1}{d^2_A} \sum_P \abs{\mel{\psi^{(P)}}{U_{BA'}}{\phi^{(P)}}} \leq \sqrt{\frac{1}{d^2_A} \sum_P \norm{\ket{\psi^{(P)}}}_2^2} \sqrt{\frac{1}{d^2_A} \sum_P \norm{\ket{\phi^{(P)}}}_2^2}.
    \end{align}
    Now we have
    \begin{align}
        \frac{1}{d^2_A} \sum_P \norm{\ket{\phi}}_2^2 = \frac{1}{d^2_A} \sum_P \mel{\phi}{\tilde{P}_B \tilde{P}_B^\dag P^* M^\dag_A M_A P^\top_A}{\phi}.
    \end{align}
    Firstly, since $\tilde{P}_B$ is unitary $\tilde{P}_B \tilde{P}_B^\dag = I_B$, secondly, we can bring the sum inside the inner product to find
    \begin{align}
        \frac{1}{d^2_A} \sum_P \norm{\ket{\phi}}_2^2 = \mel{\phi}{\frac{1}{d^2_A} \sum_P P^* M^\dag_A M_A P^\top_A}{\phi} = \mel{\phi}{\mathcal{D}_{A}[M^\dag_A M_A]}{\phi}.
    \end{align}
    Here we used the fact the the fully depolarising channel can be written Pauli form, $\mathcal{D}_A[\cdot] = \frac{I_A}{d_A} \mathrm{tr}_A[\cdot] = \frac{1}{d^2_A} \sum_P P_A \cdot P^\dag_A$. Now $\mathrm{tr}_A[M^\dag_A M_A] = d_A$, and therefore we have
    \begin{align}
        \frac{1}{d^2_A} \sum_P \norm{\ket{\phi}}_2^2 = \mel{\phi}{\frac{I_A d_A}{d_A}}{\phi} = 1
    \end{align}
    Applying similar logic to $\ket{\psi}$, we therefore have
    \begin{align}
        \norm{\bar{W}_{BA'}}_\infty \leq 1,
    \end{align}
    thereby proving that $\norm{\mathcal{T}_A}_\diamond \leq 1 \times \norm{\mathcal{T}_A \Phi_{AA'}}_1$ via \cref{eq:diamond-norm-equal-to-WJ}. Combining with the usual bound $\norm{\mathcal{T}_A \Phi_{AA'}}_1 \leq \norm{\mathcal{T}_A}_\diamond$, we have the desired result.
\end{proof}

\begin{proof}[Proof of \cref{lem:chan-covariant-means-complement-covariant}]
    Expressing $\mathcal{N}_{B\leftarrow A}[\cdot]= \mathrm{tr}_E\qty[V_{BE \leftarrow A} \cdot V^\dagger_{A \leftarrow BE}]$, we have that
    \begin{align}
        \mathrm{tr}_E \qty[\tilde{P}_B V_{BE \leftarrow A} \cdot V^\dagger_{A \leftarrow BE}\tilde{P}^\dagger_B] = \mathrm{tr}_E \qty[V_{BE \leftarrow A} P_A \cdot P^\dagger_A V^\dagger_{A \leftarrow BE}],
    \end{align}
    where we put $\tilde{B}_B$ inside of the trace as they commute. Therefore the isometries $V^{(1)}_{BE \leftarrow A} = \tilde{P}_B V_{BE \leftarrow A}$ and $V^{(2)}_{BE \leftarrow A} = V_{BE \leftarrow A} P_A$ are isometries that implement the same channel, which means that theey must be related by an isometry $V^{(2)}_{BE \leftarrow A} = \bar{P}_E V^{(1)}_{BE \leftarrow A}$. Then consider the complementary channel
    \begin{align}
        \widehat{\mathcal{N}}_{E \leftarrow A}[P_A \cdot P_A^\dagger] 
        & = \mathrm{tr}_B \qty[V^{(2)}_{BE \leftarrow A} (V^{(2)\dagger})_{A \leftarrow BE}]
        = \bar{P}_E \mathrm{tr}_B \qty[V^{(1)}_{BE \leftarrow A} (V^{(1)\dagger})_{A \leftarrow BE}] \bar{P}_E^\dagger \\
        & = \bar{P}_E \mathrm{tr}_B \qty[\tilde{P}_B V_{BE \leftarrow A} (V^\dagger)_{A \leftarrow BE} \tilde{P}_B^\dagger] \bar{P}_E^\dagger
        = \bar{P}_E \mathrm{tr}_B \qty[V_{BE \leftarrow A} (V^\dagger)_{A \leftarrow BE} \tilde{P}_B^\dagger \tilde{P}_B] \bar{P}_E^\dagger \\
        & = \bar{P}_E \mathrm{tr}_B \qty[V_{BE \leftarrow A} (V^\dagger)_{A \leftarrow BE}] \bar{P}_E^\dagger
        = \bar{P}_E \widehat{\mathcal{N}}_{E \leftarrow A}[\cdot] \bar{P}_E^\dagger.
    \end{align}
\end{proof}

\subsection{Proof of \cref{lem:info-leaked-to-dist-to-id}: upper bound on information leaked into the environment by distance to identity channel}

\begin{proof}[Proof of \cref{lem:info-leaked-to-dist-to-id}]
    First, let us derive the complementary channel for $\mathrm{id}_{B \leftarrow A}[\cdot] = \mathrm{id}_A[\cdot] \otimes \sigma_{B \setminus A}$. To do this, write $\sigma_{B \setminus A} = \sum_i p_i \ketbra{\phi^{(i)}}_{B \setminus A}$. Purifying, the isometry is given by
    \begin{align}
        V^{\mathrm{id}}_{BE \leftarrow A} \ket{\psi} = \sum_i \sqrt{p_i} \ket{\phi^{(i)}}_{B \setminus A} \ket{e_i}_{E} \ket{\psi}.
    \end{align}
    Then the complementary channel is given by
    \begin{align}
        \widehat{\mathrm{id}}_{E \leftarrow A}[\cdot] = \mathrm{tr}_B\qty[\sum_{ij} \sqrt{p_i p_j} \ketbra{\phi^{(i)}}{\phi^{(j)}}_{B \setminus A} \otimes \cdot] \ketbra{e_i}{e_j}_E = \qty(\sum_i p_i \ketbra{e_i}_E) \mathrm{tr}_A[\cdot],
    \end{align}
    i.e. a replacement channel.

	Next, notice that $\widehat{\mathrm{id}}_{E \leftarrow A}(\mathrm{id}_A - \mathcal{D}_A) = 0$, since the output of $(\mathrm{id}_A - \mathcal{D}_A)$ is always traceless and $\widehat{\mathrm{id}}_{E \leftarrow A}$ includes a trace.
    
    Therefore we may freely subtract $\widehat{\mathrm{id}}_{E \leftarrow A}(\mathrm{id}_A - \mathcal{D}_A)$ from the superoperator inside of the diamond expression and retrieve
	\begin{align}
		\norm*{\widehat{\mathcal{N}}_{E \leftarrow A} (\mathrm{id}_A - \mathcal{D}_A)}_\diamond 
        = \norm*{(\widehat{\mathcal{N}}_{E \leftarrow A} - \widehat{\mathrm{id}}_{E \leftarrow A}) (\mathrm{id}_A - \mathcal{D}_A)}_\diamond 
        \leq \norm*{\widehat{\mathcal{N}}_{E \leftarrow A} - \widehat{\mathrm{id}}_{E \leftarrow A}}_\diamond \norm*{\mathrm{id}_A - \mathcal{D}_A}_\diamond. 
	\end{align}
	Now, we will show in \cref{lem:id-D} below that  $\norm*{\mathrm{id} - \mathcal{D}}_\diamond = 2(1-1/d^2_A)$. In addition, by data processing inequality, we have
	\begin{align}
		\norm{\widehat{\mathcal{N}}_{E \leftarrow A} - \widehat{\mathrm{id}}_{E \leftarrow A}}_\diamond \leq \norm{\mathcal{V}^{{\mathcal{N}}}_{E B \leftarrow A} - \mathcal{V}^{{\mathrm{id}}}_{E B \leftarrow A}}_\diamond
	\end{align}
	for any dilations of channel $\mathcal{E}$, $\mathcal{V}^{\mathcal{E}}_{EB \leftarrow A}[\cdot] = V^{\mathcal{E}}_{EB \leftarrow A}(\cdot)(V^{\mathcal{E}}_{EB \leftarrow A})^\dagger$. By Lemma 12 of Ref.~\onlinecite{aharonov1998quantum}, for any isometric channel we have
	\begin{align}
		\norm{\mathcal{V}^{{\mathcal{N}}}_{E B \leftarrow A} - \mathcal{V}^{{\mathrm{id}}}_{E B \leftarrow A}}_\diamond \leq 2 \norm{V^{{\mathcal{N}}}_{E B \leftarrow A} - V^{{\mathrm{id}}}_{E B \leftarrow A}}_\infty.
	\end{align}
	Now let us assume that we have chosen the optimal isometry as Theorem 1 of Ref.~\onlinecite{kretschmann2008information} (adapting from the dual channel in their definition to the primal channel) to show
	\begin{align}
		\norm{V^{{\mathcal{N}}}_{E B \leftarrow A} - V^{{\mathrm{id}}}_{E B \leftarrow A}}_\infty \leq \norm{\mathcal{N}_{B \leftarrow A} - \mathrm{id}_{B \leftarrow A}}_\diamond^{1/2}.
	\end{align}
	Bringing everything together, we have the advertised result.
\end{proof}

Now we state a lemma used for the proof.

\begin{lemma} \label{lem:id-D}
	\begin{align}
		\norm{\mathrm{id}_A - \mathcal{D}_A}_\diamond = 2\qty(1-\frac{1}{d_A^2}).
	\end{align}
\end{lemma}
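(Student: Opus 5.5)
The plan is to recognise $\mathrm{id}_A - \mathcal{D}_A$ as a Pauli superoperator and then read off its diamond norm using \cref{lem:diamond-norm-is-bell-state}.

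First, I will write the completely depolarising channel in its Pauli twirl form, as is already used in the proof of \cref{lem:generalised-pauli-invariant}. That is, $\mathcal{D}_A[\cdot] = \frac{1}{d_A^2}\sum_P P_A \cdot P_A^\dagger$, where the sum runs over all $d_A^2$ generalised Pauli operators \eqref{eq:generalised-paulis}. This gives
\begin{align}
    (\mathrm{id}_A - \mathcal{D}_A)[\cdot] = \Big(1 - \frac{1}{d_A^2}\Big) I_A \cdot I_A - \frac{1}{d_A^2}\sum_{P \neq I} P_A \cdot P_A^\dagger .
\end{align}
This is manifestly of the form (a) in \cref{lem:pauli-equivalence}. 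Its coefficients are $c_I = 1 - 1/d_A^2$ and $c_P = -1/d_A^2$ for each of the $d_A^2 - 1$ non-identity Paulis.

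Next, I will apply \cref{lem:diamond-norm-is-bell-state}, which states that $\|\mathcal{T}\|_\diamond = \sum_P |c_P|$ for such maps. Summing the absolute values of the coefficients gives
\begin{align}
    \|\mathrm{id}_A - \mathcal{D}_A\|_\diamond = \Big(1 - \frac{1}{d_A^2}\Big) + (d_A^2 - 1)\frac{1}{d_A^2} = 2\Big(1 - \frac{1}{d_A^2}\Big).
\end{align}

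As an independent check, and in case one prefers not to rely on the Pauli-coefficient formula, I will evaluate the Bell-state expression directly. We have $(\mathrm{id}_A - \mathcal{D}_A)\otimes \mathrm{id}_{A'}[\Phi_{AA'}] = \Phi_{AA'} - I_{AA'}/d_A^2$. Its eigenvalues are $1 - 1/d_A^2$ with multiplicity one and $-1/d_A^2$ with multiplicity $d_A^2 - 1$, so its trace norm is again $2(1 - 1/d_A^2)$. This supplies the matching lower bound.

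There is no real obstacle here. The only point requiring care is the upper bound on the diamond norm, namely that the Bell state is the optimal input, and this is exactly what \cref{lem:diamond-norm-is-bell-state} supplies.
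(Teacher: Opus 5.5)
Your proof is correct and matches the paper's: the paper likewise notes that $\mathrm{id}_A$ and $\mathcal{D}_A$ are Pauli channels, so the Bell state is an optimal input. It then computes the trace norm of $\Phi_{AA'} - I_{AA'}/d_A^2$ from the eigenvalues $1-1/d_A^2$ (once) and $-1/d_A^2$ (with multiplicity $d_A^2-1$), which is exactly your check, while your $\sum_P |c_P|$ computation from the Pauli-twirl form of $\mathcal{D}_A$ is simply the other half of the same lemma.
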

\begin{proof}
    Now both $\mathrm{id}_A$ and $\mathcal{D}_A$ are Pauli channels, and therefore
    \begin{align}
        \norm{\mathrm{id}_A - \mathcal{D}_A}_\diamond 
        & = \norm{(\mathrm{id}_A - \mathcal{D}_A)\Phi_{AA'}}_1 
        = \norm{\Phi_{AA'} - \frac{I_A}{d_A} \otimes \frac{I_{A'}}{d_A}}_1.
	\end{align}
	Now $\Phi_{AA'}$ is a rank-1 matrix with one eigenvalue $1$ and $d^2_A - 1$ eigenvalues $0$. Since $\frac{I_A}{d_A} \otimes \frac{I_{A'}}{d_A}$ is proportional to the identity, it is diagonalisable in any basis, and therefore $\Phi_{AA'} - \frac{I_A}{d_A} \otimes \frac{I_{A'}}{d_A}$ has one eigenvalue $1 - 1/d_A^2$ and $d_A^2 -1$ eigenvalues of value $-1/d_A^2$. Therefore its trace norm is
    \begin{align}
        \abs{1 - \frac{1}{d_A^2}} + (d_A^2 -1) \abs{-\frac{1}{d_A^2}} = 2 \qty(1 - \frac{1}{d_A^2}),
    \end{align}
    as advertised.
\end{proof}

\subsection{Proof of Lemma \ref{lem:effective-noise-strengths}: expressions of effective noise strengths}
In this section,  we compute the exact analytic expressions for diamond norms corresponding to the effective noise strengths. Our results are summarised as follows:

Once these diamond norms are known, the threshold can be found by finding where parameter $p$ is such that $\norm{\mathcal{T}^{(p)}}_\diamond \mu = 1$, where $\mu$ is any upper bound for the appropriate connective / growth constant, which is used to obtain the lower bounds for the thresholds, shown in \cref{tab:beta-ps-and-thresholds}. For $S_3$, we have $\mu = 7 e$ and $d_\ell = 6$. For $D_4$, we have $\mu = 7 e$ and $d_\ell = 8$. For $\mathrm{Fibonacci}$ and $\mathrm{DSem}$, we have $\mu = 25 e$ and $d_\ell = 2$. For $\mathrm{3D}$ $\mathbb{Z}_2$, we have $\mu_\Lambda, \mu_{\Lambda_*} = 20.25, 24.5$.

\subsubsection{Bit-flip errors}
A single-qubit bit-flip channel is defined as
\begin{align} \label{eq:bit-flip-channel}
    \mathcal{N}_\ell[\cdot] = (1-p) (\cdot) + p X_\ell (\cdot) X_\ell.
\end{align}
After dilation, the bit-flip error channel becomes a $\mathrm{CNOT}$ gate controlled by $\ket{p}_{E_\ell} = \sqrt{p} \ket{1}_{E_\ell} + \sqrt{1-p} \ket{0}_{E_\ell}$ and acting on the physical qubit. It is given by
\begin{align}
\mathrm{CNOT}^{E_\ell \rightarrow \ell}_{E_\ell \ell} = \ketbra{0}_{E_\ell} \otimes I_{\ell} + \ketbra{1}_{E_\ell} \otimes X_{\ell}.
\end{align}
After tracing over $\ell$, we obtain $\widehat{\mathcal{N}}_{E_\ell}$, which acts as $O_\ell \rightarrow \tr_\ell [\mathrm{CNOT}^{E_\ell \rightarrow \ell}_{E_\ell \ell}(\ketbra{p}_{E_\ell} \otimes O_\ell)]$. We will consider its action on the Pauli operators, which will be sufficient to calculate the desired diamond norm. We find that
\begin{align} \label{eq:bf-pauli-outputs}
    \widehat{\mathcal{N}}_{E_\ell \leftarrow \ell}[I_\ell] = 2 \mathrm{diag}(p, 1-p)_{E_\ell}, \quad
    \widehat{\mathcal{N}}_{E_\ell \leftarrow \ell}[X_\ell] = 2 \sqrt{p(1-p)} X_{E_\ell}, \quad
    \widehat{\mathcal{N}}_{E_\ell \leftarrow \ell}[Y_{E_\ell}] = 0, \quad
    \widehat{\mathcal{N}}_{E_\ell \leftarrow \ell}[Z_{E_\ell}] = 0.
\end{align}

Now consider the super-operators considered in the main text: $\widehat{\mathcal{N}}_{E_\ell \leftarrow \ell} (\mathrm{id}_\ell - \mathcal{D}_\ell)$, $\widehat{\mathcal{N}}_{E_\ell \leftarrow \ell} (\mathrm{id}_\ell - \mathcal{E}^X_\ell)$, $\widehat{\mathcal{N}}_{E_\ell \leftarrow \ell} (\mathrm{id}_\ell - \mathcal{E}^Z_\ell)$. Because all $(\mathrm{id}_\ell - \mathcal{D}_\ell)[I_\ell] = 0 = (\mathrm{id}_\ell - \mathcal{E}^{X, Z}_\ell)[I_\ell] = 0$, the only non-Pauli output for the identity in \eqref{eq:bf-pauli-outputs} is removed in that we can set $\lambda_I = 0$ and the superoperator is Pauli. 

\paragraph{$\norm{\widehat{\mathcal{N}}_{E_\ell \leftarrow \ell} (\mathrm{id}_\ell - \mathcal{D}_\ell)}_\diamond$.} We note that
\begin{align}
    (\mathrm{id}_\ell - \mathcal{D}_\ell)[I_\ell] = 0, \quad (\mathrm{id}_\ell - \mathcal{D}_\ell)[X_\ell] = X_\ell, \quad (\mathrm{id}_\ell - \mathcal{D}_\ell)[Y_\ell] = Y_\ell, \quad (\mathrm{id}_\ell - \mathcal{D}_\ell)[Z_\ell] = Z_\ell.
\end{align}
Therefore, we have $\lambda_I = \lambda_Y = \lambda_Z = 0$, $\lambda_X = 2 \sqrt{p(1-p)}$. Therefore using \cref{lem:qubit-to-qubit}, we have $\norm{\widehat{\mathcal{N}}_{E_\ell \leftarrow \ell} (\mathrm{id}_\ell - \mathcal{D}_\ell)}_\diamond = \frac{1}{4}(|0+2 \sqrt{p(1-p)}+0+0| + |0-2 \sqrt{p(1-p)}-0+0)| + |0+2 \sqrt{p(1-p)}-0-0| + |0-2 \sqrt{p(1-p)}+0-0|$ and therefore
\begin{align}
    \norm{\widehat{\mathcal{N}}_{E_\ell \leftarrow \ell} (\mathrm{id}_\ell - \mathcal{D}_\ell)}_\diamond = 2 \sqrt{p(1-p)}.
\end{align}

\paragraph{$\norm{\widehat{\mathcal{N}}_{E_\ell \leftarrow \ell} (\mathrm{id}_\ell - \mathcal{E}^X_\ell)}_\diamond$.}
Here we have
\begin{align}
    (\mathrm{id}_\ell - \mathcal{E}^X_\ell)[I_\ell] = 0, \quad (\mathrm{id}_\ell - \mathcal{E}^X_\ell)[X_\ell] = 0, \quad (\mathrm{id}_\ell - \mathcal{E}^X_\ell)[Y_\ell] = Y_\ell, \quad (\mathrm{id}_\ell - \mathcal{E}^X_\ell)[Z_\ell] = Z_\ell.
\end{align}
Then combined with the Pauli outputs of the complementary channel, we have $\lambda_I = \lambda_X = \lambda_Y = \lambda_Z = 0$ and therefore
\begin{align}
    \norm{\widehat{\mathcal{N}}_{E_\ell \leftarrow \ell} (\mathrm{id}_\ell - \mathcal{E}^X_\ell)}_\diamond = 0.
\end{align}
This means that the logical-$X$ operator always remains under bit-flip errors.

\paragraph{$\norm{\widehat{\mathcal{N}}_{E_\ell \leftarrow \ell} (\mathrm{id}_\ell - \mathcal{E}^Z_\ell)}_\diamond$.}
Here we have
\begin{align}
    (\mathrm{id}_\ell - \mathcal{E}^Z_\ell)[I_\ell] = 0, \quad (\mathrm{id}_\ell - \mathcal{E}^Z_\ell)[X_\ell] = X_\ell, \quad (\mathrm{id}_\ell - \mathcal{E}^Z_\ell)[Y_\ell] = Y_\ell, \quad (\mathrm{id}_\ell - \mathcal{E}^Z_\ell)[Z_\ell] = 0.
\end{align}
Therefore we have $\lambda_I = \lambda_Y = \lambda_Z = 0$ and $\lambda_X = 2 \sqrt{p(1-p)}$ and therefore similarly to $(\mathrm{id}_\ell - \mathcal{D}_\ell)$, we have
\begin{align}
    \norm{\widehat{\mathcal{N}}_{E_\ell \leftarrow \ell} (\mathrm{id}_\ell - \mathcal{E}^Z_\ell)}_\diamond = 2 \sqrt{p(1-p)}.
\end{align}

\subsubsection{Amplitude damping channel}
The amplitude damping channel is given by
\begin{align}
    \mathcal{N}_\ell[\cdot] = K_\ell^{(0)} \cdot K_\ell^{(0)\dagger} + K_\ell^{(1)} \cdot K_\ell^{(1)\dagger},
\end{align}
where
\begin{align}
    K^{(0)}_\ell = \begin{pmatrix} 1 & 0 \\ 0 & \sqrt{1-p} \end{pmatrix}_\ell, \quad K^{(1)}_\ell = \begin{pmatrix}
        0 & \sqrt{p} \\ 0 & 0
    \end{pmatrix}_\ell.
\end{align}
It can be written as an isometry as
\begin{align}
    V \ket{0}_\ell = \ket{00}_{E_\ell \ell}, \quad V \ket{1}_\ell = \sqrt{1-p} \ket{01}_{E_\ell \ell} + \sqrt{p} \ket{10}_{E_\ell \ell}.
\end{align}
One can see that if we trace over the left environment qubit, then we get the error channel as we need. We can also write it in matrix form as
\begin{align}
    V = 
    \begin{pmatrix}
        1 & 0 \\
        0 & \sqrt{1-p} \\
        0 & \sqrt{p} \\
        0 & 0
    \end{pmatrix}_{E_\ell \ell \leftarrow \ell}.
\end{align}
To extend it to a unitary, we just need to pad this isometry column vectors that are orthogonal to the ones already there. A natural choice is then
\begin{align}
    U = 
    \begin{pmatrix}
        1 & 0 & 0 & 0\\
        0 & \sqrt{1-p} & \sqrt{p} & 0 \\
        0 & \sqrt{p} & - \sqrt{1-p} & 0 \\
        0 & 0 & 0 & 1
    \end{pmatrix}_{E_\ell \ell}.
\end{align}
Then if we choose the (left) environment input qubit to be $\ket{0}_{E_\ell}$, the we retrieve the isometry above. 

To deduce the action of the complementary channel we trace out the physical qubit after acting the channel the isometry on the four basis operators then tracing out the physical qubit:
\begin{align}
    I_\ell & = \ketbra{0}_\ell + \ketbra{1}_\ell \\
    & \mathop{\rightarrow}^{V} \ketbra{00}{00}_{E_\ell \ell} + \left(\sqrt{1-p} \ket{01}_{E_\ell \ell} + \sqrt{p} \ket{10}_{E_\ell \ell}\right) \left(\sqrt{1-p} \ket{01}_{E_\ell \ell} + \sqrt{p} \ket{10}_{E_\ell \ell}\right) \\
    & \mathop{\rightarrow}^{\tr_\ell} (2 - p)\ketbra{0}_{E_\ell} + p \ketbra{1}_{E_\ell}
\end{align}
\begin{align}
    Z_\ell & = \ketbra{0}{0}_\ell - \ketbra{1}{1}_\ell \\
    & \mathop{\rightarrow}^{V} \ketbra{00}{00}_{E_\ell \ell} - \left(\sqrt{1-p} \ket{01}_{E_\ell \ell} + \sqrt{p} \ket{10}_{E_\ell \ell}\right) \left(\sqrt{1-p} \ket{01}_{E_\ell \ell} + \sqrt{p} \ket{10}_{E_\ell \ell}\right) \\
    & \mathop{\rightarrow}^{\tr_\ell} p \ketbra{0}{0}_{E_\ell} - p \ketbra{1}{1}_{E_\ell} = p Z_{E_\ell},
\end{align}
\begin{align}
    X_\ell & = \ketbra{0}{1}_\ell + \ketbra{1}{0}_\ell \\
    & \mathop{\rightarrow}^{V} \ket{00}_{E_\ell \ell} \left(\sqrt{1-p} \bra{01}_{EP} + \sqrt{p} \bra{10}_{EP}\right) + \left(\sqrt{1-p} \ket{01}_{E_\ell \ell} + \sqrt{p} \ket{10}_{E_\ell \ell}\right) \bra{00}_{E_\ell \ell} \\
    & \mathop{\rightarrow}^{\tr_\ell} \sqrt{p} \ketbra{0}{1}_{E_\ell} + \sqrt{p}  \ketbra{1}{0}_{E_\ell} = \sqrt{p} X_{E_\ell},
\end{align}
\begin{align}
    Y_\ell & = -\mathrm{i} \ketbra{0}{1}_\ell + \mathrm{i} \ketbra{1}{0}_{\ell} \\
    & \mathop{\rightarrow}^{V} -\mathrm{i} \ket{00}_{E_\ell \ell} \left(\sqrt{1-p} \bra{01}_{E_\ell \ell} + \sqrt{p} \bra{10}_{EP}\right) + \mathrm{i} \left(\sqrt{1-p} \ket{01}_{E_\ell \ell} + \sqrt{p} \ket{10}_{E_\ell \ell}\right) \bra{00}_{E_\ell \ell} \\
    & \mathop{\rightarrow}^{\tr_\ell} -\mathrm{i} \sqrt{p} \ketbra{0}{1}_{E_\ell} + \mathrm{i} \sqrt{p}  \ketbra{1}{0}_{E_\ell} = \sqrt{p} Y_{E_\ell}.
\end{align}

As in the bit-flip case, this is a Pauli super-operator once the identity output is killed, similarly to the bit-flip case.

\paragraph{$\norm{\widehat{\mathcal{N}}_{E_\ell \leftarrow \ell} (\mathrm{id}_\ell - \mathcal{D}_\ell)}_\diamond$.} Upon contraction with $(\mathrm{id}_\ell - \mathcal{D}_\ell)$, we have we have $\lambda_I =0 0$, $\lambda_X = \lambda_Y = \sqrt{p}$, and $\lambda_Z = p$. Therefore using \cref{lem:qubit-to-qubit}, we have 
\begin{align}
    \norm{\widehat{\mathcal{N}}_{E_\ell \leftarrow \ell} (\mathrm{id}_\ell - \mathcal{D}_\ell)}_\diamond & = \frac{1}{4}\qty(|0+\sqrt{p}+\sqrt{p}+p| + |0-\sqrt{p}-\sqrt{p}+p| + |0+\sqrt{p}-\sqrt{p}-p| + |0-\sqrt{p}+\sqrt{p}-p|) \\ & = \frac{1}{4}\qty((2\sqrt{p} + p) + (2\sqrt{p}-p) + p + p)
\end{align}
and therefore
\begin{align}
    \norm{\widehat{\mathcal{N}}_{E_\ell \leftarrow \ell} (\mathrm{id}_\ell - \mathcal{D}_\ell)}_\diamond = \sqrt{p} + \frac{p}{2}.
\end{align}

\paragraph{$\norm{\widehat{\mathcal{N}}_{E_\ell \leftarrow \ell} (\mathrm{id}_\ell - \mathcal{E}^X_\ell)}_\diamond$.}
Upon contraction with $(\mathrm{id}_\ell - \mathcal{E}^X_\ell)$, we have $\lambda_I = \lambda_X = 0$, $\lambda_Y = \sqrt{p}$, $\lambda_Z = p$. Therefore we have 
\begin{align}
    \norm{\widehat{\mathcal{N}}_{E_\ell \leftarrow \ell} (\mathrm{id}_\ell - \mathcal{E}^X_\ell)}_\diamond & = \frac{1}{4}\qty(|0+0+\sqrt{p}+p| + |0-0-\sqrt{p}+p| + |0+0-\sqrt{p}-p| + |0-0+\sqrt{p}-p|) \\
    & = \frac{1}{4}\qty((\sqrt{p} + p) + (\sqrt{p}-p) + (\sqrt{p}+p) + (\sqrt{p}-p))
\end{align}
therefore
\begin{align}
    \norm{\widehat{\mathcal{N}}_{E_\ell \leftarrow \ell} (\mathrm{id}_\ell - \mathcal{E}^X_\ell)}_\diamond = \sqrt{p}.
\end{align}

\paragraph{$\norm{\widehat{\mathcal{N}}_{E_\ell \leftarrow \ell} (\mathrm{id}_\ell - \mathcal{E}^Z_\ell)}_\diamond$.}
Upon contraction with $(\mathrm{id}_\ell - \mathcal{E}^Z_\ell)$ we have $\lambda_I = 0$, $\lambda_{X} = \lambda_Y = \sqrt{p}$ and $\lambda_Z = 0$. Therefore upon we have
\begin{align}
    \norm{\widehat{\mathcal{N}}_{E_\ell \leftarrow \ell} (\mathrm{id}_\ell - \mathcal{E}^Z_\ell)}_\diamond & = \frac{1}{4}\qty(|0+\sqrt{p}+\sqrt{p}+0| + |0-\sqrt{p}-\sqrt{p}+0| + |0+\sqrt{p}-\sqrt{p}-0| + |0-\sqrt{p}+\sqrt{p}-0|) \\ & = \frac{1}{4}\qty(2\sqrt{p} + 2\sqrt{p} + 0 + 0)
\end{align}
therefore
\begin{align}
    \norm{\widehat{\mathcal{N}}_{E_\ell \leftarrow \ell} (\mathrm{id}_\ell - \mathcal{E}^Z_\ell)}_\diamond = \sqrt{p}.
\end{align}

\subsubsection{Partially depolarising noise}
The partially depolarising channel is given by
\begin{align}
    \mathcal{N}_\ell[\cdot] = (1-p) \rho + p \frac{I_\ell}{d_\ell}\mathrm{tr}_\ell[\cdot].
\end{align}
This is equivalently given by
\begin{align}
    \mathcal{N}_\ell[\rho] = (1-p) \rho + p \times \frac{1}{d^2_\ell} \sum_P P_\ell \rho P^\dag_\ell = \qty(1 - p + \frac{p}{d^2_\ell}) \rho + \frac{p}{d^2_\ell} \sum_{P \neq 0} P_\ell \rho P_\ell^\dagger.
\end{align}
Therefore we have the Kraus operators
\begin{align}
    K^{(I)}_\ell = \sqrt{1-p+\frac{p}{d_\ell^2}} I_\ell =: \sqrt{c_I} I_\ell
\end{align}
with the rest
\begin{align}
    K^{(P\neq I)}_\ell = \frac{\sqrt{p}}{d_\ell} P_\ell =: \sqrt{c_{\neq I}} P_\ell.
\end{align}
Therefore the isometry can be constructed as
\begin{align}
    V_{E_\ell \ell \leftarrow \ell} \ket{\psi} = \sum_P K^{(P)}_\ell \ket{\psi} \otimes \ket{P}_{E_\ell}.
\end{align}
Therefore the complementary channel can be written as
\begin{align}
    \widehat{\mathcal{N}}_{E_\ell \leftarrow \ell}[\cdot] = \sum_{PP'} \mathrm{tr}_\ell [K^{(P)}_\ell \cdot K^{(P')\dagger}_\ell] \otimes \ketbra{P}{P'}_{E_\ell}.
\end{align}

\paragraph{$\norm{\widehat{\mathcal{N}}_{E_\ell \leftarrow \ell} (\mathrm{id}_\ell - \mathcal{D}_\ell)}_\diamond$.} Since $\mathcal{N}_{\ell}$ is unitary-set covariant, by \cref{lem:chan-covariant-means-complement-covariant}, $\widehat{\mathcal{N}}_{E_\ell \leftarrow \ell}$ is also unitary-set covariant. Therefore the diamond norm of $\widehat{\mathcal{N}}_{E_\ell \leftarrow \ell}(\mathrm{id}_\ell - \mathcal{D}_\ell)$ is given by $\norm{\widehat{\mathcal{N}}_{E_\ell \leftarrow \ell}(\mathrm{id}_\ell - \mathcal{D}_\ell)}_\diamond = \norm{\widehat{\mathcal{N}}_{E_\ell \leftarrow \ell}(\mathrm{id}_\ell - \mathcal{D}_\ell) \Phi_{\ell \ell'}}_1$. Now

\begin{align}
\widehat{\mathcal{N}}_{E_\ell \leftarrow \ell}(\mathrm{id}_\ell - \mathcal{D}_\ell) \Phi_{\ell \ell'} 
& = \frac{1}{d_\ell} \widehat{\mathcal{N}}_{E_\ell \leftarrow \ell}\qty[\sum_{\alpha \beta} \ketbra{\alpha\alpha}{\beta \beta}_{\ell \ell'} - \frac{I_\ell I_{\ell'}}{d_\ell}] \\
& = \frac{1}{d_\ell} \sum_{PP'} \ketbra{P}{P'}_{E_\ell} \qty(\sum_{\alpha \beta} \mathrm{tr}_\ell[K^{(P)}_\ell \ketbra{\alpha}{\beta}_\ell K^{(P')\dagger}_\ell] \ketbra{\alpha}{ \beta}_{\ell'} - \mathrm{tr}_\ell[K^{(P)}_\ell K^{(P')\dagger}_\ell] \frac{I_{\ell'}}{d_\ell}) \\
& = \frac{1}{d_\ell} \sum_{PP'} \ketbra{P}{P'}_{E_\ell} \qty(\sum_{\alpha \beta} (K^{(P')\dagger} K^{(P)})_{\beta \alpha} \ketbra{\alpha}{\beta}_{\ell'} - \mathrm{tr}_\ell[K^{(P)}_\ell K^{(P')\dagger}_\ell] \frac{I_{\ell'}}{d_\ell}) \\
& = \frac{1}{d_\ell} \sum_{PP'} \ketbra{P}{P'}_{E_\ell} \qty(K^{(P)\top}_{\ell'} K^{(P')*}_{\ell'} - \mathrm{tr}_\ell[K^{(P)}_\ell K^{(P')\dagger}_\ell] \frac{I_{\ell'}}{d_\ell}).
\end{align}

Defining $K^{(P)} = \sqrt{c_P} P$, we have

\begin{align}
\widehat{\mathcal{N}}_{E_\ell \leftarrow \ell}(\mathrm{id}_\ell - \mathcal{D}_\ell) \Phi_{\ell \ell'} 
& = \frac{1}{d_\ell} \sum_{PP'} \ketbra{P}{P'}_{E_\ell} \sqrt{c_P c_{P'}} \qty( P^{\top}_{\ell'} P^{\prime *}_{\ell'} - \delta_{PP'} I_{\ell'}) =: M.
\end{align}

To diagonalise this matrix, first consider the set of vectors
\begin{align}
    \ket{v_1^{(\alpha)}} = \ket{I}_{E_\ell} \ket{\alpha}_{\ell'},
\end{align}
and
\begin{align}
    \ket{v_2^{(\alpha)}} = \frac{1}{\sqrt{d^2_\ell - 1}} \sum_{P \neq I} \ket{P}_{E_\ell} P^\top \ket{\alpha}_{\ell'}.
\end{align}
It's clear that they are orthogonal to each other because the vectors $\ket{I}$ and $\ket{P \neq I}$ are orthogonal to each other.

Now apply $M$ to $\ket{v_1^{(\alpha)}}$:
\begin{align}
    M \ket{v^{(\alpha)}_1} & = \frac{1}{d_\ell} \sum_P \sqrt{c_P c_I} \ket{P}_{E_\ell} (P^\top_{\ell'} - \delta_{P,I}) \ket{\alpha}_{\ell'} \\
    & = \frac{1}{d_\ell} c_I (1 - 1) \ket{0}_{E_\ell} \ket{\alpha}_{\ell'} + \frac{1}{d_\ell} \sqrt{c_{\neq I} c_I} \sum_{P \neq I} \ket{P}_{E_\ell} P^\top_{\ell'} \ket{\alpha}_{\ell'} \\
    & = 0 \ket{v_1^{(\alpha)}} + \frac{\sqrt{c_{\neq I} c_I(d^2_\ell - 1)}}{d_\ell} \ket{v^{(\alpha)}_2}.
\end{align}
Here, we used the fact that $c_{P} = c_{P'} = c_{\neq I}$ for all $P, P' \neq I$.

Now apply $M$ on $\ket{v_2^{(\alpha)}}$:
\begin{align}
    M \ket{v_2^{(\alpha)}} 
    & = \frac{1}{d_\ell} \sum_{PP'} \ketbra{P}{P'}_{E_\ell} \sqrt{c_P c_{P'}} (P^\top_{\ell'} P^{\prime *}_{\ell'} - \delta_{PP'}) \frac{1}{\sqrt{d^2_\ell - 1}} \sum_{P'' \neq I} \ket{P''}_{E_\ell} P^{\prime \prime\top}_{\ell'} \ket{\alpha}_{\ell'} \\
    & = \frac{1}{d_\ell} \frac{1}{\sqrt{d^2_\ell - 1}}  \sum_{P} \sum_{P' \neq I} \ket{P}_{E_\ell} \sqrt{c_P c_{P'}} (P^\top_{\ell'} P^{\prime *}_{\ell'} - \delta_{PP'}) P^{\prime \top}_{\ell'} \ket{\alpha}_{\ell'} \\
    & = \frac{1}{d_\ell} \frac{1}{\sqrt{d^2_\ell - 1}} \sum_{P' \neq I} \ket{I}_{E_\ell} \sqrt{c_I c_{\neq I}} I^\top_{\ell'} P^{\prime \top}_{\ell'} P^{\prime *}_{\ell'} \ket{\alpha}_{\ell'}
    + \frac{1}{d_\ell} \frac{1}{\sqrt{d^2_\ell - 1}}  \sum_{P \neq I} \sum_{P' \neq I} \ket{P}_{E_\ell} \sqrt{c_P c_{P'}} (P^\top_{\ell'} P^{\prime *}_{\ell'} - \delta_{PP'}) P^{\prime \top}_{\ell'} \ket{\alpha}_{\ell'} \\
    & = \frac{1}{d_\ell} \sqrt{c_I c_{\neq I}(d^2_\ell-1)} \ket{I}_{E_\ell} \ket{\alpha}_{\ell'}
    + \frac{c_{\neq I} (d^2_\ell  - 2)}{d_\ell} \frac{1}{\sqrt{d_\ell^2 - 1}} \sum_{P \neq I} \ket{P}_{E_\ell} P^\top_{\ell'} \ket{\alpha}_{\ell'} \\
    & = \frac{\sqrt{c_I c_{\neq I} (d_\ell^2 -1 )}}{d_\ell} \ket{v^{(\alpha)}_1} + \frac{c_{\neq I} (d^2_\ell - 2)}{d_\ell} \ket{v^{(\alpha)}_2}.
\end{align}
Therefore we have to block diagonal matrix
\begin{align}
    M^{(\alpha)} = \begin{pmatrix}
        0 & \frac{\sqrt{c_I c_{\neq I} (d^2_\ell -1)}}{d_\ell} \\
        \frac{\sqrt{c_{\neq I} c_I(d^2_\ell - 1)}}{d_\ell} & \frac{c_{\neq I} (d^2_\ell - 2))}{d_\ell}
    \end{pmatrix}.
\end{align}
The trace norm of this block can be solved to be
\begin{align}
    \norm{M^{(\alpha)}}_1 = \frac{\sqrt{p}}{d^2_\ell} \sqrt{4(d^2_\ell - 1) - p(3d^2_\ell - 4)}.
\end{align}

Now consider all vectors orthogonal to $\{\ket{v^{(\alpha)}_{1,2}}\}_\alpha$. It is clear that it must be written in the form
\begin{align}
    \ket{v} = \sum_{P \neq I} \ket{P}_{E_\ell} \ket{\psi_P}_{\ell'},
\end{align}
where the sum is over $P \neq I$ since it must be orthogonal to $\{\ket{v^{(\alpha)}_1}\}_\alpha$. Taking an inner product with $\ket{v^{(\alpha)}_2}$, we have
\begin{align} \label{eq:annhilation-of-v}
    \sum_{P} \bra{\alpha}_{\ell'} P^*_{\ell'} \ket{\psi_P}_{\ell'} = 0 \forall \alpha \implies \sum_{P} P^*_{\ell'} \ket{\psi_P}_{\ell'} = 0.
\end{align}
Now if we apply $M$ on $\ket{v}$, we have
\begin{align}
    M \ket{v} = \frac{1}{d_\ell} \sum_{P P'} \ketbra{P}{P'}_{E_\ell} \sqrt{c_P c_{P'}} (P^\top_{\ell'} P^{\prime *}_{\ell'} - \delta_{PP'} I_{\ell'}) \sum_{P'' \neq I} \ket{P''}_{E_{\ell}} \ket{\psi_{P''}}_{\ell'}.
\end{align}
Now the first part of $M$ annhilates $\ket{v}$, due to \cref{eq:annhilation-of-v}. Therefore we have
\begin{align}
    M \ket{v} = \frac{-p}{d_\ell^3} \ket{v}.
\end{align}
Therefore, the rest of the Hilbert space is highly degenerate with eigenvalue $-p/d^3_\ell$. The dimension of the space $\perp$ to $\{\ket{v^{(\alpha)}_{1,2}}\}_\alpha$, $\mathscr{H}^\perp$, is $d^2_\ell - 2 d_\ell$. Therefore the trace norm is

\begin{align}
    \norm{M}_1 & = \sum_\alpha \norm{M^{(\alpha)}}_1 + \norm{M^\perp}_1 = \sum_\alpha \frac{\sqrt{p}}{d^2_\ell} \sqrt{4(d^2_\ell - 1) - p(3d^2_\ell - 4)} + (d^2_\ell - 2 d_\ell) \times \abs{-p/d^3_\ell}
\end{align}
therefore
\begin{align}
    \norm{\widehat{\mathcal{N}}_\ell(\mathrm{id}_\ell - \mathcal{D}_\ell)}_\diamond = \frac{p(d_\ell^2-2)}{d_\ell^2} + \frac{\sqrt{p}}{d_\ell} \sqrt{4(d^2_\ell - 1) - p(3d^2_\ell - 4)}.
\end{align}

\paragraph{$\norm{\widehat{\mathcal{N}}_{E_\ell \leftarrow \ell} (\mathrm{id}_\ell - \mathcal{E}^{X,Z}_\ell)}_\diamond$.} Now specialising to qubits, consider
\begin{align}
    \norm{\widehat{\mathcal{N}}_\ell(\mathrm{id}_\ell - \mathcal{E}^{X}_\ell)}_\diamond.
\end{align}
Again, this is a unitary-set covariant superoperator, and therefore is given by the trace norm on the Bell state.

Now applying $\mathrm{id}_\ell - \mathcal{E}^X_\ell$ on the Bell state, we have
\begin{align}
    (\mathrm{id}_\ell - \mathcal{E}^X_\ell) \Phi_{\ell \ell'} = \frac{1}{4} Y_\ell Y^\top_{\ell'} Z_\ell Z^\top_{\ell'}.
\end{align}
Therefore, we have
\begin{align}
    \widehat{\mathcal{N}}_{E_\ell \leftarrow \ell} (\mathrm{id}_\ell - \mathcal{E}^X_\ell)\Phi_{\ell \ell'} = \frac{1}{4} \sum_{PP'} \sum_{P'' \in \{Y, Z\}} \sqrt{c_P c_{P'}} \ketbra{P}{P'}_{E_\ell} \mathrm{tr}_\ell [P_\ell P''_\ell P^{\prime \dag}_\ell] P^{\prime \prime \top}_{\ell'},
\end{align}
which, upon considering all of the traces, can be written as
\begin{align}
    = \frac{1}{2} \left(\begin{tabular}{c|cccc}
    &$\bra{I}$&$\bra{X}$&$\bra{Y}$&$\bra{Z}$\\
    \hline
    $\ket{I}$ &$0$  &$0$  &$\sqrt{c_I c_{\neq I}}$  &$0$  \\
    $\ket{X}$ &$0$  &$0$  &$0$  &$\mathrm{i}c_{\neq I}$  \\
    $\ket{Y}$ &$\sqrt{c_I c_{\neq I}}$  &$0$  &$0$  &$0$  \\
    $\ket{Z}$ &$0$  &$-\mathrm{i}c_{\neq I}$  &$0$  &$0$  \\
    \end{tabular}
    \right)_{E_\ell} Y^\top_{\ell'}
    + \frac{1}{4} \left(\begin{tabular}{c|cccc}
    &$\bra{I}$&$\bra{X}$&$\bra{Y}$&$\bra{Z}$\\
    \hline
    $\ket{I}$ &$0$  &$0$  &$0$  &$\sqrt{c_I c_{\neq I}}$  \\
    $\ket{X}$ &$0$  &$0$  &$-\mathrm{i}c_{\neq I}$  &$0$  \\
    $\ket{Y}$ &$0$  &$\mathrm{i}c_{\neq I}$  &$0$  &$0$  \\
    $\ket{Z}$ &$\sqrt{c_I c_{\neq I}}$  &$0$  &$0$  &$0$  \\
    \end{tabular}
    \right)_{E_\ell} Z^\top_{\ell'}.
\end{align}
Explicitly evaluating its eigenvalues leads to
\begin{align}
    \vec{\lambda} = \left(0,0,0,0,-\frac{1}{4} \sqrt{2 p-p^2},-\frac{1}{4} \sqrt{2 p-p^2},\frac{1}{4} \sqrt{2 p-p^2},\frac{1}{4} \sqrt{2 p-p^2}\right),
\end{align}
and therefore
\begin{align}
    \norm{\widehat{\mathcal{N}}_{E_\ell \leftarrow \ell} (\mathrm{id}_\ell - \mathcal{E}^X_\ell)}_\diamond = \norm{\widehat{\mathcal{N}}_{E_\ell \leftarrow \ell} (\mathrm{id}_\ell - \mathcal{D}_\ell) \Phi_{\ell \ell'}}_1 = \sqrt{p(2-p)}.
\end{align}
Similarly, we also have
\begin{align}
    \norm{\widehat{\mathcal{N}}_{E_\ell \leftarrow \ell} (\mathrm{id}_\ell - \mathcal{E}^Z_\ell)}_\diamond = \sqrt{p(2-p)}.
\end{align}

\subsubsection{Heralded erasure channel}
The heralded leakage, or erasure, channel is given by
\begin{align}
    \mathcal{N}_\ell[\rho] = p \tr_\ell[\rho] \otimes \ketbra{e}_\ell \otimes \ketbra{1}_C + (1-p) \rho \otimes \ketbra{0}_C.
\end{align}

This channel can be written in terms of Kraus operators $K_a: \ell \rightarrow \ell' C$:

\begin{align}
    K_0 & = \sqrt{p} \ \ketbra{e}{0}_{\ell'} \otimes \ket{1}_C, \\
    K_1 & = \sqrt{p} \ \ketbra{e}{1}_{\ell'} \otimes \ket{1}_C \\
    K_2 & = \sqrt{1-p} \ I_{\ell' \leftarrow \ell} \otimes \ket{0}_C.
\end{align}

Tracing over the classical register $C$ and the system $\ell'$, we have

\begin{align}
    \widehat{\mathcal{N}}_{E_\ell \leftarrow \ell}[\rho_Q] \rightarrow (1-p) \tr_\ell[\rho_Q] \otimes \ketbra{2}{2}_{E_\ell} + p \rho_{(Q \setminus \ell) E_\ell},
\end{align}
where $\rho_{(Q \setminus \ell) E_\ell} = I_{E_\ell \leftarrow \ell} \rho I^\dagger_{\ell \leftarrow E_\ell} = \mathcal{I}_{E_\ell \leftarrow \ell}[\rho_Q]$ is just an embedding of $\rho_Q$ in the environment space. Here, $I_{E_\ell \leftarrow \ell} = \sum_{\sigma=0,1} \ket{\sigma}_{E_\ell} \bra{\sigma}_\ell$

\paragraph{$\norm{\widehat{\mathcal{N}}_{E_\ell \leftarrow \ell} (\mathrm{id}_\ell - \mathcal{D}_\ell)}_\diamond$.}
Since for any operator $O_\ell$, $\mathrm{tr}_\ell[(\mathrm{id}_\ell - \mathcal{D}_\ell)[O_\ell]] = 0 \; \forall \; O_\ell$ is traceless on $\ell$, we have
\begin{align}
    \widehat{\mathcal{N}}_{E_\ell \leftarrow \ell}(\mathrm{id}_\ell - \mathcal{D}_\ell) = p \mathcal{I}_{E_\ell \leftarrow \ell} (\mathrm{id}_\ell - \mathcal{D}_\ell)
\end{align}
and therefore we have
\begin{align}
    \norm{\widehat{\mathcal{N}}_{E_\ell \leftarrow \ell} (\mathrm{id}_\ell - \mathcal{D}_\ell)}_\diamond = p \norm{\mathcal{I}_{E_\ell \leftarrow \ell} (\mathrm{id}_\ell - \mathcal{D}_\ell)}_\diamond.
\end{align}
Now trace norms are invariant under conjugation by an isometry, and therefore by \cref{lem:id-D}, we have
\begin{align}
    \norm{\widehat{\mathcal{N}}_{E_\ell \leftarrow \ell} (\mathrm{id}_\ell - \mathcal{D}_\ell)}_\diamond = p \norm{\mathrm{id}_\ell - \mathcal{D}_\ell}_\diamond = 2 p \qty(1 - \frac{1}{d_\ell^2}).
\end{align}

\paragraph{$\norm{\widehat{\mathcal{N}}_{E_\ell \leftarrow \ell} (\mathrm{id}_\ell - \mathcal{E}^{X,Z}_\ell)}_\diamond$.}
Here, specialise to qubits. First, we use the fact that again, any output of $(\mathrm{id}_\ell - \mathcal{E}^{X, Z}_\ell)$ is traceless and therefore the corresponding diamond norm simplifies to
\begin{align}
    \norm{\widehat{\mathcal{N}}_{E_\ell \leftarrow \ell} (\mathrm{id}_\ell - \mathcal{E}^{X, Z}_\ell)}_\diamond = p \norm{\mathcal{I}_{E_\ell \leftarrow \ell} (\mathrm{id}_\ell - \mathcal{E}^{X, Z}_\ell)}_\diamond = p \norm{\mathrm{id}_\ell - \mathcal{E}^{X, Z}_\ell}_\diamond.
\end{align}
Next, we use the fact that $(\mathrm{id}_\ell - \mathcal{E}^{X, Z}_\ell)$ is a Pauli superoperator, which means that its diamond norm can be explicitly calculated via trace norm on the Bell state, which yields $\norm{\mathrm{id}_\ell - \mathcal{E}^{X, Z}_\ell}_\diamond = 1$, and therefore
\begin{align}
    \norm{\widehat{\mathcal{N}}_{E_\ell \leftarrow \ell} (\mathrm{id}_\ell - \mathcal{E}^{X, Z}_\ell)}_\diamond = p.
\end{align}

\end{document}